\keywords{Typed lambda calculus, Higher-order logic, Strong normalization}
\def\eg{\emph{e.g.}}
\newcommand{\dcalc}{$\mathtt{d}$}
\newcommand{\prim}{\tau}
\newcommand{\prdef}[4]{[#1\!\doteq\!#2,#3:#4]}
\newcommand{\mydef}{\!:=\!}
\newcommand{\smydef}{:=}
\newcommand{\fun}{\!\Rightarrow\!}
\newcommand{\sfun}{\Rightarrow}
\newcommand{\injl}[2]{[#1,:\!#2]}
\newcommand{\injr}[2]{[:\!#1,#2]}
\newcommand{\pleft}[1]{#1.1}
\newcommand{\pright}[1]{#1.2}
\newcommand{\case}[2]{[#1\,?\,#2]}
\newcommand{\myneg}{\neg}
\newcommand{\binop}[2]{\oplus(#1,#2)}
\newcommand{\prsumop}[2]{[#1\oplus#2]}
\newcommand{\prsumopd}[2]{[#1\oplus'#2]}
\newcommand{\binbop}[3]{\ensuremath{\oplus_{#1}(#2,#3)}}
\newcommand{\binbopd}[3]{\ensuremath{\oplus'_{#1}(#2,#3)}}
\newcommand{\gv}{\vdash\,}
\newcommand{\sgv}{\!\vdash\!}
\newcommand{\sngv}{\!\models\!}
\newcommand{\rd}{\to^{*}}
\newcommand{\srd}{\to}
\newcommand{\rdn}[1]{\to^{#1}}
\newcommand{\rdr}{\nabla}
\newcommand{\eqv}{=_{\lambda}}
\newcommand{\neqv}{\neq_{\lambda}}
\newcommand{\mrd}{\to_{:=}^*}
\newcommand{\smrd}{\to_{:=}}
\newcommand{\mrdn}[1]{\to_{:=}^{#1}}
\newcommand{\mrdr}{\nabla_{:=}}
\newcommand{\snur}{\to_{\myneg}}
\newcommand{\nur}{\to_{\myneg}^*}
\newcommand{\nurn}[1]{\to_{\myneg}^{#1}}
\newcommand{\dmrd}{\to_{=}}
\newcommand{\ax}{\emph{ax}}
\newcommand{\axm}{{\mathit(ax)}}
\newcommand{\mystart}{\emph{start}}
\newcommand{\mystartm}{\mathit{(start)}}
\newcommand{\weak}{\emph{weak}}
\newcommand{\weakm}{\mathit{(weak)}}
\newcommand{\myconv}{\emph{conv}}
\newcommand{\convm}{\mathit{(conv)}}
\newcommand{\absu}{\emph{abs$_U$}}
\newcommand{\absum}{\mathit{(abs_U)}}
\newcommand{\appl}{\emph{appl}}
\newcommand{\applm}{\mathit{(appl)}}
\newcommand{\abse}{\emph{abs$_E$}}
\newcommand{\absem}{\mathit{(abs_E)}}
\newcommand{\chin}{\emph{ch$_I$}}
\newcommand{\chinm}{\mathit{(ch_I)}}
\newcommand{\chba}{\emph{ch$_B$}}
\newcommand{\chbam}{\mathit{(ch_B)}}
\newcommand{\pdef}{\emph{def}}
\newcommand{\pdefm}{\mathit{(def)}}
\newcommand{\bprod}{\emph{prd}}
\newcommand{\bprodm}{\mathit{(prd)}}
\newcommand{\prl}{\emph{pr$_L$}}
\newcommand{\prlm}{\mathit{(pr_L)}}
\newcommand{\prr}{\emph{pr$_R$}}
\newcommand{\prrm}{\mathit{(pr_R)}}
\newcommand{\bsumm}{\mathit{(sum)}}
\newcommand{\injllm}{\mathit{(inj_L)}}
\newcommand{\injlrm}{\mathit{(inj_R)}}
\newcommand{\cased}{\emph{case}}
\newcommand{\casedm}{\mathit{(case)}}
\newcommand{\negate}{\emph{neg}}
\newcommand{\negatem}{\mathit{(neg)}}
\newcommand{\dexp}{\mathcal{E}}
\newcommand{\dnrm}{\mathcal{\bar{E}}}
\newcommand{\dexps}{\mathcal{\dot{E}}}
\newcommand{\dvar}{\mathcal{V}}
\newcommand{\nf}{\mathcal{N}}
\newcommand{\de}{\mathcal{D}}
\newcommand{\sn}[1]{{\mathcal{S}_{#1}}}
\newcommand{\ce}{\mathit{C}}
\newcommand{\dnf}[1]{\mathcal{N}^{=}_{#1}}
\newcommand{\free}{\mathit{FV}}
\newcommand{\dom}{\mathit{dom}}
\newcommand{\ran}{\mathit{ran}}
\newcommand{\nrm}[2]{\,\parallel\!\!#2\!\!\parallel_{#1}}
\newcommand{\gsub}[2]{\lbrack#2/#1\rbrack}
\newcommand{\sth}{\mid}
\newcommand{\union}{\cup}
\newcommand{\sz}{\mathit{S}}
\newcommand{\ff}{\mathbf{ff}}
\newcommand{\tr}{\mathbf{tt}}
\newcommand{\noarg}{(.)}
\newcommand{\nats}{N}
\newcommand{\pnats}{N^{>0}}
\newcommand{\negax}{\myneg}
\newcommand{\cast}{()}
\newcommand{\castin}[1]{()_{#1}^+}
\newcommand{\castout}[1]{()_{#1}^-}
\begin{document}



\title[An extended type system with lambda-typed lambda-expressions]{An extended type system \texorpdfstring{\\}{} with lambda-typed lambda-expressions}

\author[M.~Weber]{Matthias Weber}	
\address{Faculty of Computer Science, Technical University of Berlin}	
\email{mattnweber@gmail.com}  
\begin{abstract}
We present the system \dcalc, an extended type system with lambda-typed lambda-expressions.
It is related to type systems originating from the Automath project.
\dcalc\ extends existing lambda-typed systems by an existential abstraction operator as well as propositional operators.
$\beta$-reduction is extended to also normalize negated expressions using a subset of the laws of classical negation,
hence \dcalc\ is normalizing both proofs and formulas which are handled uniformly as functional expressions.
\dcalc\ is using a reflexive type axiom for a constant $\prim$ to which no function can be typed.
Some properties are shown including confluence, subject reduction, uniqueness of types, strong normalization, and consistency.
We illustrate how, when using \dcalc, due to its limited logical strength, additional axioms must be added both for negation and for the mathematical structures whose deductions are to be formalized.
\end{abstract}

\maketitle
\section{Introduction and Overview}%
\label{overview}
In this paper, we will present an extended type system with lambda-typed lambda-expressions. Since such systems have received little attention we begin with some general remarks to provide some context.

Most type systems contain subsystems that can be classified as instances of \emph{pure type systems} (PTS) (\eg~\cite{Bar:93}).
As one of their properties, these systems use distinct operators to form dependent products and functional (i.e.~$\lambda$)-abstractions.
This reflects their underlying semantic distinction between the domains of functions and types.
In contrast, in the semantics of lambda-typed systems there is a single domain of (partial) functions and typing is a binary relation between total functions.
As a consequence a function participating in the type relation may play the role of an element or of a type.
Moreover, since in general the domain and range of the type relation are not disjoint a function can (and will usually) have a double role both as element and as type.
Therefore in lambda-typed calculi one has to separate three aspects of an operator: its functional interpretation, i.e.~the equivalence class of entities it represents, its role as a type, i.e.~the entities on the element side related to its entities in the type relation, and
its role as an element, i.e.~the entities on the type side related to its entities in the type relation.
From this point of view, the semantic distinction between dependent products and $\lambda$-abstractions can be reduced to the distinction between the type-role or the element-role of a single underlying function and therefore, when defining a calculus, a single operator is sufficient.

Type systems outside of PTS using a single operator for both dependent products and $\lambda$-abstractions (i.e.~using \emph{$\lambda$-structured types}) have been investigated in early type systems such as $\Lambda$~\cite{deBruijn94,Nederpelt73}, as well as more recent approaches~\cite{Guidi09,JFP05}. In particular~\cite{JFP05} introduces the single-binder based $\flat$-cube, a variant of the $\beta$-cube which does not keep uniqueness of types and studies $\lambda$-structured type variants of well-known systems within this framework.

Before we present the basic elements of \dcalc, we would like to point out its additional underlying semantic assumptions:
\begin{itemize}
\item The range of the type relation is a subset of its domain, or in other words, every type has a type.
\item The type relation includes entities typing to themselves, which allows for generating type hierarchies of arbitrary size from a finite set of distinct base entities. We discuss below why in \dcalc\ this assumption does not lead to well-known paradoxes.
\item There are no inequivalent types of an element, or in other words, the type relation is restricted to be a function.
This means that \dcalc\ needs to satisfy the type uniqueness property (see Section~\ref{type.confl}).
We will review this assumption in Section~\ref{related.unique}.
\end{itemize}
It is well known that types can be interpreted as propositions and elements as proofs~\cite{Howard69}.
In our semantic setting this analogy is valid and we will make use of it throughout the paper, for example when motivating and describing the roles of operators we will frequently use the viewpoint of propositions and proofs.

Note that in our setting the interest is in formalization of structured mathematical reasoning rather than in computational proof content.

Finally we would like to make a notational remark related to \dcalc\, but also to lambda-typed calculi in general.
Since an operator in \dcalc\ can be interpreted both as element and as type (or proposition), there is, in our view, a notational and naming dilemma.
For example, one and the same entity would be written appropriately as a typed lambda-abstraction $\lambda x\!:\!a.b$ in its role as an element and as a universal quantification $\forall x\!:\!a.b$ in its role as a proposition
and neither notation would adequately cover both roles.
Therefore, in this case we use a more neutral notation $[x:a]b$ called \emph{universal abstraction}.
As will be seen this notational issue also applies to other operators of \dcalc.

After these remarks we can now turn to the core of \dcalc\, which is the system $\lambda^{\lambda}$~\cite{PdG93}, a reconstruction of a variation~\cite{Nederpelt73} of $\Lambda$, modified with a reflexive type axiom with a constant $\prim$, see Table~\ref{lamlammod} for its type rules.
\begin{table}[!htb]
\fbox{
\begin{minipage}{0.96\textwidth}
\begin{align*}
\\[-8mm]
\mathit{(ax)}\;&\frac{}{\gv\prim:\prim}&
\mathit{(start)}\;\frac{\Gamma\gv a:b}{\Gamma,x:a\gv x:a}\qquad\qquad\qquad\qquad\\[1mm]
\mathit{(weak)}\;&\frac{\Gamma\gv a:b\qquad\Gamma\gv c:d}{\Gamma,x:c\gv a:b}&
\mathit{(conv)}\;\frac{\Gamma\gv a:b\qquad b\eqv c\qquad\Gamma\gv c:d}{\Gamma\gv a:c}\\[1mm]
\absum\;&\frac{\Gamma,x:a\gv b:c}{\Gamma\gv[x:a]b:[x:a]c}&
\applm\;\frac{\Gamma\gv a:[x:b]c\qquad\Gamma\gv d:b}{\Gamma\gv (a\,d):c\gsub{x}{d}}\qquad\;\\[-5mm]
\end{align*}
\end{minipage}
}
\caption{The kernel of~\dcalc: The system $\lambda^{\lambda}$~\cite{PdG93} modified with $\prim:\prim$\label{lamlammod}}
\end{table}
As usual, we use contexts $\Gamma=(x_1:a_1,\ldots,x_n:a_n)$ declaring types of distinct variables and a $\beta$-conversion induced congruence $\eqv$ on expressions.
$\Gamma,x:a$ denotes the extension of $\Gamma$ on the right by a binding $x:a$ where $x$ is a variable not yet declared in $\Gamma$. Furthermore if $n=0$ we omit writing contexts.
We use the notation $a\gsub{x}{b}$ to denote the substitution of all free occurrences of $x$ in $a$ by $b$.

Element-roles and type-roles can now be illustrated by two simple examples:
Let $C=[x:\prim]\prim$ be the constant function delivering $\prim$, for any given argument of type $\prim$.
In its role as a type $C$ corresponds to an implication (``$\prim$ implies $\prim$'') and its
proofs include the identity $I=[x:\prim]x$ over $\prim$, i.e.~$\gv I:C$ using essentially the rules \mystart~and \absu.
In its role as an element $C$ corresponds to a constant function and it will type to itself, i.e.~$\gv C:C$ using essentially the rules \ax~and \absu.
$I$, which as an element is obviously the identity function, in its role as a type corresponds to the proposition ``everything of type $\prim$ is true'' which should not have any elements (this is shown in Section~\ref{regular}).

More generally, in the rule \appl~the type-role of a universal abstraction ($[x:b]c$) can be intuitively understood as an infinite conjunction of instances ($c\gsub{x}{b_1}\wedge c\gsub{x}{b_2}\ldots$ where $b_i:b$ for all $i$).
Individual instances ($c\gsub{x}{b_i}$) can be projected by means of the application operator.
Intuitively, a proof of a universal abstraction ($[x:a]c$) must provide an infinite list of instances ($b\gsub{x}{a_1},b\gsub{x}{a_2},\ldots$ where $b\gsub{x}{a_i}:c\gsub{x}{a_i}$ for all $i$).
Consequently, in the rule \absu~a universal abstraction ($[x:a]b$ where $x:a\sgv b:c$) has the element-role of being a proof of another universal abstraction ($[x:a]c$).

As another example, the introduction and elimination rules for universal quantification can be derived almost trivially.
Note that in this and the following examples we write $[a\fun b]$ to denote $[x:a]b$ if $x$ does not occur free in $b$:
\begin{eqnarray*}
P:[a\fun b]\;\;\gv\;\;[x:[y:a](P\,y)]x&:&[[y:a](P\,y)\fun[y:a](P\,y)]\\{}
P:[a\fun b]\;\;\gv\;\;[x:a][z:[y:a](P\,y)](z\,x)&:&[x:a][[y:a](P\,y)\fun(P\,x)]
\end{eqnarray*}
When substituting $P$ by a constant function $[a\fun c]$ (where $\sgv c:b$) the types simplify to two variants of the modus ponens rule:
\begin{eqnarray*}
\gv\;\;[x:[a\fun c]]x&:&[[a\fun c]\fun[a\fun c]]\\{}
\gv\;\;[x:a][z:[a\fun c]](z\,x)&:&[a\fun[[a\fun c]\fun c]]
\end{eqnarray*}
In contrast to other type systems with $\lambda$-structured types, \dcalc\ is using a reflexive axiom (\ax).
This might seem very strange as the use of a reflexive axiom in combination with basic rules of PTS leads to paradoxes, \eg~\cite{Coquand86,Hurkens1995}, see also~\cite[Section~5.5]{Bar:93}.
However, as will be seen, in our setting of $\lambda$-structured types, the axiom \ax\ leads to a consistent system.
This is actually not very surprising since $\lambda^{\lambda}$ does not have an equivalent to the product rule used in PTS:\@
\begin{align*}
\mathit{(product)}\;\;&\frac{\Gamma\gv a:s_1\quad\Gamma,x:a\gv b:s_2}{\Gamma\gv(\Pi x:a.b):s_3}\quad\text{where}\;s_i\in S,\;\text{$S$ is a set of sorts}
\end{align*}
Adding such a rule, appropriately adapted, to the kernel of \dcalc\ would violate uniqueness of types~\cite{JFP05} and allow for reconstructing well-known paradoxes~\cite{dcalculus} (see also Section~\ref{related.unique}).
Without such a rule, functions from $\prim$ such as $I$ do not accept functional arguments such as $C$\footnote{In fact, $(I\;C)$ cannot be typed since $\sgv I:[x:\prim]\prim$, hence $I$ is expecting arguments of type $\prim$, but $\sgv C:C$ and $\prim\neq_{\lambda}C$}.
The lack of functions of type $\prim$ is the reason for achieving consistency when assuming $\prim:\prim$.

Unlike instances of PTS~(\eg~\cite{luo1989extended}), systems with $\lambda$-structured types have never been extended by existential or classical propositional operators.
While the kernel of \dcalc\ is sufficiently expressive to axiomatize basic mathematical structures (Sections~\ref{equality},~\ref{examples.nats})
the expressive and structuring  means of deductions can be enhanced by additional operators.
We begin with an operator that effectively provides for a \emph{deduction interface}, i.e.~a mechanism to hide details of interdependent deductions.
For this purpose, in analogy to a universal abstraction ($[x:a]b$), \dcalc\ introduces an \emph{existential abstraction} ($[x!a]b$) (see Table~\ref{existential} for its type rules).
\begin{table}[!htb]
\fbox{
\begin{minipage}{0.96\textwidth}
\begin{align*}
\\[-8mm]
\pdefm\;\;&\frac{\Gamma\sgv a: b\quad\Gamma\sgv c:d\gsub{x}{a}\quad\Gamma,x:b\sgv d:e}{\Gamma\sgv\prdef{x}{a}{c}{d}:[x!b]d}&
\absem\;\;&\frac{\Gamma,x:a\sgv b:c}{\Gamma\sgv[x!a]b:[x:a]c}\\[-5mm]
\end{align*}
\begin{align*}
\chinm\;\;&\frac{\Gamma\sgv a:[x!b]c}{\Gamma\sgv\pleft{a}:b}&
\chbam\;\;&\frac{\Gamma\sgv a:[x!b]c}{\Gamma\sgv\pright{a}:c\gsub{x}{\pleft{a}}}\\[-5mm]
\end{align*}
\end{minipage}
}
\caption{Type rules for existential abstractions\label{existential}}
\end{table}
The notation is intended to maximise coherence with universal abstraction.
The type-role of an existential abstraction ($[x!b]d$) can be intuitively understood as an infinite disjunction ($d\gsub{x}{b_1}\vee d\gsub{x}{b_2}\ldots$ where $b_i:b$ for all $i$).
A proof of an existential abstraction ($[x!b]d$) must prove one of the instances (say $d\gsub{x}{b_j}$), i.e.~it must provide an instance of the quantification domain ($b_j:b$) and an element proving the instantiated formula ($c:d\gsub{x}{b_j}$).
This is formalized in rule \pdef\ with a new operator called \emph{protected definition} combining the two elements and a tag for the abstraction type ($\prdef{x}{a}{c}{d}$)\footnote{The type tag $d$ in $\prdef{x}{a}{c}{d}$ is necessary to ensure uniqueness of types}.
This means that two deductions ($a$ and $c$), where one ($c$) is using the other one ($a$) in its type, are simultaneously abstracted into an existential abstraction type\footnote{In Section~\ref{related.definition}, we discuss the restriction that $x$ may not occur free in $c$ and how it could be relaxed.}.
The element-role of an existential abstraction ($[x!a]b$) is not the one of a logical operator but of an entity providing an infinite list of instances ($b\gsub{x}{a_1}$, $b\gsub{x}{a_2}$, $\ldots$ where $a_i:a$ for all $i$).
But, as we just discussed above with respect to typing of universal abstractions (\absu), this is sufficient to type it to a universal abstraction (\abse).
Hence the element-roles of universal and existential abstraction are equivalent.
Pragmatically the element-role of existential abstraction is less frequently used and of less importance.
This is the reason why the notation for existential abstractions is more ``type-oriented'' than that of universal abstraction.

In analogy to the type-elimination of universal abstraction, \emph{projections} ($\pleft{a}$ and $\pright{a}$
\footnote{A postfix notation has been chosen since it seems more intuitive for projection sequences.})
are introduced as type-eliminators for existential abstraction with obvious equivalence laws.
\[
\pleft{\prdef{x}{a}{b}{c}}\eqv a\qquad
\pright{\prdef{x}{a}{b}{c}}\eqv b
\]
The type rules \chin~and \chba~for projections are similar to common rules for $\Sigma$ types, \eg~\cite{luo1989extended}.

As an example, the introduction and elimination rules for existential quantification can be derived as follows (with $\Gamma=(P,Q:[a\fun b])$):
\begin{eqnarray*}
\Gamma\;\gv\;[x\!:\!a][z\!:\!(P\,x)]\prdef{y}{x}{z}{(P\,y)}&:&[x:a][(P\,x)\fun[y!a](P\,y)]\\[1mm]
\Gamma\;\gv\;[x\!:\![y_1!a](P y_1)][z\!:\![y_2\!:\!a][(P y_2)\fun(Q y_2)]]\\
\prdef{y_3}{\pleft{x}}{((z\:\pleft{x})\pright{x})}{(Q y_3)}&:&[[y_1!a](P y_1)\\
&&\;\;\fun[[y_2\!:\!a][(P y_2)\fun(Q y_2)]\fun[y_3!a](Q y_3)]]
\end{eqnarray*}
A more detailed explanation of these deductions is given in Sections~\ref{reduction} and~\ref{typing}.
Note how the second example illustrates in a nutshell the role of existential abstraction as a deduction interface.
It can be understood as the transformation of a deduction on the basis of a reference ($x$) to its interface ($[y_1!a](P y_1)$) followed by the creation of a new interface ($[y_3!a](Q y_3)$) hiding the transformation details (application of $z$ to the extracted interdependent elements $\pleft{x}$ and $\pright{x}$).
More applications will be shown in Section~\ref{examples}, in particular in Sections~\ref{example.partial} (Partial functions),~\ref{example.functions} (Defining functions from deductions), and~\ref{example.groups} (Proof structuring).

The type rule for existential abstractions (\abse) has the consequence that existential abstractions can now be instantiated as functions in the elimination rule for universal abstraction, i.e.~the rule \appl~can be instantiated as follows:
\[
(\mathit{appl})\gsub{a}{[x!a_1]a_2}\quad \frac{\Gamma\sgv [x!a_1]a_2:[x:b]c\quad \Gamma\sgv d:b}{\Gamma\sgv([x!a_1]a_2\,d):c\gsub{x}{d}}
\]
This motivates the extension of $\beta$-equality to existential abstractions, i.e.~we have
\[
([x:a]b\;c)\;\eqv\;b\gsub{x}{c}\;\eqv\;([x!a]b\;c)
\]
Note that these properties merely state that both abstractions have equivalent functional behaviour when applied to arguments.
However, they are not identical and this semantic distinction is represented by their role in the type relation.
For example, note that from $x:[y!a]b$ and $z:a$ one cannot conclude $(x\,z):([y!a]b\,z)$.
Note that this approach naturally precludes adding axioms of extensionality as they deny such distinctions.

Finally, one might wonder why not type an existential abstraction $[x!a]b$ to an existential abstraction $[x!a]c$ (assuming  $x:a\sgv b:c$)?
According to the intuitive understanding of the type role of an existential abstraction as an infinite disjunction this would be logically
invalid and indeed it is quite easy to see such a rule would lead to inconsistency\footnote{
First, with such a rule the type $P=[x:\prim][y!x]\prim$ would have itself as an element, i.e.~$\sgv P : P$.
However from a declaration of this type one could then extract a proof of $[z:\prim]z$ as follows
$y:P\sgv[z:\prim]\pleft{(y\,z)}:[z:\prim]z$.}.

The remaining part of \dcalc\ consists of some propositional operators (see Table~\ref{propositional} for their type rules).
\begin{table}[!htb]
\fbox{
\begin{minipage}{0.96\textwidth}
\begin{align*}
\\[-8mm]
\bprodm\;\;&\frac{\Gamma\sgv a:c\quad \Gamma\sgv b:d}{\Gamma\sgv[a,b]:[c,d]}\quad\;\;\;&
\bsumm\;\;&\frac{\Gamma\sgv a:c\quad \Gamma\sgv b:d}{\Gamma\sgv[a+b]:[c,d]}\\[0mm]
\prlm\;\;&\frac{\Gamma\sgv a:[b,c]}{\Gamma\sgv\pleft{a}:b}&
\prrm\;\;&\frac{\Gamma\sgv a:[b,c]}{\Gamma\sgv\pright{a}:c}\\[0mm]
\injllm\;\;&\frac{\Gamma\sgv a:b\quad\Gamma\sgv c:d}{\Gamma\sgv\injl{a}{c}:[b+c]}&
\injlrm\;\;&\frac{\Gamma\sgv a:b\quad\Gamma\sgv c:d}{\Gamma\sgv\injr{c}{a}:[c+b]}\\[-6mm]
\end{align*}
\begin{align*}
\casedm\;\;&\frac{\Gamma\sgv a:[x:c_1]d\quad\Gamma\sgv b:[y:c_2]d\quad\Gamma\sgv d:e}{\Gamma\sgv\case{a}{b}:[z:[c_1+c_2]]d}\qquad\qquad\qquad\qquad\quad\\[-7mm]
\end{align*}
\begin{align*}
\negatem\;\;&\frac{\Gamma\sgv a:b}{\Gamma\sgv\myneg a:b}&&\qquad\qquad\qquad\qquad\qquad\qquad\qquad\qquad\qquad\;\;\\[-5mm]
\end{align*}
\end{minipage}
}
\caption{Type rules for propositional operators\label{propositional}}
\end{table}
\dcalc\ adds a \emph{product} ($[a,b]$) as a binary variation of universal abstraction ($[x:a]b$), i.e.~with an element-role as a binary pair and an type-role as a binary conjunction.
Due to the same notation dilemma as for universal abstraction, i.e.~neither the notation for pairs $\langle a,b\rangle$ nor for conjunctions $a \wedge b$ would be satisfactory, we use the neutral notation $[a,b]$.
The type rules \bprod, \prr, and \prl~for products directly encode the introduction and elimination rules for conjunctions.
The equivalence laws for projection are extended in an obvious way.
\[
\pleft{[a,b]}\eqv a\qquad
\pright{[a,b]}\eqv b
\]
As a example we show that existential abstractions without dependencies (i.e.~$[x!a]b$, where $x$ does not occur free in $b$) are logically equivalent to products:
\begin{eqnarray*}
\gv\;\;[y:[x!a]b][\pleft{y},\pright{y}]&:&[[x!a]b\fun[a,b]]\\{}
\gv\;\;[y:[a,b]]\prdef{x}{\pleft{y}}{\pright{y}}{b}&:&[[a,b]\fun[x!a]b]
\end{eqnarray*}

Similarly \dcalc\ adds a sum ($[b_1+b_2]$) as a binary variation of the existential abstraction ($[x!a]b$), i.e.~with an element-role as a binary pair and an type-role as a binary disjunction.
Consequently, there is a type sequence analogous to existential abstraction: Expressions ($a_1:b_1$, $a_2:b_2$) may be used within \emph{injections} ($\injl{a_1}{b_2}$, $\injr{b_1}{a_2}$) (injections are the analogue to protected definitions), which type to a sum ($[b_1+b_2]$) which types to a product.
The bracket notation for products and abstractions is extended to injections and sums for notational coherence.
Similarly to existential abstraction we use a more type-oriented notation for sums.
The injection rules directly encode the introduction rules for disjunctions and rule \cased~introduces a case distinction operator.
Two equivalence laws describe its functional interpretation.
\[
(\case{a}{b}\,\injl{c}{d})\eqv(a\,c)\qquad
(\case{a}{b}\,\injr{c}{d})\eqv(b\,d)
\]
As an example, the introduction and elimination rules for disjunction can be derived as follows:
\begin{eqnarray*}
\gv\;\;[[x:a]\injl{x}{b},[x:a]\injr{b}{x}] &:&[[a\fun[a+b]],[a\fun[b+a]]]\\{}
\gv\;\;[x:[a+b]][y:[a\fun c]][z:[b\fun c]](\case{y}{z}\,x)&:&[[a +b]\fun[[a\fun c]\fun [[b\fun c]\fun c]]]
\end{eqnarray*}
In the second example, in analogy to existential abstractions, a reference ($x$) to an interface ($[a+b]$) is hiding the information on which particular proposition ($a$ or $b$) is proven.

\noindent
In analogy to the $\beta$-equality for existential abstraction, projection is extended to sums.
\[
\pleft{[a+b]}\eqv a\qquad\pright{[a+b]}\eqv a
\]
Finally, to support common mathematical reasoning practices,
\dcalc\ introduces a classical negation operator $\myneg a$ which has a neutral type rule \negate~and which defines an equivalence class of propositions w.r.t.~classical negation.
The central logical properties are double negation and De Morgan's laws:
\[
a\eqv\myneg\myneg a
\qquad
[a,b]\eqv\myneg[\myneg a +\myneg b]
\qquad
[x\!:\!a]b\eqv\myneg[x!a]\myneg b
\]
Furthermore, negation has no effect on operators to which no elimination operator can be applied when they are used as types
\[
\myneg\prim\eqv\prim
\qquad\myneg\prdef{x}{a}{b}{c}\eqv\prdef{x}{a}{b}{c}
\]
\[
\myneg\injl{a}{b}\eqv\injl{a}{b}
\qquad\myneg\injr{a}{b}\eqv\injr{a}{b}
\qquad\myneg\case{a}{b}\eqv\case{a}{b}
\]
The negation laws define many negated formulas as equivalent which helps to eliminate many routine applications of logical equivalences in deductions.
For example, the following laws can be derived for arbitrary well-typed expressions $a,b$.
\begin{eqnarray*}
\gv\;\;\;[x:a]x&:&[a\fun a]\;\eqv\;[\myneg\myneg a\fun a]\;\eqv\;[a\fun\myneg\myneg a]\\
\gv\;\;\;[x:\myneg[a,b]]x&:&[\myneg[a,b]\fun[\myneg a\!+\!\myneg b]]\;\eqv\;[[\myneg a\!+\!\myneg b]\fun\myneg[a,b]]\\
\gv\;\;\;[x:\myneg[x!a]b]x&:&[\myneg[x!a]b\,\fun\,[x\!:\!a]\myneg b]\;\eqv\;[[x\!:\!a]\myneg b\,\fun\,\myneg[x!a]b]
\end{eqnarray*}
Truth and falsehood can now be defined as follows:
\[
\ff := [x:\prim]x \qquad \tr := \myneg\ff
\]
Note that falsehood is just a new convenient notation for the type role of the identity ($\ff=I$).
The expected properties follow almost directly:
\begin{eqnarray*}
\gv\;\;[x:\prim][y:\ff](y\,x)&:&[x:\prim][\ff\fun x]\\
\gv\;\;\prdef{x}{\prim}{\prim}{\myneg x}&:&\tr
\end{eqnarray*}
In the proof of $\tr$ note that $\tr\eqv[x!\prim]\myneg x$ and $\prim :\prim\eqv\myneg\prim\eqv(\myneg x)\gsub{x}{\prim}$.

Equivalence rules for negation obviously do not yield all logical properties of negation, \eg~they are not sufficient to prove
$[a+\myneg a]$. Therefore one has to assume additional axioms.
Similarly, due to the limited strength of \dcalc, additional axioms must also be added for the mathematical structures whose deductions are to be formalized (for both see Section~\ref{examples}).

As this completes the overview of \dcalc, one may ask for its general advantages w.r.t.~PTS\@.
While essentially equivalent expressive means could probably also be defined in a more classical semantic setting, the purely functional approach of \dcalc\ can be considered as conceptually more simple.
Independently from the semantic setting the use of common logical quantifiers and propositional connectors including a classical negation with rich equivalence laws seems more suitable for modelling mathematical deductions than encoded operators or operators with constructive interpretation.

The remainder of this paper is structured as follows: A formal definition of \dcalc\ is presented in Section~\ref{definition} and several application examples are shown in Section~\ref{examples}. Readers with less focus on the theoretical results can well read Section~\ref{examples} before Section~\ref{definition}.
The main part of the paper is Section~\ref{properties} containing proofs of confluence, subject reduction, uniqueness of types, strong normalization, and consistency.
\section{Definition}%
\label{definition}
First we summarize the syntax (which has been motivated in Section~\ref{overview}) and define some basic notions such as free occurrences of variables.
We then define the congruence relation $a\eqv b$ as transitive and symmetric closure of a reduction relation $a\rd b$.
Finally we define contexts $\Gamma$ and the type relation $\Gamma\sgv a:b$.
\subsection{Basic definitions}%
\label{basic.definitions}
\begin{defi}[Expressions]%
\label{expression}
Let $\dvar=\{x,y,z,\ldots\}$ be an infinite set of variables.
The set of expressions $\dexp$ is generated by the following rules
\begin{eqnarray*}
\dexp&\!::=\!&\prim\,\mid\,\dvar\\
&&\,\mid\,[\dvar:\dexp]\dexp\,\mid\,(\dexp\,\dexp)\\
&&\,\mid\,[\dvar!\dexp]\dexp\,\mid\,\prdef{\dvar}{\dexp}{\dexp}{\dexp}\,\mid\,\pleft{\dexp}\,\mid\,\pright{\dexp}\\
&&\,\mid\,[\dexp,\dexp]\,\mid\,[\dexp+\dexp]\,\mid\,\injl{\dexp}{\dexp}\,\mid\,\injr{\dexp}{\dexp}\,\mid\,\case{\dexp}{\dexp}\,\mid\,\myneg\dexp
\end{eqnarray*}
Expressions will be denoted by $a,b,c,d,\ldots$,
\begin{itemize}
\item
$\prim$ is the \emph{primitive constant}, $x,y,z,\ldots\in\dvar$ are \emph{variables},
\item
$[x:a]b$ is a \emph{universal abstraction}, $(a\,b)$ is an \emph{application},
\item
$[x!a]b$ is an \emph{existential abstraction}, $\prdef{x}{a}{b}{c}$ is a \emph{protected definition}, $\pleft{a}$ and $\pright{a}$ are \emph{left-} and \emph{right-projection},
\item
$[a,b]$ is a \emph{product}, $[a+b]$ is a \emph{sum}, $\injl{a}{b}$ and $\injr{a}{b}$ are \emph{left-} and \emph{right-injection}, $\case{a}{b}$ is a \emph{case distinction}, and $\myneg a$ is a \emph{negation}.
\end{itemize}
For the sake of succinctness and homogenity in the following definitions we are using some additional notations for groups of operations:
\begin{eqnarray*}
\prsumop{a}{b}&\text{stands for}&[a,b]\;\text{or}\;[a+b]\\
\binop{a_1}{\ldots,a_n}&\text{stands for}&
\begin{cases}
\pleft{a_1},\pright{a_1},\text{or}\;\myneg a_1&\text{if}\;n=1\\
(a_1\,a_2),\prsumop{a_1}{a_2},\injl{a_1}{a_2},\injr{a_1}{a_2}\;\text{or}\;\case{a_1}{a_2}
&\text{if}\;n=2
\end{cases}\\
\binbop{x}{a_1}{\ldots,a_n}&\text{stands for}&
\begin{cases}
[x:a_1]a_2\;\text{or}\;[x!a_1]a_2&\text{if}\;n=2\\
\prdef{x}{a_1}{a_2}{a_3}&\text{if}\;n=3
\end{cases}
\end{eqnarray*}
If one of these notations is used more than once in an equation or inference rule it always denotes the same concrete operation.
\end{defi}
\noindent
\begin{defi}[Free variables, substitution]%
\label{free}
Variables occurring in an expression which do not occur in the range of a binding \emph{occur free} in the expression.
$\free(a)$ which denotes the set of \emph{free variables} of an expression $a$ is defined as follows:
\begin{eqnarray*}
\free(\prim)&=&\{\}\\
\free(x)&=&\{x\}\\
\free(\binop{a_1}{\ldots,a_n})&=&\free(a_1)\union\ldots\union\free(a_n)\\
\free(\binbop{x}{a_1}{\ldots,a_n}&=&\free(a_1)\union\ldots\union\free(a_{n-1})\union(\free(a_n)\!\setminus\!\{x\})
\end{eqnarray*}
Note that in a protected definition $\prdef{x}{a_1}{a_2}{a_3}$ the binding of $x$ is for $a_3$ only.
The \emph{substitution} $a\gsub{x}{b}$ of all free occurrences of variable $x$ in expression $a$ by expression $b$ is defined as follows:
\begin{eqnarray*}
\prim\gsub{x}{b}&=&\prim\\
y\gsub{x}{b}&=&\begin{cases}b&\text{if}\;x=y\\y&\text{otherwise}\end{cases}\\
\binop{a_1}{\ldots,a_n}\gsub{x}{b}&=&\binop{a_1\gsub{x}{b}}{\ldots,a_n\gsub{x}{b}}\\
\binbop{y}{a_1}{\ldots,a_n}\gsub{x}{b}&=&\begin{cases}\binbop{y}{a_1\gsub{x}{b}}{\ldots,a_{n-1}\gsub{x}{b},a_n}\\
           \qquad\text{if}\;x=y\\
           \binbop{y}{a_1\gsub{x}{b}}{\ldots,a_n\gsub{x}{b}}\\
           \qquad\text{otherwise}\end{cases}
\end{eqnarray*}
\end{defi}
\noindent
A substitution $a\gsub{x}{b}$ may lead to name clashes in case a variable $y$ occurring free in the inserted expression $b$ comes into the range of a binding of $y$ in the original expression. These name clashes can be avoided by renaming of variables.
\begin{defi}[$\alpha$-conversion, Implicit renaming, name-independent representations]%
\label{alpha}
The renaming relation on bound variables is usually called $\alpha$-conversion and induced by the following axiom (using the notation $=_{\alpha}$).
\[
\frac{y\notin\free(a_n)}{\binbop{x}{a_1}{\ldots,a_{n-1},a_n} \;=_{\alpha}\;\binbop{y}{a_1}{\ldots,a_{n-1},a_n\gsub{x}{y}}}
\]
In order not to clutter the presentation, we will write variables as strings but always assume appropriate renaming of bound variables in order to avoid name clashes. This assumption is justified because one could also use a less-readable but name-independent presentation of expressions using \eg~de Bruijn indexes~\cite{Bruijn72lambdacalculus} which would avoid the necessity of renaming all together.
\end{defi}
\subsection{Reduction and congruence}%
\label{reduction}
The functional interpretation is defined by the reduction of an expression into a more basic expression.
We will later show that reduction, if it terminates, always leads to a unique result.
\begin{defi}[Single-step reduction]%
\label{sred}
\emph{Single-step reduction} $a\srd b$ is the smallest relation satisfying the axioms and inference rules of Table~\ref{red.rules}.
\begin{table}[!htb]
\fbox{
\begin{minipage}{0.96\textwidth}
\begin{tabular}{@{$\;$}l@{$\;\;$}r@{$\;\;$}c@{$\;\;$}ll@{$\;\;$}r@{$\;\;$}c@{$\;\;$}l}
$\;$\\[-3mm]
$\mathit{(\beta_1)}$&$([x:a]b\;c)$&$\srd$&$b\gsub{x}{c}$&$\mathit{(\beta_2)}$&$([x!a]b\;c)$&$\srd$&$b\gsub{x}{c}$\\
$\mathit{(\beta_3)}$&$(\case{a}{b}\,\injl{c}{d})$&$\srd$&$(a\,c)$&$\mathit{(\beta_4)}$&$(\case{a}{b}\,\injr{c}{d})$&$\srd$&$(b\,d)$\\[2mm]
$\mathit{(\pi_1)}$&$\pleft{\prdef{x}{a}{b}{c}}$&$\srd$&$a$&$\mathit{(\pi_2)}$&$\pright{\prdef{x}{a}{b}{c}}$&$\srd$&$b$\\
$\mathit{(\pi_3)}$&$\pleft{[a,b]}$&$\srd$&$a$&$\mathit{(\pi_4)}$&$\pright{[a,b]}$&$\srd$&$b$\\
$\mathit{(\pi_5)}$&$\pleft{[a+b]}$&$\srd$&$a$&$\mathit{(\pi_6)}$&$\pright{[a+b]}$&$\srd$&$b$\\[2mm]
$\mathit{(\nu_1)}$&$\myneg\myneg a$&$\srd$&$a$\\
$\mathit{(\nu_2)}$&$\myneg[a,b]$&$\srd$&$[\myneg a+\myneg b]$&$\mathit{(\nu_3)}$&$\myneg[a+b]$&$\srd$&$[\myneg a,\myneg b]$\\
$\mathit{(\nu_4)}$&$\myneg[x:a]b$&$\srd$&$[x!a]\myneg b$&$\mathit{(\nu_5)}$&$\myneg[x!a]b$&$\srd$&$[x:a]\myneg b$\\
$\mathit{(\nu_6)}$&$\myneg\prim$&$\srd$&$\prim$&$\mathit{(\nu_7)}$&$\myneg\prdef{x}{a}{b}{c}$&$\srd$&$\prdef{x}{a}{b}{c}$\\
$\mathit{(\nu_8)}$&$\myneg\injl{a}{b}$&$\srd$&$\injl{a}{b}$&$\mathit{(\nu_9)}$&$\myneg\injr{a}{b}$&$\srd$&$\injr{a}{b}$\\
$\mathit{(\nu_{10})}$&$\myneg\case{a}{b}$&$\srd$&$\case{a}{b}$
\end{tabular}
\begin{align*}
\\[-6mm]
\mathit{(\oplus{\overbrace{(\_,\ldots,\_)}^{n}}_i)}\quad&\frac{a_i\srd b_i}{\binop{a_1,\ldots,a_i}{\ldots,a_n}\srd \binop{a_1,\ldots,b_i}{\ldots,a_n}}\\
\mathit{(\oplus_x{\overbrace{(\_,\ldots,\_)}^{n}}_i)}\quad&\frac{a_i\srd b_i}{\binbop{x}{a_1,\ldots,a_i}{\ldots,a_n}\srd \binbop{x}{a_1,\ldots,b_i}{\ldots,a_n}}\\[-4mm]
\end{align*}
\end{minipage}
}
\caption{Axioms and rules for single-step reduction.\label{red.rules}}
\end{table}
All reduction axioms have been motivated and explained in the introduction in form of equivalence laws.
\end{defi}
\begin{defi}[Reduction]
\emph{Reduction} $a\rd b$ of an expression $a$ to $b$ is defined as the reflexive and transitive closure of single-step reduction $a\srd b$.
The notation $a\rdn{n} b$ where $n\geq 0$ is used to indicate $n$ consecutive single-step reductions.
We use the notation $a\rd b\rd c\ldots$ to indicate reduction sequences.
If two expressions $a$ and $b$ reduce to a common expression we write $a\rdr b$.
To show arguments about equality in arguments about reduction we use the notation $a_1=\cdots=a_n\;\rd\;b_1=\cdots=b_m\:\rd\: c_1=\cdots=c_k\cdots$.
This will also be used for sequences of $n$-step reductions $\rdn{n}$ and accordingly for sequences containing both notations.
\end{defi}
\begin{defi}[Congruence]
\emph{Congruence} of expressions, denoted by $a\eqv b$, is defined as the symmetric and transitive closure of reduction.
The notations for reduction sequences are extended to contain congruences as well.
\end{defi}
\noindent
The following simple examples illustrate reduction and congruence:
\begin{eqnarray*}
(([y_2:a][y:(P\:y_2)](Q\:y_2)\:\pleft{x})\:\pright{x})
&\srd_{(\beta_1)}&([y:(P\:\pleft{x})](Q\:\pleft{x})\:\pright{x})\\
&\srd_{(\beta_1)}&(Q\:\pleft{x})\\{}
\pright{\pleft{[\prdef{x}{a}{b}{c},d]}}&\srd_{(\pi_3)}&\pright{\prdef{x}{a}{b}{c}}\\{}
&\srd_{(\pi_2)}&b\\{}
[x:\myneg[y:\prim]\prim]\myneg[a,b]&\srd_{(\nu_4)}&[x:[y!\prim]\myneg\prim]\myneg[a,b]\\{}
&\srd_{(\nu_6)}&[x:[y!\prim]\prim]\myneg[a,b]\\{}
&\srd_{(\nu_2)}&[x:[y!\prim]\prim][\myneg a+\myneg b]\\{}
[x:\myneg[a+b]][\myneg a,\myneg b]&\eqv&[x:[\myneg a,\myneg b]]\myneg[a+b]
\end{eqnarray*}
\subsection{Typing and Validity}%
\label{typing}
\begin{defi}[Context]
A \emph{context}, denoted by $\Gamma$, is a finite sequence of declarations $(x_1 :a_1,\ldots,x_n :a_n)$, where $x_i$ are variables with $x_i\neq x_j$ and $a_i$ are  expressions.
The assumption about name-free representation of bound variables justifies the uniqueness assumption.
The lookup of a variable in a context $\Gamma(x)$ is a partial function defined by $\Gamma(x_i)= a_i$.
$\dom(\Gamma)=\{x_1,\ldots,x_n\}$ and $\ran(\Gamma)=\{a_1,\ldots,a_n\}$ denote  the domain and range of a context $\Gamma$.
$\Gamma,x:a$ denotes the extension of $\Gamma$ on the right by a binding $x:a$ where $x$ is a variable not yet declared in $\Gamma$.
$\Gamma_1,\Gamma_2$ denotes the concatenation of two contexts declaring disjoint variables.
The empty context is written as $()$.
$[\Gamma]a= [x_1:a_1]\ldots[x_1:a_n]a$ denotes the abstraction of a context over an expression.
\end{defi}
\begin{defi}[Typing]
Typing $\Gamma\sgv a:b$ of $a$ to $b$, the \emph{type} of $a$, under a context $\Gamma$ is the smallest ternary relation on contexts and two expressions satisfying the inference rules of Table~\ref{typ.rules} all of which
have been motivated and explained in Section~\ref{overview}.
\begin{table}[!htb]
\fbox{
\begin{minipage}{0.96\textwidth}
\begin{align*}
\\[-9mm]
\axm\;\;&\sgv\prim:\prim&
\mystartm\;\;&\frac{\Gamma\sgv a:b}{\Gamma,x:a\sgv x:a}\\[0mm]
\weakm\;\;&\frac{\Gamma\sgv a:b\quad\Gamma\sgv c:d}{\Gamma,x:c\sgv a:b}\qquad&
\convm\;\;&\frac{\Gamma\sgv a:b\quad b\eqv c\quad\Gamma\sgv c:d}{\Gamma\sgv a:c}\\[0mm]
\absum\;\;&\frac{\Gamma,x:a\sgv b:c}{\Gamma\sgv[x:a]b:[x:a]c}&
\absem\;\;&\frac{\Gamma,x:a\sgv b:c}{\Gamma\sgv[x!a]b:[x:a]c}\\[0mm]
\applm\;\;&\frac{\Gamma\sgv a:[x:c]d\quad\Gamma\sgv b:c}{\Gamma\sgv(a\,b):d\gsub{x}{b}}&\\[-6mm]
\end{align*}
\begin{align*}
\pdefm\;\;&\frac{\Gamma\sgv a: b\quad\Gamma\sgv c:d\gsub{x}{a}\quad\Gamma,x:b\sgv d:e}{\Gamma\sgv\prdef{x}{a}{c}{d}:[x!b]d}\qquad\qquad\qquad\qquad\qquad\\[-7mm]
\end{align*}
\begin{align*}
\chinm\;\;&\frac{\Gamma\sgv a:[x!b]c}{\Gamma\sgv\pleft{a}:b}&
\chbam\;\;&\frac{\Gamma\sgv a:[x!b]c}{\Gamma\sgv\pright{a}:c\gsub{x}{\pleft{a}}}\qquad\\[1mm]
\bprodm\;\;&\frac{\Gamma\sgv a:c\quad \Gamma\sgv b:d}{\Gamma\sgv[a,b]:[c,d]}\quad\;\;\;&
\bsumm\;\;&\frac{\Gamma\sgv a:c\quad \Gamma\sgv b:d}{\Gamma\sgv[a+b]:[c,d]}\\[0mm]
\prlm\;\;&\frac{\Gamma\sgv a:[b,c]}{\Gamma\sgv\pleft{a}:b}&
\prrm\;\;&\frac{\Gamma\sgv a:[b,c]}{\Gamma\sgv\pright{a}:c}\\[0mm]
\injllm\;\;&\frac{\Gamma\sgv a:b\quad\Gamma\sgv c:d}{\Gamma\sgv\injl{a}{c}:[b+c]}&
\injlrm\;\;&\frac{\Gamma\sgv a:b\quad\Gamma\sgv c:d}{\Gamma\sgv\injr{c}{a}:[c+b]}\\[-6mm]
\end{align*}
\begin{align*}
\casedm\;\;&\frac{\Gamma\sgv a:[x:c_1]d\quad\Gamma\sgv b:[y:c_2]d\quad\Gamma\sgv d:e}{\Gamma\sgv\case{a}{b}:[z:[c_1+c_2]]d}\qquad\qquad\qquad\qquad\quad\\[-7mm]
\end{align*}
\begin{align*}
\negatem\;\;&\frac{\Gamma\sgv a:b}{\Gamma\sgv\myneg a:b}&&\qquad\qquad\qquad\qquad\qquad\qquad\qquad\qquad\qquad\;\;
\end{align*}
\end{minipage}
}
\caption{Axiom and rules for typing.\label{typ.rules}}
\end{table}
We often use the notation $\Gamma\sgv a_1=\cdots=a_n\::\: b_1=\cdots=b_m$ to indicate arguments about equality of expressions in proofs about the type relation.
Sometimes we also mix the use of $\eqv$ and $=$ in this notation.
\end{defi}
\noindent
As a simple example consider the introduction and elimination laws for existential quantification (see Section~\ref{overview}).
The type determination of the first law can be seen as follows: With $\Gamma_1=(P:[y:a]b,x:a,z:(P\,x))$) where $y\notin\free(b)$
and since $(P\,y)\gsub{y}{x}=P(\,x)$ by the rules \mystart, \weak, and \appl\ we obtain
\[
\Gamma_1\gv x:a\qquad\Gamma_1\gv z:(P\,y)\gsub{y}{x}\qquad\Gamma_1\gv (P\,y):b
\]
which due to the rule \pdef\ implies that
\begin{eqnarray*}
\Gamma_1&\gv&\prdef{y}{x}{z}{(P\,y)}:[y!a](P\,y)
\end{eqnarray*}
The type relation in Section~\ref{overview} follows from rule \absu.
The type determination of the second law can be seen as follows, let
\begin{eqnarray*}
\Gamma_2&=&(P:[y:a]b,Q:[y:a]b,x:[y_1!a](P\,y_1),z:[y_2\!:\!a][y:(P\,y_2)](Q\,y_2))
\end{eqnarray*}
By rules \mystart, \weak, and \appl\ we obtain:
\[
\Gamma_2\gv(z\,\pleft{x}):[u:(P\,\pleft{x})](Q\,\pleft{x})
\]
From this, by the same rules we obtain:
\[
\Gamma_2\gv((z\,\pleft{x})\:\pright{x}):(Q\,\pleft{x}) = (Q\,y_3)\gsub{y_3}{\pleft{x}}
\]
Similarly to the first law, by rule \pdef\ this implies
\begin{eqnarray*}
\Gamma_2&\gv&\prdef{y_3}{\pleft{x}}{((z\:\pleft{x})\:\pright{x})}{(Q\,y_3)}\;\;:\;\;[y_3!a](Q\,y_3)
\end{eqnarray*}
Again, the remainder follows from rule \absu.
\begin{defi}[Validity]
Validity $\Gamma\sgv a$ of an expression $a$ under a context $\Gamma$ is defined as the existence of a type:
\begin{eqnarray*}
\Gamma\gv a&&\text{if and only if there is an expression $b$ such that}\;\;\Gamma\sgv a:b
\end{eqnarray*}
Similarly to typing we use the notation $\Gamma_1=\cdots=\Gamma_n\sgv a_1=\cdots=a_n$ to indicate arguments about equality of contexts and expressions in proofs about validity. Similarly for congruence $\eqv$ or combinations of both.
We also use the notation $\Gamma\sgv a_1,\ldots,a_n$ as an abbreviation for $\Gamma\sgv a_1$, $\ldots$, $\Gamma\sgv a_n$.
As for typing, we also omit writing the empty context.
\end{defi}
\section{Examples}%
\label{examples}
The purpose of this section is to illustrate the basic style of axiomatizing theories and describing deductions when using~\dcalc.
Note that the example are presented with fully explicit expressions of~\dcalc, i.e.~we do not omit any subexpressions which could be inferred from other parts using pattern matching or proof tactics and we do not use a module concept for theories.
Such features should of course be part of any practically useful approach for formal deductions on the basis of \dcalc.
\begin{rem}[Notational conventions]
For convenience, in the examples we write $[x_1:a_1]\ldots [x_n:a_n]a$ as $[x_1:a_1;\ldots;x_n:a_n]a$ and $[x_1:a]\ldots[x_n:a]b$ as $[x_1,\ldots,x_n:a]b$ and similar for existential abstractions.
We also use combinations of these abbreviations such that \eg~$[x:a;y_1,y_2!b]c$ is shorthand for $[x:a][y_1!b][y_2!b]c$.
We write $[a\fun b]$ to denote $[x:a]b$ if $x\notin\free(b)$.
Similarly, we write $[a_1\fun\ldots[a_n\fun a]\ldots]$ as $[a_1;\ldots;a_n\fun a]$.
We write nested applications $(\ldots((a\,a_1)\,a_2)\ldots\,a_n)$ as $(a\,a_1\ldots a_n)$.
We also write $\Gamma\sgv a:b$ as $a:b$ if $\Gamma$ is empty or has been made clear from the context. When writing $a:b\;(\mathit{name})$ we introduce $name$ as abbreviation for $a$ of type $b$.
\end{rem}
\subsection{Additional axioms for negation}%
\label{examples.logic}
In Section~\ref{overview} we explained the necessity of additional negation axioms.
Note that, without additional axioms, even constructive laws of negation are missing, as for example, the contrapositive laws is not valid.
In principle many different axioms schemes are possible.
For example, one could be inspired by the rules for negation in sequent calculus ($\underline{A}$ stands for a sequence of formulas $A_1$, $\ldots$, $A_n$):
\[
\frac{\underline{C}\sgv A,\underline{B}}{\underline{C},\myneg A\sgv\underline{B}}\;(L\myneg)\qquad
\frac{\underline{C},A\sgv\underline{B}}{\underline{C}\sgv \myneg A,\underline{B}}\;(R\myneg)
\]
Theses rule together with the negations axioms inspire the following axiom schemes indexed over expressions $a$ and $b$:
\[
\negax^+_{a,b}:[[a+b]\fun[\myneg a\fun b]]\qquad\negax^-_{a,b}:[[a\fun b]\fun[\myneg a+b]]\qquad(*)
\]
where $\negax^+_{a,b},\negax^-_{a,b}\in\dvar$ are from an infinite set of variables $\mathit{I_{Ax}}$.
Formally, typing an expression $c$ to an expression $d$ under the above axiom schemes could be defined so as to require that $\free(d)=\emptyset$ and that there is a context $\Gamma$ consisting of (a finite sequence of) declarations of variables from $\mathit{I_{Ax}}$ such that $\Gamma\sgv c:d$.

\noindent
Assuming these axiom schemes we can now show the law of the excluded middle:
\begin{eqnarray*}
\;(\negax^-_{\myneg a,\myneg a}\;[x:\myneg a]x)&:&[a+\myneg a]
\end{eqnarray*}
Note that $\negax^-_{a,b}$ and $\negax^+_{\myneg a,b}$ state that universal abstractions
without dependencies (i.e.~$[a\fun b]$) are logically equivalent to the classical definition of implications:
\[
\negax^-_{a,b}:[[a\fun b]\,\fun[\myneg a+b]]\qquad
\negax^+_{\myneg a,b}:[[\myneg a+b]\fun[a\fun b]]
\]
On the basis of this equivalence it is easy to show the remaining laws of negation.
\subsection{Equality}%
\label{equality}
Basic axioms about an equality congruence relation on expressions of equal type can be formalized as context \emph{Equality}:
\begin{eqnarray*}
\mathit{Equality}:=\;(\;\noarg\!=_{\noarg}\!\noarg&:&[S:\prim][S;S\fun\prim]\\
E_1&:&[S:\prim;x:S]x\!=_{S}\!x\\
E_2&:&[S:\prim;x,y:S][x\!=_{S}\!y\:\fun\:y\!=_{S}\!x]\\
E_3&:&[S:\prim;x,y,z:S][x\!=_{S}\!y;y\!=_{S}\!z\:\fun\:x\!=_{S}\!z]\\
E_4&:&[S_1,S_2:\prim;x,y:S_1;F:[S_1\fun S_2]][x\!=_{S_1}\!y\:\fun\:(F\,x)\!=_{S_2}\!(F\,y)])
\end{eqnarray*}
Here $S$ is an variable used to abstract over the type of the expressions to be equal.
Note that, for better readability, we use the infix notation $x=_{S}y$, introduced by a declaration $\noarg\!=_{\noarg}\!\noarg:[S:\prim][S;S\fun S]$.
Equivalently we could have written $(\noarg\!=_{\noarg}\!\noarg\,S\,x\,y)$.
\subsection{Natural Numbers}%
\label{examples.nats}
Assuming the context \emph{Equality}, well-known axioms about naturals numbers including an induction principle can be formalized as context \emph{Naturals}:
\begin{eqnarray*}
\mathit{Naturals}:=\quad(\quad \nats&:&\prim\\
0&:&\nats\\
s&:&[\nats\fun\nats]\\
\noarg\!+\!\noarg,\noarg\!*\!\noarg&:&[\nats;\nats\fun\nats]\\
S_1&:&[n:\nats]\myneg((s\,n)\!=_{\nats}\!0)\\
S_2&:&[n,m:\nats][(s\,n)\!=_{\nats}\!(s\,m)\:\fun\:n\!=_{\nats}\!m]\\
A_1&:&[n:\nats] 0+n=_{\nats}n\\
A_2&:&[n,m:\nats](s\,n)+m=_{\nats}(s\,(n+m))\\
M_1&:&[n:\nats]0*n=_{\nats}0\\
M_2&:&[n,m:\nats](s\,n)*m=_{\nats}m+(n*m)\\
\mathit{ind}&:&[P:[\nats\fun\prim]][(P\,0);[n:\nats][(P\,n)\fun(P\,(s\,n))]\fun[n:\nats](P\,n)]\quad)
\end{eqnarray*}
Note that, for better readability, we use infix notations $n+m$ and $n*m$ abbreviating the expressions $(+\,n\,m)$ and $(*\,n\,m)$.
Note that the induction principle ranges over propositions of type $\prim$ only and is thus weaker than the common one (see also Section~\ref{ex.casting}).

As an example, a simple property can be (tediously) deduced under the context $(\mathit{Equality},\mathit{Naturals},n:\nats)$ where $1:= (s\,0)$:
\begin{center}
\begin{tabular}{l@{$\;\;$}c@{$\;\;$}l}
$(E_3\;\nats\;(1\!+\!n)\;(s\,(0+n))\;(s\,n)$\\
$\quad (A_2\,0\,n)\;(E_4\;\nats\;\nats\;(0\!+\!n)\;n\;[k:\nats](s\,k)\;(A_1\,n)))$&:&$1+n\!=_{\nats}\!(s\,n)$
\end{tabular}
\end{center}
We give two examples of predicates defined on natural numbers as follows (where $2:=(s\,1)$):
\begin{eqnarray*}
\noarg\geq\noarg&:=&[n,m:\nats;k!\nats]n=_{\nats}m+k\\
\mathit{even}&:=&[n:\nats;m!\nats]n=2*m
\end{eqnarray*}
\noindent
The property $[n:\nats][(\mathit{even}\:n)\fun(\mathit{even}\,(2+n))]$ about even numbers can be deduced on the basis of the following typing
\begin{eqnarray*}
n\!:\!\nats,x\!:\!(\mathit{even}\:n)\gv\prdef{m}{1+\pleft{x}}{(\mathit{law}\,n\,\pleft{x}\,\pright{x})}{2+n=_{\nats}2*m}&\!\!:\!\!&(\mathit{even}\,(2+n))
\end{eqnarray*}
Here $\mathit{law}$ is an abbreviation for a proof of the following property
\begin{eqnarray*}
\mathit{law}&:&[n,m:\nats][n\!=_{\nats}\!2\!*\!m\fun\;2\!+\!n\!=_{\nats}\!2\!*\!(1\!+\!m)]
\end{eqnarray*}
A definition of \emph{law} can be derived from the axioms in a style similar (and even more tedious) to the above deduction.
This deduction of the property about even numbers is correct since
\begin{eqnarray*}
(\mathit{law}\,n\,\pleft{x}\,\pright{x})&:&2+n=_{\nats}2*(1+\pleft{x})\\
&\eqv&(2+n=_{\nats}2*m)\gsub{m}{1\!+\!\pleft{x}}
\end{eqnarray*}
\subsection{Casting types to the primitive constant}%
\label{ex.casting}
In many cases, the constraint of universally quantified variables to be of primitive type, i.e.~$x:\prim$, can be relaxed to arbitrary well typed expressions $a:b$ using an operation to cast arbitrary types to $\prim$.
For this reason we introduce an axiom scheme for a $\prim$-casting function $\cast_a$\footnote{For a discussion of axioms schemes see also Section~\ref{examples.logic}.} and we also assume the axioms schemes $\castin{a}$ and $\castout{a}$ essentially stating equivalence between casted and uncasted types:
\begin{eqnarray*}
\cast_a&:&[a\fun\prim]\\
\castin{a}&:&[x:a][x\fun(\cast_a\,x)]\\
\castout{a}&:&[x:a][(\cast_a\,x)\fun x]
\end{eqnarray*}
As an example, we can generalize the property of $\ff$ in Section~\ref{overview} to arbitrary well-typed $a$:
\[
x:a\gv[y:\ff](\castout{a}\;x\;(y\:(\cast_a\:x))):[\ff\fun x]
\]
As an another example, assuming one already has axiomatized the more general theory of integers, \eg~with type \emph{Int}, a constant $0_I$, and the relation $\geq$, one could instantiate the declaration of $N$ of the context \emph{Naturals} using casting as follows:
\begin{eqnarray*}
(\nats\;:=\;(\cast_{[\mathit{Int}\fun\prim]}\,[n!\mathit{Int}](n\geq 0_I)),\;\; 0\;:= \ldots)
\end{eqnarray*}
\subsection{Formalizing partial functions}%
\label{example.partial}
In the previous section we have used the application operator in \dcalc\ to model the application of total functions.
As an example of a partial function consider the predecessor function on natural numbers.
To introduce this function in~\dcalc, several approaches come to mind:
\begin{itemize}
\item
The predecessor function can be axiomatized as a total function over the type $\nats$.
\begin{eqnarray*}
p&:&[\nats\fun\nats]\\
P&:&[n:\nats](p\,(s\,n))=_{\nats}n
\end{eqnarray*}
The (potential) problem of this approach is the interpretation of $(p\,0)$ which may lead to unintuitive or harmful consequences.
Furthermore, if additional axioms are to be avoided, the declaration of $p$ must eventually by instantiated by some (total) function which defines a value for $0$.
\item
The predecessor function can be defined by separately formalizing the condition.
\begin{eqnarray*}
\mathit{nonZero}&:=&[i:\nats;j!\nats]i=_{\nats}(s\,j)\\
p&:=&[n:\nats;q:(\mathit{nonZero}\;n)]\pleft{q}
\end{eqnarray*}
While mathematically clean, this definition requires to always provide an additional argument instantiating $q$ when using the predecessor function.
\item
As a variant of the previous approach, the additional argument can be hidden into an adapted type of the predecessor function.
\begin{eqnarray*}
\pnats&:=&[i,j!\nats]i=_{\nats}(s\,j)\\
p&:=&[n:\pnats]\pleft{\pright{n}}
\end{eqnarray*}
While mathematically clean, this approach requires a more complex handling of the argument $n:\pnats$ when using the predecessor function.
For example, in $\pnats$ one can define the number {\bf 1} and apply the predecessor as follows:
\begin{eqnarray*}
{\bf 1}&:=&\prdef{i}{(s\,0)}{\prdef{j}{0}{(E_1\,N\,(s\,0))}{(s\,0)=_{\nats}(s\,j)}}{[j!\nats]i=_{\nats}(s\,j)} :\pnats
\end{eqnarray*}
\end{itemize}
As a consequence $p({\bf 1})\eqv \pleft{\pright{{\bf 1}}}\eqv 0$. Which of these (or other) approaches is best to use seems to depend on the organization and the goals of the formalization at hand.
\subsection{Defining functions from deductions}%
\label{example.functions}
Note that while the predecessor function can be directly defined, more complex functions and their (algorithmic) properties can be derived from the proofs of properties.
As a sketch of an example consider the following well-known property
\begin{eqnarray*}
\mathit{GCD}&:=&[x,y:\nats;k!\nats](\mathit{gcd}\;k\;x\;y) 
\end{eqnarray*}
where $(\mathit{gcd}\;k\;x\;y)$ denotes the property that $k$ is the greatest common divisor of $x$ and $y$. 
Given a (not necessarily constructive) deduction $P_{\mathit{GCD}}$ of type \emph{GCD}, one can then define the greatest common divisor $x\downarrow y$ and define deductions $d_1$ and $d_2$ proving well-known algorithmic properties.
\begin{eqnarray*}
\noarg\downarrow\noarg&:=&[x,y:\nats]\pleft{(P_{\mathit{GCD}}\;x\;y)}\\
d_1&:&[x,y:\nats](x+y)\downarrow x=_{\nats}y\downarrow x\\
d_2&:&[x,y:\nats]x\downarrow(x+y)=_{\nats}x\downarrow y
\end{eqnarray*}
\subsection{Sets}
When formalizing mathematical deductions, besides natural numbers and equality one needs formal systems for many more basic structures of mathematics.
For example, sets can be axiomatized by the following context using a formalized set comprehension principle.
\begin{eqnarray*}
\mathit{Sets}:=\quad(\quad
\mathbb{P}&:&[\prim\fun\prim],\\
\noarg\in_{\noarg}\noarg&:&[S:\prim][S;(\mathbb{P}\,S)\fun\prim],\\
{\{\noarg\}}_{\noarg}&:&[S:\prim][[S\fun\prim]\fun(\mathbb{P}\,S)],\\
I&:&[S:\prim;x:S;P:[S\fun\prim]][(P\,x)\fun x\in_{S}{\{[y:S](P\,y)\}}_S],\\
O&:&[S:\prim;x:S;P:[S\fun\prim]][x\in_{S}{\{[y:S](P\,y)\}}_S\,\fun\,(P\,x)]\quad)
\end{eqnarray*}
Note that, for better readability, we use the notation $x\in_{S}y$ and ${\{P\}}_S$ abbreviating the expressions $(\noarg\!\in_{\noarg}\!\noarg\;S\,x\,y)$ and
$({\{\noarg\}}_{\noarg}\,S\,P)$.
One can now define sets and set operators using set comprehension. Note the use of the $\prim$-casting function to ensure the set-defining properties are of type $\prim$.
\begin{eqnarray*}
\emptyset&:=& [S:\prim]{\{[x:S](\cast_{[\prim\sfun\prim]}\,\ff)\}}_S\quad:\quad[S:\prim](\mathbb{P}\,S)\\
\noarg\!\cup_{\noarg}\!\noarg&:=& [S:\prim,A,B:(\mathbb{P}\,S)]{\{[x:S](\cast_{[\prim,\prim]}\,[x\in_{S}A+x\in_{S}B])\}}_S\\
&&:\quad[S:\prim][(\mathbb{P}\,S);(\mathbb{P}\,S)\fun(\mathbb{P}\,S)]\\
\mathit{Even}&:=&{\{[x:\nats](\cast_{[\nats\fun\prim]}\,(\mathit{even}\;x))\}}_{\nats}\quad:\quad(\mathbb{P}\,\nats)
\end{eqnarray*}
Properties of individual elements can be deduced on the basis of the axiom $O$, for example
we can extract the property $P:=[n!\nats](x=_{\nats}2*n)$ of a member $x$ of \textit{Even} using the cast-removal axiom as follows (where $\mathit{Even}_P := [x:\nats](\cast_{[\nats\fun\prim]}\,(\mathit{even}\:x)))$:
\begin{eqnarray*}
x:\nats,\mathit{asm}:x\in_{\nats}\mathit{Even}&\sgv&(\castout{[\nats\fun\prim]}\;P\;(O\;\nats\;x\;\mathit{Even}_P\;\mathit{asm}))\;:\; P
\end{eqnarray*}
Note that in this formalization of sets the axiom of choice can be immediately derived as follows:
\begin{eqnarray*}
\Gamma& = &(X,I:\prim,A:[I\fun(\mathbb{P}\,X)],u:[x:I;y!X]y\in_{X}(A\,x))\\
\Gamma&\sgv&\prdef{F}{[i:I]\pleft{(u\,i)}\,}{[i:I]\pright{(u\,i)}}{([i:I](F\,i)\in_{X}(A\,i))}\\
&:&[F![I\fun X];i:I](F\,i)\in_{X}(A\,i)
\end{eqnarray*}
Note also that an alternative definition of naturals from integers can be given with sets as follows:
\begin{eqnarray*}
(\nats\;:=\;(\cast_{(\mathbb{P}\,\textit{Int})}\;{\{[n:\textit{Int}](\cast_{\prim}\,(n\geq 0_I))\}}_\textit{Int}),\;\;0\;:=\ldots)
\end{eqnarray*}
\subsection{Proof structuring}%
\label{example.groups}
To illustrate some proof structuring issues, we formalize the property of being a group as follows (writing $[a_1,a_2,\ldots]$ for $[a_1,[a_2,\ldots ]]$):
\begin{eqnarray*}
\mathit{Group}:=[S:\prim;\noarg\!*\noarg![S;S\fun S];e!S]&[&[x,y,z:S](x*y)*z =_S x*(y*z)\\
&,&[x:S]e*x =_S x\\
&,&[x:S;x'!S]x'*x =_S e\quad]
\end{eqnarray*}
As an example, assume \emph{Integers} as the context \emph{Naturals} (see Section~\ref{examples.nats}) extended with a subtraction operator $a-b$ with corresponding axioms.
First, we can show that $+$ and $0$ form a group:
Obviously, one can construct a deduction \emph{ded} with
\[
\mathit{Equality},\mathit{Integers}\gv\mathit{ded}:P_g
\]
where $P_g$ describes the group laws:
\begin{eqnarray*}
P_g:=&[\;&[x,y,z:\nats]\,(x+y)+z=_{\nats}x+(y+z)\\
&,&[x:\nats]\,0+x=_{\nats}x\\
&,&[x:\nats,x'!\nats]\,x'+x=_{\nats}0\quad]
\end{eqnarray*}
\emph{ded} can be turned into a proof of $(\mathit{Group}\:\nats)$ as follows:
\begin{eqnarray*}
\mathit{isGroup}&:=&\prdef{*}{+}{\prdef{e}{0}{\mathit{ded}}{P_g\gsub{0}{e}}}{P_g\gsub{0}{e}\gsub{+}{*}}\\
\mathit{Equality},\mathit{Integers}&\gv& \mathit{isGroup}:(\mathit{Group}\:\nats)
\end{eqnarray*}
It is well-known that the left-neutral element is also right-neutral, this means, when assuming $g$ to be a group over $S$ there is a proof $p$ such that
\[
Equality,S:\prim,g:(\mathit{Group}\:S)\gv p:[x:S](\pleft{g}\;x\;\pleft{\pright{g}})=_S x
\]
Here $\pleft{g}$ is the function of $g$ and $\pleft{\pright{g}}$ is the neutral element of $g$.
Note that the use of existential declarations is supporting the proof structuring as it hides $*$ and $e$ in the assumptions inside the $g:(Group\:S)$ assumption.
On the other hand, one has to explicitly access the operators using projections.

$p$ can be abstracted to an inference $p':=[S:\prim;g:(\mathit{Group}\:S)]p$ about a derived property of groups as follows:
\begin{eqnarray*}
\mathit{Equality}&\gv&p':[S:\prim;g:(\mathit{Group}\:S)][x:S]\;(\pleft{g}\;x\;\pleft{\pright{g}})=_S x
\end{eqnarray*}
Hence we can instantiate $p'$ to obtain the right-neutrality property of $0$ for integers.
\[
\mathit{Equality},\mathit{Integers}\gv(p'\;\nats\;\mathit{isGroup}):[x:\nats]\,x+0 =_{\nats}x
\]
\section{Properties}%
\label{properties}
In this Section we show confluence of reduction, several properties of typing, including uniqueness of types, and strong normalisation of valid, i.e.~typable, expressions. On the basis of these properties we show consistency of \dcalc\ in the sense that no expression is typing to $[x:\prim]x$ under the empty context.
We show the main proof ideas and indicate important steps in detail, significantly more detail can be found in~\cite{dcalculus}.
\begin{rem}[Inductive principles]
Besides structural induction on the definition of $\dexp$,
we frequently show properties about reduction relations by induction on the definition of single-step reduction.
Similarly for properties about typing.
\end{rem}
\begin{rem}[Renaming of variables]
Typically when we prove a property using some axiom, inference rule, or derived property, we just mention the identifier or this axiom, rule, or property and then use it with an instantiation \emph{renaming its variables so as to avoid name clashes with the proposition to be shown}.
In order not to clutter the presentation, these renamings are usually not explicitly indicated.
\end{rem}
\begin{rem}[Introduction of auxiliary identifiers]
We usually introduce explicitly all auxiliary identifiers appearing in deduction steps.
However, there are two important exceptions.
\begin{itemize}
\item
When using structural induction, if we consider a specific operator and decompose an expression $a$ \eg~by $a=\binop{a_1}{,\ldots,a_n}$ we usually introduce implicitly the new auxiliary identifiers $a_i$.
\item
When using induction on the definition of reduction, if we consider a specific axiom or structural rule which requires a syntactic pattern we usually introduce implicitly the new auxiliary identifiers necessary for this pattern.
\end{itemize}
\end{rem}
\noindent
We begin with some basic properties of substitution and its relation to reduction.
\begin{lem}[Basic properties of substitution]%
\label{sub.basic}
For all $a,b,c$ and $x,y$:
\begin{enumerate}
\item
If $x\notin\free(a)$ then $a\gsub{x}{b}=a$.
\item
If $x\neq y$ and $x\notin\free(c)$ then $a\gsub{x}{b}\gsub{y}{c}= a\gsub{y}{c}\gsub{x}{b\gsub{y}{c}}$.
\item
If $x\neq y$, $x\notin\free(c), y\notin\free(b)$ then $a\gsub{x}{b}\gsub{y}{c}= a\gsub{y}{c}\gsub{x}{b}$.
\end{enumerate}
\end{lem}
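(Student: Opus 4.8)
The plan is to prove all three parts by structural induction on the expression $a$, following the inductive scheme for expressions introduced in the remark on inductive principles. Since substitution is defined by recursion on $a$ (with the operator-group notations $\binop{a_1}{\ldots,a_n}$ and $\binbop{x}{a_1}{\ldots,a_n}$ handling all the compound cases uniformly), the induction will have only a few genuinely distinct cases: the constant $\prim$, a variable $y$, a non-binding compound $\binop{a_1}{\ldots,a_n}$, and a binding compound $\binbop{z}{a_1}{\ldots,a_n}$. I would prove the three parts in order, as each is a small generalisation of the previous pattern, though they can also be done independently.

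For part (1), the base cases are immediate: $\prim\gsub{x}{b}=\prim$, and if $x\notin\free(y)=\{y\}$ then $x\neq y$ so $y\gsub{x}{b}=y$. For $\binop{a_1}{\ldots,a_n}$, the hypothesis $x\notin\free(\binop{a_1}{\ldots,a_n})=\free(a_1)\cup\ldots\cup\free(a_n)$ gives $x\notin\free(a_i)$ for each $i$, so the induction hypothesis yields $a_i\gsub{x}{b}=a_i$ and hence $\binop{a_1}{\ldots,a_n}\gsub{x}{b}=\binop{a_1}{\ldots,a_n}$. The binding case $\binbop{z}{a_1}{\ldots,a_n}$ splits on whether $x=z$: if $x=z$ then by the substitution definition only $a_1,\ldots,a_{n-1}$ are affected, and from $x\notin\free(a_i)$ for $i<n$ (which follows from the free-variable equation) the induction hypothesis finishes it; if $x\neq z$ then additionally $x\notin\free(a_n)$ (since $x\notin\free(a_n)\setminus\{z\}$ and $x\neq z$), so all $n$ subexpressions are unchanged.

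For parts (2) and (3), the argument has the same shape: $\prim$ is trivial, the variable case $y$ is a short check by cases on whether the variable equals $x$, equals $y$ (the bound name in the lemma statement, so I should use fresh letters to avoid the clash — say we are substituting for $u$ then $v$), or neither, using part (1) where needed (\emph{e.g.}\ $u\notin\free(c)$ makes $u\gsub{u}{b}\gsub{v}{c}=b\gsub{v}{c}$ match $u\gsub{v}{c}\gsub{u}{b\gsub{v}{c}}=b\gsub{v}{c}$). The non-binding compound case is a pure push of the induction hypothesis through $\binop{\cdot}{\cdot}$. The only delicate case is the binder $\binbop{z}{a_1}{\ldots,a_n}$: here one invokes the implicit $\alpha$-conversion convention (Definition~\ref{alpha}) to assume $z$ is chosen distinct from all of $x$, $y$, and from the free variables of $b$ and $c$, so that both outer substitutions descend into $a_n$ without capture and commute with the binder; then the induction hypothesis applies to each $a_i$ (for part (2) the side condition $x\notin\free(c)$ is passed down unchanged, and for part (3) both side conditions are). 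I expect this binder case, specifically making the renaming discipline explicit enough to justify that no variable capture occurs, to be the main (though still routine) obstacle; everything else is bookkeeping.
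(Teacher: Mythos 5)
Your proposal is correct and matches the paper's approach exactly: the paper proves this lemma by structural induction on $a$ and leaves the details as straightforward, which is precisely the case analysis (constant, variable, non-binding compound, binding compound with the implicit renaming convention) that you spell out.
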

\begin{proof}
Proof is straightforward by structural induction on $a$.
\end{proof}
\begin{lem}[Substitution and reduction]%
\label{sub.rd}
For all $a,b,c$ and $x$: $a\rd b$ implies $a\gsub{x}{c}\rd b\gsub{x}{c}$, $c\gsub{x}{a}\rd c\gsub{x}{b}$, and $\free(b)\subseteq\free(a)$.
\end{lem}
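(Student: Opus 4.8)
The statement bundles three assertions that I would prove simultaneously: (i) $a\rd b$ implies $a\gsub{x}{c}\rd b\gsub{x}{c}$; (ii) $a\rd b$ implies $c\gsub{x}{a}\rd c\gsub{x}{b}$; (iii) $a\rd b$ implies $\free(b)\subseteq\free(a)$. Since $\rd$ is the reflexive-transitive closure of $\srd$, the natural strategy is first to establish each claim for single-step reduction $a\srd b$ and then lift to $\rd$ by an easy induction on the length of the reduction sequence (for (iii), transitivity of $\subseteq$; for (i) and (ii), composing the one-step statements, though (ii) as stated needs a little care — see below).

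\emph{Claim (i), the substitution lemma for reduction.} I would argue by induction on the derivation of $a\srd b$ using the rules of Table~\ref{red.rules}. For the structural rules $(\oplus_i)$ and $(\oplus_{x,i})$ the result follows directly from the induction hypothesis together with the recursive definition of substitution in Definition~\ref{free} (for the binder case $(\oplus_{x,i})$ one uses the assumed renaming convention of Definition~\ref{alpha} so that the bound variable differs from $x$ and does not occur free in $c$, making substitution commute with the binder). For the axiom rules the key cases are the $\beta$-rules: for $(\beta_1)$, $([x':a']b'\,c')\gsub{x}{c} \srd b'\gsub{x}{c}\gsub{x'}{c'\gsub{x}{c}}$ must be shown equal to $b'\gsub{x'}{c'}\gsub{x}{c}$, which is exactly Lemma~\ref{sub.basic}(2) (after renaming so $x'\neq x$ and $x'\notin\free(c)$); $(\beta_2)$ is identical, and $(\beta_3),(\beta_4)$ are immediate. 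The $(\pi_i)$ and $(\nu_i)$ axioms are purely structural pattern rewrites with no substitution, so substitution commutes with them trivially. Lifting to $\rd$ is then a routine induction on the number of steps.

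\emph{Claim (ii), reduction inside a substitutee.} For a single step $a\srd b$, I would show $c\gsub{x}{a}\rd c\gsub{x}{b}$ by structural induction on $c$: the variable $x$ contributes the step $a\srd b$ itself, every other variable and $\prim$ contribute nothing, and the compound cases $\binop{\ldots}{}$ and $\binbop{x'}{\ldots}{}$ follow by assembling the sub-reductions (again using the renaming convention so that substitution into a binder is well-behaved; note that in $\prdef{x'}{a_1}{a_2}{a_3}$ the binding of $x'$ covers $a_3$ only, which matches Definition~\ref{free}). Note that a single step $a \srd b$ may force several (or zero) steps in $c\gsub{x}{a}\rd c\gsub{x}{b}$, since $x$ can occur any number of times in $c$; that is why the conclusion is stated with $\rd$ rather than $\srd$. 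Extending from $a\srd b$ to $a\rd b$ is then immediate by transitivity of $\rd$.

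\emph{Claim (iii), free variables do not grow.} Again induction on $a\srd b$. The structural rules preserve the property by the induction hypothesis and the fact that $\free$ of a compound is the union of the $\free$ of its parts (minus the bound variable in the binder case). For $(\beta_1)$, $\free(b'\gsub{x'}{c'}) \subseteq \free(b')\setminus\{x'\} \cup \free(c') \subseteq \free([x':a']b'\,c')$ — the inclusion $\free(t\gsub{y}{s})\subseteq(\free(t)\setminus\{y\})\cup\free(s)$ being a one-line auxiliary fact provable by structural induction on $t$ (and the reverse-type containment for the $(\pi_i)$ rules, e.g. $\free(a)\subseteq\free(\prdef{x}{a}{b}{c})$, is immediate from the definition of $\free$). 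The $(\nu_i)$ rules manifestly preserve the free-variable set exactly. Then $a\rd b$ gives $\free(b)\subseteq\free(a)$ by chaining the inclusions.

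\emph{Main obstacle.} None of this is deep; the only genuine subtlety is bookkeeping around bound variables — ensuring in the $(\oplus_{x,i})$ case and in Claim (ii) that the implicit $\alpha$-renaming of Definition~\ref{alpha} has been applied so that the bound variable is distinct from $x$ and avoids capturing free variables of $c$, so that substitution genuinely commutes with the binders. Once that convention is invoked, the $\beta$-cases reduce to Lemma~\ref{sub.basic}(2) and everything else is a direct structural or rule induction.
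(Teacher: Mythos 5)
Your proposal is correct and follows essentially the same route as the paper, which simply states that the proof is a straightforward induction on the definition of single-step reduction; your elaboration of the $\beta$-cases via Lemma~\ref{sub.basic}(2), the structural induction on $c$ for the second claim, and the lifting to $\rd$ are exactly the details that proof leaves implicit.
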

\begin{proof}
Proof is straightforward by induction on the definition of single-step reduction.
\end{proof}
\noindent
Next we turn to basic decomposition properties of reduction.
\begin{lem}[Reduction decomposition]%
\label{rd.decomp}
For all $a_1,\ldots,a_n,b,b_1,\ldots,b_n$ and $x$:
\begin{enumerate}
\item
$\binop{a_1}{a_2}\!\rd\!b$, where $\binop{a_1}{a_2}\neq(a_1\,a_2)$ implies $b=\binop{b_1}{b_2}$ where $a_1\!\rd\!b_1$ and $a_2\!\rd\!b_2$.
\item
$\binbop{x}{a_1}{\ldots,a_n}\rd b$ implies $b=\binbop{x}{b_1}{\ldots,b_n}$, $a_i\rd b_i$, for $1\leq i\leq n$.
\item
$a.i\rd\binbop{x}{b_1}{\ldots,b_n}$, $i=1,2$, implies $a\rd\prsumop{c_1}{c_2}$ or $a\rd\prdef{y}{c_1}{c_2}{c_3}$, for some $c_1, c_2,c_3$ with $c_i\rd\binbop{x}{b_1}{\ldots,b_n}$.
\item
$a.i\rd\binop{a_1}{a_2}$, $i=1,2$, where $\binop{a_1}{a_2}$ is a sum, product or injection implies $a\rd\prsumopd{c_1}{c_2}$ or $a\rd\prdef{y}{c_1}{c_2}{c_3}$, for some $c_1, c_2,c_3$ with $c_i\rd\binop{a_1}{a_2}$.
\item
$a_1(a_2)\rd\binbop{x}{b_1}{\ldots,b_n}$ implies, for some $c_1, c_2, c_3, c_4$, one of the following cases
\begin{enumerate}
\item $a_1\rd\binbopd{y}{c_1}{c_2}$, $a_2\rd c_3$ and $c_2\gsub{y}{c_3}\rd\binbop{x}{b_1}{\ldots, b_n}$, or
\item $a_1\rd\case{c_1}{c_2}$ and either $a_2\rd\injl{c_3}{c_4}$ and $(c_1\,c_3)\rd\binbop{x}{b_1}{\ldots,b_n}$
 or $a_2\rd \injr{c_3}{c_4}$ and $(c_2\,c_4)\rd\binbop{x}{b_1}{\ldots,b_n}$.
\end{enumerate}
\item
$a_1(a_2)\rd\binop{b_1}{\ldots,b_n}$ where $\binop{b_1}{\ldots,b_n}\neq(b_1\,b_2)$ implies, for some $c_1$, $c_2$, $c_3$, and $c_4$, one of the following cases
\begin{enumerate}
\item $a_1\rd\binbopd{y}{c_1}{c_2}$, $a_2\rd c_3$, and $c_2\gsub{x}{c_3}\rd\binop{b_1}{\ldots,b_n}$, or
\item $a_1\rd\case{c_1}{c_2}$ and either $a_2\rd\injl{c_3}{c_4}$ and $(c_1\,c_3)\rd\binop{b_1}{\ldots,b_n}$
 or $a_2\rd\injr{c_4}{c_3}$ and $(c_2\,c_3)\rd\binop{b_1}{\ldots,b_n}$.
\end{enumerate}
\end{enumerate}
\end{lem}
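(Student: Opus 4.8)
Reduction decomposition — proof plan

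The plan is to prove all six parts simultaneously by induction on the definition of single-step reduction $a \srd b$, then lift to $\rd$ by a straightforward induction on the length of the reduction sequence. Parts (1) and (2) are the base for the others and are essentially immediate: the only single-step rules whose left-hand side matches a compound operator $\binop{a_1}{a_2}$ (other than application) or $\binbop{x}{a_1}{\ldots,a_n}$ are the structural congruence rules ${\it(\oplus\cdots)}$ and ${\it(\oplus_x\cdots)}$, since none of the $\beta$, $\pi$, or $\nu$ axioms have such a left-hand side. Hence a single step preserves the outer operator and reduces exactly one argument; transitivity then gives $a_i \rd b_i$ for all $i$. For $\free(b)\subseteq\free(a)$ we just invoke Lemma~\ref{sub.rd}.

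For parts (3)–(6), which concern an eliminator ($a.i$ or an application $a_1(a_2)$) reducing to something of a given shape, the key observation is that the target shape ($\binbop{x}{b_1}{\ldots,b_n}$ in (3),(5), a sum/product/injection in (4),(6), or a non-application compound in (6)) is \emph{not} itself an eliminator, so at some point in the reduction sequence a redex must fire at the head. First I would handle the single-step case: if $a.i \srd c$, then either the step is internal ($a \srd a'$ and $c = a'.i$, so we recurse) or it is one of the $\pi$-axioms, in which case $a$ already has the form $\prdef{y}{c_1}{c_2}{c_3}$, $[c_1,c_2]$, or $[c_1+c_2]$ and $c$ is one of the $c_j$ — matching the conclusion directly with the $c_i \rd \binbop{x}{\cdots}$ (resp. $c_i \rd \binop{\cdots}$) condition trivial by reflexivity. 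Similarly for $a_1(a_2) \srd c$: the step is either internal to $a_1$ or $a_2$, or it is one of $\beta_1$–$\beta_4$, which forces $a_1$ into the shape $\binbopd{y}{c_1}{c_2}$ (for $\beta_1,\beta_2$) or $\case{c_1}{c_2}$ (for $\beta_3,\beta_4$) and pins down $a_2$ and the contractum accordingly.

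The multi-step case is where the real work is, and it is the main obstacle: I expect to proceed by induction on the length of $a.i \rd \binbop{x}{b_1}{\ldots,b_n}$ (resp. the other targets). Write the reduction as $a.i \srd e \rd \binbop{x}{b_1}{\ldots,b_n}$. If the first step is a $\pi$-contraction, $a$ is already in the required form and the tail reduction, together with part (2), distributes the rest over the components $c_j$. If the first step is internal, $e = a'.i$ with $a \srd a'$, and the induction hypothesis applied to $a'.i \rd \binbop{x}{b_1}{\ldots,b_n}$ gives $a' \rd \prsumop{c_1}{c_2}$ or $a' \rd \prdef{y}{c_1}{c_2}{c_3}$ with the right properties; prepending $a \srd a'$ finishes it. The genuinely delicate point — and the one I would be most careful about — is the interaction in parts (5) and (6) between an internal step in $a_1$ that itself \emph{creates} a head redex: here one needs parts (1) and (2) to control how the shape of $a_1$ can evolve under reduction (e.g. that $a_1 \rd \case{c_1}{c_2}$ is the only way to end up able to fire $\beta_3/\beta_4$), and Lemma~\ref{sub.basic}/\ref{sub.rd} to commute the substitution $c_2\gsub{y}{c_3}$ past the remaining reductions. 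Once the bookkeeping of which axiom fires first is organized cleanly, each case is a routine application of the induction hypothesis plus parts (1) and (2), so I would present the $a.i$ cases in full and then remark that the application cases are analogous with $\beta$ in place of $\pi$.
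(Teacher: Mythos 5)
Your plan is correct and is exactly the ``straightforward induction'' the paper has in mind: parts (1)--(2) follow because no reduction axiom has a non-application compound or a binder operator as its left-hand side, and parts (3)--(6) follow by induction on the length of the reduction sequence, splitting on whether the first step is a head contraction ($\pi$ or $\beta$) or internal, with the internal case discharged by the inductive hypothesis plus prepending the step. The only remark is that the ``delicate point'' you flag in (5)--(6) dissolves under your own induction structure --- when the first step is internal you simply prepend it to the $a_1\rd\cdots$ (or $a_2\rd\cdots$) chain supplied by the inductive hypothesis, so no commuting of the substitution past later reductions is actually required.
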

\begin{proof}
Straightforward inductions, as these properties are very close to the definition of the reduction relation.
\end{proof}
\begin{lem}[Reduction decomposition for negation]%
\label{rd.decomp.neg}
For all $a_1,\ldots,a_n,b,b_1,\ldots,b_n$ and $x$:
\begin{enumerate}
\item
$\myneg a\rd[x:b_1]b_2$ implies $a\rd[x!c_1]c_2$ for some $c_1, c_2$ where $c_1\rd b_1$ and $\myneg c_2\rd b_2$.
\item
$\myneg a\rd[b_1,b_2]$ implies $a\rd[c_1+c_2]$ for some $c_1, c_2$ where $\myneg c_1\rd b_1$ and $\myneg c_2\rd b_2$.
\item
$\myneg a\rd[x!b_1]b_2$ implies $a\rd[x:c_1]c_2$ for some $c_1, c_2$ where $c_1\rd b_1$ and $\myneg c_2\rd b_2$.
\item
$\myneg a\rd[b_1+b_2]$ implies $a\rd[c_1,c_2]$ for some $c_1, c_2$ where $\myneg c_1\rd b_1$ and $\myneg c_2\rd b_2$.
\item
$\myneg a\rd\prdef{x}{b_1}{b_2}{b_3}$ implies $a\rd\prdef{x}{b_1}{b_2}{b_3}$.
\item
$\myneg a\rd\case{b_1}{b_2}$ implies $a\rd\case{b_1}{b_2}$.
\item
$\myneg a\rd\injl{b_1}{b_2}$ implies $a\rd\injl{b_1}{b_2}$.
\item
$\myneg a\rd\injr{b_1}{b_2}$ implies $a\rd\injr{b_1}{b_2}$.
\end{enumerate}
\end{lem}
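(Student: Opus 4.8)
The plan is to prove all eight statements simultaneously by induction on the length $n$ of the reduction $\myneg a\rdn{n}\,r$, where $r$ is the given right-hand pattern. The base case $n=0$ is vacuous: $\myneg a$ is syntactically a negation, so it cannot be equal to $[x:b_1]b_2$, $[b_1,b_2]$, $[x!b_1]b_2$, $[b_1+b_2]$, and for the remaining four cases (where $r$ itself is a protected definition, case distinction, or injection) the claim ``$a\rd r$'' is what we must establish, which we handle in the inductive step together with the one-step analysis. So the real work is the inductive step: given $\myneg a\srd c\rdn{n-1}\,r$, analyze the first step $\myneg a\srd c$.

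The first step $\myneg a\srd c$ falls into exactly two kinds. Either it is an application of the structural rule $(\oplus^{1}_{1})$ to the subterm $a$, so $c=\myneg a'$ with $a\srd a'$; or it is one of the negation axioms $(\nu_1)$--$(\nu_{10})$ applied at the root. In the structural case we apply the induction hypothesis to $\myneg a'\rdn{n-1}\,r$ to obtain the required decomposition of $a'$, and then prepend the step $a\srd a'$ (using transitivity of $\rd$, and for the first four statements also Lemma \ref{sub.rd}/the fact that $\srd$ is preserved under the operators, to push the step through $\myneg$ where needed) to get the decomposition of $a$. In the axiom case we proceed per statement: for statement~(1), the only root axiom producing something that can further reduce to a universal abstraction $[x:b_1]b_2$ is $(\nu_5)$, $\myneg[x!a_1]a_2\srd[x:a_1]\myneg a_2$; then $[x:a_1]\myneg a_2\rdn{n-1}\,[x:b_1]b_2$, and by Lemma \ref{rd.decomp}(2) we get $a_1\rd b_1$ and $\myneg a_2\rd b_2$, so taking $c_1:=a_1,\ c_2:=a_2$ and noting $a=[x!a_1]a_2=[x!c_1]c_2$ finishes it. (One must also rule out that some \emph{other} axiom fires first and the reduct later becomes a universal abstraction — e.g.\ $(\nu_1)$ giving $\myneg\myneg a'\srd a'$ and $a'\rd[x:b_1]b_2$; but then $a=\myneg a'$ and we would need $\myneg a'\rd[x!c_1]c_2$, which is \emph{not} automatic — so this subcase actually requires recursing via statement~(3) applied to $a'$. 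I note this below as the main subtlety.) Statements (2)--(4) are handled symmetrically using $(\nu_3),(\nu_4),(\nu_2)$ respectively together with Lemma \ref{rd.decomp}(1)--(2). For statements (5)--(8), the relevant root axioms are $(\nu_7)$ for protected definitions, $(\nu_{10})$ for case distinctions, and $(\nu_8),(\nu_9)$ for injections: in each, $\myneg$ is simply erased, $c$ is already the desired shape, and $c\rdn{n-1}\,r$ combines with $a\srd$ (the argument of $\myneg$) to give $a\rd c\rd r$. Again $(\nu_1)$ and $(\nu_6)$ must be considered as possible first steps and dispatched by the induction hypothesis on the relevant companion statement.

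The main obstacle is precisely this interaction between the involution axiom $(\nu_1)$ (and to a lesser extent $(\nu_6)$, $\myneg\prim\srd\prim$) and the other cases: when $\myneg a\srd a'$ via $(\nu_1)$, i.e.\ $a=\myneg a'$, propagating, say, ``$a'\rd[x:b_1]b_2$'' back to ``$\myneg a'\rd[x!c_1]c_2$'' is not trivial and forces the eight statements to be genuinely mutually recursive across the induction on $n$. The clean way to organize this is to let the induction hypothesis range over all reductions of length $<n$ starting from any negation $\myneg(-)$ and hitting any of the eight patterns, so that when $(\nu_1)$ fires we invoke the hypothesis at the appropriate dual statement for the shorter reduction $a'\rdn{n-1}\,r$ — noting $a'$ need not be a negation, so one actually needs a slightly more general inductive statement covering reductions $e\rd r$ together with a companion statement about $\myneg e\rd r'$. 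Phrasing this invariant correctly is the only delicate point; once it is in place every case is a routine combination of the first-step analysis with Lemma \ref{rd.decomp} and transitivity of $\rd$.
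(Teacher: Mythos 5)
Your overall strategy---induction on the length of $\myneg a\rdn{n}r$ with a case split on whether the first step is structural (inside $a$) or a root $\nu$-axiom---is exactly the ``straightforward induction'' the paper has in mind, and the base case and the structural and $\nu_2$--$\nu_{10}$ root cases are handled correctly. The one place where your plan is not actually complete is the point you yourself flag as the ``main obstacle'': the $\nu_1$ first step, where $a=\myneg a'$, $\myneg\myneg a'\srd a'$ and $a'\rdn{n-1}r$. You claim that recovering, say, $\myneg a'\rd[x!c_1]c_2$ from $a'\rd[x:b_1]b_2$ ``is not automatic'' and therefore propose strengthening the induction to a mutually recursive invariant over all eight statements, which you then leave unphrased (``phrasing this invariant correctly is the only delicate point''). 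As written, that is a gap: the proof is not finished until that invariant is stated and its induction checked.

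However, the obstacle dissolves on inspection, so no strengthening is needed. From $a'\rd[x:b_1]b_2$ the structural rule for negation lifts the whole reduction under $\myneg$, giving $\myneg a'\rd\myneg[x:b_1]b_2$, and one application of $\nu_4$ yields $\myneg[x:b_1]b_2\srd[x!b_1]\myneg b_2$. Taking $c_1=b_1$ and $c_2=\myneg b_2$ gives $c_1\rd b_1$ reflexively and $\myneg c_2=\myneg\myneg b_2\srd b_2$ by $\nu_1$, which is precisely what statement (1) demands. The same two-line construction (lift under $\myneg$, then apply the matching axiom $\nu_2$--$\nu_5$ or $\nu_7$--$\nu_{10}$) disposes of the $\nu_1$ subcase in all eight statements, and the $\nu_6$ subcase is vacuous since $\prim$ is irreducible and matches none of the target shapes. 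With that substitution for your proposed ``generalized invariant,'' the proof closes; everything else in your plan is a routine combination of the first-step analysis with Lemma~\ref{rd.decomp} and transitivity of $\rd$, as you say.
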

\begin{proof}
Straightforward inductions, as these properties are very close to the definition of the reduction relation.
\end{proof}
\subsection{Confluence of \texorpdfstring{$\srd$}{->}}%
\label{confl}
Classical confluence proofs for untyped $\lambda$-calculus could be used,
\eg~\cite{Bar:93} or~\cite{Takahashi95} (using parallel reduction) could be adapted to include the operators of \dcalc.
Due to the significant number of reduction axioms of~\dcalc, we use an alternative approach using explicit substitutions and an auxiliary relation of \emph{reduction with explicit substitution} which has detailed substitution steps on the basis of a definitional environment (this approach was basically already adopted in the Automath project~\cite{deBruijn80}) and which comprises sequences of negation-related reduction-steps into single steps.

The underlying idea is that reduction with explicit substitution  can be shown to be directly confluent which implies its confluence.
We then show that this implies confluence of $\srd$.
There are several approaches to reduction with explicit substitutions, \eg~\cite{ACCL91,AK2010}. 
Furthermore, there is a significant amount of more recent work in this context, however, as explicit substitution is not the main focus of this article we do not give an overview here.
However, we should note that, in general, confluence of calculi with explicit substitutions is proved by using confluence of the
underlying calculus without explicit substitutions.
Here it is the other way around: confluence of reduction with explicit substitution is used to show confluence of $\srd$.

The approach introduced below introduces a definitional environment as part of the reduction relation
to explicitly unfold single substitution instances and then discard substitution expressions when all instances are unfolded.
This approach, as far as basic lambda calculus operators are concerned, is essentially equivalent to the system $\Lambda_{sub}$
which has been defined using substitution~\cite{Milner2007} or placeholders~\cite{KC2008} to indicate particular occurrences to be substituted.
Both approaches slightly differ from ours as they duplicate the substituted expression on the right-hand side of the $\beta$-rule thus violating direct confluence.

We begin by defining expressions with substitution.
\begin{defi}[Expressions with substitution]
The set $\dexps$ of \emph{expressions with substitution} is an extension of the set $\dexp$ of expressions adding
a substitution operator.
\begin{eqnarray*}
\dexps&\!::=\!&\underbrace{\{\prim\}\,\mid\,\ldots\,\mid\,\myneg\dexps}_{\text{(see Definition~\ref{expression})}}\,\mid\,[\dvar\mydef\dexps]\dexps
\end{eqnarray*}
Expressions with substitution will be denoted by $\mathbf{a},\mathbf{b},\mathbf{c},\mathbf{d},\ldots$.
$[x\mydef\mathbf{a}]\mathbf{b}$ is an \emph{internalized substitution}.
As indicated by its name, the purpose of $[x\mydef\mathbf{a}]\mathbf{b}$ is to internalize the substitution function.
\end{defi}
\noindent
The function computing free variables (Definition~\ref{free}) is extended so as to treat internalized substitutions identical to abstractions.
\begin{eqnarray*}
\free([x\mydef\mathbf{a}]\mathbf{b})&=&\free(\mathbf{a})\union(\free(\mathbf{b})\!\setminus\!\{x\})
\end{eqnarray*}
The substitution function (Definition~\ref{free}) is extended analogously.
\begin{eqnarray*}
([z\mydef \mathbf{a}]\mathbf{b})\gsub{x}{y}&=&	\begin{cases}
													[z\mydef\mathbf{a}\gsub{x}{y}]\mathbf{b}&\text{if }z=x,\\
													[z\mydef\mathbf{a}\gsub{x}{y}](\mathbf{b}\gsub{x}{y})&\text{otherwise}
													\end{cases}
\end{eqnarray*}
Similar for $\alpha$-conversion (Definition~\ref{alpha}).
\[
\frac{y\notin\free(\mathbf{b})}{[x\mydef\mathbf{a}]\mathbf{b}\;=_{\alpha}\;[y\mydef\mathbf{a}]\mathbf{b}\gsub{x}{y}}
\]
As for expressions we will write variables as strings but always assume appropriate renaming of bound variables in order to avoid name clashes.
In order to define reduction with explicit substitution as a directly confluent relation, we first define the auxiliary notion of \emph{negation-reduction} $\mathbf{a}\nur\mathbf{b}$ which comprises application sequences of axioms $\nu_1$, $\ldots$, $\nu_5$ in a restricted context.
\begin{defi}[Negation-reduction]
Single-step \emph{negation-reduction} $\mathbf{a}\snur\mathbf{b}$ is the smallest relation on expressions with explicit substitution satisfying the axiom and the inference rules of Table~\ref{nurd.rules}.
\begin{table}[!htb]
\fbox{
\begin{minipage}{0.96\textwidth}
\begin{center}
\begin{tabular}{@{$\;$}l@{}r@{$\;$}c@{$\;$}ll@{}r@{$\;$}c@{$\;$}l}
$\;$\\[-3mm]
$\mathit{(\nu_1)}$&$\;\myneg\myneg\mathbf{a}$&$\snur$&$\mathbf{a}$\\
$\mathit{(\nu_2)}$&$\;\myneg[\mathbf{a},\mathbf{b}]$&$\snur$&$[\myneg\mathbf{a}+\myneg\mathbf{b}]$&$\mathit{(\nu_3)}$&$\;\myneg[\mathbf{a}+\mathbf{b}]$&$\snur$&$[\myneg\mathbf{a},\myneg\mathbf{b}]$\\
$\mathit{(\nu_4)}$&$\;\myneg[x:\mathbf{a}]\mathbf{b}$&$\snur$&$[x!\mathbf{a}]\myneg\mathbf{b}$&$\mathit{(\nu_5)}$&$\;\myneg[x!\mathbf{a}]\mathbf{b}$&$\snur$&$[x:\mathbf{a}]\myneg\mathbf{b}$
\end{tabular}
\end{center}
\begin{align*}
\\[-6mm]
\qquad(\prsumop{\_}{\_}_1)\quad\frac{\mathbf{a}_1\snur\mathbf{a}_2}{\prsumop{\mathbf{a}_1}{\mathbf{b}}\snur\prsumop{\mathbf{a}_2}{\mathbf{b}}}
\qquad(\prsumop{\_}{\_}_2)\quad\frac{\mathbf{b}_1\snur\mathbf{b}_2}{\prsumop{\mathbf{a}}{\mathbf{b}_1}\snur\prsumop{\mathbf{a}}{\mathbf{b}_2}}
\end{align*}
\begin{align*}
\\[-8mm]
(\binbop{\_}{\_}{\_}_2)\quad\frac{\mathbf{b}_1\snur\mathbf{b}_2}{\binbop{x}{\mathbf{a}}{\mathbf{b}_1}\snur\binbop{x}{\mathbf{a}}{\mathbf{b}_2}}
\qquad
(\myneg{\_}_1)\quad\frac{\mathbf{a}_1\snur\mathbf{a}_2}{\myneg\mathbf{a}_1\snur\myneg\mathbf{a}_2}
\qquad
\end{align*}
\end{minipage}
}\caption{Axioms and rules for negation-reduction.\label{nurd.rules}}
\end{table}
$n$-step negation-reduction $\mathbf{a}\nurn{n}\mathbf{b}$ ($n\geq 0$) and negation-reduction $\mathbf{a}\nur\mathbf{b}$ are defined as follows:
\begin{eqnarray*}
\mathbf{a}\nurn{n}\mathbf{b}&:=&\exists\mathbf{b}_1,\ldots\mathbf{b}_{n-1}:\mathbf{a}\snur\mathbf{b}_1,\ldots,\mathbf{b}_{n-1}\snur\mathbf{b}.\\
\mathbf{a}\nur\mathbf{b}&:=&\exists k\geq 0:\mathbf{a}\nurn{k}\mathbf{b}
\end{eqnarray*}
\end{defi}
\noindent
Not surprisingly, $\snur$ is confluent.
\begin{lem}[Confluence of $\snur$]%
\label{nurd.confl}
$\snur$ is confluent, i.e.~For all $\mathbf{a},\mathbf{b},\mathbf{c}$: $\mathbf{a}\nur\mathbf{b}$ and $\mathbf{a}\nur\mathbf{c}$ imply $\mathbf{b}\nur\mathbf{d}$, and $\mathbf{c}\nur\mathbf{d}$ for some $\mathbf{d}$.
\end{lem}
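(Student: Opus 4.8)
The standard route to confluence of a small-step rewriting relation is to prove the diamond property (direct confluence) for the one-step relation and then lift it to the reflexive–transitive closure by a routine tiling/induction argument. Here the one-step relation $\snur$ is not itself diamond-shaped — rewriting a redex at the root (e.g.\ $\nu_1$ on $\myneg\myneg\mathbf{a}$) commutes only trivially with rewriting inside $\mathbf{a}$ — so the cleanest approach is to prove the \emph{strong diamond property} for $\snur$: for all $\mathbf{a},\mathbf{b},\mathbf{c}$, if $\mathbf{a}\snur\mathbf{b}$ and $\mathbf{a}\snur\mathbf{c}$ then either $\mathbf{b}=\mathbf{c}$, or there is $\mathbf{d}$ with $\mathbf{b}\snur\mathbf{d}$ and $\mathbf{c}\snur\mathbf{d}$. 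This ``diamond-or-equal'' shape is exactly what is needed, since it gives $\snur\subseteq\;\nur$-confluence by the usual diagram chase, and $\nur$ is by definition the reflexive–transitive closure of $\snur$ (via $\nurn{k}$). So the first step is to state and prove this local property, and the second step is the generic closure lemma.

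\textbf{Proof of the local property.} I would argue by induction on the structure of $\mathbf{a}$, splitting on which axiom or rule produced $\mathbf{a}\snur\mathbf{b}$ and which produced $\mathbf{a}\snur\mathbf{c}$. The key observation that makes the cases manageable is that $\snur$ only ever fires at a negation node (axioms $\nu_1$–$\nu_5$ all have $\myneg$ at the root) or descends through one of the congruence rules $(\prsumop{\_}{\_}_i)$, $(\binbop{\_}{\_}{\_}_2)$, $(\myneg\_{}_1)$. Note in particular that the congruence closure is deliberately restricted: one may reduce inside the components of a product/sum, inside the body (third component) of an abstraction or protected definition, and inside a negation, but \emph{not} inside the domain of an abstraction, inside an application, inside the other components of a protected definition, inside injections, or inside a case. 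This restriction is what keeps critical pairs from multiplying. The genuine cases are then: (i) both steps are the same root axiom applied to the same redex — then $\mathbf{b}=\mathbf{c}$; (ii) both steps are congruence steps into the same immediate subterm — apply the induction hypothesis to that subterm; (iii) both steps are congruence steps into \emph{distinct} immediate subterms (possible only for $[\mathbf{a},\mathbf{b}]$ and $[\mathbf{a}+\mathbf{b}]$) — the two redexes are disjoint, so they commute and $\mathbf{d}$ is obtained by doing both; (iv) one step is a root axiom $\nu_i$ and the other is a congruence step $(\myneg\_{}_1)$ under the outer negation into the subterm matched by the axiom's pattern. Case (iv) is the only real work: for each $\nu_i$ I would check that the axiom's right-hand side still has enough structure to replay the inner step. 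For instance, if $\mathbf{a}=\myneg[\mathbf{a}_1,\mathbf{a}_2]$, the $\nu_2$ step gives $\mathbf{b}=[\myneg\mathbf{a}_1+\myneg\mathbf{a}_2]$; an inner step under $\myneg\_{}_1$ must itself be a congruence step into $[\mathbf{a}_1,\mathbf{a}_2]$, say rewriting $\mathbf{a}_1\snur\mathbf{a}_1'$, giving $\mathbf{c}=\myneg[\mathbf{a}_1',\mathbf{a}_2]$; then $\mathbf{b}\snur[\myneg\mathbf{a}_1'+\myneg\mathbf{a}_2]$ (one $(\prsumop{\_}{\_}_1)$ step wrapping a $(\myneg\_{}_1)$ step) and $\mathbf{c}\snur[\myneg\mathbf{a}_1'+\myneg\mathbf{a}_2]$ (a $\nu_2$ step), so $\mathbf{d}$ exists. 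The cases $\nu_3,\nu_4,\nu_5$ are symmetric, and $\nu_1$ ($\myneg\myneg\mathbf{a}_1$) is slightly special because there are two nested negations the inner step could sit under, but both are handled the same way: a step into $\mathbf{a}_1$ commutes with the $\nu_1$-collapse, and a $(\myneg\_{}_1)$ step that is itself $\nu_1$ on the inner $\myneg\myneg$ simply yields the same result. One also has to note that no axiom's left-hand side overlaps another's (the heads $\myneg\myneg$, $\myneg[,]$, $\myneg[+]$, $\myneg[:]$, $\myneg[!]$ are pairwise distinct patterns), so there are no root/root critical pairs other than the trivial identical one.

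\textbf{Lifting to $\nur$.} Given the strong diamond property, confluence of $\nur$ follows by the standard two-step argument: first, if $\mathbf{a}\snur\mathbf{b}$ and $\mathbf{a}\nur\mathbf{c}$ then $\mathbf{b}\nur\mathbf{d}$ and $\mathbf{c}\snur\mathbf{d}$ or $\mathbf{b}=\mathbf{c}$ — proved by induction on the length of $\mathbf{a}\nur\mathbf{c}$, using the local property at each tile (the ``or equal'' case just shortens the diagram); then the full statement follows by a second induction on the length of $\mathbf{a}\nur\mathbf{b}$ using the result just obtained. This is entirely routine. \textbf{Main obstacle.} The only place where care is genuinely required is case (iv) above — checking, axiom by axiom, that a root negation-rewrite and a simultaneous congruence-rewrite into the rewritten position can always be reconciled in one further step on each side. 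This is short but must be done for all of $\nu_1$ through $\nu_5$, and one must be careful that the replaying step on the $\mathbf{b}$-side stays within the restricted congruence closure (e.g.\ after $\nu_4$ turns $\myneg[x:\mathbf{a}_1]\mathbf{a}_2$ into $[x!\mathbf{a}_1]\myneg\mathbf{a}_2$, a reduction that was inside $\mathbf{a}_1$ on the left is now inside the \emph{domain} of an existential abstraction on the right, where congruence is \emph{not} allowed — but this is fine, because a $(\myneg\_{}_1)$ step on the left can only descend via congruence into $[x:\mathbf{a}_1]\mathbf{a}_2$, and congruence there is likewise not allowed into $\mathbf{a}_1$, only into $\mathbf{a}_2$, so the problematic configuration never arises). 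Spelling out this bookkeeping is the crux; everything else is diagram-chasing.
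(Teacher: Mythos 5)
There is a genuine gap: the ``strong diamond property'' on which your whole argument rests is false for $\snur$. Take $\mathbf{a}=\myneg\myneg[\mathbf{a}_1,\mathbf{a}_2]$. The root axiom $\nu_1$ gives $\mathbf{b}=[\mathbf{a}_1,\mathbf{a}_2]$, while the congruence rule $(\myneg\_{}_1)$ followed by $\nu_2$ on the inner redex $\myneg[\mathbf{a}_1,\mathbf{a}_2]$ gives $\mathbf{c}=\myneg[\myneg\mathbf{a}_1+\myneg\mathbf{a}_2]$. These are not equal, and no single step from each side reaches a common term: every one-step reduct of $\mathbf{b}$ is again a product $[\mathbf{a}_1',\mathbf{a}_2]$ or $[\mathbf{a}_1,\mathbf{a}_2']$, whereas every one-step reduct of $\mathbf{c}$ either retains the outer negation or is $[\myneg\myneg\mathbf{a}_1,\myneg\myneg\mathbf{a}_2]$, and negation-reduction never introduces negations, so $\mathbf{b}$ cannot reach the latter. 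The pair does join, but only at $[\mathbf{a}_1,\mathbf{a}_2]$, which takes zero steps from $\mathbf{b}$ and three from $\mathbf{c}$; one checks similarly that even Huet-style strong confluence ($\mathbf{c}\to^{\myneg\,=}\mathbf{d}$ and $\mathbf{b}\nur\mathbf{d}$) fails in the direction where the single step must come from $\mathbf{c}$. Your case analysis misses exactly this configuration: in case (iv) for $\nu_1$ you consider an inner step ``into $\mathbf{a}_1$'' or an inner $\nu_1$, but not an inner $\nu_2$--$\nu_5$ consuming the second negation when $\mathbf{a}_1$ is a product, sum, or abstraction. This is not an oversight you can patch locally; the paper itself points out later (when motivating the rule $\nu$ in the definition of $\smrd$) that the axioms $\nu_1,\ldots,\nu_5$ are ``not directly confluent, e.g.\ for $\myneg\myneg[a,b]$.''

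Because the local joins genuinely require several steps on one side, your lifting argument collapses: without the diamond (or strong confluence) you only have \emph{local} confluence, and local confluence does not imply confluence of the reflexive--transitive closure in general. The missing ingredient is termination. The paper's proof supplies it: it proceeds by induction on the size of $\mathbf{a}$, establishes local confluence from $\mathbf{a}$ by a critical-pair analysis much like yours, proves that $\nur$ terminates from $\mathbf{a}$ via a weight function that squares the (incremented) weight under each negation, and then applies Newman's diamond lemma. To repair your proof you would either have to add such a termination argument and invoke Newman, or replace single-step $\snur$ by a suitably defined parallel/complete development relation that does satisfy the diamond property -- but as written, the central lemma of your proposal is false.
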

\begin{proof}
The proof of confluence of $\snur$ is straightforward by induction on the size $\sz(\mathbf{a})$ which counts all nodes and leaves of $\mathbf{a}$'s expression tree~\footnote{This ensures we can apply the inductive hypothesis \eg~to $\myneg\mathbf{a}_1$ when considering an $\mathbf{a}=[\mathbf{a}_1,\mathbf{a}_2]$, since $\sz(\myneg \mathbf{a}_1)=\sz(\mathbf{a}_1)+1<\sz(\mathbf{a}_1)+\sz(\mathbf{a}_2)+1=\sz(\mathbf{a})$.}.
For the inductive base,  the case $\sz(\mathbf{a})=1$ is obviously true.
Consider an expression $a$ with $\sz(\mathbf{a})=n>0$ and assume confluence for all expressions $\mathbf{b}$ with $\sz(\mathbf{b})<\sz(\mathbf{a})$.
By systematic investigation of critical pairs one can show local confluence of $\snur$ from $\mathbf{a}$.
Furthermore, $\nur$ is terminating from $\mathbf{a}$ which can be seen by defining a weight function that squares the weight (incremented by one) in case of negations.
Hence one can apply the diamond lemma~\cite{NEWMAN1942} to obtain confluence of $\snur$ for $\mathbf{a}$, which completes the inductive step.
\end{proof}
\noindent
In order to define reduction with explicit substitution, we need to introduce the notion of environments, which are used to record the definitions which are currently available for substitution.
\begin{defi}[Environment]
\emph{Environments}, denoted by $E$, $E_1$, $E_2$, etc.~are finite sequences of definitions $(x_1\mydef \mathbf{a}_1,\ldots ,x_n\mydef \mathbf{a}_n)$, where $x_i$ are variables and $x_i\neq x_j$. $\{x_1,\ldots,x_n\}$ is called the \emph{domain} of $E$.
The lookup of an variable $x$ in the domain of an environment $E$ is defined by $E(x)= \mathbf{a}_i$.
$E,x\mydef\mathbf{a}$ denotes the extension of $E$ on the right by a definition $x\mydef\mathbf{a}$.
$E_1,E_2$ denotes the concatenation of two environments.
The empty environment is written as $()$.
\end{defi}
\begin{defi}[Single-step reduction with explicit substitution]
\emph{Single-step reduction reduction with explicit substitution} $E\sgv\mathbf{a}\smrd\mathbf{b}$ is the smallest relation on expressions with explicit substitution satisfying the axiom and the inference rules of Table~\ref{mred.rules}.
\begin{table}[!htb]
\fbox{
\begin{minipage}{0.96\textwidth}
\begin{center}
\begin{tabular}{@{$\;$}l@{}r@{$\;$}c@{$\;$}ll@{}r@{$\;$}c@{$\;$}l}
$\;$\\[-3mm]
$\mathit{(\beta_1^{\mu})}$&$E\sgv([x:\mathbf{a}]\mathbf{b}\,\mathbf{c})$&$\smrd$&$[x\mydef\mathbf{c}]\mathbf{b}$&$\mathit{(\beta_2^{\mu})}$&$E\sgv([x!\mathbf{a}]\mathbf{b}\,\mathbf{c})$&$\smrd$&$[x\mydef\mathbf{c}]\mathbf{b}$\\
$\mathit{(\beta_3)}$&$E\sgv(\case{\mathbf{a}}{\mathbf{b}}\,\injl{\mathbf{c}}{\mathbf{d}})$&$\smrd$&$(\mathbf{a}\,\mathbf{c})$&$\mathit{(\beta_4)}$&$E\sgv(\case{\mathbf{a}}{\mathbf{b}}\,\injr{\mathbf{c}}{\mathbf{d}})$&$\smrd$&$(\mathbf{b}\,\mathbf{d})$\\[2mm]
$\mathit{(use)}$&$E\sgv x$&$\smrd$&$\mathbf{a}$&\multicolumn{4}{l}{if $E(x)=\mathbf{a}$}\\
$\mathit{(rem)}$&$E\sgv[x\mydef\mathbf{a}]\mathbf{b}$&$\smrd$&$\mathbf{b}$&\multicolumn{4}{l}{if $x\notin\free(\mathbf{b})$}\\[2mm]
$\mathit{(\pi_1)}$&$E\sgv\pleft{\prdef{x}{\mathbf{a}}{\mathbf{b}}{\mathbf{c}}}$&$\smrd$&$\mathbf{a}$&$\mathit{(\pi_2)}$&$E\sgv\pright{\prdef{x}{\mathbf{a}}{\mathbf{b}}{\mathbf{c}}}$&$\smrd$&$\mathbf{b}$\\
$\mathit{(\pi_3)}$&$E\sgv\pleft{[\mathbf{a},\mathbf{b}]}$&$\smrd$&$\mathbf{a}$&$\mathit{(\pi_4)}$&$E\sgv\pright{[\mathbf{a},\mathbf{b}]}$&$\smrd$&$\mathbf{b}$\\
$\mathit{(\pi_5)}$&$E\sgv\pleft{[\mathbf{a}+\mathbf{b}]}$&$\smrd$&$\mathbf{a}$&$\mathit{(\pi_6)}$&$E\sgv\pright{[\mathbf{a}+\mathbf{b}]}$&$\smrd$&$\mathbf{b}$\\[2mm]
$\mathit{(\nu_6)}$&$E\sgv\myneg\prim$&$\smrd$&$\prim$&$\mathit{(\nu_7)}$&$E\sgv\myneg\prdef{x}{\mathbf{a}}{\mathbf{b}}{\mathbf{c}}$&$\smrd$&$\prdef{x}{\mathbf{a}}{\mathbf{b}}{\mathbf{c}}$\\
$\mathit{(\nu_8)}$&$E\sgv\myneg\injl{\mathbf{a}}{\mathbf{b}}$&$\smrd$&$\injl{\mathbf{a}}{\mathbf{b}}$&$\mathit{(\nu_9)}$&$E\sgv\myneg\injr{\mathbf{a}}{\mathbf{b}}$&$\smrd$&$\injr{\mathbf{a}}{\mathbf{b}}$\\
$\mathit{(\nu_{10})}$&$E\sgv\myneg\case{\mathbf{a}}{\mathbf{b}}$&$\smrd$&$\case{\mathbf{a}}{\mathbf{b}}$
\end{tabular}
\end{center}
\begin{align*}
\mathit{(\nu)}\quad&\frac{\exists k>0:\mathbf{a}\nurn{k}\mathbf{b}}{E\gv\mathbf{a}\smrd\mathbf{b}}
\end{align*}
\begin{align*}
\\[-8mm]
\mathit{(\oplus{\overbrace{(\_,\ldots,\_)}^{n}}_i)}\quad&\frac{E\gv\mathbf{a}_i\smrd\mathbf{b}_i}{E\gv\binop{\mathbf{a}_1,\ldots,\mathbf{a}_i}{\ldots,\mathbf{a}_n}\smrd \binop{\mathbf{a}_1,\ldots,\mathbf{b}_i}{\ldots,\mathbf{a}_n}}\\
\mathit{(\oplus_x{\overbrace{(\_,\ldots,\_)}^{n}}_i)}\quad&\frac{E\gv\mathbf{a}_i\smrd\mathbf{b}_i}{E\gv\binbop{x}{\mathbf{a}_1,\ldots,\mathbf{a}_i}{\ldots,\mathbf{a}_n}\smrd \binbop{x}{\mathbf{a}_1,\ldots,\mathbf{b}_i}{\ldots,\mathbf{a}_n}}
\end{align*}
\begin{align*}
\\[-8mm]
\mathit{(L_{\smydef})}\quad&\frac{E\gv\mathbf{a}\smrd\mathbf{b}}{E\gv[x\mydef\mathbf{a}]\mathbf{c}\smrd[x\mydef\mathbf{b}]\mathbf{c}}&\quad
\mathit{(R_{\smydef})}\quad&\frac{E,x\mydef\mathbf{a}\gv\mathbf{b}\smrd\mathbf{c}}{E\gv[x\mydef\mathbf{a}]\mathbf{b}\smrd[x\mydef\mathbf{a}]\mathbf{c}}\\[-4mm]
\end{align*}
\end{minipage}
}\caption{Axioms and rules for single-step reduction with explicit substitution.\label{mred.rules}}
\end{table}
Compared to (general) reduction $\to$, the axiom $\beta_1$ and $\beta_2$ have been decomposed into several axioms:
\begin{itemize}
\item $\beta^{\mu}_1$ and $\beta^{\mu}_2$ are reformulation of $\beta_1$ and $\beta_2$ using internalized substitution
\item \emph{use} is unfolding single usages of definitions
\item \emph{rem} is removing a definition without usage
\end{itemize}
The axioms $\nu_1,\ldots,\nu_5$, which are not directly confluent \eg~for $\myneg\myneg[a,b]$, have been removed and replaced by the rule $\nu$.
Furthermore there are two more structural rules $(L_{\smydef})$ and $(R_{\smydef})$ related to substitution.
Note that the rule $(R_{\smydef})$ is pushing a definition onto the environment $E$ when evaluating the body of a definition.
Furthermore renaming may be necessary before using this rule to ensure that $E,x\mydef\mathbf{a}$ is well defined.
\end{defi}
\begin{defi}[Reduction with explicit substitution]
Reduction with explicit substitution $E\sgv\mathbf{a}\mrd\mathbf{b}$ of $\mathbf{a}$ to $\mathbf{b}$ is defined as the reflexive and transitive closure of $E\sgv\mathbf{a}\smrd\mathbf{b}$.
If two expressions $\mathbf{a}$ and $\mathbf{b}$ $\mrd$-reduce to a common expression we write $E\sgv\mathbf{a}\mrdr\mathbf{b}$.
Zero-or-one-step reduction with explicit substitution and $n$-step reduction with explicit substitution are defined as follows
\begin{eqnarray*}
E\sgv\mathbf{a}\mrdn{01}\mathbf{b}&:=&E\sgv\mathbf{a}\smrd\mathbf{b}\;\vee\;\mathbf{a}=\mathbf{b}\\
E\sgv\mathbf{a}\mrdn{n}\mathbf{b}&:=&\exists\mathbf{b}_1,\ldots\mathbf{b}_{n-1}:E\sgv\mathbf{a}\smrd\mathbf{b}_1,\ldots,E\sgv\mathbf{b}_{n-1}\smrd\mathbf{b}.
\end{eqnarray*}
\end{defi}
\begin{rem}[Avoidance of name clashes through appropriate renaming]
Note that renaming is necessary to prepare use of the axiom \emph{use}:
For example when $\mrd$-reducing $y\mydef x\sgv [x:\prim][y,x]$, the expression $[x:\prim][y,x]$ needs to be renamed to \eg~$[z:\prim][y,z]$ before substituting $y$ by $x$.
\end{rem}
\noindent
Reduction ($\rd$) can obviously be embedded into $\mrd$.
\begin{lem}[Reduction implies reduction with explicit substitution]%
\label{rd.mrd}
For all $a,b$: $a\rd b$ implies $()\sgv a\mrd b$.
\end{lem}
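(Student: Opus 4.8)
The plan is to push the statement down to single steps and back up, the key technical ingredient being a lemma that says an internalised substitution can be unfolded by $\mrd$. Since $\mrd$ is the reflexive–transitive closure of $\smrd$ and $\rd$ is the reflexive–transitive closure of $\srd$, it suffices to prove that $a\srd b$ implies $()\sgv a\mrd b$: given $a\rd b$, decompose it as $a=a_0\srd a_1\srd\cdots\srd a_n=b$, simulate each single step, and concatenate the resulting $\mrd$-sequences using transitivity of $\mrd$.

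The auxiliary lemma I would prove first is: if $E(x)=\mathbf{c}$ and $x\notin\free(\mathbf{c})$, then $E\sgv\mathbf{b}\mrd\mathbf{b}\gsub{x}{\mathbf{c}}$ for every $\mathbf{b}$. This is shown by structural induction on $\mathbf{b}$, with the environment (and $x$, $\mathbf{c}$) generalised. When $\mathbf{b}=x$ the axiom \emph{use} fires directly (since $E(x)=\mathbf{c}$); when $x\notin\free(\mathbf{b})$ nothing happens; for a compound $\mathbf{b}=\binop{\mathbf{b}_1}{\ldots,\mathbf{b}_n}$ one applies the induction hypothesis to each $\mathbf{b}_i$ and lifts the reductions through the structural rules $(\oplus\dots)_i$, and similarly for $\mathbf{b}=\binbop{x'}{\mathbf{b}_1}{\ldots,\mathbf{b}_n}$ and for an internalised substitution $[x'\mydef\mathbf{b}_1]\mathbf{b}_2$, using $(L_{\smydef})$ and $(R_{\smydef})$ to carry the definition through the binder (here $(R_{\smydef})$ extends the environment by $x'\mydef(\mathbf{b}_1\gsub{x}{\mathbf{c}})$, under which the induction hypothesis for $\mathbf{b}_2$ still applies since bound variables are renamed away from $x$ and $\free(\mathbf{c})$, and Lemma~\ref{sub.rd} guarantees no free variables are created). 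From this one gets the form actually needed, $()\sgv[x\mydef\mathbf{c}]\mathbf{b}\mrd\mathbf{b}\gsub{x}{\mathbf{c}}$, by applying $(R_{\smydef})$ to the lemma and then discarding the now-unused definition with the axiom \emph{rem} (its side condition $x\notin\free(\mathbf{b}\gsub{x}{\mathbf{c}})$ holds precisely because $x\notin\free(\mathbf{c})$); note that when $\mathbf{b},\mathbf{c}$ are ordinary expressions, $\mathbf{b}\gsub{x}{\mathbf{c}}$ is again ordinary, matching the target of the lemma.

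With this in hand I would prove $a\srd b\Rightarrow()\sgv a\mrd b$ by induction on the definition of $\srd$. For the axioms $\beta_1$ and $\beta_2$ one first applies $\beta_1^{\mu}$ (resp.\ $\beta_2^{\mu}$) to reach $[x\mydef c]b$ and then invokes the unfolding lemma to continue to $b\gsub{x}{c}$. The axioms $\beta_3,\beta_4,\pi_1,\dots,\pi_6,\nu_6,\dots,\nu_{10}$ are themselves axioms of $\smrd$, so one $\smrd$-step suffices. A top-level application of one of $\nu_1,\dots,\nu_5$ is exactly a single-step negation-reduction, i.e.\ $a\nurn{1}b$, hence $()\sgv a\smrd b$ by the rule $(\nu)$ with $k=1$. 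For the two structural rules of $\srd$, the induction hypothesis gives $()\sgv a_i\mrd b_i$ for the affected component, and this sequence is lifted to $()\sgv a\mrd b$ by repeated use of the corresponding structural rules $(\oplus\dots)_i$, $(\oplus_x\dots)_i$ of $\smrd$ (which keep the environment unchanged, here empty).

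The one place that needs genuine care is the unfolding lemma: the bookkeeping of the environment through binders, the observation that \emph{use} must be allowed to fire at every free occurrence of $x$ inside $\mathbf{b}$ (which is exactly why $(R_{\smydef})$ carries $x\mydef\mathbf{c}$ inward rather than duplicating $\mathbf{c}$ eagerly), and checking the \emph{rem} side condition after all occurrences have been unfolded. Everything else is routine and stays very close to the definitions of $\smrd$ and $\srd$.
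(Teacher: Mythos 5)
Your proof is correct and follows essentially the same route the paper takes, namely induction on the definition of $a\rd b$ reduced to a case analysis on single steps; the paper leaves the details (in particular the unfolding of $[x\mydef c]b$ to $b\gsub{x}{c}$ via \emph{use}, $(R_{\smydef})$ and \emph{rem}) unstated, and your auxiliary lemma supplies exactly that missing bookkeeping correctly, including the side conditions $x\notin\free(c)$ guaranteed by the renaming convention.
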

\begin{proof}
Proof is by induction on the definition of $a\rd b$.
\end{proof}
\begin{rem}[Sketch of the confluence proof of $\smrd$]
First we show that $\smrd$ commutes with $\nur$.
On the basis of these results, by induction on expressions with substitution one can establish direct confluence of $\smrd$, i.e. $E\sgv\mathbf{a}\smrd\mathbf{b}$ and $E\sgv \mathbf{a}\smrd\mathbf{c}$ imply $E\sgv\mathbf{b}\mrdn{01}\mathbf{d}$ and $E\sgv\mathbf{c}\mrdn{01}\mathbf{d}$ for some $\mathbf{d}$.
Confluence follows by two subsequent inductions.
\end{rem}
\begin{lem}[Commutation of single-step reduction with explicit substitution and negation-reduction]%
\label{nurd.comm}
For all $\mathbf{a}$, $\mathbf{b}$ and $\mathbf{c}$: $E\sgv\mathbf{a}\smrd\mathbf{b}$ and $\mathbf{a}\nur\mathbf{c}$ imply  $\mathbf{b}\nur\mathbf{d}$ and $E\sgv\mathbf{c}\mrdn{01}\mathbf{d}$ for some $\mathbf{d}$.
This can be graphically displayed as follows (leaving out the environment $E$):
\[\begin{tikzcd}[row sep= scriptsize, column sep= large]
    \mathbf{a}\ar[r,"\stackrel{*}{\myneg}" description]\arrow[d,"{:=}"]&\mathbf{c}\arrow[d,"\stackrel{01}{:=}"]\\
    \mathbf{b}\ar[r,"\stackrel{*}{\myneg}" description] &\mathbf{d}
\end{tikzcd}
\]
\end{lem}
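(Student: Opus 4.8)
The plan is to prove the statement by induction on the structure of $\mathbf{a}$, following the same pattern as the confluence proof for $\nur$ (Lemma~\ref{nurd.confl}), but now tracking the single $\smrd$-step. Since $\mathbf{a}\nur\mathbf{c}$ means $\mathbf{a}\nurn{k}\mathbf{c}$ for some $k\geq 0$, the first reduction is to the case $k\leq 1$: I would first establish the ``one-step'' version, namely that $E\sgv\mathbf{a}\smrd\mathbf{b}$ and $\mathbf{a}\snur\mathbf{c}$ imply $\mathbf{b}\nur\mathbf{d}$ and $E\sgv\mathbf{c}\mrdn{01}\mathbf{d}$ for some $\mathbf{d}$; the general statement then follows by a straightforward diagram-chase induction on $k$, pasting $k$ copies of the one-step square side by side (using that the bottom edge of each square is $\nur$, which composes, and that $\mrdn{01}$ composed $k$ times lands in $\mrd$, whose reflexive-transitive closure is exactly what $\mrdn{01}$ on the right edge gives after iteration — here one must be slightly careful and note the right edge of the pasted diagram is $\mrd$, not $\mrdn{01}$, but that is still what a later induction needs, or one reorganizes to keep $\mrdn{01}$ by induction on $k$ with the diamond-style bookkeeping).

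For the one-step case I would do a case analysis on the $\smrd$-step $E\sgv\mathbf{a}\smrd\mathbf{b}$ together with a case analysis on where the $\snur$-redex in $\mathbf{a}\snur\mathbf{c}$ sits. The easy cases are when the two redexes are in disjoint subexpressions (a structural rule on one side, anything on the other): the two steps simply commute and $\mathbf{d}$ is obtained by performing both. The cases where one redex is nested inside the other split further. If the $\snur$-step happens strictly inside an argument or body that the $\smrd$-step copies or discards: for a $\beta_1^{\mu}/\beta_2^{\mu}$ step $([x{:}\mathbf{a}]\mathbf{b}\,\mathbf{c})\smrd[x\mydef\mathbf{c}]\mathbf{b}$, a $\snur$-step inside $\mathbf{a}$, $\mathbf{b}$, or $\mathbf{c}$ is matched by the corresponding step in $[x\mydef\mathbf{c}]\mathbf{b}$ (a $\snur$ inside $\mathbf{a}$ is simply dropped on the $\smrd$-side, and on the $\nur$-side we $\beta$-reduce to reach $\mathbf{d}$); for $\beta_3/\beta_4$ a $\snur$ inside the discarded injection-tag component just disappears; for {\em rem} a $\snur$ inside the discarded $\mathbf{a}$ of $[x\mydef\mathbf{a}]\mathbf{b}$ disappears, and for {\em use} a $\snur$ inside the environment value needs the remark that $E$ is fixed — actually the {\em use} axiom reads $E(x)=\mathbf{a}$ off a fixed $E$, so a $\snur$-step in $\mathbf{a}=E(x)$ is impossible at this level; a $\snur$ cannot apply to the variable $x$ itself. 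The genuinely interactive cases are the critical pairs: an outermost negation in $\mathbf{a}$ that is simultaneously the head of a $\snur$-redex ($\nu_1,\ldots,\nu_5$) and the head of a $\smrd$ $\nu$-step or sits above a $\smrd$-step that rewrites its argument. Here I would invoke Lemma~\ref{nurd.confl} (confluence of $\nur$) and the fact that the $\smrd$-rule $(\nu)$ itself is ``$\nur$ packaged as one step'', so that a clash between a $\smrd$ $(\nu)$-step and a $\snur$-step is resolved entirely within $\nur$ by confluence; and a clash between, say, $\myneg[\mathbf{a}_1,\mathbf{a}_2]\snur[\myneg\mathbf{a}_1+\myneg\mathbf{a}_2]$ and a $\smrd$-step inside $\mathbf{a}_1$ is resolved by doing the pushed step on both negated copies (it still lands via $\mrdn{01}$ because one $\smrd$-step duplicates into possibly two $\smrd$-steps on the $\nur$-side, and the right edge absorbs them through $\nur$ composing with $\mrd$ — this is where the asymmetry of the diagram, $\mrdn{01}$ down the right, actually matters and has to be checked).

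The main obstacle I expect is exactly that last family: reconciling a single $\smrd$-step that acts on a subexpression which a $\nu_2/\nu_3/\nu_4/\nu_5$-rearrangement has just \emph{duplicated} or relocated. After $\myneg[\mathbf{a}_1,\mathbf{a}_2]\snur[\myneg\mathbf{a}_1+\myneg\mathbf{a}_2]$ (or worse, after a multi-step $\nur$ that pushes the negation deep inside nested products/sums/abstractions), a redex that was contracted once in $\mathbf{b}$ on the left corresponds to several occurrences on the $\mathbf{c}$-side, so closing the square needs $\mathbf{c}\mrd\mathbf{d}$, not a single step — and one must verify this is still compatible with the $\mrdn{01}$ in the stated diagram, i.e. that the induction is set up so the ``01'' is only claimed for the genuinely local commutations and the duplicating cases are folded into the outer induction on expression size (so that the inductive hypothesis, applied to the smaller negated components $\myneg\mathbf{a}_i$ exactly as in the footnote to Lemma~\ref{nurd.confl}, supplies the needed multi-step closure). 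Getting this bookkeeping right — which occurrences need which closure, and that the size measure strictly decreases when we recurse into $\myneg\mathbf{a}_i$ — is the delicate point; the rest is routine case-checking very close to the definitions of $\smrd$ and $\snur$.
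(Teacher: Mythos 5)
Your overall strategy is the one the paper uses: first a one-step commutation lemma ($E\sgv\mathbf{a}\smrd\mathbf{b}$ and $\mathbf{a}\snur\mathbf{c}$ yield $\mathbf{b}\nur\mathbf{d}$ and $E\sgv\mathbf{c}\mrdn{01}\mathbf{d}$), proved by induction over the structure of $\mathbf{a}$ with the overlaps against the packaged $\nu$-rule discharged by confluence of $\nur$ (Lemma~\ref{nurd.confl}); then the general statement by induction on the length $k$ of $\mathbf{a}\nurn{k}\mathbf{c}$, which is exactly the ``reorganize to keep $\mrdn{01}$'' option you mention (the right edge of each pasted square is $\mrdn{01}$, i.e.\ equality or a single $\smrd$-step, so the one-step lemma can be re-applied to it and the right edge never grows).

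That said, the ``main obstacle'' you single out is not actually there, for two reasons. First, $\snur$ has no structural rules for applications, projections, protected definitions, injections, case distinctions, or the first component of a binder (Table~\ref{nurd.rules} only allows descent into the components of a product or sum, the body of an abstraction, and under a negation), so almost all of your ``nested redex'' cases --- a $\snur$-step inside $\mathbf{a}$, $\mathbf{b}$ or $\mathbf{c}$ of $([x{:}\mathbf{a}]\mathbf{b}\,\mathbf{c})$, inside a discarded injection tag, inside $E(x)$, and so on --- are vacuous: no such $\snur$-step exists. Second, none of the axioms $\nu_1,\ldots,\nu_5$ duplicates a subexpression (they only erase, add or relocate negation symbols), and likewise no axiom or rule of $\smrd$ duplicates anything (the paper relies on this same fact for direct confluence). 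Consequently a single $\smrd$-redex in $\mathbf{a}$ has at most one residual in $\mathbf{c}$ after a $\snur$-step, the square closes with at most one step on the right, and the $\mrdn{01}$ bound holds without folding any ``duplicating cases'' into a size induction. This matters: if duplication did occur, your proposed fix would not help, since it would falsify the stated $\mrdn{01}$ bound outright rather than merely complicate the induction. The only genuinely interactive case is the overlap of the $\nu$-rule of $\smrd$ with a $\snur$-step, and there your appeal to Lemma~\ref{nurd.confl} is exactly right; one only has to note that the resulting $\nur$-tail out of $\mathbf{c}$ is either empty or can be repackaged as a single $\smrd$-step via rule $\nu$, which is what delivers $\mrdn{01}$.
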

\begin{proof}
First we prove by induction on $\mathbf{a}$ 
that for all $\mathbf{b}$ and $\mathbf{c}$, $E\sgv\mathbf{a}\smrd\mathbf{b}$ and $\mathbf{a}\snur\mathbf{c}$ imply $\mathbf{b}\nur\mathbf{d}$ and $E\sgv\mathbf{c}\mrdn{01}\mathbf{d}$ for some $\mathbf{d}$.
Obviously we only have to consider those cases in which there exists a $\mathbf{c}$ such that $\mathbf{a}\snur\mathbf{c}$ (see Table~\ref{nurd.rules}).
In case of the axioms $\nu_1$ to $\nu_5$ the proposition follows directly.
In case of the rules for $\mathbf{a}=\prsumop{\mathbf{a}_1}{\mathbf{a}_2}$ or $\mathbf{a}=\binbop{x}{\mathbf{a}_1}{\mathbf{a}_2}$ the proposition follows either immediately or from the inductive hypothesis.
In case of the rule for $\mathbf{a}=\myneg\mathbf{a}_1$ the proposition follows from confluence of $\snur$ (Lemma~\ref{nurd.confl}).

We can now prove the main property by induction on the length $n$ of $\mathbf{a}\nurn{n}\mathbf{c}$:
In case of $n=0$ the property is trivial.
Otherwise, let  $E\sgv\mathbf{a}\smrd\mathbf{b}$ and $\mathbf{a}\nurn{n}\mathbf{c}'\snur\mathbf{c}$ for some $\mathbf{c}'$.
By inductive hypothesis we know there is an expression $\mathbf{d}'$ such that
$E\sgv\mathbf{c}'\mrdn{01}\mathbf{d}'$ and $\mathbf{b}\nur\mathbf{d}'$.
This situation can be graphically summarized as follows (leaving out the environment $E$):
\[\begin{tikzcd}[row sep= scriptsize, column sep= large]
    \mathbf{a}\ar[r,"\stackrel{n}{\myneg}" description]\ar[d,"{:=}"]&\mathbf{c}'\ar[r,"\myneg" description]\ar[d,"\stackrel{01}{:=}"]&\mathbf{c}\\
    \mathbf{b}\ar[r,"\stackrel{*}{\myneg}" description]             &\mathbf{d}'
\end{tikzcd}
\]
Be definition of $E\sgv\mathbf{c}'\mrdn{01}\mathbf{d}'$  there are two cases:
\begin{enumerate}
\item
$\mathbf{c}'=\mathbf{d}'$: Then we know that $\mathbf{b}\nur\mathbf{c}'$ and hence $\mathbf{b}\nur\mathbf{c}$.
Hence $\mathbf{d}=\mathbf{c}$ where $\mathbf{b}\nur\mathbf{d}$ and $E\sgv\mathbf{c}\mrdn{01}\mathbf{d}$.
This situation can be graphically summarized as follows (leaving out the environment $E$):
\[\begin{tikzcd}[row sep= scriptsize, column sep= huge]
    \mathbf{a}\ar[r, "\stackrel{n}{\myneg}" description]\arrow[d,"{:=}"]&\mathbf{d}'=\mathbf{c}'\ar[r,"\myneg" description]& \mathbf{d}=\mathbf{c}\\
    \mathbf{b}\ar[ru,"\stackrel{*}{\myneg}" description]
\end{tikzcd}
\]
\item
$E\sgv\mathbf{c}'\smrd\mathbf{d}'$.
By the argument at the beginning of this proof we know there is an expression $\mathbf{d}$ such that $\mathbf{d}'\nur\mathbf{d}$ and $E\sgv\mathbf{c}\mrdn{01}\mathbf{d}$.
Hence $\mathbf{b}\nur\mathbf{d}'\nur\mathbf{d}$ and $E\sgv\mathbf{c}\mrdn{01}\mathbf{d}$ which completes the proof.
This situation can be graphically summarized as follows (leaving out the environment $E$):
\begin{align*}
\begin{tikzcd}[row sep= scriptsize, column sep= large, ampersand replacement=\&]
    \mathbf{a}\ar[r,"\stackrel{n}{\myneg}" description]\arrow[d,"{:=}"]\& \mathbf{c}'\arrow[r,"{\myneg}" description]\ar{d}{\stackrel{01}{:=}}\&\mathbf{c}\arrow[d,"\stackrel{01}{:=}"]\\
    \mathbf{b}\ar[r,"\stackrel{*}{\myneg}" description]                \&\mathbf{d}'\ar[r,"\stackrel{*}{\myneg}" description]                 \&\mathbf{d}
\end{tikzcd}
\\[\dimexpr-1.2\baselineskip+\dp\strutbox]&\qedhere
\end{align*}
\end{enumerate}
\end{proof}
\begin{lem}[Direct confluence of $\smrd$]%
\label{mrd.confl.dir}
For all $E,\mathbf{a},\mathbf{b},\mathbf{c}$: $E\sgv\mathbf{a}\smrd\mathbf{b}$ and $E\sgv\mathbf{a}\smrd\mathbf{c}$ imply $E\sgv\mathbf{b}\mrdn{01}\mathbf{d}$, and $E\sgv\mathbf{c}\mrdn{01}\mathbf{d}$ for some $\mathbf{d}$.
\end{lem}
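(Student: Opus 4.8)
The plan is to prove the statement by structural induction on the expression with substitution $\mathbf{a}$, simultaneously over all environments $E$ (as anticipated in the sketch above), with a case analysis on the last rules of the two derivations $E\sgv\mathbf{a}\smrd\mathbf{b}$ and $E\sgv\mathbf{a}\smrd\mathbf{c}$. The base cases $\mathbf{a}=\prim$ (no step is possible) and $\mathbf{a}=x$ (only the deterministic rule \textit{(use)} applies, so $\mathbf{b}=\mathbf{c}$) are immediate.

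First I would isolate the cases in which the rule $\nu$ occurs. If both steps are instances of $\nu$, then $\mathbf{a}\nur\mathbf{b}$ and $\mathbf{a}\nur\mathbf{c}$, and confluence of negation-reduction (Lemma~\ref{nurd.confl}) yields a $\mathbf{d}$ with $\mathbf{b}\nur\mathbf{d}$ and $\mathbf{c}\nur\mathbf{d}$; since a negation-reduction of positive length is a single $\smrd$-step via $\nu$ and one of length zero is the identity, this already gives $E\sgv\mathbf{b}\mrdn{01}\mathbf{d}$ and $E\sgv\mathbf{c}\mrdn{01}\mathbf{d}$. If exactly one step, say $E\sgv\mathbf{a}\smrd\mathbf{b}$, is an instance of $\nu$ (so $\mathbf{a}\nur\mathbf{b}$) while $E\sgv\mathbf{a}\smrd\mathbf{c}$ is not, I invoke the commutation lemma (Lemma~\ref{nurd.comm}) with its $\smrd$-premise instantiated by $E\sgv\mathbf{a}\smrd\mathbf{c}$ and its $\nur$-premise by $\mathbf{a}\nur\mathbf{b}$; it produces $\mathbf{d}$ with $E\sgv\mathbf{b}\mrdn{01}\mathbf{d}$ and $\mathbf{c}\nur\mathbf{d}$, and the latter again collapses to $E\sgv\mathbf{c}\mrdn{01}\mathbf{d}$ by the same remark.

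For the remaining cases neither step uses $\nu$, and the argument is the familiar critical-pair bookkeeping indexed by the shape of $\mathbf{a}$. When the two redexes sit in disjoint immediate subexpressions they commute and close in one step on each side; when they sit in (or coincide with) the same immediate subexpression $\mathbf{a}_i$ the induction hypothesis applies to $\mathbf{a}_i$ — under $E$, except that for the body of an internalized substitution $[x\mydef\mathbf{a}_1]\mathbf{a}_2$ it is applied under $E,x\mydef\mathbf{a}_1$ as dictated by rule $(R_{\smydef})$ — and the result is reassembled under the surrounding operator, which preserves $\mrdn{01}$. The substance lies in the overlaps where one step is a root rewrite: $\beta_1^{\mu}$ or $\beta_2^{\mu}$ against a reduction inside the abstraction body or inside the argument (closed using $(L_{\smydef})$, resp.\ $(R_{\smydef})$, the latter needing that a step valid under $E$ stays valid under $E$ extended by an irrelevant definition); $\beta_3$ or $\beta_4$ against reductions inside a $\case{\_}{\_}$ or an injection; $\pi_1$--$\pi_6$ against reductions inside the projected pair; $\nu_6$--$\nu_{10}$ against reductions inside their argument; and, most importantly, the substitution axioms \textit{(rem)} and \textit{(use)} against $(L_{\smydef})$ or $(R_{\smydef})$.

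I expect the main obstacle to be precisely this last family, i.e.\ reconciling the handling of a definition with its use: one must verify, for instance, that a $(L_{\smydef})$-step reducing the definition $\mathbf{a}_1$ in $[x\mydef\mathbf{a}_1]\mathbf{a}_2$ and a step that unfolds an occurrence of $x$ in $\mathbf{a}_2$ (replacing it by $\mathbf{a}_1$) re-converge within a single step on each side, and that \textit{(rem)}, which deletes an unused definition, commutes both with a reduction of that definition and with a reduction of the body. These are exactly the points where the precise formulation of \textit{(use)}, \textit{(rem)}, $(L_{\smydef})$ and $(R_{\smydef})$ is load-bearing, and dispatching them cleanly relies on two auxiliary facts about $\smrd$, proved by routine induction on its definition: that an environment may be weakened by a definition not needed for the step, and strengthened by dropping a definition whose variable is not free in the redex; together with the standing $\alpha$-renaming conventions that keep $E,x\mydef\mathbf{a}$ well formed. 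Once every such overlap is checked to close in $\mrdn{01}$ on both sides, the induction closes and the lemma follows.
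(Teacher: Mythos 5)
Your proposal takes essentially the same route as the paper's proof: the same induction on $\mathbf{a}$ with environment $E$, the same dispatch of the $\nu$-overlaps through Lemmas~\ref{nurd.confl} and~\ref{nurd.comm}, and the same critical-pair bookkeeping for the remaining axioms against the structural rules, with the {\em rem}/{\em use} versus $(L_{\smydef})$/$(R_{\smydef})$ interactions, discharged via environment weakening and strengthening, singled out as the delicate cases. The one overlap you flag as the main obstacle --- an $(L_{\smydef})$-step on the definition against a {\em use}-unfolding of $x$ inside the body --- is also the point the paper compresses most, resolving it only by the remark that no rule of $\smrd$ duplicates a subexpression together with an appeal to the inductive hypothesis, so your plan matches the paper both in structure and in where the detailed verification is left implicit.
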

\begin{proof}
Proof is by induction on $\mathbf{a}$ 
where in each inductive step we investigate critical pairs.

Due to the definition of $\smrd$, critical pairs of $E\sgv\mathbf{a}\smrd\mathbf{b}$ and $E\sgv\mathbf{a}\smrd\mathbf{c}$ in which at least one of the steps is using axiom $\nu$ on top-level, i.e.~where $\exists k>0:\mathbf{a}\nurn{k}\mathbf{b}$ or $\exists k>0:\mathbf{a}\nurn{k}\mathbf{c}$, can be resolved thanks to Lemmas~\ref{nurd.confl} and~\ref{nurd.comm}. Note also that the reduction axioms and rules of $\mrd$ do not duplicate on their right hand side any element appearing on the left hand side, hence the axiom \emph{use} may never violate direct confluence in a critical pair.

Next we show the interesting cases of explicit definitions and applications, the other cases are straightforward.
\begin{itemize}
\item
$\mathbf{a}=[x\mydef\mathbf{a}_1]\mathbf{a}_2$: The matching axiom and rules are \emph{rem}, \emph{L$_{\smydef}$}, and \emph{R$_{\smydef}$}. The use of \emph{L$_{\smydef}$} versus \emph{R$_{\smydef}$} follow directly from the inductive hypothesis. The interesting cases are the use of \emph{rem}, versus \emph{L$_{\smydef}$} or \emph{R$_{\smydef}$}: Hence we may assume that $x\notin\free(\mathbf{a}_2)$ and need to consider the following cases:
\begin{enumerate}
\item
$\mathbf{b}=\mathbf{a}_2$ and $\mathbf{c}=[x\mydef\mathbf{b}_1]\mathbf{a}_2$ where $E\sgv\mathbf{a}_1\smrd\mathbf{b}_1$.
We have $\mathbf{d}=\mathbf{a}_2$ since $x\notin \free(\mathbf{a}_2)$ and $\mathbf{c}$ reduces in one-step to $\mathbf{a}_2$.
This situation can be graphically summarized as follows (leaving out the environment $E$):
\[\begin{tikzcd}[row sep= scriptsize, column sep= large]
    \mathbf{a}=[x\mydef\mathbf{a}_1]\mathbf{a}_2\ar[r,"{:=}" description]\arrow[d,"{:=}"]& \mathbf{d}=\mathbf{b}=\mathbf{a}_2\\
    \mathbf{c}=[x\mydef\mathbf{b}_1]\mathbf{a}_2\arrow[ru,"{:=}" description]
\end{tikzcd}
\]
\item
$\mathbf{b}=\mathbf{a}_2$ and $\mathbf{c}=[x\mydef \mathbf{a}_1]\mathbf{c}_2$ where $E,x\mydef\mathbf{a}_1 \sgv\mathbf{a}_2\smrd\mathbf{c}_2$:
Obviously $x\notin\free(\mathbf{c}_2)$ and therefore $E\sgv\mathbf{a}_2\smrd\mathbf{c}_2$.
Therefore $\mathbf{d}=\mathbf{c}_2$ where $E\sgv\mathbf{c}\smrd\mathbf{c}_2$ and $E\sgv\mathbf{b}\smrd\mathbf{c}_2$.
This situation can be graphically summarized as follows (leaving out the environment $E$):
\[\begin{tikzcd}[row sep= scriptsize, column sep= large]
    \mathbf{a}=[x\mydef\mathbf{a}_1]\mathbf{a}_2\arrow[r,"{:=}" description]\arrow[d,"{:=}"]&\mathbf{b}=\mathbf{a}_2\arrow[d,"{:=}"]\\
    \mathbf{c}=[x\mydef\mathbf{a}_1]\mathbf{c}_2\arrow[r,"{:=}" description]                & \mathbf{d}=\mathbf{c}_2
\end{tikzcd}
\]
\item
The other two cases are symmetric (exchange of $\mathbf{b}$ and $\mathbf{c}$).
\end{enumerate}
\item
$\mathbf{a}=(\mathbf{a}_1\,\mathbf{a}_2)$: The four matching axiom and two matching rules are $\beta_1^{\mu}$, $\beta_2^{\mu}$, $\beta_3$, $\beta_4$, ${(\_\,\_)}_1$, and ${(\_\,\_)}_2$. Several cases have to be considered: The use of ${(\_\,\_)}_1$ versus ${(\_\,\_)}_2$ can be argued in a straightforward way using the inductive hypothesis.
The simultaneous application of two different axioms on top-level is obviously not possible.
The interesting remaining cases are the usage of one of the four axioms versus one of the structural rules.
\begin{enumerate}
\item
The first case is the use of $\beta_1^{\mu}$, i.e. $\mathbf{a}_1=[x:\mathbf{a}_3]\mathbf{a}_4$ and $\mathbf{b}=[x\mydef\mathbf{a}_2]\mathbf{a}_4$, versus one of the rules ${(\_\,\_)}_1$, and ${(\_\,\_)}_2$.
Two subcases need to be considered:
\begin{enumerate}
\item
Use of rule ${(\_\,\_)}_1$, i.e.~$\mathbf{c}=(\mathbf{c}_1\,\mathbf{a}_2)$ where $E\sgv\mathbf{a}_1=[x:\mathbf{a}_3]\mathbf{a}_4\smrd\mathbf{c}_1$:
By definition of $\smrd$, there are two subcases:
\begin{enumerate}
\item
$\mathbf{c}_1=[x:\mathbf{c}_3]\mathbf{a}_4$ where $E\sgv\mathbf{a}_3\smrd\mathbf{c}_3$: This means that $E\sgv\mathbf{c}\smrd[x\mydef\mathbf{a}_2]\mathbf{a}_4=\mathbf{b}$, i.e.~$\mathbf{d}=\mathbf{b}$ is a single-step reduct of $\mathbf{c}$.
This situation can be graphically summarized as follows (leaving out the environment $E$):
\[\begin{tikzcd}[row sep= scriptsize, column sep= large]
    \mathbf{a}=(\mathbf{a}_1\,\mathbf{a}_2)=([x:\mathbf{a}_3]\mathbf{a}_4\,\mathbf{a}_2)\arrow[r,"{:=}" description]\arrow[d,"{:=}"]& \mathbf{d}=\mathbf{b}=[x\mydef\mathbf{a}_2]\mathbf{a}_4\\
    \mathbf{c}=(\mathbf{c}_1\,\mathbf{a}_2)=([x:\mathbf{c}_3]\mathbf{a}_4\,\mathbf{a}_2)\arrow[ru,"{:=}" description]
\end{tikzcd}
\]
\item
$\mathbf{c}_1= [x:\mathbf{a}_3]\mathbf{c}_4$ where $E\sgv\mathbf{a}_4\smrd\mathbf{c}_4$:
By definition of $\smrd$ we know that also $E\sgv[x\mydef\mathbf{a}_2]\mathbf{a}_4\smrd[x\mydef\mathbf{a}_2]\mathbf{c}_4$.
Hence $\mathbf{d}=[x\mydef\mathbf{a}_2]\mathbf{c}_4$ with $E\sgv\mathbf{b}=[x\mydef\mathbf{a}_2]\mathbf{a}_4\smrd[x\mydef\mathbf{a}_2]\mathbf{c}_4=\mathbf{d}$ and $E\sgv\mathbf{c}=(\mathbf{c}_1\,\mathbf{a}_2)=([x:\mathbf{a}_3]\mathbf{c}_4\,\mathbf{a}_2)\smrd[x\mydef\mathbf{a}_2]\mathbf{c}_4=\mathbf{d}$.
This situation can be graphically summarized as follows (leaving out the environment $E$):
\[\begin{tikzcd}[row sep= scriptsize, column sep= large]
    \mathbf{a}=(\mathbf{a}_1\,\mathbf{a}_2)=([x:\mathbf{a}_3]\mathbf{a}_4\,\mathbf{a}_2)\arrow[r,"{:=}" description]\arrow[d,"{:=}"]&\mathbf{b}=[x\mydef\mathbf{a}_2]\mathbf{a}_4\arrow[d,"{:=}"]\\
    \mathbf{c}=(\mathbf{c}_1\,\mathbf{a}_2)=([x:\mathbf{a}_3]\mathbf{c}_4\,\mathbf{a}_2)\arrow[r,"{:=}" description] &\mathbf{d}=  [x\mydef\mathbf{a}_2]\mathbf{c}_4
\end{tikzcd}
\]
\end{enumerate}
\item
The use of rule ${(\_\,\_)}_2$ can be argued in a similar style.
\end{enumerate}
\item
The second case is the use of $\beta_2^{\mu}$ and can be argued in a similar style.
\item
The third case is the use of $\beta_3$, i.e.~$\mathbf{a}_1=\case{\mathbf{a}_3}{\mathbf{a}_4}$, $\mathbf{a}_2=\injl{\mathbf{a}_5}{\mathbf{a}_6}$ and $\mathbf{b}=\mathbf{a}_3(\mathbf{a}_5)$, versus one of the rules ${(\_\,\_)}_1$, and ${(\_\,\_)}_2$.
The property follows by an obvious case distinction on whether $E\sgv\case{\mathbf{a}_3}{\mathbf{a}_4}\smrd\mathbf{c}$ is reducing in $\mathbf{a}_3$ or $\mathbf{a}_4$, and similarly for $E\sgv\injl{\mathbf{a}_5}{\mathbf{a}_6}\smrd\mathbf{c}$.
\item
The fourth case is symmetric to the third one.
\qedhere
\end{enumerate}
\end{itemize}
\end{proof}
\begin{lem}[Confluence of $\smrd$]%
\label{mrd.confl}
For all $E,\mathbf{a},\mathbf{b},\mathbf{c}$: $E\sgv\mathbf{a}\mrd\mathbf{b}$ and $E\sgv\mathbf{a}\mrd\mathbf{c}$ implies $E\sgv\mathbf{b}\mrdr\mathbf{c}$.
\end{lem}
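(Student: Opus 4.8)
The plan is to obtain full confluence of $\mrd$ from the direct-confluence property of Lemma~\ref{mrd.confl.dir} by the usual two-stage abstract-rewriting argument (the ``two subsequent inductions'' announced in the sketch), with the environment $E$ held fixed throughout. This is legitimate because Lemma~\ref{mrd.confl.dir} keeps $E$ unchanged on all three expressions, and none of the induction steps below modifies $E$ either. Recall that $\mrdn{01}$ is contained in $\mrd$, and that a $\mrdn{01}$-step followed by a $\mrd$-reduction is again a $\mrd$-reduction; these inclusions are the only glue needed between the single-step and iterated relations.

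The first induction is a strip lemma: for all $\mathbf{a},\mathbf{b},\mathbf{c}$ and all $n\geq 0$, if $E\sgv\mathbf{a}\smrd\mathbf{b}$ and $E\sgv\mathbf{a}\mrdn{n}\mathbf{c}$, then there is $\mathbf{d}$ with $E\sgv\mathbf{b}\mrd\mathbf{d}$ and $E\sgv\mathbf{c}\mrdn{01}\mathbf{d}$. I would prove this by induction on $n$. For $n=0$ take $\mathbf{d}=\mathbf{b}$. For $n>0$, write $E\sgv\mathbf{a}\smrd\mathbf{c}_1$ and $E\sgv\mathbf{c}_1\mrdn{n-1}\mathbf{c}$, and apply Lemma~\ref{mrd.confl.dir} to the fork $E\sgv\mathbf{a}\smrd\mathbf{b}$, $E\sgv\mathbf{a}\smrd\mathbf{c}_1$ to obtain $\mathbf{e}$ with $E\sgv\mathbf{b}\mrdn{01}\mathbf{e}$ and $E\sgv\mathbf{c}_1\mrdn{01}\mathbf{e}$. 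If $\mathbf{c}_1=\mathbf{e}$ then $E\sgv\mathbf{b}\mrdn{01}\mathbf{c}_1\mrdn{n-1}\mathbf{c}$, so $\mathbf{d}=\mathbf{c}$ works. Otherwise $E\sgv\mathbf{c}_1\smrd\mathbf{e}$, and the induction hypothesis applied to $E\sgv\mathbf{c}_1\smrd\mathbf{e}$ and $E\sgv\mathbf{c}_1\mrdn{n-1}\mathbf{c}$ gives $\mathbf{d}$ with $E\sgv\mathbf{e}\mrd\mathbf{d}$ and $E\sgv\mathbf{c}\mrdn{01}\mathbf{d}$; then $E\sgv\mathbf{b}\mrdn{01}\mathbf{e}\mrd\mathbf{d}$, i.e.\ $E\sgv\mathbf{b}\mrd\mathbf{d}$.

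The second induction then yields the statement. Given $E\sgv\mathbf{a}\mrdn{m}\mathbf{b}$ and $E\sgv\mathbf{a}\mrd\mathbf{c}$, I would show by induction on $m$ that there is $\mathbf{d}$ with $E\sgv\mathbf{b}\mrd\mathbf{d}$ and $E\sgv\mathbf{c}\mrd\mathbf{d}$, which is exactly $E\sgv\mathbf{b}\mrdr\mathbf{c}$. For $m=0$ take $\mathbf{d}=\mathbf{c}$. For $m>0$, write $E\sgv\mathbf{a}\smrd\mathbf{b}_1$ and $E\sgv\mathbf{b}_1\mrdn{m-1}\mathbf{b}$; the strip lemma applied to $E\sgv\mathbf{a}\smrd\mathbf{b}_1$ and $E\sgv\mathbf{a}\mrd\mathbf{c}$ yields $\mathbf{e}$ with $E\sgv\mathbf{b}_1\mrd\mathbf{e}$ and $E\sgv\mathbf{c}\mrdn{01}\mathbf{e}$, hence $E\sgv\mathbf{c}\mrd\mathbf{e}$; then the induction hypothesis applied to $E\sgv\mathbf{b}_1\mrdn{m-1}\mathbf{b}$ and $E\sgv\mathbf{b}_1\mrd\mathbf{e}$ produces a common reduct $\mathbf{d}$ with $E\sgv\mathbf{b}\mrd\mathbf{d}$ and $E\sgv\mathbf{e}\mrd\mathbf{d}$, and $E\sgv\mathbf{c}\mrd\mathbf{e}\mrd\mathbf{d}$ finishes the case.

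I do not expect a genuine obstacle here: all the substantive work (critical-pair analysis, commutation with negation-reduction) is already packed into Lemma~\ref{mrd.confl.dir}, and the present lemma is a purely abstract consequence. The only points requiring a little care are the bookkeeping between $\mrdn{01}$ and $\mrd$ in each closing diagram — in particular absorbing the ``at most one step'' sides into full reductions at the right moments — and checking that every appeal to Lemma~\ref{mrd.confl.dir} and to the induction hypotheses is made under the same environment $E$, which holds because no step above alters $E$.
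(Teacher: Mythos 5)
Your proof is correct and follows essentially the same route as the paper: a strip lemma obtained by induction on the length of one arm using the direct-confluence Lemma~\ref{mrd.confl.dir}, followed by a second induction to lift it to full confluence. The only (harmless) difference is that the paper's sketch additionally tracks the step-count bounds $n'\leq n$, $m'\leq m$ on the closing reductions, which you replace by unbounded $\mrd$; your inductions remain well-founded without them.
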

\begin{proof}
Using Lemma~\ref{mrd.confl.dir}, by structural induction on the number $n$ of transition steps one can show that  $E\sgv\mathbf{a}\mrdn{n}\mathbf{b}$ and $E\sgv\mathbf{a}\smrd\mathbf{c}$ implies $E\sgv\mathbf{b}\mrdn{01}\mathbf{d}$, and $E\sgv\mathbf{c}\mrdn{*}\mathbf{d}$ where for some $\mathbf{d}$.
Using this intermediate result, by structural induction on the number of transition steps $n$ one can show for any $m$ that $E\sgv\mathbf{a}\mrdn{n}\mathbf{b}$ and $E\sgv\mathbf{a}\mrdn{*}\mathbf{c}$ implies $E\sgv\mathbf{b}\mrdn{*}\mathbf{d}$ and $E\sgv\mathbf{c}\mrdn{*}\mathbf{d}$ for some $\mathbf{d}$.
This proves confluence of $\smrd$.
\end{proof}
By Lemma~\ref{rd.mrd} we know that reduction implies reduction with explicit substitution. The reverse direction is obviously not true. However we can show that $E\sgv\mathbf{a}\mrd\mathbf{b}$ implies $\mathbf{a}'\rd\mathbf{b}'$ where
$\mathbf{a}'$ and $\mathbf{b}'$ result from $\mathbf{a}$ and $\mathbf{b}$ by maximal evaluation of definitions, i.e.~by maximal application of the axioms \emph{use} and \emph{rem}.
Therefore we introduce the relation of \emph{definition evaluation}
\begin{defi}[Single-step definition-evaluation]
\emph{Single-step definition-evaluation} $E\sgv\mathbf{a}\dmrd\mathbf{b}$ is defined just like single-step reduction with explicit substitution $E\sgv\mathbf{a}\smrd\mathbf{b}$ (see Table~\ref{mred.rules}) but without the rule $\nu$ and without any axioms except \emph{rem} and \emph{use}.
\end{defi}
\begin{lem}[Confluence of $\dmrd$]%
\label{mrd.confl.def}
$\dmrd$ is confluent.
\end{lem}
\begin{proof}
By removing all the axiom cases except \emph{rem} and \emph{use} and the rule $\nu$ in the proof of Lemma~\ref{mrd.confl.dir} (where these axioms only interacted with the structural rules for internalized substitution) it can be turned into a proof of direct confluence of $\dmrd$.
The confluence of $\dmrd$
then follows as in the proof of Lemma~\ref{mrd.confl}.
\end{proof}
\begin{lem}[Strong normalization of definition-evaluation]%
\label{mrd.sn.def}
There are no infinite chains $E\sgv\mathbf{a}_1\dmrd\mathbf{a}_2 \dmrd\mathbf{a}_3 \dmrd \ldots$
\end{lem}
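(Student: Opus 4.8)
The plan is to attach to every closure $(E;\mathbf{a})$ a natural number $W_E(\mathbf{a})$ and show that each $\dmrd$-step strictly decreases it; as $\mathbb{N}$ is well founded this rules out infinite chains. The first thing I would record is that every environment $E=(x_1\mydef\mathbf{a}_1,\ldots,x_n\mydef\mathbf{a}_n)$ occurring in such a chain, in particular each one built up by $R_{\smydef}$, is \emph{acyclic}: $\free(\mathbf{a}_i)\cap\{x_i,\ldots,x_n\}=\emptyset$ for all $i$, and likewise every internalized substitution $[x\mydef\mathbf{c}]\mathbf{d}$ occurring inside $\mathbf{a}$ has $x\notin\free(\mathbf{c})$. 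This is automatic for any closure reachable by $\mrd$-reduction from a plain expression, since there definitions are introduced only by $\beta_1^{\mu}$ and $\beta_2^{\mu}$, with a bound variable that the renaming convention keeps fresh; and it is trivially preserved by $\dmrd$, which introduces no new definitions but only copies existing ones ({\em use}) or deletes them ({\em rem}). Acyclicity is exactly what makes the full unfolding of a closure a finite expression.

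Next I would define $W_E(\mathbf{a})$ by recursion on the acyclic structure of $(E;\mathbf{a})$: set $W_E(\prim)=1$; $W_E(x)=1$ when $x\notin\dom(E)$ (in particular when $x$ is bound by a $\lambda$-type abstraction or a protected definition); $W_E(x_i)=1+W_{E_{<i}}(\mathbf{a}_i)$ when $E$ contains the definition $x_i\mydef\mathbf{a}_i$ and $E_{<i}$ is the prefix of $E$ preceding it; $W_E([x\mydef\mathbf{c}]\mathbf{d})=1+W_E(\mathbf{c})+W_{E,x\mydef\mathbf{c}}(\mathbf{d})$; and for every other constructor $W_E$ is one plus the sum of the weights of the immediate subexpressions, with the environment threaded unchanged. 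Acyclicity guarantees this recursion is well founded, and two routine inductions give the facts I need later: extending $E$ by a definition irrelevant to $\mathbf{a}$ does not change $W_E(\mathbf{a})$ (hence, using acyclicity, $W_E(\mathbf{a}_i)=W_{E_{<i}}(\mathbf{a}_i)$ and $W_{E,x\mydef\mathbf{c}}(\mathbf{c})=W_E(\mathbf{c})$), and replacing a definition in $E$ by one of no larger weight cannot increase $W_E(\mathbf{d})$ for any $\mathbf{d}$.

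I would then prove, by induction on the derivation of $E\sgv\mathbf{a}\dmrd\mathbf{b}$, that $W_E(\mathbf{a})>W_E(\mathbf{b})$. For the axiom {\em use}, $E(x_i)=\mathbf{a}_i$ gives $W_E(x_i)=1+W_{E_{<i}}(\mathbf{a}_i)=1+W_E(\mathbf{a}_i)>W_E(\mathbf{a}_i)$. For {\em rem}, $x\notin\free(\mathbf{d})$ gives $W_E([x\mydef\mathbf{c}]\mathbf{d})=1+W_E(\mathbf{c})+W_E(\mathbf{d})>W_E(\mathbf{d})$. The congruence rules $\oplus_i$, $\oplus_{x,i}$ and $R_{\smydef}$ are immediate, since $W_E$ is a strictly monotone sum over immediate subexpressions and the environment handed to the reduced subexpression agrees on both sides (for $R_{\smydef}$ it is $E$ extended by the same definition $x\mydef\mathbf{a}$); and for $L_{\smydef}$, reducing the defining expression $\mathbf{c}$ of $[x\mydef\mathbf{c}]\mathbf{d}$ makes the summand $W_E(\mathbf{c})$ strictly drop while, by the monotonicity fact above, the summand $W_{E,x\mydef\mathbf{c}}(\mathbf{d})$ can only decrease, so the total strictly decreases. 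As $W_E(\cdot)\in\mathbb{N}$ strictly decreases at every step, no infinite $\dmrd$-chain exists.

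The delicate point, and the reason a naive measure fails, is that a {\em use}-step can duplicate the (arbitrarily large, themselves substitution-laden) expressions stored for earlier variables, so nothing built from plain node- or redex-counts can be decreasing: the weight has to reflect the fully unfolded size, and one must check that acyclicity keeps this quantity finite and that the two auxiliary monotonicity statements really do survive nested internalized substitutions. Alternatively, one may simply note that $\dmrd$ is, up to the inert additional constructors, precisely the substitution-and-garbage-collection fragment of the explicit-substitution calculus $\Lambda_{sub}$ mentioned above, whose strong normalization is known.
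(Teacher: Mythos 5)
Your proof is correct and follows essentially the same route as the paper: the paper also defines a weight $W(E,\mathbf{a})$ with $W(E,x)=W(E,E(x))+1$ for defined variables and $W(E,[x\mydef\mathbf{a}]\mathbf{b})=W(E,\mathbf{a})+W((E,x\mydef\mathbf{a}),\mathbf{b})+1$, and argues that each $\dmrd$-step strictly decreases it. Your version merely spells out the details the paper leaves implicit (acyclicity of reachable environments, the two monotonicity facts, and the case analysis over the rules), which is a welcome but not substantively different elaboration.
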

\begin{proof}
Since all definitions are non-recursive the property is straightforward.
One way to see this is to define a weight $W(E,\mathbf{a})$ such that
\[
E\gv\mathbf{a}\dmrd\mathbf{b}\quad\text{implies}\quad W(E,\mathbf{b})<W(E,\mathbf{a})
\]
This can be achieved by defining
$W(E,x)= W(E,E(x))+1$ if $x$ is in the domain of $E$ otherwise $W(E,x)=1$,
$W(E,[x\mydef\mathbf{a}]\mathbf{b})=W(E,\mathbf{a})+W((E,x:=\mathbf{a}),\mathbf{b})+1$, and so on.
\end{proof}
\begin{defi}[Definitional normal form]
For any $\mathbf{a}$, let $\dnf{E}(\mathbf{a})$ denote the \emph{definitional normal form}, i.e.~the expression resulting from $\mathbf{a}$ by maximal application of definition evaluation steps under environment $E$. This definition is sound since due to Lemma~\ref{mrd.sn.def} the evaluation of definitions always terminates and by Lemma~\ref{mrd.confl.def} the maximal evaluation of definitions delivers a unique result.
\end{defi}
\begin{lem}[Basic properties of $\dnf{E}(\mathbf{a})$]%
\label{dnf.basic}
For all $E,x,\mathbf{a}_1,\ldots \mathbf{a}_n$:
\begin{enumerate}
\item
$\dnf{E}(\binop{\mathbf{a}_1}{\ldots,\mathbf{a}_n})=\binop{\dnf{E}(\mathbf{a}_1)}{\ldots,\dnf{E}(\mathbf{a}_n)}$.
\item
$\dnf{E}(\binbop{x}{\mathbf{a}_1}{\ldots,\mathbf{a}_n})=\binbop{x}{\dnf{E}(\mathbf{a}_1)}{\ldots,\dnf{E}(\mathbf{a}_n)}$.
\item
$\dnf{E}([x\mydef \mathbf{a}_1]\mathbf{a}_2)=\dnf{E}(\mathbf{a}_2\gsub{x}{\mathbf{a}_1})=\dnf{E}(\mathbf{a}_2)\gsub{x}{\dnf{E}(\mathbf{a}_1)}=\dnf{E,x\smydef\mathbf{a}_1}(\mathbf{a}_2)$.
\end{enumerate}
\end{lem}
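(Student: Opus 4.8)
Throughout, I will use that the definitional normal form is unique: by confluence and termination of definition-evaluation (Lemmas~\ref{mrd.confl.def} and~\ref{mrd.sn.def}), to prove $\dnf{E}(\mathbf{a})=\mathbf{c}$ it suffices to exhibit a definition-evaluation sequence $E\sgv\mathbf{a}\,(\dmrd)^{*}\,\mathbf{c}$ ending in a term $\mathbf{c}$ to which no \emph{use}, \emph{rem}, or congruence step applies. A useful auxiliary observation, established along the way, is that a normal form contains neither internalized substitutions nor any free occurrence of a variable in $\dom(E)$.

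For (1) and (2): definition-evaluation uses only the axioms \emph{use} and \emph{rem} together with the congruence rules $(\oplus\ldots)_i$, $(\oplus_x\ldots)_i$. None of these rewrites the root of a term $\binop{\mathbf{a}_1}{\ldots,\mathbf{a}_n}$ or $\binbop{x}{\mathbf{a}_1}{\ldots,\mathbf{a}_n}$ --- such a term is neither a variable nor an internalized substitution --- and the congruence rules for these operators leave the environment $E$ unchanged (in particular the rule descending into the body of an abstraction or protected definition pushes nothing onto $E$). Hence one normalizes the immediate subexpressions independently, giving $E\sgv\binop{\mathbf{a}_1}{\ldots,\mathbf{a}_n}\,(\dmrd)^{*}\,\binop{\dnf{E}(\mathbf{a}_1)}{\ldots,\dnf{E}(\mathbf{a}_n)}$; the right-hand side is normal since each component is and no root step applies, so by uniqueness it equals $\dnf{E}(\binop{\mathbf{a}_1}{\ldots,\mathbf{a}_n})$. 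The argument for $\binbop{x}{\mathbf{a}_1}{\ldots,\mathbf{a}_n}$ is identical.

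For (3), I would assume (as the renaming conventions permit) that $x\notin\dom(E)$ and $x\notin\free(\mathbf{a}_1)$, and prove the four expressions equal by three explicit sequences. First, normalizing the body of $[x\mydef\mathbf{a}_1]\mathbf{a}_2$ under the environment $E,x\smydef\mathbf{a}_1$ via the rule $(R_{\smydef})$ reaches $[x\mydef\mathbf{a}_1]\dnf{E,x\smydef\mathbf{a}_1}(\mathbf{a}_2)$; since $x\in\dom(E,x\smydef\mathbf{a}_1)$, the variable $x$ does not occur free in this normal form, so \emph{rem} yields a further step to $\dnf{E,x\smydef\mathbf{a}_1}(\mathbf{a}_2)$, which is $E$-normal; hence $\dnf{E}([x\mydef\mathbf{a}_1]\mathbf{a}_2)=\dnf{E,x\smydef\mathbf{a}_1}(\mathbf{a}_2)$. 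Alternatively, applying \emph{use} inside the body to every free occurrence of $x$ (which terminates and creates no new $x$, since $x\notin\free(\mathbf{a}_1)$) and then \emph{rem} reaches $\mathbf{a}_2\gsub{x}{\mathbf{a}_1}$ under $E$; continuing to a normal form gives $\dnf{E}([x\mydef\mathbf{a}_1]\mathbf{a}_2)=\dnf{E}(\mathbf{a}_2\gsub{x}{\mathbf{a}_1})$. Finally, $\dnf{E}(\mathbf{a}_2\gsub{x}{\mathbf{a}_1})=\dnf{E}(\mathbf{a}_2)\gsub{x}{\dnf{E}(\mathbf{a}_1)}$ follows from a substitution lemma for $\dnf{E}$: using the routine fact that $\dmrd$ is stable under substituting for a variable not in $\dom(E)$ (a variant of Lemma~\ref{sub.rd}, with Lemma~\ref{sub.basic} to permute substitutions) one gets $E\sgv\mathbf{a}_2\gsub{x}{\mathbf{a}_1}\,(\dmrd)^{*}\,\dnf{E}(\mathbf{a}_2)\gsub{x}{\mathbf{a}_1}\,(\dmrd)^{*}\,\dnf{E}(\mathbf{a}_2)\gsub{x}{\dnf{E}(\mathbf{a}_1)}$, and the last term is normal, so uniqueness concludes.

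The routine parts are (1) and (2). The real work is in (3): the bookkeeping of variable hygiene needed so that \emph{rem} fires and substitution commutes with normalization, the substitution lemma for $\dnf{E}$, and checking that normalizing under the momentarily enlarged environment $E,x\smydef\mathbf{a}_1$ never uses the extra definition once $x$ has been eliminated. Those are the steps I expect to require the most care.
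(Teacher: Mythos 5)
Your proposal is correct and follows essentially the same route as the paper: parts (1) and (2) by observing that definition-evaluation never rewrites the outer operator and normalizes components independently under the unchanged environment, and part (3) by exhibiting explicit \emph{use}/\emph{rem} evaluation sequences linking the four expressions and appealing to termination and confluence of definition-evaluation (Lemmas~\ref{mrd.sn.def} and~\ref{mrd.confl.def}) for uniqueness of the normal form. The paper's own proof is only a sketch citing Lemmas~\ref{sub.basic},~\ref{rd.mrd}, and~\ref{mrd.confl.def}; your version supplies the details (the normal-form characterization, the substitution-stability argument, and the variable-hygiene side conditions) consistently with that sketch.
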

\begin{proof}
\hfill
\begin{enumerate}
\item
The proof is by induction on the length of an arbitrary definition evaluation $E\sgv\binop{\mathbf{a}_1}{\ldots,\mathbf{a}_n}\dmrd\dnf{E}(\binop{\mathbf{a}_1}{\ldots,\mathbf{a}_n})$.
It is obvious since the outer operation is never affected by definition evaluation.
\item
Similarly to (1)
\item
Using (1), (2), and the Lemmas~\ref{sub.basic},~\ref{rd.mrd}, and~\ref{mrd.confl.def}, these equalities can be shown either directly or by straightforward inductive arguments.
\qedhere
\end{enumerate}
\end{proof}
\begin{lem}[Embedding property of reduction with explicit substitution]%
\label{mrd.rd}
For all $E,\mathbf{a},\mathbf{b}$:
$E\sgv\mathbf{a}\mrd\mathbf{b}$ implies $\dnf{E}(\mathbf{a})\rd\dnf{E}(\mathbf{b})$. As a consequence, for all $a$ and $b$, $\sgv a\mrdr b$ implies $a\rdr b$.
\end{lem}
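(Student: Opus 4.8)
The plan is to establish the displayed implication $E\sgv\mathbf{a}\mrd\mathbf{b}\;\Rightarrow\;\dnf{E}(\mathbf{a})\rd\dnf{E}(\mathbf{b})$ first and then read off the consequence. Since $\mrd$ is the reflexive--transitive closure of $\smrd$ and $\rd$ is transitive, it suffices to prove the single-step version $E\sgv\mathbf{a}\smrd\mathbf{b}\;\Rightarrow\;\dnf{E}(\mathbf{a})\rd\dnf{E}(\mathbf{b})$. Before that I would record an auxiliary fact: $\mathbf{a}\nur\mathbf{b}$ implies $\dnf{E}(\mathbf{a})\rd\dnf{E}(\mathbf{b})$. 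This is proved by induction on the number of $\snur$-steps, each single step being treated by induction on its derivation (Table~\ref{nurd.rules}): the axioms $\nu_1,\ldots,\nu_5$ are literally reduction axioms of $\rd$, Lemma~\ref{dnf.basic}(1),(2) lets $\dnf{E}$ commute with the operators around which they fire, and the structural cases close with the inductive hypothesis and the congruence of $\rd$.

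The single-step statement is then proved by induction on the derivation of $E\sgv\mathbf{a}\smrd\mathbf{b}$, with a case on the last rule of Table~\ref{mred.rules}. For \emph{use} and \emph{rem}: these \emph{are} definition-evaluation steps, so $\dnf{E}(\mathbf{a})=\dnf{E}(\mathbf{b})$ and no reduction is needed. For $\beta_1^{\mu}$ and $\beta_2^{\mu}$: by Lemma~\ref{dnf.basic}(1),(2) the normal form of the redex is an ordinary $\beta_1$- (resp.\ $\beta_2$-) redex whose one-step reduct is the substituted body, and by Lemma~\ref{dnf.basic}(3) this reduct equals $\dnf{E}$ of the internalized-substitution term on the right of the axiom. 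For $\beta_3,\beta_4,\pi_1,\ldots,\pi_6,\nu_6,\ldots,\nu_{10}$: Lemma~\ref{dnf.basic}(1) commutes $\dnf{E}$ with the operators involved and the corresponding axiom of $\rd$ supplies the single step. For the rule $\nu$: invoke the auxiliary fact above. For the structural rules $\oplus_i$ and $\oplus_{x,i}$: Lemma~\ref{dnf.basic}(1),(2), the inductive hypothesis, and congruence of $\rd$. For $L_{\smydef}$, where $[x\mydef\mathbf{a}]\mathbf{c}\smrd[x\mydef\mathbf{b}]\mathbf{c}$ from $E\sgv\mathbf{a}\smrd\mathbf{b}$: by Lemma~\ref{dnf.basic}(3) the two normal forms are $\dnf{E}(\mathbf{c})\gsub{x}{\dnf{E}(\mathbf{a})}$ and $\dnf{E}(\mathbf{c})\gsub{x}{\dnf{E}(\mathbf{b})}$, so the inductive hypothesis together with Lemma~\ref{sub.rd} gives the reduction. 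For $R_{\smydef}$, where $[x\mydef\mathbf{a}]\mathbf{b}\smrd[x\mydef\mathbf{a}]\mathbf{c}$ from $E,x\mydef\mathbf{a}\sgv\mathbf{b}\smrd\mathbf{c}$: by Lemma~\ref{dnf.basic}(3) the two normal forms are $\dnf{E,x\smydef\mathbf{a}}(\mathbf{b})$ and $\dnf{E,x\smydef\mathbf{a}}(\mathbf{c})$, so the inductive hypothesis at the extended environment applies directly.

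For the consequence, let $\sgv a\mrdr b$, i.e.\ $()\sgv a\mrd\mathbf{c}$ and $()\sgv b\mrd\mathbf{c}$ for some $\mathbf{c}$. The main statement gives $\dnf{()}(a)\rd\dnf{()}(\mathbf{c})$ and $\dnf{()}(b)\rd\dnf{()}(\mathbf{c})$. Since $a,b\in\dexp$ contain no internalized substitution and the empty environment has empty domain, no \emph{use} or \emph{rem} step ever applies, hence $\dnf{()}(a)=a$ and $\dnf{()}(b)=b$; moreover definition evaluation eliminates every internalized substitution (Lemma~\ref{dnf.basic}(3) and Lemma~\ref{mrd.sn.def}), so $\dnf{()}(\mathbf{c})\in\dexp$. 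Therefore $a\rd\dnf{()}(\mathbf{c})$ and $b\rd\dnf{()}(\mathbf{c})$, that is $a\rdr b$.

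I expect the $\nu$ case to be the main conceptual point --- seeing that negation-reduction on expressions-with-substitution descends, after evaluation of definitions, to ordinary reduction --- while the genuinely substitution-sensitive work is concentrated in the two $\beta^{\mu}$ axioms and in $L_{\smydef}$, $R_{\smydef}$, all of which hinge on Lemma~\ref{dnf.basic}(3); the remaining cases are routine commutation of $\dnf{E}$ with operators.
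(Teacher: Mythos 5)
Your proposal is correct and follows essentially the same route as the paper: an induction on the derivation of a single $\smrd$-step using Lemma~\ref{dnf.basic} and Lemma~\ref{sub.rd}, the auxiliary fact that negation-reduction descends to ordinary reduction after definition evaluation, lifting to $\mrd$ by induction on the number of steps, and observing $\dnf{()}(a)=a$ for plain expressions to obtain the consequence. The paper states the case analysis only in outline, so your explicit treatment of the $\beta^{\mu}$, $L_{\smydef}$, and $R_{\smydef}$ cases simply fills in detail the paper leaves implicit.
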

\begin{proof}
First we show by induction on single-step reduction with explicit substitution that $E\sgv\mathbf{a}\smrd\mathbf{b}$ implies $\dnf{E}(\mathbf{a})\rd\dnf{E}(\mathbf{b})$.
The proof is straightforward using Lemmas~\ref{sub.rd} and~\ref{dnf.basic}, and the fact that, for any $E$, $\mathbf{a}\nurn{n}\mathbf{b}$ implies $\dnf{E}(\mathbf{a})\rdn{n}\dnf{E}(\mathbf{b})$ which follows by a straightforward induction on the definition of $\mathbf{a}\nurn{n}\mathbf{b}$.

Now we turn to the main proposition.
Obviously, we have $E\sgv\mathbf{a}\mrdn{n}\mathbf{b}$ for some $n$.
The proof is by induction on $n$. For $n=0$ the property is trivial. For $n>0$ assume  $E\sgv\mathbf{a}\smrd\mathbf{c}\mrdn{n-1}\mathbf{b}$. By inductive hypothesis $\dnf{E}(\mathbf{c})\rd\dnf{E}(\mathbf{b})$. By the above argument we know that $\dnf{E}(\mathbf{a})\rd\dnf{E}(\mathbf{c})$ which implies the proposition.

For the immediate consequence assume that there is an expression $\mathbf{c}$ such that $\sgv a\mrd\mathbf{c}$ and
$\sgv b\mrd\mathbf{c}$. We have just shown that also $\dnf{()}(a)\rd\dnf{()}(\mathbf{c})$ and $\dnf{()}(b)\rd\dnf{()}(\mathbf{c})$. The property follows since obviously $a=\dnf{()}(a)$ and $b=\dnf{()}(b)$. Hence $a\rdr b$.
\end{proof}
\begin{thm}[Confluence of $\srd$]%
\label{rd.confl}
For all $a,b,c$:
$a\rd b$ and $a\rd c$ implies $b\rdr c$.
As a consequence $a\eqv b$ implies $a\rdr b$.
\end{thm}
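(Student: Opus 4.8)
The plan is to derive confluence of $\rd$ from confluence of reduction with explicit substitution, which has already been established. Given $a\rd b$ and $a\rd c$, Lemma~\ref{rd.mrd} lifts both to $()\sgv a\mrd b$ and $()\sgv a\mrd c$. Confluence of $\mrd$ (Lemma~\ref{mrd.confl}) then yields a common reduct, i.e.~$()\sgv b\mrdr c$. Finally the immediate consequence of Lemma~\ref{mrd.rd} (namely $\sgv b\mrdr c$ implies $b\rdr c$) gives $b\rdr c$, which is the first assertion. Since $a,b,c$ are ordinary expressions, their definitional normal forms under the empty environment are themselves, so no real use of definition evaluation is needed here; all the substantive work lives in the earlier lemmas.

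For the stated consequence, recall that $\eqv$ is the symmetric and transitive closure of $\rd$, so $a\eqv b$ unfolds to a finite zig-zag $a=e_0,e_1,\ldots,e_n=b$ in which, for each $i<n$, either $e_i\rd e_{i+1}$ or $e_{i+1}\rd e_i$. I would prove $a\eqv b$ implies $a\rdr b$ by induction on $n$. The case $n=0$ is trivial. For the inductive step, the hypothesis gives $a\rdr e_{n-1}$, say $a\rd d$ and $e_{n-1}\rd d$. If the last link is $b\rd e_{n-1}$, then $b\rd e_{n-1}\rd d$, and together with $a\rd d$ this is $a\rdr b$. If instead $e_{n-1}\rd b$, then $e_{n-1}\rd d$ and $e_{n-1}\rd b$, so by the confluence just proved $d$ and $b$ have a common reduct $d'$; composing $a\rd d\rd d'$ with $b\rd d'$ again gives $a\rdr b$.

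There is no genuine obstacle remaining in this theorem: the hard content --- direct confluence of $\mrd$ (Lemma~\ref{mrd.confl.dir}), its commutation with negation-reduction (Lemma~\ref{nurd.comm}), and the fact that $\mrd$ projects onto $\rd$ via definitional normal forms (Lemma~\ref{mrd.rd}) --- has been discharged beforehand. The only points requiring care are to keep everything over the empty environment, so that the distinction between expressions with substitution and ordinary expressions is immaterial for the data we feed in and read out, and to track the direction of the final reduction step correctly in the two cases of the zig-zag induction.
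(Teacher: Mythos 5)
Your proposal is correct and follows exactly the paper's route: lift via Lemma~\ref{rd.mrd}, apply confluence of $\mrd$ (Lemma~\ref{mrd.confl}), project back via Lemma~\ref{mrd.rd}, and handle the consequence by induction on the zig-zag defining $a\eqv b$. The paper only states that last induction without detail; your case analysis on the direction of the final link fills it in correctly.
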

\begin{proof}
By Lemma~\ref{rd.mrd} we know that $\sgv a\mrd b$ and $\sgv a\mrd c$. Due to confluence of $\smrd$ (Lemma~\ref{mrd.confl}) we know that $\sgv b\mrdr c$.
By Lemma~\ref{mrd.rd} we obtain $b\rdr c$. The consequence follows by induction on the definition of $a\eqv b$.
\end{proof}
\noindent
As an immediate consequence of confluence we note some basic properties of congruence.
\begin{cor}[Basic properties of congruence]%
\label{congr.basic}
For all $x,y,a,b,c,d$:
\begin{enumerate}
\item $a\eqv b$ implies $a\gsub{x}{c}\eqv b\gsub{x}{c}$ and $c\gsub{x}{a}\eqv c\gsub{x}{b}$.
\item $\binbop{x}{a_1}{\ldots,a_n}\eqv\binbop{y}{b_1}{\ldots,b_n}$ iff $x=y$ and $a_i\eqv b_i$ for $i=1,\ldots,n$.
\item $\prsumop{a}{b}\eqv\prsumop{c}{d}$
iff $a\eqv c$ and $b\eqv d$.
\end{enumerate}
\end{cor}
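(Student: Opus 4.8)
The plan is to deduce all three parts from confluence (Theorem~\ref{rd.confl}), using its consequence that $a\eqv b$ implies $a\rdr b$ — i.e.~$a$ and $b$ have a common reduct — together with the substitution lemma (Lemma~\ref{sub.rd}) and the reduction-decomposition lemma (Lemma~\ref{rd.decomp}).

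For part (1) I would argue directly: given $a\eqv b$, fix $e$ with $a\rd e$ and $b\rd e$; Lemma~\ref{sub.rd} then yields $a\gsub{x}{c}\rd e\gsub{x}{c}$ and $b\gsub{x}{c}\rd e\gsub{x}{c}$, hence $a\gsub{x}{c}\eqv b\gsub{x}{c}$, and likewise $c\gsub{x}{a}\rd c\gsub{x}{e}$ and $c\gsub{x}{b}\rd c\gsub{x}{e}$, hence $c\gsub{x}{a}\eqv c\gsub{x}{b}$. For the right-to-left directions of (2) and (3) I would use the structural single-step rules: from common reducts $a_i\rd e_i$ and $b_i\rd e_i$ one obtains $\binbop{x}{a_1}{\ldots,a_n}\rd\binbop{x}{e_1}{\ldots,e_n}$ and $\binbop{x}{b_1}{\ldots,b_n}\rd\binbop{x}{e_1}{\ldots,e_n}$ (and similarly for $\prsumop{\cdot}{\cdot}$), so both sides reduce to a common expression. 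For the left-to-right directions I would take a common reduct $e$ of the two congruent expressions and decompose it: by Lemma~\ref{rd.decomp}(2), $\binbop{x}{a_1}{\ldots,a_n}\rd e$ forces $e=\binbop{x}{e_1}{\ldots,e_n}$ with $a_i\rd e_i$, while $\binbop{y}{b_1}{\ldots,b_n}\rd e$ forces $e=\binbop{y}{e'_1}{\ldots,e'_n}$ with $b_i\rd e'_i$; comparing the two presentations of the single expression $e$ (in the name-independent representation of bound variables) gives $x=y$ and $e_i=e'_i$, so $a_i\rd e_i$ and $b_i\rd e_i$, whence $a_i\eqv b_i$. Part (3) is the same argument with Lemma~\ref{rd.decomp}(1), noting that $\prsumop{a}{b}$ denotes a product or a sum and in particular not an application, so that clause of the lemma applies.

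I do not expect any genuine obstacle here; the only point worth care is that the left-to-right directions of (2) and (3) really do rest on confluence — a bare congruence $\binbop{x}{a_1}{\ldots,a_n}\eqv\binbop{y}{b_1}{\ldots,b_n}$ by itself only provides a zig-zag through intermediate terms that need not share the outer shape, and it is only after passing to a common reduct that Lemma~\ref{rd.decomp} can be used to read off the components. A minor bookkeeping point is the appeal to the name-free representation of bound variables to justify the conclusion $x=y$.
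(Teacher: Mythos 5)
Your proposal is correct and matches the paper's proof, which simply cites Lemma~\ref{sub.rd}, Lemma~\ref{rd.decomp}, and Theorem~\ref{rd.confl}; you have filled in exactly the intended details, including the key observation that the left-to-right directions of (2) and (3) require passing to a common reduct before decomposing.
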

\begin{proof}
Follow from Lemmas~\ref{sub.rd} and~\ref{rd.decomp}, and Theorem~\ref{rd.confl}.
\end{proof}
\subsection{Properties of typing}%
\label{type.prop}
Based on the confluence result we can now show the main properties of typing, in particular subject reduction and uniqueness of types.
We frequently show properties of typing by induction on the definition of typing where the inductive base corresponds to the type axiom and the inductive step corresponds to one of the type inference rules.
We begin by two properties (\ref{type.weak},\ref{type.xtrct}) related to the rule \weak, a property (\ref{start.confl}) related to the rule \mystart, and a property (\ref{eqv.env}) related to the rule \myconv.
\begin{lem}[Context weakening]%
\label{type.weak}
For all $\Gamma_1,\Gamma_2,x,a,b,c$:
$(\Gamma_1,\Gamma_2)\sgv a:b$ and $\Gamma_1\sgv c$ imply $(\Gamma_1,x:c,\Gamma_2)\sgv a:b$.
\end{lem}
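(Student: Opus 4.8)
The plan is to argue by induction on the derivation of $(\Gamma_1,\Gamma_2)\sgv a:b$, treating the split of the context into $\Gamma_1$ and $\Gamma_2$ as a parameter of the statement being proved. By the convention on name-independent representations (Definition~\ref{alpha}) we may assume $x\notin\dom(\Gamma_1)\cup\dom(\Gamma_2)$, so that $(\Gamma_1,x:c,\Gamma_2)$ is a well-formed context; since $\Gamma_1\sgv c$ we fix some $d$ with $\Gamma_1\sgv c:d$. First I would dispose of the case $\Gamma_2=()$ once and for all, independently of the last rule of the derivation: then $(\Gamma_1,\Gamma_2)=\Gamma_1$, the derivation already witnesses $\Gamma_1\sgv a:b$, and one application of \weakm\ to $\Gamma_1\sgv a:b$ and $\Gamma_1\sgv c:d$ gives $(\Gamma_1,x:c)\sgv a:b$, which is exactly the claim.

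So from now on $\Gamma_2$ is nonempty and I proceed by cases on the last rule. For \axm\ there is nothing to prove, its conclusion having the empty context. For \startm\ and \weakm\ the conclusion context ends with the variable that rule introduces; since $\Gamma_2\neq()$ this variable lies in $\Gamma_2$, say $\Gamma_2=(\Gamma_2',y:e)$, and $(\Gamma_1,\Gamma_2')$ is the context of the premise(s). Applying the induction hypothesis to each premise with the split $\Gamma_1,\Gamma_2'$ — the hypothesis $\Gamma_1\sgv c$ is unchanged and $x$ remains fresh for $\Gamma_1,\Gamma_2'$ — and then re-applying the same rule yields a derivation whose conclusion context is $(\Gamma_1,x:c,\Gamma_2',y:e)=(\Gamma_1,x:c,\Gamma_2)$, as required. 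For every remaining rule — \convm, \absum, \absem, \applm, \pdefm, \chinm, \chbam, \bprodm, \bsumm, \prlm, \prrm, \injllm, \injlrm, \casedm, \negatem — the conclusion context coincides with $(\Gamma_1,\Gamma_2)$, so I would apply the induction hypothesis to each typing premise and re-apply the rule unchanged (the side condition $b\eqv c$ of \convm\ is context-independent and carries over). The one subtlety is that \absum, \absem\ and \pdefm\ each have a premise typed under a context extended by a further declaration $y:e$; for such a premise the induction hypothesis is invoked with $\Gamma_2$ replaced by $(\Gamma_2,y:e)$, which is legitimate because $x$ is still fresh for $(\Gamma_1,\Gamma_2,y:e)$.

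This is the standard thinning lemma, so I do not anticipate a genuine obstacle. The only things to be careful about are the freshness management of $x$, which is absorbed by the renaming convention of Definition~\ref{alpha}, and the isolation of the $\Gamma_2=()$ case at the outset, which prevents the induction hypothesis from being invoked on a context that has no room left on the left for the insertion. Everything else is a mechanical traversal of the typing rules of Table~\ref{typ.rules}.
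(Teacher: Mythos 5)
Your proposal is correct and follows essentially the same route as the paper, which proves the lemma by induction on the derivation of $(\Gamma_1,\Gamma_2)\sgv a:b$ with a case analysis on the last rule applied. The extra care you take with the $\Gamma_2=()$ base case and the rules that extend the context (\startm, \weakm, \absum, \absem, \pdefm) is exactly the detail the paper leaves implicit.
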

\begin{proof}
Proof is by induction on the definition of $(\Gamma_1,\Gamma_2)\sgv a:b$.
For all the different type rules the property follows directly from the definition of typing and the inductive hypothesis.
\end{proof}
\begin{lem}[Context extraction]%
\label{type.xtrct}
For all $\Gamma_1,\Gamma_2,a,b$:
$(\Gamma_1,x:a,\Gamma_2)\sgv b$ implies $\Gamma_1\sgv a$
\end{lem}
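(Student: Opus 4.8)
The natural route is to strengthen the statement so that it becomes amenable to induction on typing derivations: I would prove that \emph{whenever $\Gamma\sgv b:c$ is derivable and its context splits as $\Gamma=(\Gamma_1,x:a,\Gamma_2)$, one has $\Gamma_1\sgv a$}. Since, by the definition of validity, $(\Gamma_1,x:a,\Gamma_2)\sgv b$ means $(\Gamma_1,x:a,\Gamma_2)\sgv b:c$ for some $c$, the lemma is an immediate instance. This is essentially the standard ``context well-formedness / free-variable'' argument, adapted to the rule set of \dcalc.

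The induction proceeds by cases on the last rule. In the base case \ax\ the context is empty, so the claim holds vacuously. The genuinely interesting cases are \start\ and \weak, whose conclusions carry a context of the form $(\Delta,y:e)$. Here I would split on whether $\Gamma_2$ is empty. If $\Gamma_2$ is empty, then $y=x$, $e=a$, and $\Delta=\Gamma_1$, and the premise that types the freshly declared type — the single premise $\Delta\sgv a:b$ of \start, respectively the right premise $\Delta\sgv c:d$ of \weak\ — is precisely a typing of $a$ under $\Gamma_1$, hence $\Gamma_1\sgv a$. If $\Gamma_2=(\Gamma_2',y:e)$ is non-empty, then $\Delta=(\Gamma_1,x:a,\Gamma_2')$, and the premise typing the subject (namely $\Delta\sgv x:e$ for \start, or $\Delta\sgv b:c$ for \weak) is a typing judgment over a context of the required shape; the induction hypothesis then yields $\Gamma_1\sgv a$.

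For every remaining rule the context of the conclusion is directly inherited from a premise: either it coincides with the context of some premise — this covers \myconv, \appl, \chin, \chba, \bprod, \bsum, \prl, \prr, \injll, \injlr, \cased, \negate, and \pdef\ through its first two premises — or it is a prefix $\Gamma$ of a premise context $(\Gamma,y:e)$ obtained by pushing one binding, as in \absu, \abse, and the third premise of \pdef. In the first situation, writing the conclusion context as $(\Gamma_1,x:a,\Gamma_2)$, that very context occurs in a strictly shorter derivation, so the induction hypothesis applies verbatim. In the second, the premise context is $(\Gamma_1,x:a,\Gamma_2,y:e)$, which is again of the required form with $\Gamma_2$ replaced by $(\Gamma_2,y:e)$, so the induction hypothesis applies and gives $\Gamma_1\sgv a$. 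In all cases the desired conclusion follows.

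The only step that needs real care is the case analysis for \start\ and \weak: one must keep track of which premise actually witnesses the typing of $a$ depending on whether $\Gamma_2$ is empty. Everything else is routine bookkeeping. One should also keep the implicit renaming convention in mind, so that the variable $x$ from the statement is kept disjoint from the bound/declared variables appearing in the rule schemata; with that convention in force there are no side conditions to worry about, since adding or removing the declaration $x:a$ never clashes with $\dom$ of the surrounding contexts.
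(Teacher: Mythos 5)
Your proposal is correct and follows essentially the same route as the paper: reduce validity to a typing judgment, then induct on the derivation of $(\Gamma_1,x:a,\Gamma_2)\sgv b:c$, reading off $\Gamma_1\sgv a$ from a premise of \start\ or \weak\ when $\Gamma_2$ is empty and applying the induction hypothesis (possibly with $\Gamma_2$ extended by one binding) in every other case. The paper states this only as "follows directly from the definition of typing and the inductive hypothesis"; your case analysis just makes that explicit.
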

\begin{proof}
$(\Gamma_1,x:a,\Gamma_2)\sgv b$ means that there is an expression $c$ where $(\Gamma_1,x:a,\Gamma_2)\sgv b:c$.
The property that $(\Gamma_1,x:a,\Gamma_2)\sgv b:c$ implies $\Gamma_1\sgv a$ is shown by induction on the definition of $(\Gamma_1,x:a,\Gamma_2)\sgv b:c$.
For all the different type rules the property follows directly from the definition of typing and the inductive hypothesis.
\end{proof}
\begin{lem}[Start property]%
\label{start.confl}
For all $\Gamma,x,a,b$:
$(\Gamma,x:a)\sgv x:b$ implies $a\eqv b$.
\end{lem}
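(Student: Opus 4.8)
The plan is to prove a slightly stronger statement by induction on the derivation of the typing judgement, namely: for all contexts $\Delta$, variables $x$, and expressions $b$, if $\Delta\sgv x:b$ then $x\in\dom(\Delta)$ and $\Delta(x)\eqv b$. The lemma follows at once by taking $\Delta=(\Gamma,x:a)$, since then $\Delta(x)=a$, so $a\eqv b$.

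The induction proceeds by case analysis on the last rule of the derivation. For every typing rule whose conclusion has a subject that is not a variable --- that is, \ax\ (whose subject is the constant $\prim$, which by Definition~\ref{expression} is not a variable), \absu, \abse, \appl, \pdef, \chin, \chba, \bprod, \bsum, \prl, \prr, \injll, \injlr, \cased, and \negate\ --- the statement holds vacuously, because a variable $x$ cannot be the subject of such a conclusion. Only three rules remain. If the derivation ends with \start, the conclusion has the form $\Delta',z:c\sgv z:c$, so $x=z$, $\Delta=(\Delta',z:c)$ and $b=c$, whence $x\in\dom(\Delta)$ and $\Delta(x)=c\eqv b$. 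If it ends with \myconv, the conclusion is $\Delta\sgv x:b$ with premises $\Delta\sgv x:b'$ and $b'\eqv b$; the induction hypothesis gives $x\in\dom(\Delta)$ and $\Delta(x)\eqv b'$, and transitivity of $\eqv$ yields $\Delta(x)\eqv b$. If it ends with \weak, the conclusion is $\Delta',y:c\sgv x:b$ with premise $\Delta'\sgv x:b$ and $y\notin\dom(\Delta')$; by the induction hypothesis $x\in\dom(\Delta')$ and $\Delta'(x)\eqv b$, and since $y\notin\dom(\Delta')$ we have $x\neq y$, so $(\Delta',y:c)(x)=\Delta'(x)\eqv b$.

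There is no real obstacle here; the only point worth care is the choice of the strengthened induction hypothesis. Working directly with the statement as given, the \weak\ case would apparently require a premise $\Gamma\sgv x:b$ in which $x$ is not declared in $\Gamma$, and one would have to argue separately that no such judgement is derivable. Carrying the extra conjunct $x\in\dom(\Delta)$ through the induction makes this seemingly problematic case disappear automatically, and the remaining bookkeeping (matching rule conclusions against $(\Gamma,x:a)\sgv x:b$ and unfolding $\Delta(x)$) is entirely routine.
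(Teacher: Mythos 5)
Your proof is correct, and it follows the same basic strategy as the paper's: induction on the typing derivation, observing that only rules whose conclusion can have a bare variable as subject need to be examined. The difference is in how the \weak\ case is handled. The paper's proof simply asserts that ``obviously the rules \start\ and \myconv\ are only relevant'' and treats just those two cases; this implicitly dismisses a derivation ending in \weak, whose premise would be a judgement $\Gamma\sgv x:b$ with $x\notin\dom(\Gamma)$, without actually arguing that no such judgement is derivable. Your strengthened induction hypothesis --- carrying the conjunct $x\in\dom(\Delta)$ together with $\Delta(x)\eqv b$ --- makes that case go through explicitly rather than by appeal to an unproved side fact, and as a bonus it yields the more general statement that a variable is only typable when it is declared, with its type congruent to its declaration. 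So your route is slightly more general and, on this point, more rigorous than the one in the paper; the cost is only the small amount of extra bookkeeping in the \weak\ and \myconv\ cases.
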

\begin{proof}
The proof is by induction on $\Gamma,x:a\sgv x:b$.
Obviously the only relevant rules are \mystart\ and \myconv.
In case of \mystart, we directly obtain $a=b$.
In case of \myconv, we know that $\Gamma,x:a\sgv x:c$ for some $c$ where $c\eqv b$.
By inductive hypothesis it follows that $a\eqv c$ and hence obviously $a\eqv b$.
\end{proof}
\begin{lem}[Context equivalence and typing]%
\label{eqv.env}
Let $\Gamma_a=(\Gamma_1,x:a,\Gamma_2)$ and $\Gamma_b=(\Gamma_1,x:b,\Gamma_2)$ for some $\Gamma_1,\Gamma_2,x,a,b$ where $a\eqv b$ and $\Gamma_1\sgv b$:
For all $c,d$: If $\Gamma_a\sgv c:d$ then $\Gamma_b\sgv c:d$.
\end{lem}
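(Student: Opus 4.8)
The plan is to prove the statement by induction on the derivation of $\Gamma_a\sgv c:d$, propagating the replacement of the declaration $x:a$ by $x:b$ through each rule. The case \ax\ is vacuous, since its context is empty whereas $\Gamma_a$ contains $x:a$. For almost every other rule the context is either handed to the premises unchanged or extended on the right by a fresh declaration; in both situations the declaration $x:a$ keeps its position, so one rewrites the relevant premise context as $(\Gamma_1,x:a,\Gamma_2')$ for the appropriate $\Gamma_2'$ (namely $\Gamma_2$, possibly with an extra declaration appended), applies the inductive hypothesis, and re-applies the same rule. This dispatches \absu, \abse, \appl, \pdef, \chin, \chba, \bprod, \bsum, \prl, \prr, \injll, \injlr, \cased, and \negate; the same pattern handles \myconv, using in addition that the congruence $e\eqv d$ occurring in that rule is untouched by the change of context.

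The only rules that genuinely interact with the modified declaration are \start\ and \weak, and for each one splits into two subcases depending on whether the declaration manipulated by the rule is $x:a$ itself (so $\Gamma_2$ is empty) or lies inside $\Gamma_2$. For \weak: if the added declaration belongs to $\Gamma_2$, both premises are judgments over $(\Gamma_1,x:a,\Gamma_2')$, so two applications of the inductive hypothesis followed by \weak\ give the result; if the last step is exactly the weakening by $x:a$, then the first premise is $\Gamma_1\sgv c:d$, and reapplying \weak\ with the hypothesis $\Gamma_1\sgv b$ (in place of the premise $\Gamma_1\sgv a:\ldots$) yields $(\Gamma_1,x:b)\sgv c:d$. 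For \start\ with the manipulated declaration inside $\Gamma_2$, one application of the inductive hypothesis to the single premise followed by \start\ gives the claim.

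The delicate subcase, which I expect to be the main obstacle, is \start\ applied to the declaration $x:a$ itself: here $\Gamma_2$ is empty, $c=x$, $d=a$, and the premise is $\Gamma_1\sgv a:e$ for some $e$, while we must produce $(\Gamma_1,x:b)\sgv x:a$. First, \start\ applied to the hypothesis $\Gamma_1\sgv b$ gives $(\Gamma_1,x:b)\sgv x:b$. Second, context weakening (Lemma~\ref{type.weak}) applied to the premise $\Gamma_1\sgv a:e$, using $\Gamma_1\sgv b$, gives $(\Gamma_1,x:b)\sgv a:e$. Finally, \myconv\ with $b\eqv a$ combines $(\Gamma_1,x:b)\sgv x:b$ and $(\Gamma_1,x:b)\sgv a:e$ into $(\Gamma_1,x:b)\sgv x:a$, which is the desired $\Gamma_b\sgv c:d$. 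The point to note is that no separate ``congruence preserves validity'' lemma is needed: the old type $a$ is already known to be typable in $\Gamma_1$ (it is the premise of the offending \start\ step), \weak\ carries that typing over to $(\Gamma_1,x:b)$, and the symmetry of $\eqv$ supplies $b\eqv a$ for the \myconv\ step.
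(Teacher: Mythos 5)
Your proof is correct and follows the same route as the paper: induction on the derivation of $\Gamma_a\sgv c:d$, with all rules except \start\ and \weak\ handled by passing the modified context through to the premises. The paper dismisses every case as immediate, whereas you correctly identify and work out the one genuinely non-trivial subcase (\start\ applied to $x:a$ itself, resolved via \start\ on $\Gamma_1\sgv b$, Lemma~\ref{type.weak}, and \myconv\ with $b\eqv a$); this is a faithful elaboration of what the paper leaves implicit.
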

\begin{proof}
Let $\Gamma_a$ and $\Gamma_b$ where $a\eqv b$ and $\Gamma_1\sgv b$ as defined above:
The property is shown by induction on the definition of $(\Gamma_1,x:a,\Gamma_2)\sgv c:d$.
In all cases, the property follows directly from the definition of typing and the inductive hypothesis.

As an example we show the case of rule \mystart:
We have $\Gamma_a=(\Gamma_1,x:a,\Gamma_2)=(\Gamma_3,y:d)$ and $c=y$ for some $y$ and $\Gamma_3$ where $\Gamma_3\sgv d$.
There are two cases:
\begin{itemize}
\item If $x=y$ then $\Gamma_a\sgv x:a$, i.e.~$d=a$, $\Gamma_2=()$, $\Gamma_1=\Gamma_3$, $\Gamma_a=(\Gamma_1,x:a)$, $\Gamma_b=(\Gamma_1,x:b)$,  and $\Gamma_1\sgv a$.
Since $\Gamma_1\sgv b$, by rule \weak~it follows that $\Gamma_b\sgv a$.
By rule \mystart~we obtain $\Gamma_b\sgv x:b$.
Hence by rule \myconv~it follows that $\Gamma_b\sgv x:a$.
This can be graphically illustrated as follows:
\[
\inferrule*[left={\normalfont\myconv}]{
		\inferrule*[left={\normalfont\mystart}]{\Gamma_1\sgv b}{\Gamma_1,x:b\gv x:b}
		\quad b\eqv a\quad
	\inferrule*[left={\normalfont\weak}]{\Gamma_1\sgv a\quad\Gamma_1\sgv b}{\Gamma_1,x:b\sgv a}}
{\Gamma_b=(\Gamma_1,x:b)\gv x:a}
\]
Note that the proof of this case one does not need the assumption $\Gamma_a =(\Gamma_1,x:a)\sgv x:a$.
\item
If $x\neq y$ then we have $\Gamma_2=(\Gamma_4,y:d)$ for some $\Gamma_4$ where $\Gamma_3=(\Gamma_1,x:a,\Gamma_4)$.
Since $\Gamma_1,x:a,\Gamma_4\sgv d$, by inductive hypothesis we know that also $\Gamma_1,x:b,\Gamma_4\sgv d$ and therefore, by rule \mystart, we obtain $\Gamma_b=(\Gamma_1,x:b,\Gamma_2)\sgv y:d$.
\qedhere
\end{itemize}
\end{proof}
\noindent
Next we list several decomposition properties of typing and of validity.
\begin{lem}[Typing decomposition]%
\label{type.decomp}
For all $\Gamma,x,a_1,a_2,a_3,b$:
\begin{enumerate}
\item
$\Gamma\sgv\binbop{x}{a_1}{a_2}:b$ implies $b\eqv[x:a_1]c$ for some $c$ where $(\Gamma,x:a_1)\sgv a_2:c$.
\item
$\Gamma\sgv\injl{a_1}{a_2}:b$ implies $b\eqv[c+a_2]$ for some $c$ where $\Gamma\sgv a_1:c$ and $\Gamma\sgv a_2$.

\noindent
$\Gamma\sgv\injr{a_1}{a_2}:b$ implies $b\eqv[a_1+c]$ for some $c$ where $\Gamma\sgv a_2:c$ and $\Gamma\sgv a_1$.
\item
$\Gamma\sgv\prsumop{a_1}{a_2}:b$ implies $b\eqv[c_1,c_2]$ for some $c_1$, $c_2$ where $\Gamma\sgv a_1:c_1$ and $\Gamma\sgv a_2:c_2$.
\item
$\Gamma\sgv\myneg a_1:b$ implies $b\eqv c$ for some $c$ where $\Gamma\sgv a_1:c$.
\item
$\;\sgv\prim:b$ implies $b\eqv\prim$.
\item
$\Gamma\sgv(a_1\,a_2):b$ implies $b\eqv c_2\gsub{x}{a_2}$ where $\Gamma\sgv a_1:[x\!:\!c_1]c_2$ and $\Gamma\sgv a_2\!:\!c_1$ for some $c_1$,$c_2$.
\item
$\Gamma\sgv\pleft{a_1}:b$ implies $b\eqv c_1$ where $\Gamma\sgv a_1:[x!c_1]c_2$ or $\Gamma\sgv a_1:[c_1,c_2]$ for some $c_1$, $c_2$.
\item
$\Gamma\sgv\pright{a_1}:b$ implies, for some $c_1$, $c_2$, that either $b\eqv c_2$ where $\Gamma\sgv a_1:[c_1,c_2]$ or $b\eqv c_2\gsub{x}{\pleft{a_1}}$ where $\Gamma\sgv a_1:[x!c_1]c_2$.
\item
$\Gamma\sgv\prdef{x}{a_1}{a_2}{a_3}:b$ implies $b\eqv[x!c]a_3$ for some $c$ where $\Gamma\sgv a_1:c$, $\Gamma\sgv a_2:a_3\gsub{x}{a_1}$, and $(\Gamma,x:c)\sgv a_3$.
\item
$\Gamma\sgv\case{a_1}{a_2}:b$ implies $b\eqv[x:[c_1+c_2]]c$ for some $c_1$, $c_2$, and $c$ where $\Gamma\sgv a_1:[x:c_1]c$, $\Gamma\sgv a_2:[x:c_2]c$ and $\Gamma\sgv c$.
\end{enumerate}
\end{lem}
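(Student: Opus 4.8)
The plan is to prove all eleven parts simultaneously by induction on the derivation of the typing judgment $\Gamma\sgv a:b$ (using induction on the definition of typing). The key observation is that in each part the subject $a$ is pinned to a specific outermost operator, so the last rule of the derivation is severely constrained: it can only be the introduction rule(s) for that operator, or one of the two structural rules \weak\ and \myconv, which leave the subject untouched. The rules \start\ and \ax\ are excluded because their subjects are a variable resp.\ $\prim$, whereas in parts (1)--(4) and (6)--(10) the subject is a genuine compound expression; the only exception is part~(5), where the subject is $\prim$, so \ax\ is the relevant axiom and, since the context is empty, even \weak\ cannot be the last rule (its conclusion always has a non-empty context), leaving only \ax\ and \myconv.

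For the introduction-rule cases one simply reads off the claim from the rule. For instance, in part~(1) the last rule is \absu\ or \abse; in either case $a=\binbop{x}{a_1}{a_2}$ forces the premise $(\Gamma,x:a_1)\sgv a_2:c$ and the conclusion $b=[x:a_1]c$, which is exactly what is asserted (here $b$ is literally of the stated shape, so the congruence $b\eqv[x:a_1]c$ holds trivially). The remaining introduction cases are analogous: \injll/\injlr\ for~(2), \bprod/\bsum\ for~(3), \negate\ for~(4), \ax\ for~(5), \appl\ for~(6), \chin\ or \prl\ for~(7), \chba\ or \prr\ for~(8), \pdef\ for~(9), and \cased\ for~(10); in each case the premises of the rule are precisely the subexpression judgments claimed by the lemma (after the obvious rewrites such as $c_2\gsub{x}{a_2}$ in~(6) and up to $\alpha$-renaming of bound variables), and $b$ equals the stated shape verbatim. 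Note that for the projections~(7) and~(8) the disjunction in the conclusion corresponds exactly to the two possible introduction rules (projection from an existential abstraction versus from a product).

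For the structural cases the inductive hypothesis does the work. If the last rule is \myconv, the premise is $\Gamma\sgv a:b'$ with $b'\eqv b$ and the same subject $a$; applying the inductive hypothesis yields that $b'$ is congruent to the asserted shape together with the subexpression judgments, and by transitivity of $\eqv$ the same holds for $b$, while the subexpression judgments, being stated over the unchanged context $\Gamma$, carry over directly. If the last rule is \weak, then $\Gamma=(\Gamma',y:d)$ and the premise is $\Gamma'\sgv a:b$, with the second premise of \weak\ providing $\Gamma'\sgv d$; the inductive hypothesis applied to $\Gamma'\sgv a:b$ gives all the required assertions over $\Gamma'$, and each subexpression typing or validity judgment is then lifted back to the full context $\Gamma$ by context weakening (Lemma~\ref{type.weak}), inserting the declaration $y:d$ at the appropriate position (the contexts occurring in the conclusions have the form $\Gamma'$ or $(\Gamma',x:c)$, so one applies Lemma~\ref{type.weak} with $\Gamma_1=\Gamma'$ and $\Gamma_2$ empty or a single declaration).

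I do not expect a genuine obstacle here: this is a routine generation (inversion) lemma. The only points demanding a little care are the \weak\ case, where re-establishing the subexpression judgments in the enlarged context genuinely needs Lemma~\ref{type.weak} rather than being immediate, and the deliberate use of \emph{congruence} $b\eqv(\ldots)$ rather than equality in the statements, which is exactly what is needed to absorb the accumulated \myconv\ steps.
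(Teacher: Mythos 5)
Your proposal is correct and follows essentially the same route as the paper, which likewise proves the lemma by induction on the definition of typing and invokes context weakening (Lemma~\ref{type.weak}) to restore the subexpression judgments in the \weak\ case. The only minor divergence is that the paper's sketch also cites Lemmas~\ref{rd.decomp} and~\ref{rd.confl}, whereas your handling of the \myconv\ case purely by transitivity of $\eqv$ (made possible by stating the conclusions up to congruence) does not need them.
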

\begin{proof}
In all cases the proof is by induction on the definition of typing or by a straightforward application of previous properties.
In particular,~\ref{rd.decomp},~\ref{rd.confl}, and~\ref{type.weak} are needed.
\end{proof}
\begin{lem}[Validity decomposition]%
\label{val.decomp}
For all $\Gamma,x,a_1,\ldots,a_n$:
\begin{enumerate}
\item
$\Gamma\sgv\binop{a_1}{,\ldots a_n}$ implies $\Gamma\sgv a_1$, $\ldots$, $\Gamma\sgv a_n$.
\item
$\Gamma\sgv\binbop{x}{a_1}{a_2}$ implies $\Gamma\sgv a_1$ and $(\Gamma,x:a_1)\sgv a_2$
\item
$\Gamma\sgv\prdef{x}{a_1}{a_2}{a_3}$ implies $\Gamma\sgv a_2$ and $\Gamma\sgv a_1:b$ for some $b$ where $(\Gamma,x:b)\sgv a_3$
\end{enumerate}
\end{lem}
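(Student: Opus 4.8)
The plan is to reduce all three statements to the typing decomposition lemma (Lemma~\ref{type.decomp}), with a single extra appeal to context extraction (Lemma~\ref{type.xtrct}) for the domains of abstractions. Throughout, one first unfolds the definition of validity: a hypothesis $\Gamma\sgv e$ provides some $b$ with $\Gamma\sgv e:b$, and to establish a conclusion $\Gamma'\sgv e'$ it suffices to exhibit some type of $e'$ under $\Gamma'$. As usual the instances of Lemma~\ref{type.decomp} are taken with bound variables renamed to avoid clashes.

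For part (1), I would do a case analysis on which concrete operation $\binop{a_1}{,\ldots,a_n}$ denotes. In the unary cases ($n=1$) clause 7 of Lemma~\ref{type.decomp} handles $\pleft{a_1}$, clause 8 handles $\pright{a_1}$, and clause 4 handles $\myneg a_1$; in each the decomposition states $\Gamma\sgv a_1:c$ for some $c$ (for $\pright{a_1}$ in either of two forms), hence $\Gamma\sgv a_1$. In the binary cases ($n=2$) clause 6 handles application $(a_1\,a_2)$, clause 3 handles products $[a_1,a_2]$ and sums $[a_1+a_2]$ via the $\prsumop{\cdot}{\cdot}$ notation, clause 2 handles $\injl{a_1}{a_2}$ and $\injr{a_1}{a_2}$, and clause 10 handles $\case{a_1}{a_2}$; in every case the cited clause yields a type for each $a_i$ under $\Gamma$ (for the injections, one argument even comes out valid directly), which is exactly $\Gamma\sgv a_1,\ldots,a_n$.

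For part (2), clause 1 of Lemma~\ref{type.decomp} applied to $\Gamma\sgv\binbop{x}{a_1}{a_2}:b$ gives some $c$ with $(\Gamma,x:a_1)\sgv a_2:c$, hence $(\Gamma,x:a_1)\sgv a_2$; applying context extraction (Lemma~\ref{type.xtrct} with empty right context) to this then yields $\Gamma\sgv a_1$. For part (3), clause 9 of Lemma~\ref{type.decomp} applied to $\Gamma\sgv\prdef{x}{a_1}{a_2}{a_3}:b$ gives some $c$ with $\Gamma\sgv a_1:c$, $\Gamma\sgv a_2:a_3\gsub{x}{a_1}$, and $(\Gamma,x:c)\sgv a_3$; taking the witness $b:=c$ gives both $\Gamma\sgv a_1:b$ and $(\Gamma,x:b)\sgv a_3$, while $\Gamma\sgv a_2:a_3\gsub{x}{a_1}$ gives $\Gamma\sgv a_2$. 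There is essentially no obstacle here beyond bookkeeping; the only point requiring a little care is that the case analysis in part (1) be exhaustive over all operators covered by the $\binop{\cdot}{\cdot}$ notation. One could equally well have harvested (1)--(3) as byproducts of the proof of Lemma~\ref{type.decomp}, but isolating them as corollaries keeps the later development modular.
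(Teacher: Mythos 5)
Your proposal is correct and matches the paper's (very terse) proof, which states that the result follows either by induction on the typing derivation or by straightforward application of previous properties, with Lemma~\ref{type.xtrct} needed---exactly the route you take via Lemma~\ref{type.decomp} plus context extraction for the abstraction domain. All cited clauses are the right ones and the case analysis in part (1) is exhaustive over the operators covered by the $\oplus$ notations.
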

\begin{proof}
In all cases the proof is by induction on the definition of typing or by a straightforward application of previous properties.
In particular, Lemma~\ref{type.xtrct} is needed.
\end{proof}
\noindent
A central prerequisite to the proof of closure of reduction and typing against validity is the following substitution property of typing.
In order to state the property, we need an auxiliary definition.
\begin{defi}[Context substitution]
The substitution function (Definition~\ref{free}) is extended to contexts $\Gamma\gsub{x}{a}$, where $a$ is an expression, as follows:
\begin{eqnarray*}
()\gsub{x}{a}&=&()\\
(y:b,\Gamma)\gsub{x}{a}&=&
\begin{cases}
(y:b\gsub{x}{a},\Gamma)&\text{if}\;x=y\\
(y:b\gsub{x}{a},\Gamma\gsub{x}{a})&\text{otherwise}
\end{cases}
\end{eqnarray*}
\end{defi}
\begin{lem}[Substitution and typing]%
\label{type.sub}
Assume that $\Gamma_a=(\Gamma_1,x:a,\Gamma_2)$ and $\Gamma_b=(\Gamma_1,\Gamma_2\gsub{x}{b})$ for some $\Gamma_1,\Gamma_2,x,a,b$ where $\Gamma_1\sgv b:a$.
For all $c,d$: If $\Gamma_a\sgv c:d$ then $\Gamma_b\sgv c\gsub{x}{b}:d\gsub{x}{b}$.
\end{lem}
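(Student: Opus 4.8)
The plan is to prove the statement by induction on the derivation of $\Gamma_a \sgv c:d$, with one case per typing rule of Table~\ref{typ.rules}. Two conventions streamline the whole argument. First, since $\Gamma_a=(\Gamma_1,x:a,\Gamma_2)$ is non-empty, the axiom \ax\ cannot be the last rule, so no base case arises. Second, because $x\notin\dom(\Gamma_2)$, the context substitution $\Gamma_2\gsub{x}{b}$ is just $\gsub{x}{b}$ applied declaration-wise, so whenever a premise lives in a context $(\Gamma_1,x:a,\Gamma_2,y:e)$ the inductive hypothesis (applied with $\Gamma_2$ replaced by $\Gamma_2,y:e$) yields a judgment in $(\Gamma_1,\Gamma_2\gsub{x}{b},y:e\gsub{x}{b})$, i.e.\ in $\Gamma_b$ extended by $y:e\gsub{x}{b}$. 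As usual in this paper I work up to $\alpha$-equivalence, silently renaming the bound variable of each binding rule (\absu, \abse, \pdef, \cased, \appl, \chin, \chba) so that it is distinct from $x$ and does not occur free in $b$; then $\gsub{x}{b}$ commutes with every term constructor, including $[y:\_]\_$ and $[y!\_]\_$.

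For the context-manipulating rules \start\ and \weak\ I split on the position of $x$. If $x$ is not the last declaration of $\Gamma_a$, the manipulated variable lies in $\Gamma_2$, the inductive hypothesis applies to the premises, and re-applying the same rule over $\Gamma_b$ gives the claim. If $x$ is the last declaration then $\Gamma_2=()$ and $\Gamma_b=\Gamma_1$: for \start\ we must show $\Gamma_1\sgv b:a\gsub{x}{b}$, which is exactly the hypothesis $\Gamma_1\sgv b:a$ since $a\gsub{x}{b}=a$; for \weak\ the first premise is already $\Gamma_1\sgv c:d$ with $c\gsub{x}{b}=c$ and $d\gsub{x}{b}=d$. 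The needed identities follow from Lemma~\ref{sub.basic}(1) because the expressions in question are typed under $\Gamma_1$, in which $x$ is not declared, hence $x$ does not occur free in them; this uses the standard free-variable property of typing (the free variables of a judgment are declared in its context), which is routine by induction on typing.

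For \myconv\ I apply the inductive hypothesis to both premises, transport the congruence in the side condition through $\gsub{x}{b}$ using Corollary~\ref{congr.basic}(1), and re-apply \myconv. The purely term-forming rules \bprod, \bsum, \prl, \prr, \injll, \injlr, \negate, \absu, \abse, \chin\ (and \cased, after renaming its bound variables) reassemble directly from the inductive hypotheses on their premises, since $\gsub{x}{b}$ distributes over the corresponding constructor. The only rules that require rearranging a substitution on the type side are \appl, \chba, and the middle premise of \pdef: there the conclusion type has the form $g\gsub{y}{e}\gsub{x}{b}$ whereas re-applying the rule produces $g\gsub{x}{b}\gsub{y}{e\gsub{x}{b}}$ (with $e$ equal to $c_2$, to $\pleft{c_1}$, or to $a_1$, respectively), and these two expressions coincide by Lemma~\ref{sub.basic}(2), whose side conditions $y\neq x$ and $y\notin\free(b)$ hold precisely by the $\alpha$-renaming convention above.

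I expect the main obstacle to be organizational rather than conceptual: there are around fifteen rules, and for several of them one must be careful about the order of substitutions and about which context the inductive hypothesis is applied to. The one ingredient that goes beyond the results in the excerpt is the free-variable property of typing invoked in the ``$x$ is the last declaration'' subcases of \start\ and \weak; granting that, the remaining work is a mechanical traversal built from Lemma~\ref{sub.basic} and Corollary~\ref{congr.basic}.
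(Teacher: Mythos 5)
Your proposal is correct and follows essentially the same route as the paper: induction on the derivation of $\Gamma_a\sgv c:d$, using Corollary~\ref{congr.basic}(1) for the \myconv\ case and Lemma~\ref{sub.basic}(2) to reorder the two substitutions in the \appl-style cases. The paper only works out \myconv\ and \appl\ explicitly; your additional care with the \start/\weak\ subcases where $x$ is the last declaration (and the free-variable property of typing this relies on) fills in detail the paper leaves implicit, but does not change the argument.
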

\begin{proof}
Let $\Gamma_a$ and $\Gamma_b$ as defined above and assume $\Gamma_1\sgv b:a$.
The proof that $\Gamma_a\sgv c:d$ implies $\Gamma_b\sgv c\gsub{x}{b}:d\gsub{x}{b}$ is by induction on the definition of $\Gamma_a\sgv c:d$.
We show two interesting cases.
\begin{itemize}
\item Rule \myconv:
We have $\Gamma_a\sgv c:d'$ for some $d'$ where $d'\eqv d$ and $\Gamma_a\sgv d':e$ for some $e$.
By inductive hypothesis  $\Gamma_b\sgv c\gsub{x}{b}:d'\gsub{x}{b}$ and $\Gamma_b\sgv d'\gsub{x}{b}:e\gsub{x}{b}$.
By Lemma~\ref{congr.basic}(1) we can infer that $d'\gsub{x}{b}\eqv d\gsub{x}{b}$. 
Therefore we can apply the rule \myconv~to obtain $\Gamma_b\sgv c\gsub{x}{b}:d\gsub{x}{b}$.
This can be graphically illustrated as follows (IH~denotes the use of an inductive hypothesis):
\[
\inferrule*[left={\normalfont\myconv}]{
		\inferrule*[left=\ref{congr.basic}(1)]{d'\eqv d}{d'\gsub{x}{b}\eqv d\gsub{x}{b}}\;
  \inferrule*[left=IH]{\Gamma_a\sgv c:d'}{\Gamma_b\sgv c\gsub{x}{b}:d'\gsub{x}{b}}\;
	\inferrule*[left=IH]{\Gamma_a\sgv d':e}{\Gamma_b\sgv d'\gsub{x}{b}:e\gsub{x}{b}}}
{\Gamma_b\gv c\gsub{x}{b}:d\gsub{x}{b}}
\]
\item Rule \appl:
We have $c=(c_1\,c_2)$, $d=d_2\gsub{y}{c_2}$ for some $y$, $c_1$, $c_2$, and $d_2$ where $\Gamma_a\sgv (c_1\,c_2):d$, $\Gamma_a\sgv c_1:[y:d_1]d_2$, and $\Gamma_a\sgv c_2:d_1$.
Obviously, we can assume that $x\neq y$.
We need to show that $\Gamma_b\sgv(c_1\,c_2)\gsub{x}{b}:d_2\gsub{y}{c_2}\gsub{x}{b}$.
We have:
\begin{eqnarray*}
\Gamma_b\gv &&c_1\gsub{x}{b}\\
&:&\quad\text{(inductive hypothesis on $\Gamma_a\sgv c_1:[y:d_1]d_2$)}\\
&&([y:d_1]d_2)\gsub{x}{b}\\
&=&\quad\text{(definition of substitution)}\\
&&[y:d_1\gsub{x}{b}](d_2\gsub{x}{b})
\end{eqnarray*}
Furthermore, since $\Gamma_a\sgv c_2:d_1$, by inductive hypothesis we know that $\Gamma_b\sgv c_2\gsub{x}{b}:d_1\gsub{x}{b}$.
Hence by type rule \appl~we obtain
\[\Gamma_b\sgv(c_1\gsub{x}{b}\,c_2\gsub{x}{b}):d_2\gsub{x}{b}\gsub{y}{c_2\gsub{x}{b}}\qquad (\ast)
\]
We can now argue as follows:
\begin{eqnarray*}
\Gamma_b\gv &&(c_1\,c_2)\gsub{x}{b}\\
&=&\quad\text{(definition of substitution)}\\
&&(c_1\gsub{x}{b}\,c_2\gsub{x}{b})\\
&:&\quad\text{(see $\ast$)}\\
&&d_2\gsub{x}{b}\gsub{y}{c_2\gsub{x}{b}}\\
&=&\quad\text{(Lemma~\ref{sub.basic}(2))}\\ 
&&d_2\gsub{y}{c_2}\gsub{x}{b}
\end{eqnarray*}
\end{itemize}
The other cases can be shown in a similar style.
\end{proof}

\noindent
We begin the closure properties with the relatively straightforward property that validity is closed against typing.
\begin{lem}[Valid expressions have valid types]%
\label{valid.type}
For all $\Gamma,a,b$:
If $\Gamma\sgv a:b$  then $\Gamma\sgv b$.
\end{lem}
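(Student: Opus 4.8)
The plan is to argue by induction on the derivation of $\Gamma\sgv a:b$, showing for each typing rule that the type $b$ appearing in its conclusion is again valid. For the axiom \ax\ this is immediate, since $\prim$ is typed by $\prim$; for \myconv\ the new type $c$ is already equipped with a type by the third premise $\Gamma\sgv c:d$; and for \negate\ the type is left unchanged, so the inductive hypothesis applies directly. For \start\ and \weak\ I would invoke context weakening (Lemma~\ref{type.weak}): in \start\ the type $a$ is valid in $\Gamma$ by the premise and remains so after extending the context by $x:a$, while in \weak\ the inductive hypothesis yields $\Gamma\sgv b$, which survives weakening by $x:c$ once one notes that $\Gamma\sgv c$ follows from the second premise.

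For the formation and introduction rules the common pattern is: apply the inductive hypothesis to the premise(s), then re-apply a formation rule. Concretely, for \absu, \abse, and \pdef\ the conclusion's type is a universal abstraction $[x:a]c$ (respectively $[x!b]d$), and applying the inductive hypothesis to the body premise $(\Gamma,x:a)\sgv b:c$ gives some $e$ with $(\Gamma,x:a)\sgv c:e$, so one more application of \absu\ produces a type for it. The rules \bprod, \bsum, \injll, \injlr\ are handled identically, applying the inductive hypothesis to the two premises and re-applying \bprod\ (or \bsum) to the resulting types. For \cased\ I would first use the inductive hypothesis on $\Gamma\sgv a:[x:c_1]d$ and $\Gamma\sgv b:[y:c_2]d$ together with Lemma~\ref{val.decomp}(2) to obtain types for $c_1$ and $c_2$, hence a type for $[c_1+c_2]$ via \bsum, then weaken $\Gamma\sgv d:e$ along a fresh variable $z$ (Lemma~\ref{type.weak}) and apply \absu\ to get a type for $[z:[c_1+c_2]]d$.

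The substantive cases are the elimination rules. For \prl, \prr, and \chin\ the inductive hypothesis gives, respectively, $\Gamma\sgv[b,c]$ or $\Gamma\sgv[x!b]c$, and the validity-decomposition lemma (Lemma~\ref{val.decomp}(1),(2)) immediately extracts $\Gamma\sgv b$ and $\Gamma\sgv c$. The two genuinely nontrivial cases are \appl\ and \chba, where the conclusion's type is a substitution instance. For \appl: from $\Gamma\sgv a:[x:c_1]c_2$ the inductive hypothesis gives $\Gamma\sgv[x:c_1]c_2$, so by typing decomposition (Lemma~\ref{type.decomp}(1)) there is an $e$ with $(\Gamma,x:c_1)\sgv c_2:e$; applying the substitution lemma for typing (Lemma~\ref{type.sub}) with the premise $\Gamma\sgv b:c_1$ then yields $\Gamma\sgv c_2\gsub{x}{b}:e\gsub{x}{b}$, and $c_2\gsub{x}{b}$ is exactly the conclusion's type. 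For \chba: Lemma~\ref{val.decomp}(2) applied to the inductive hypothesis $\Gamma\sgv[x!b]c$ gives $(\Gamma,x:b)\sgv c$, rule \chin\ applied to the premise gives $\Gamma\sgv\pleft{a}:b$, and Lemma~\ref{type.sub} then delivers $\Gamma\sgv c\gsub{x}{\pleft{a}}$. I expect the only real care to be needed in these last two cases: checking that the hypotheses of Lemma~\ref{type.sub} are met (in particular that the substituted variable is typed precisely by $c_1$, resp.\ $b$) and that the implicit $\alpha$-renaming keeps all bound variables distinct; the remaining bookkeeping is routine.
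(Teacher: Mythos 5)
Your proof is correct and follows essentially the same route as the paper: induction on the typing derivation, with weakening for \start/\weak, validity decomposition for the projection rules, and the substitution lemma (Lemma~\ref{type.sub}) for the two substitution-instance cases \appl\ and \chba\ (the paper's worked example is precisely your \appl\ case, using Lemma~\ref{val.decomp}(2) where you use Lemma~\ref{type.decomp}(1) to the same effect).
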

\begin{proof}
By induction on the definition of $\Gamma\sgv a:b$ we show that $\Gamma\sgv a:b$ implies $\Gamma\sgv b$.
As an example, we illustrate the rule for applications
\begin{itemize}
\item Rule \appl:
We have $a=(a_1\,a_2)$ and $b=b_2\gsub{x}{a_2}$ for some $x$, $a_1$, $a_2$, and $b_2$ where $\Gamma\sgv a_1:[x:b_1]b_2$ for some $b_1$ and $\Gamma\sgv a_2:b_1$.
From the inductive hypothesis we know that $\Gamma\sgv[x:b_1]b_2$.
By Lemma~\ref{val.decomp}(2) this implies $\Gamma,x:b_1\sgv b_2$. 
From $\Gamma,x:b_1\sgv b_2$ and from substitution (Lemma~\ref{type.sub}, note that $\Gamma\sgv a_2:b_1$) we can infer that $\Gamma\sgv b_2\gsub{x}{a_2}=b$.
\end{itemize}
The other cases can be shown in a similar style.
\end{proof}
\noindent
We are now ready to show the preservation of types under a reduction step.
\begin{lem}[Preservation of types under reduction steps]%
\label{strd.type}
For all $\Gamma,a,b,c$:
$\Gamma\sgv a:c$ and $a\srd b$ imply $\Gamma\sgv b:c$.
\end{lem}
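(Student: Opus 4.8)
The plan is to argue by induction on the derivation of $a\srd b$, splitting on the reduction axiom or structural rule applied at the root (Table~\ref{red.rules}). The uniform pattern in each case will be: first peel the type $c$ apart with the matching clause of Lemma~\ref{type.decomp}, together with Corollary~\ref{congr.basic}, renaming bound variables so that the abstraction types produced line up; this hands me typings of the immediate subexpressions of $a$, accurate up to $\eqv$. Then I reassemble a typing of $b$ from these and finish with the rule \myconv; the side condition of \myconv\ --- that the target type be valid --- will always come for free from Lemma~\ref{valid.type} applied to the hypothesis $\Gamma\sgv a:c$.

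For the $\beta$- and $\pi$-axioms the reassembly step will invoke the substitution lemma. Take $\beta_1$, where $a=([x\!:\!p]q\;r)\srd q\gsub{x}{r}=b$: Lemma~\ref{type.decomp}(6) gives $\Gamma\sgv[x\!:\!p]q:[y\!:\!c_1]c_2$, $\Gamma\sgv r:c_1$, and $c\eqv c_2\gsub{y}{r}$, and Lemma~\ref{type.decomp}(1) with Corollary~\ref{congr.basic}(2) gives (after identifying $y$ with $x$ by renaming) $\Gamma,x\!:\!p\sgv q:s$ with $c_1\eqv p$ and $c_2\eqv s$; I would use Lemma~\ref{type.xtrct} to see that $p$ is valid in $\Gamma$, convert $\Gamma\sgv r:c_1$ into $\Gamma\sgv r:p$, and apply Lemma~\ref{type.sub} to obtain $\Gamma\sgv q\gsub{x}{r}:s\gsub{x}{r}$; since $s\gsub{x}{r}\eqv c_2\gsub{x}{r}\eqv c$ by Corollary~\ref{congr.basic}(1), one \myconv\ step closes the case. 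The cases $\beta_2,\beta_3,\beta_4$ and $\pi_1,\ldots,\pi_6$ run the same way, using rule \abse\ (resp.\ \cased, and the injection and product rules) in the decomposition; for $\pi_1$--$\pi_6$ I must also discard the spurious alternative offered by Lemma~\ref{type.decomp}(7)--(8), but that is immediate since by confluence (Theorem~\ref{rd.confl}) a product is never $\eqv$ to an existential abstraction. The $\nu$-axioms will be easier, because rule \negate\ does not change the type: for $\nu_1$ and $\nu_6$--$\nu_{10}$, one or two applications of Lemma~\ref{type.decomp}(4) already give $\Gamma\sgv b:c'$ with $c'\eqv c$ (the expression under the negation is exactly the reduct), so a single \myconv\ suffices; for $\nu_2$--$\nu_5$ I additionally rebuild the reduct's type by applying the appropriate one of \bsum, \bprod, \absu, \abse\ to the component typings read off from Lemma~\ref{type.decomp}(1),(3).

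For the two structural (congruence-closure) rules the plan is uniform: decompose $c$ with the matching clause of Lemma~\ref{type.decomp}, apply the induction hypothesis to the reduced subterm (in whatever context it is typed), re-apply the same typing rule to rebuild a typing of $b$, and absorb the change of type with \myconv. When the reduction happens inside an application, a right-projection, or a protected definition, the reduced subterm also occurs in the result type, and there Corollary~\ref{congr.basic}(1) is what guarantees that the old and new result types are still $\eqv$. I expect the genuinely delicate point to be a reduction inside the \emph{domain} $a_1$ of an abstraction $\binbop{x}{a_1}{a_2}$ (or of a protected definition): there the body $a_2$ (and, for a protected definition, the tag) is typed in the context $\Gamma,x\!:\!a_1$, which must first be replaced by $\Gamma,x\!:\!b_1$. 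I would do this with Lemma~\ref{eqv.env}, obtaining its hypothesis $\Gamma\sgv b_1$ by applying the induction hypothesis to the strictly smaller sub-derivation $a_1\srd b_1$ --- using Lemma~\ref{type.xtrct} to type $a_1$ in $\Gamma$ in the first place --- and, since $a_1\eqv b_1$, Corollary~\ref{congr.basic}(2) then gives that the rebuilt abstraction type is $\eqv c$. This interplay --- leaning on the induction hypothesis to keep contexts valid, and on Lemma~\ref{valid.type} to license every \myconv\ --- is the only real subtlety; the remaining cases are routine bookkeeping of the same shape.
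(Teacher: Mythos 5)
Your proposal is correct and follows essentially the same route as the paper: induction on the derivation of $a\srd b$, with Lemma~\ref{valid.type} licensing every final \myconv, Lemmas~\ref{type.decomp}, \ref{congr.basic} and \ref{type.sub} handling the $\beta$/$\pi$ axioms, and Lemmas~\ref{type.xtrct} and \ref{eqv.env} (after applying the induction hypothesis to the reduced domain) handling reduction under a binder. Your explicit \myconv\ step to align the argument's type $c_1$ with the declared domain $p$ before invoking Lemma~\ref{type.sub}, and your confluence argument discarding the product/existential ambiguity in Lemma~\ref{type.decomp}(7)--(8), are details the paper glosses over but are handled correctly.
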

\begin{proof}
Proof by induction on the definition of $a\srd b$.
Note that by Lemma~\ref{valid.type} we have $\Gamma\sgv c$.
This means that in order to show $\Gamma\sgv b:c$, it is sufficient to show $\Gamma\sgv b:b_1$ for some $b_1$ where $b_1\eqv c$.
$\Gamma\sgv b:c$ then follows by applying rule \myconv.
We show the cases of axiom $\beta_1$ and of structural rule $\binbop{x}{\_}{\_}_1$ in detail:
\begin{itemize}
\item Axiom $\beta_1$:
We have $a=([x:a_1]a_2\;a_3)$ and $b=a_2\gsub{x}{a_3}$ and $\Gamma\sgv([x:a_1]a_2\;a_3):c$ where $\Gamma\sgv a_3:a_1$.
The following type relations can be derived:
\begin{eqnarray*}
&&\Gamma\sgv([x:a_1]a_2\;a_3):c\\
&\Rightarrow&\quad\text{(Lemma~\ref{type.decomp}(6))}\\ 
&&\Gamma\sgv[x:a_1]a_2:[x:d]e\quad\text{where $c\eqv e\gsub{x}{a_3}$ ($\ast$) and $\Gamma\sgv a_3:d$}\\
&\Rightarrow&\quad\text{(Lemma~\ref{type.decomp}(1))}\\ 
&&(\Gamma,x:a_1)\sgv a_2:b_2\quad\text{where $[x:d]e\eqv[x:a_1]b_2$}\\
&\Rightarrow&\quad\text{(Lemma~\ref{type.sub}, since $\Gamma\sgv a_3:a_1$)}\\
&&\Gamma\sgv a_2\gsub{x}{a_3}:b_2\gsub{x}{a_3}
\end{eqnarray*}
From $[x:d]e\eqv[x:a_1]b_2$, by Lemma~\ref{congr.basic}(2) it follows that $e\eqv b_2$. 
We can therefore argue as follows:
\begin{eqnarray*}
&&b_2\gsub{x}{a_3}\\
&\eqv&\quad\text{(by Lemma~\ref{congr.basic}(1) since $e\eqv b_2$)}\\ 
&&e\gsub{x}{a_3}\\
&\eqv&\quad\text{(see $\ast$ in the first argument chain)}\\
&&c
\end{eqnarray*}
Therefore we can apply rule \myconv~to derive $\Gamma\sgv b=a_2\gsub{x}{a_3}:c$.
This can be graphically illustrated as follows:
\[
\inferrule*[left={\normalfont\myconv}]{\Gamma\sgv a_2\gsub{x}{a_3}:b_2\gsub{x}{a_3}\qquad b_2\gsub{x}{a_3}\eqv c\qquad\Gamma\sgv c}{\Gamma\gv a_2\gsub{x}{a_3}:c}
\]
\item Rule $\binbop{x}{\_}{\_}_1$:
We have $a=\binbop{x}{a_1}{a_3}$, $b=\binbop{x}{a_2}{a_3}$ where $a_1\srd a_2$, and $\Gamma\sgv\binbop{x}{a_1}{a_3}:c$.
By Lemma~\ref{type.decomp}(1), $c\eqv[x:a_1]c_3$ for some $c_3$ where $(\Gamma,x:a_1)\sgv a_3:c_3$. 
By Lemma~\ref{type.xtrct} this implies that $\Gamma\sgv a_1$ and therefore by inductive hypothesis we know that $\Gamma\sgv a_2$.
Therefore we can apply Lemma~\ref{eqv.env} which implies that $(\Gamma,x:a_2)\sgv a_3:c_3$.
By definition of typing $\Gamma\sgv \binbop{x}{a_2}{a_3}:[x:a_2]c_3$.
Since $[x:a_2]c_3\eqv[x:a_1]c_3\eqv c$, we can apply type rule \myconv~to derive $\Gamma\sgv b=\binbop{x}{a_2}{a_3}:c$.
\end{itemize}
The other cases can be shown in a similar style using also Lemma~\ref{sub.rd}.
\end{proof}
\noindent
A simple inductive argument extends Lemma~\ref{strd.type} to general reduction.
This property is often referred to as \emph{subject reduction}.
\begin{thm}[Subject reduction: Types are preserved under reduction]%
\label{rd.type}
For all $\Gamma,a,b,c$:
$a\rd b$ and $\Gamma\sgv a:c$ imply that $\Gamma\sgv b:c$.
\end{thm}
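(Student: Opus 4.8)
The plan is to derive Theorem~\ref{rd.type} as a routine inductive consequence of Lemma~\ref{strd.type}. Since reduction $a\rd b$ is by definition the reflexive and transitive closure of single-step reduction, we have $a\rdn{n}b$ for some $n\geq 0$, i.e.~there exist $a_0,\ldots,a_n$ with $a=a_0\srd a_1\srd\cdots\srd a_n=b$. I would prove the statement ``$a\rdn{n}b$ and $\Gamma\sgv a:c$ imply $\Gamma\sgv b:c$'' by induction on $n$.

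For the base case $n=0$ we have $a=b$, so $\Gamma\sgv b:c$ holds trivially by assumption. For the inductive step, assume $a\srd a_1\rdn{n-1}b$ and $\Gamma\sgv a:c$. By Lemma~\ref{strd.type} applied to the single step $a\srd a_1$ we obtain $\Gamma\sgv a_1:c$. The inductive hypothesis, applied to the $(n-1)$-step reduction $a_1\rdn{n-1}b$ together with $\Gamma\sgv a_1:c$, then yields $\Gamma\sgv b:c$, which completes the induction.

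There is no real obstacle here: all the substantive work (the case analysis over reduction axioms and structural rules, the appeals to the decomposition lemmas, the substitution lemma, and the congruence properties) has already been carried out in the proof of Lemma~\ref{strd.type}. The only thing to be careful about is the bookkeeping of the induction metric — one should induct on the length $n$ of the reduction sequence rather than on the structure of $\rd$ directly, so that each inductive step peels off exactly one $\srd$-step to which Lemma~\ref{strd.type} applies. I would state the proof in one or two sentences accordingly.

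\begin{proof}
Since $a\rd b$, we have $a\rdn{n}b$ for some $n\geq 0$; we argue by induction on $n$. If $n=0$ then $a=b$ and the claim is immediate. If $n>0$, write $a\srd a'\rdn{n-1}b$. From $\Gamma\sgv a:c$ and $a\srd a'$, Lemma~\ref{strd.type} gives $\Gamma\sgv a':c$, and the inductive hypothesis applied to $a'\rdn{n-1}b$ yields $\Gamma\sgv b:c$.
\end{proof}
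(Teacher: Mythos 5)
Your proof is correct and matches the paper's intent exactly: the paper states only that ``a simple inductive argument extends Lemma~\ref{strd.type} to general reduction,'' and your induction on the length $n$ of the reduction sequence, applying Lemma~\ref{strd.type} to peel off one single step at a time, is precisely that argument.
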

\noindent
Theorem~\ref{rd.type} can be reformulated using the validity notation.
\begin{cor}[Valid expressions are closed against reduction]%
\label{valid.rd}
For all $\Gamma,a,b$:
$\Gamma\sgv a$ and $a\rd b$ implies $\Gamma\sgv b$
\end{cor}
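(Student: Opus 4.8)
The plan is to simply unfold the definition of validity and invoke subject reduction. By definition, $\Gamma \sgv a$ means there is some expression $c$ with $\Gamma \sgv a : c$. First I would fix such a witness $c$. Since $a \rd b$, Theorem~\ref{rd.type} (subject reduction) applies to $\Gamma \sgv a : c$ and yields $\Gamma \sgv b : c$. This very typing is a witness for $\Gamma \sgv b$, which is exactly what we need; nothing further is required.

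There is essentially no obstacle here: the corollary is a direct notational reformulation of Theorem~\ref{rd.type}, and all the substantive work has already been carried out in establishing subject reduction — namely Lemma~\ref{strd.type} (preservation under a single reduction step), the substitution lemma~\ref{type.sub}, Lemma~\ref{valid.type}, and the confluence-based decomposition lemmas~\ref{type.decomp} and~\ref{congr.basic}. The one point worth noting is that the \emph{same} type witness $c$ transports along the reduction, since subject reduction preserves the type itself and not merely the bare existence of some type; so re-folding into the validity notation is immediate.

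One could alternatively prove the statement directly by induction on the definition of $a \rd b$, but this would just re-derive a special case of what Theorem~\ref{rd.type} already provides, so deriving it as a corollary is both shorter and cleaner.
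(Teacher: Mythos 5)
Your proof is correct and matches the paper's intent exactly: the paper states this corollary without a written proof, treating it as the immediate reformulation of Theorem~\ref{rd.type} in validity notation, which is precisely the unfold-witness-then-apply-subject-reduction argument you give. Nothing further is needed.
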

\noindent
Subject reduction can be extended analogously to types.
\begin{lem}[Elements are preserved under type reduction]%
\label{type.rd}
For all $\Gamma,a,b,c$:
$\Gamma\sgv a:b$ and $b\rd c$ implies $\Gamma\sgv a:c$.
\end{lem}
\begin{proof}
By Lemma~\ref{valid.type} $\Gamma\sgv b$.
By Corollary~\ref{valid.rd} $\Gamma\sgv c$.
The proposition then follows from rule \myconv.
\end{proof}
%
%
%
%
\noindent
Before we show uniqueness of types, we need a Lemma about the effect of removing unnecessary type declarations from a context.

\begin{lem}[Context contraction]%
\label{context.reduction}
For all $\Gamma_1,\Gamma_2,x,a,b,c$:
$(\Gamma_1,x:c,\Gamma_2)\sgv a:b$ and $x\notin\free([\Gamma_2]a)$ implies
 $(\Gamma_1,\Gamma_2)\sgv a:b'$ and $b\rd b'$ for some $b'$.
\end{lem}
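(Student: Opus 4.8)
The plan is to prove the statement by induction on the derivation of $(\Gamma_1,x:c,\Gamma_2)\sgv a:b$, keeping $\Gamma_2$ universally quantified (so the hypothesis can be re-invoked with a longer $\Gamma_2$), and \emph{strengthening} the conclusion to also assert $x\notin\free(b')$. That extra conjunct is what makes the induction self-supporting: it is exactly what is needed to discharge the freeness side-condition of later recursive calls. In every case the shape is the same: decompose the last rule, apply the inductive hypothesis to the premises (the analogous freeness side-condition for a premise always follows from $x\notin\free([\Gamma_2]a)$ by unfolding the definition of \free), and re-apply the same rule; whenever the inductive hypothesis has reduced a premise's type, the shapes are realigned via confluence (Theorem~\ref{rd.confl}), Lemma~\ref{valid.type}, Corollary~\ref{valid.rd} and rule \myconv. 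One also uses the elementary auxiliary fact that the free variables of a valid expression, and of its type, are declared in the ambient context (a short separate induction on typing, proved jointly with well-formedness of contexts so that the domains of abstractions are covered).

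For the routine part: \ax is vacuous (empty context). For \start and \weak one distinguishes whether $\Gamma_2$ is empty. If $\Gamma_2$ is nonempty, strip its last declaration $z{:}e$, apply the inductive hypothesis to the premises — in particular to the subderivation typing $e$, which yields validity of $e$ over $(\Gamma_1,\Gamma_2')$ — and re-apply the rule; that $x\notin\free(e)$ is read off directly from $x\notin\free([\Gamma_2]a)$. If $\Gamma_2$ is empty, then for \start the subject is $x$ itself, making $x\notin\free(x)$ contradictory, while for \weak one simply returns the premise $\Gamma_1\sgv a:b$ unchanged, using the auxiliary fact to conclude $x\notin\free(b)$. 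For \myconv, apply the strengthened hypothesis to the first premise to get $(\Gamma_1,\Gamma_2)\sgv a:b_0$ with $b''\rd b_0$ and $x\notin\free(b_0)$; since $b_0\eqv b$, confluence gives a common reduct $g$ with $x\notin\free(g)$, and $(\Gamma_1,\Gamma_2)\sgv g$ follows from Lemmas~\ref{valid.type} and~\ref{valid.rd}, so \myconv yields $(\Gamma_1,\Gamma_2)\sgv a:g$ with $b\rd g$. For \absu, \abse, \appl, the projection rules, \bprod, \bsum and the injection rules the argument is direct: apply the inductive hypothesis to the premises, reshape the resulting types with Lemma~\ref{rd.decomp} where needed (for instance to expose the shape $[y{:}f_1]f_2$ of a function's type in \appl), align a function's domain with its argument's type by a further confluence-and-\myconv step, re-apply the rule, and read off $b\rd b'$ and $x\notin\free(b')$ from the inductive hypotheses and Lemma~\ref{sub.rd} (the annotations involved being either subterms of the subject or already delivered by the inductive hypothesis).

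The crux is the rules \pdef and \cased, because their conclusion types — $[x!b]d$ and $[z{:}[c_1{+}c_2]]d$ — contain subexpressions ($b$, the type of the first component, in \pdef; $c_1,c_2,d$ in \cased) that do \emph{not} occur in the subject and may therefore legitimately mention $x$. Here the plan is: first apply the strengthened inductive hypothesis to the premise(s) typing the components, obtaining $x$-free reduced replacements for these annotations (say $b\rd b'$ with $x\notin\free(b')$); then rebuild the contexts of the remaining premises — replacing $x{:}c,\dots,y{:}b$ by the contracted, $b$-reduced context — by means of Lemma~\ref{eqv.env} (and, for the body premise of \pdef, Lemma~\ref{type.sub}); then apply the inductive hypothesis once more to these reassembled premises, whose freeness side-condition now holds precisely because the annotation has been cleaned up; and finally re-apply the rule, reconciling a shared codomain ($d$ in \cased, reached by reduction from both branches) by one more use of confluence. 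The main obstacle is the well-foundedness of this scheme: the context-cleanup step is routed through Lemmas~\ref{eqv.env} and~\ref{valid.rd.con}, which are themselves established by induction on typing, so one must either check that these transformations do not increase the chosen induction measure, or — more robustly — prove the present lemma simultaneously with those context-manipulation lemmas over one common measure on derivations, allowing each to invoke the others only on strictly smaller instances.
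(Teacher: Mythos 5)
Your proof follows the same skeleton as the paper's: induction on the derivation of $(\Gamma_1,x:c,\Gamma_2)\sgv a:b$ with $\Gamma_2$ kept general, the inductive hypothesis applied to the premises, and the resulting types realigned via confluence (Theorem~\ref{rd.confl}), Lemmas~\ref{valid.type} and~\ref{rd.decomp}, Corollary~\ref{valid.rd} and rule \myconv; your \myconv\ and \appl\ cases coincide with the two cases the paper displays. Where you genuinely add something is (i) the strengthening of the conclusion with $x\notin\free(b')$, backed by the auxiliary ``free variables are declared'' fact, and (ii) the explicit treatment of \pdef\ and \cased, which the paper compresses into ``similar style using also Lemmas~\ref{type.sub} and~\ref{valid.rd.con}''. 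You are right that these are the delicate cases: the annotations $b$ (in \pdef) and $d,c_1,c_2$ (in \cased) need not occur in the subject, may mention $x$, and yet appear in the premises and in the resulting type, so some strengthening of the induction is indeed required to re-enter it after the annotation has been cleaned up. This is a real subtlety that the paper's two displayed cases do not expose.

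One point to sharpen: your well-foundedness worry is aimed slightly off target. Invoking Lemmas~\ref{eqv.env}, \ref{type.sub} and~\ref{valid.rd.con} inside the induction is unproblematic in itself --- they are already established for all derivations before this lemma, so no circularity arises from their own proofs being inductions on typing. The genuine issue is that the derivation they hand back (e.g.\ of $(\Gamma_1,x:c,\Gamma_2,y:b')\sgv d:e'$ in the \pdef\ case, or of a judgment for the common codomain in \cased) is \emph{not a subderivation} of the original one, so plain structural induction on derivations does not license applying the inductive hypothesis to it. A lexicographic measure (subject size first, then derivation height) settles \pdef, since the reassembled premise still has the subterm $d$ of $\prdef{x}{a}{c}{d}$ as its subject; but the third premise of \cased\ has a subject $d$ that is not a subterm of $\case{a}{b}$, so there your fallback --- a common measure or a simultaneous induction with the context-manipulation lemmas --- really is needed. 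Since the paper itself offers no detail on these cases, your proposal is, if anything, more honest about where the work lies; it would be complete once the induction measure for that last step is pinned down.
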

Note that in this lemma $x\notin\free([\Gamma_2]a)$ does not necessarily imply $x\notin\free([\Gamma_2]b)$, for example take  $x:\prim\sgv\prim:([y:\prim]\prim\,x)$, and hence the reduction of the type to some expression where $x$ does not occur free, in the example $([y:\prim]\prim\,x)\rd \prim$, is necessary.
\begin{proof}
The proof is by induction on $a$. In the base cases of $a=\prim$ and $a=y$ context contraction can be show by a simple induction on the definition of typing using Theorem~\ref{rd.confl} and Lemma~\ref{rd.type}.
We show two other interesting cases.
\begin{itemize}
\item $a=\binbop{x}{a_1}{a_2}$:
By~\ref{type.decomp}(1) $(\Gamma_1,x:c,\Gamma_2)\sgv\binbop{y}{a_1}{a_2}:b$ implies $b\eqv[y:a_1]d$ for some $d$ where $(\Gamma_1,x:c,\Gamma_2,y:a_1)\sgv a_2:d$. 
Since $x\notin\free([\Gamma_2]\binbop{x}{a_1}{a_2})=\free([\Gamma_2,y:a_1]a_2)$, by inductive hypothesis $(\Gamma_1,\Gamma_2,y:a_1)\sgv a_2:d'$ where $d\rd d'$ for some $d'$.
By definition of typing this implies $(\Gamma_1,\Gamma_2)\sgv\binbop{y}{a_1}{a_2}:[y:a_1]d'$.
Since $b\eqv[y:a_1]d'$ by~\ref{rd.confl} and~\ref{type.rd} we know that $(\Gamma_1,\Gamma_2)\sgv\binbop{y}{a_1}{a_2}:b'$ where $b'$ is a common reduct of $b$ and $[y:a_1]d'$.
\item $a=(a_1\:a_2)$:
By~\ref{type.decomp}(6) we have  $b\eqv c_2\gsub{y}{a_2}$ where $y\neq x$, $(\Gamma_1,x:c,\Gamma_2)\sgv a_1:[y:c_1]c_2$ and $(\Gamma_1,x:c,\Gamma_2)\sgv a_2:c_1$ for some $c_1$, $c_2$. 
Since $x\notin\free([\Gamma_2](a_1\,a_2))$ we know that $x\notin\free([\Gamma_2]a_1)\cup\free([\Gamma_2]a_2)$.
By inductive hypothesis $(\Gamma_1,\Gamma_2)\sgv a_1:d$ where $[y:c_1]c_2\rd d$, for some $d$, and $(\Gamma_1,\Gamma_2)\sgv a_2:e$ where $c_1\rd e$, for some $e$.

By~\ref{rd.decomp}(1) we know that $d=[y:d_1]d_2$ where $c_1\rd d_1$ and $c_2\rd d_2$, for some $d_1$ and $d_2$. 
Since $e\eqv d_1$, by~\ref{rd.confl} and~\ref{type.rd} we know that $(\Gamma_1,\Gamma_2)\sgv a_2:e'$ where $e'$ is a common reduct of $e$ and $d_1$.
Since $[y:d_1]d_2\rd [y:e']d_2$, by~\ref{type.rd} we know that $(\Gamma_1,\Gamma_2)\sgv a_1:[y:e']d_2$.

Hence by definition of typing $(\Gamma_1,\Gamma_2)\sgv (a_1\:a_2):d_2\gsub{y}{a_2}$.
Since $b\eqv c_2\gsub{y}{a_2}\eqv d_2\gsub{y}{a_2}$, by~\ref{rd.confl} and~\ref{type.rd} we know that
$(\Gamma_1,\Gamma_2)\sgv a:b'$ where  $b'$ is a common reduct of $b$ and $d_2\gsub{y}{a_2}$.
\item $a=\prdef{y}{a_1}{a_2}{a_3}$:
By~\ref{type.decomp}(9) we have $b\eqv[y!d]a_3$ for some $d$ where $(\Gamma_1,x:c,\Gamma_2)\sgv a_1:d$, $(\Gamma_1,x:c,\Gamma_2)\sgv a_2:a_3\gsub{y}{a_1}$, and $(\Gamma_1,x:c,\Gamma_2,y:d)\sgv a_3$. 
Since $x\notin\free([\Gamma_2]\prdef{y}{a_1}{a_2}{a_3})$ we know that $x\notin\free([\Gamma_2]a_1)\cup\free([\Gamma_2]a_2)\cup\free([\Gamma_2]a_3)$.
By inductive hypothesis $(\Gamma_1,\Gamma_2)\sgv a_1:d'$ where $d\rd d'$ for some $d'$ and $(\Gamma_1,\Gamma_2)\sgv a_2:e$ where $a_3\gsub{y}{a_1}\rd e$ for some $e$.

By~\ref{valid.type} we know that $(\Gamma_1,\Gamma_2)\sgv d'$.
By~\ref{type.xtrct} we have $\Gamma_1\sgv c$ and therefore by~\ref{type.weak} we know that $(\Gamma_1,x:c,\Gamma_2)\sgv d'$.
Hence from $(\Gamma_1,x:c,\Gamma_2,y:d)\sgv a_3$ by~\ref{eqv.env} we can conclude that $(\Gamma_1,x:c,\Gamma_2,y:d')\sgv a_3$.
By inductive hypothesis, since $x\notin\free([\Gamma_2,y:d']a_3)$, we obtain $(\Gamma_1,\Gamma_2,y:d')\sgv a_3$.
Since $(\Gamma_1,\Gamma_2)\sgv a_1:d'$ we can apply~\ref{type.sub} to obtain $(\Gamma_1,\Gamma_2)\sgv a_3\gsub{y}{a_1}$.
Since $(\Gamma_1,\Gamma_2)\sgv a_2:e$ and $e\eqv a_3\gsub{y}{a_1}$ by rule \myconv\ we obtain $(\Gamma_1,\Gamma_2)\sgv a_2:a_3\gsub{y}{a_1}$.

Hence by definition of typing $\Gamma_1,\Gamma_2\sgv a:[y!d']a_3$ which implies $\Gamma_1,\Gamma_2\sgv a:b'$ where $b\rd b'=[y!d']a_3$.
\end{itemize}
The other cases can be shown in a similar style.
\end{proof}
\noindent
We can now show uniqueness of types. Most cases are straightforward, except for the weakening rule where we need the context contraction result above.
\begin{thm}[Uniqueness of types]%
\label{type.confl}
For all $\Gamma,a,b,c$:
$\Gamma\sgv a:b$ and $\Gamma\sgv a:c$ implies $b\eqv c$.
\end{thm}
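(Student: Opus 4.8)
The plan is to prove the statement by induction on the derivation of $\Gamma\sgv a:b$, with $c$ treated as universally quantified: for each such derivation we show that every $c$ with $\Gamma\sgv a:c$ satisfies $b\eqv c$. The inductive hypothesis then tells us, for every subderivation $\Gamma'\sgv a':b'$, that any type of $a'$ under $\Gamma'$ is congruent to $b'$, i.e.~that immediate subexpressions already have unique types; this is precisely what will be fed into the decomposition lemmas.

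For the axiom \ax\ the claim is immediate from Lemma~\ref{type.decomp}(5), and for \start\ from Lemma~\ref{start.confl}. For \myconv\ it follows at once from the inductive hypothesis applied to the left premise (a derivation of $\Gamma\sgv a:b_1$ with $b_1\eqv b$). For each structural rule that fixes the outer shape of $a$, namely \absu, \abse, \appl, \pdef, \bprod, \bsum, \injll, \injlr, \cased, and \negate, I would apply the matching clause of the typing-decomposition Lemma~\ref{type.decomp} to the second typing $\Gamma\sgv a:c$, which rewrites $c$ up to $\eqv$ as the canonical type of $a$ expressed through types of the immediate subexpressions of $a$; invoking the inductive hypothesis on those subexpressions and recombining the resulting congruences via Corollary~\ref{congr.basic}(1)--(3) --- substitutivity of $\eqv$ and injectivity of $\eqv$ at the abstraction, product, and sum symbols --- then yields $b\eqv c$. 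The only genuinely ambiguous cases are \chin\ versus \prl\ and \chba\ versus \prr, where $\pleft{a_1}$ (respectively $\pright{a_1}$) can be typed either because $a_1$ has an existential-abstraction type or because it has a product type; here the inductive hypothesis already determines the type of $a_1$ up to $\eqv$, and the spurious alternative offered by Lemma~\ref{type.decomp}(7)--(8) is excluded because an existential abstraction and a product cannot be congruent --- by confluence (Theorem~\ref{rd.confl}) they would share a common reduct, yet reduction preserves the outermost operator by Lemma~\ref{rd.decomp}(1)--(2).

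The delicate case is the weakening rule \weak, where $\Gamma=(\Gamma_0,y:e)$, the left premise is a derivation of $\Gamma_0\sgv a:b$ with $y$ not in $\dom(\Gamma_0)$, and we are given a second typing $(\Gamma_0,y:e)\sgv a:c$ in the larger context. Since the inductive hypothesis concerns derivations under $\Gamma_0$, I would first move this second typing down to $\Gamma_0$: because $a$ is typable under $\Gamma_0$ its free variables lie in $\dom(\Gamma_0)$ (an elementary induction on typing), so $y\notin\free(a)$ and context contraction (Lemma~\ref{context.reduction}) provides $\Gamma_0\sgv a:c'$ for some $c'$ with $c\rd c'$. Applying the inductive hypothesis to the subderivation $\Gamma_0\sgv a:b$ and to $\Gamma_0\sgv a:c'$ gives $b\eqv c'$, hence $b\eqv c$ since $c\rd c'$. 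I expect this weakening case --- specifically, the need to already have the context-contraction lemma available, together with the small free-variables-in-context fact --- to be the main obstacle; the remaining cases are routine bookkeeping with Lemma~\ref{type.decomp} and Corollary~\ref{congr.basic}.
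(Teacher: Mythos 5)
Your proposal is correct and follows essentially the same route as the paper: induction on the derivation of $\Gamma\sgv a:b$, using Lemma~\ref{start.confl} for \start, the decomposition Lemma~\ref{type.decomp} together with Corollary~\ref{congr.basic} for the structural rules, and --- as you rightly single out --- Lemma~\ref{context.reduction} to push the second typing down past the weakened declaration in the \weak\ case. Your extra care in the projection cases (excluding the cross-case between an existential-abstraction type and a product type via confluence) and in justifying $y\notin\free(a)$ fills in details the paper leaves implicit, but does not change the argument.
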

\begin{proof}
Proof by induction on the definition of $\Gamma\sgv a:b$.
We show three interesting cases:
In each case, if we look at a type rule $\Gamma\sgv a:b$ we have to show that if also $\Gamma\sgv a:c$ then $b\eqv c$.
\begin{itemize}
\item Rule \mystart:
We have $a=x$, $\Gamma=(\Gamma',x:b)$ for some $\Gamma'$ and $x$ where $(\Gamma',x:b)\sgv x:b$ and $\Gamma'\sgv b:d$ for some $d$:
Let $(\Gamma',x:b)\sgv x:c$.
By Lemma~\ref{start.confl} we know that $b\eqv c$.
\item Rule \weak:
We have $\Gamma=(\Gamma',x:d)$ for some $\Gamma'$ and $x$ where $(\Gamma',x:d)\sgv a:b$, $\Gamma'\sgv a:b$, and $\Gamma'\sgv d:e$ for some $e$.

Let $(\Gamma',x:d)\sgv a:c$.
Since $x\notin\free(a)$, by Lemma~\ref{context.reduction} we know that $\Gamma'\sgv a:c'$ for some $c'$ with $c\rd c'$.
By inductive hypothesis $b\eqv c'$ which implies $b\eqv c$.
\item Rule \appl:
We have $a=(a_1\,a_2)$ and $b=b_2\gsub{x}{a_2}$ for some $x$, $a_1$, $a_2$, and $b_2$ where $\Gamma\sgv(a_1\,a_2):b_2\gsub{x}{a_2}$,  $\Gamma\sgv a_1:[x:b_1]b_2$, and $\Gamma\sgv a_2:b_1$.

Let $\Gamma\sgv(a_1\,a_2):c$.
By Lemma~\ref{type.decomp}(6) we know that $c\eqv c_2\gsub{y}{a_2}$ for some $c_1$, $c_2$ where $\Gamma\sgv a_1:[y:c_1]c_2$ and $\Gamma\sgv a_2:c_1$. 
By inductive hypothesis applied to $\Gamma\sgv a_1:[x:b_1]b_2$ it follows that $[x:b_1]b_2\eqv [y:c_1]c_2$.
Hence obviously $x=y$ and by basic properties of congruence (Lemma~\ref{congr.basic}(2)) it follows that $b_2\eqv c_2$. 
Using Lemma~\ref{congr.basic}(1) we can argue $b\eqv b_2\gsub{x}{a_2}\eqv c_2\gsub{x}{a_2}\eqv c$. 
\end{itemize}
The other cases can be shown in a similar style.
\end{proof}
\subsection{Strong normalization}
The idea for the proof of strong normalization of valid expressions in \dcalc\ is to classify expressions according to structural properties, in order to make an inductive argument work.
For this purpose we define structural skeletons called \emph{norms} as the subset of expressions built from the primitive $\prim$ and products only and we define a partial function assigning norms to expressions.
Norms are a reconstruction of a concept of \emph{simple types}, consisting of atomic types and product types, within \dcalc\footnote{The construction of norms inside \dcalc~is due to convenient reuse of existing structures and definitions, norms could also be introduced as a separate mathematical structure}.
Analogously to simple types, norms allow for classifying valid expressions by different degrees of structural complexity.
The idea of the strong normalisation argument is first to prove that all valid expressions are \emph{normable}, i.e~they are in the domain of the norming function, and then to prove that all normable expressions are \emph{strongly normalizable}, i.e~they are not the origin of an infinite reduction sequence.

The good news is that in \dcalc\ we are not dealing with unconstrained parametric types as for example in System F (see \eg~\cite{girard1989proofs}),
and therefore we will be able to use more elementary methods to show strong normalisation as used for the simply typed lambda calculus.
A common such method is to define a notion of \emph{reducible} expressions satisfying certain \emph{reducibility conditions} suitable for inductive arguments both on type structure and on reduction length, and then to prove that all reducible expressions satisfy certain \emph{reducibility properties} including strong normalisation, and to prove that that all typable expressions are reducible (\cite{tait1967intensional},\cite{Girard72}).
We will basically adopt this idea, but the bad news is that common definitions of reducibility (\eg~\cite{girard1989proofs}) apparently cannot be adapted to include the reduction of negation.
A more suitable basis for our purposes was found to be the notion of \emph{computable expressions} as defined in language theoretical studies of Automath~\cite{vDaalen77}, see also~\cite{VANBENTHEMJUTTING1994}.
This approach is basically extended here to cover additional operators, including negation.

We motivate the basic idea of the proof (precise definitions can be found below):
Consider the following condition necessary to establish strong normalization for an application $(a\,b)$ in the context of an inductive proof
(where $\sn{}$ denotes the set of strongly normalizable expressions):
\begin{itemize}
\item
If $a\rd[x:c_1]c_2\in\sn{}$ and $b\in\sn{}$ then $c_2\gsub{x}{b}\in\sn{}$.
\end{itemize}
Similarly, the following condition is necessary to establish strong normalization for a negation $\myneg a$ in the context of an inductive proof.
\begin{itemize}
\item
If $a\rd[x:c_1]c_2\in\sn{}$ then $\myneg c_2\in\sn{}$.
\end{itemize}
\noindent
These and other properties inspire the definition of the set of \emph{computable} expressions $C_{\Gamma}$ which
are normable, strongly normalizable, and satisfy the property that $a,b\in C_{\Gamma}$ implies $\myneg a,(a\,b)\in C_{\Gamma}$.
Unfortunately, the closure properties of computable expressions do not include existential and universal abstraction.
Instead, we need to prove the stronger property
that normable expressions are computable under norm-preserving substitutions of their free variables to computable expressions.

This property implies that all normable expressions are computable and therefore that normability and computability are equivalent notions.
The logical relations between the various notions can be summarized as follows ($\Gamma\sngv a$ denotes normability of $a$ under $\Gamma$):
\[
\Gamma\sgv a\quad\Rightarrow\quad\Gamma\sngv a\quad\Leftrightarrow\quad a\in\ce_{\Gamma}\quad\Rightarrow\quad a\in\sn{}
\]
We begin with some basic definitions and properties related to strong normalization.
\begin{defi}[Strongly normalizable expressions]%
\label{sn}
The set of \emph{strongly normalizable expressions} is denoted by $\sn{}$.
An expression $a$ is in $\sn{}$ iff there is no infinite sequence of one-step reductions $a\srd a_1\srd a_2\srd a_3\ldots$
\end{defi}
\begin{lem}[Basic properties of strongly normalizable expressions]%
\label{sn.basic}
For all $a,b,c,x$:
\begin{enumerate}
\item
$a\gsub{x}{b}\in\sn{}$ and $b\rd c$ imply $a\gsub{x}{c}\in\sn{}$
\item
$a,b,c\in\sn{}$ implies $\binbop{x}{a}{b}\in\sn{}$, $\pleft{a}$, $\pright{a}\in\sn{}$, $\prdef{x}{a}{b}{c}\in\sn{}$, $[a,b]$, $[a+b]$, $\injl{a}{b}$, $\injr{a}{b}$, and $\case{a}{b}\in\sn{}$.
\end{enumerate}
\end{lem}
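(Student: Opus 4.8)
The plan is to handle the two parts separately, both times relying on two elementary facts about $\sn{}$ that follow at once from Definition~\ref{sn}: first, a reduct of a strongly normalizable expression is strongly normalizable (prefixing an infinite $\srd$-sequence out of the reduct with the given reduction would contradict membership in $\sn{}$); second, every immediate subexpression of an expression in $\sn{}$ is again in $\sn{}$ (an infinite $\srd$-sequence out of a subexpression lifts through the structural rules of Table~\ref{red.rules} to one out of the whole expression). Part~(1) would then be immediate: by Lemma~\ref{sub.rd}, $b\rd c$ yields $a\gsub{x}{b}\rd a\gsub{x}{c}$, so $a\gsub{x}{c}\in\sn{}$ by the first fact.

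For part~(2) the decisive observation, obtained by inspecting Table~\ref{red.rules}, is that among the operators listed only left- and right-projection can appear as the site of a root reduction step; for $\binbop{x}{a}{b}$ (that is, $[x:a]b$ or $[x!a]b$), $\prdef{x}{a}{b}{c}$, $[a,b]$, $[a+b]$, $\injl{a}{b}$, $\injr{a}{b}$ and $\case{a}{b}$, every single-step reduct is obtained by reducing one immediate subexpression, the outermost operator being untouched. I would fix such an operator, write $t$ for the corresponding compound expression whose immediate subexpressions are among $a,b,c\in\sn{}$, and suppose $t$ admitted an infinite reduction $t=t_0\srd t_1\srd t_2\srd\cdots$. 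Then every $t_j$ has the same outermost operator, each step reduces exactly one of the finitely many immediate subexpressions by a single $\srd$-step, and since the sequence is infinite some fixed subexpression gets reduced infinitely often; concatenating those steps produces an infinite $\srd$-sequence out of that subexpression, contradicting its membership in $\sn{}$. Hence $t\in\sn{}$.

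It remains to treat $\pleft{a}$ and $\pright{a}$ with $a\in\sn{}$, the only cases admitting a root reduction. Suppose $\pleft{a}$ had an infinite reduction. If all of its steps were internal to $a$ we would obtain an infinite $\srd$-sequence out of $a$, so some step is a root reduction; taking the first such, the expression immediately preceding it has the form $\pleft{a^{*}}$ with $a\rd a^{*}$, hence $a^{*}\in\sn{}$ by the first fact. That step is one of $\pi_1$, $\pi_3$, $\pi_5$ and its result is the first immediate subexpression of $a^{*}$, which lies in $\sn{}$ by the second fact; but the remaining reduction out of it is still infinite, a contradiction. Thus $\pleft{a}\in\sn{}$, and $\pright{a}$ is handled identically using $\pi_2$, $\pi_4$, $\pi_6$. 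The whole argument is routine; I expect the only point calling for attention to be the operator-by-operator check of which root reductions are possible, which is entirely mechanical from Table~\ref{red.rules} but must be done carefully to isolate the projection cases.
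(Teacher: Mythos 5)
Your proof is correct and follows essentially the same route as the paper, which proves both parts by contradiction from elementary properties of reduction (Lemmas~\ref{sub.rd} and~\ref{rd.decomp}); your operator-by-operator check of which root reductions are possible is precisely the content of the reduction-decomposition lemma the paper invokes. The only difference is that you spell out the pigeonhole and first-root-step arguments that the paper leaves implicit.
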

\begin{proof}
These properties can be shown through a proof by contradiction using elementary properties of reduction (Lemmas~\ref{sub.rd} and~\ref{rd.decomp}).
\end{proof}
\noindent
Next we show conditions under which applications and negations are strongly normalizing.
\begin{lem}[Strong normalization conditions]%
\label{sn.cond}
$\;$
\begin{enumerate}
\item
For all $a,b\in\sn{}$: $(a\,b)\in\sn{}$ if for any $c_1$, $c_2$, $d_1$, $d_2$, and $x$, the following conditions are satisfied:
\begin{itemize}
\item[$(C_1)$] $a\rd\binbop{x}{c_1}{c_2}$ implies that  $c_2\gsub{x}{b}\in\sn{}$
\item[$(C_2)$] $a\rd\case{c_1}{c_2}$ and $b\rd\injl{d_1}{d_2}$ implies $(c_1\,d_1)\in\sn{}$
\item[$(C_3)$] $a\rd\case{c_1}{c_2}$ and $b\rd\injr{d_1}{d_2}$ implies $(c_2\,d_2)\in\sn{}$
\end{itemize}
\item
For all $a\in\sn{}$: $\myneg a\in\sn{}$ if for any $b$ and $c$, the following conditions are satisfied:
\begin{itemize}
\item[$(C_1)$] $a\rd\prsumop{b}{c}$ implies $\myneg b$, $\myneg c\in\sn{}$
\item[$(C_2)$] $a\rd\binbop{x}{b}{c}$ implies $\myneg c\in\sn{}$
\end{itemize}
\end{enumerate}
\end{lem}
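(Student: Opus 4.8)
The plan is to use throughout the standard fact that an expression belongs to $\sn{}$ as soon as each of its one-step reducts does, together with an induction on the maximal length of a reduction sequence out of the relevant subexpression(s). For part~(1), fixing $a,b\in\sn{}$ that satisfy conditions $(C_1),(C_2),(C_3)$, I write $\nu(t)$ for the length of a longest reduction from $t$ --- a natural number, since single-step reduction is finitely branching and $t\in\sn{}$ --- and prove $(a\,b)\in\sn{}$ by induction on $\nu(a)+\nu(b)$. It suffices to show that every one-step reduct of $(a\,b)$ lies in $\sn{}$, and by Table~\ref{red.rules} such a reduct has exactly one of three forms: $(a'\,b)$ with $a\srd a'$; $(a\,b')$ with $b\srd b'$; or a root contraction by $\beta_1,\ldots,\beta_4$.

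In the form $(a'\,b)$ one has $a'\in\sn{}$ with $\nu(a')<\nu(a)$, and $(C_1),(C_2),(C_3)$ transfer to the pair $(a',b)$ because every reduct of $a'$ is already a reduct of $a$; the inductive hypothesis gives $(a'\,b)\in\sn{}$. In the form $(a\,b')$ one has $b'\in\sn{}$ with $\nu(b')<\nu(b)$; here $(C_2)$ and $(C_3)$ transfer since every reduct of $b'$ is a reduct of $b$, while $(C_1)$ transfers by Lemma~\ref{sn.basic}(1) (from $c_2\gsub{x}{b}\in\sn{}$ and $b\srd b'$ we obtain $c_2\gsub{x}{b'}\in\sn{}$), and the inductive hypothesis applies. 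For the root contractions I invoke the hypotheses with the reflexive reductions $a\rd a$ and $b\rd b$: if $a=\binbop{x}{c_1}{c_2}$ and the step is $\beta_1$ or $\beta_2$, the reduct is $c_2\gsub{x}{b}$, which is in $\sn{}$ by $(C_1)$; if $a=\case{c_1}{c_2}$ with $b=\injl{d_1}{d_2}$, resp.\ $b=\injr{d_1}{d_2}$, the reduct is $(c_1\,d_1)$, resp.\ $(c_2\,d_2)$, which is in $\sn{}$ by $(C_2)$, resp.\ $(C_3)$. This exhausts the cases.

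Part~(2) is handled the same way: with $a\in\sn{}$ satisfying $(C_1),(C_2)$, I prove $\myneg a\in\sn{}$ by induction on $\nu(a)$, showing every one-step reduct of $\myneg a$ lies in $\sn{}$. Such a reduct is either $\myneg a'$ with $a\srd a'$ --- where $\nu(a')<\nu(a)$, the conditions transfer because reducts of $a'$ are reducts of $a$, and the inductive hypothesis applies --- or a root contraction by one of $\nu_1,\ldots,\nu_{10}$, which pins down the shape of $a$. If $a=\myneg a_1$ (rule $\nu_1$) the reduct is $a_1$, a subexpression of $a\in\sn{}$ and hence in $\sn{}$. If $a=\prsumop{a_1}{a_2}$ (rules $\nu_2,\nu_3$) the reduct is $[\myneg a_1+\myneg a_2]$ or $[\myneg a_1,\myneg a_2]$; condition $(C_1)$ gives $\myneg a_1,\myneg a_2\in\sn{}$, and Lemma~\ref{sn.basic}(2) then puts the reduct in $\sn{}$. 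If $a=\binbop{x}{a_1}{a_2}$ (rules $\nu_4,\nu_5$) the reduct is $[x!a_1]\myneg a_2$ or $[x:a_1]\myneg a_2$; condition $(C_2)$ gives $\myneg a_2\in\sn{}$, $a_1\in\sn{}$ as a subexpression of $a$, and Lemma~\ref{sn.basic}(2) again applies. For rule $\nu_6$ the reduct is $\prim\in\sn{}$, and for rules $\nu_7,\ldots,\nu_{10}$ the reduct is $a$ itself, which is in $\sn{}$ by assumption.

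No step is genuinely difficult; the only points needing care are that the conditions $(C_i)$ really persist along a reduction of $a$ or of $b$ --- immediate from $a\srd a'\rd t\Rightarrow a\rd t$ except in the $(a\,b')$ subcase of part~(1), where Lemma~\ref{sn.basic}(1) is precisely what is required --- and that the case list of one-step reducts is complete, in particular that each $\nu_i$-contraction at the root of $\myneg a$ determines the form of $a$; both are routine inspections of Table~\ref{red.rules}.
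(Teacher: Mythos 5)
Your proof is correct. The paper disposes of this lemma in one line, as ``a proof by contradiction using elementary properties of reduction (Lemmas~\ref{sub.rd}, \ref{rd.decomp}, \ref{rd.decomp.neg}) and of strong normalisation (Lemma~\ref{sn.basic})'': one assumes an infinite reduction sequence out of $(a\,b)$ (resp.\ $\myneg a$), uses the decomposition lemmas to locate the first root contraction, and derives a contradiction from the conditions $(C_i)$. You instead argue positively, by well-founded induction on the length $\nu(a)+\nu(b)$ of maximal reductions (legitimate since expressions are finite, hence single-step reduction is finitely branching and K\"onig's lemma applies) and by checking that every one-step reduct lies in $\sn{}$. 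The two arguments are essentially interchangeable in content --- your case analysis of one-step reducts is exactly the paper's case analysis of where the head redex fires --- but yours avoids reasoning about a hypothetical infinite sequence and makes the bookkeeping of ``condition persistence'' explicit. You correctly isolate the one non-trivial transfer step, namely that $(C_1)$ survives a reduction $b\srd b'$ in the argument position only via Lemma~\ref{sn.basic}(1) applied to $c_2\gsub{x}{b}$, which is precisely the role the paper assigns to that lemma; and your treatment of the $\nu_1,\ldots,\nu_{10}$ root contractions, including the observation that $\nu_7$--$\nu_{10}$ return $a$ itself and that subexpressions of strongly normalizing expressions are strongly normalizing, is complete. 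No gaps.
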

\begin{proof}
These properties can be shown through a proof by contradiction using elementary properties of reduction (Lemmas~\ref{sub.rd},~\ref{rd.decomp}, and~\ref{rd.decomp.neg}) and of strong normalisation (Lemma~\ref{sn.basic}).
\end{proof}

\noindent
Norms are a subset of expressions representing structural skeletons of expressions.
Norms play an important role to classify expressions in the course of the proof of strong normalization.
\begin{defi}[Norm, norming, normable expressions]
The set of norms $\dnrm$ is generated by the following rules
\begin{eqnarray*}
\dnrm&\!::=\!&\prim\,\mid\,[\dnrm,\dnrm]
\end{eqnarray*}
Obviously $\dnrm\subset\dexp$.
We will use the notation $\bar{a}$, $\bar{b}$, $\bar{c}$, $\ldots$ to denote norms.
The partial \emph{norming} function $\nrm{\Gamma}{a}$ defines for some expression $a$ the norm of $a$ under a context $\Gamma$.
It is defined by the equations in Table~\ref{norm.def}.
\begin{table}[!htb]
\fbox{
\begin{minipage}{0.96\textwidth}
\begin{eqnarray*}
\\[-8mm]
\nrm{\Gamma}{\prim}&=&\prim\\
\nrm{\Gamma}{x}&=&\nrm{\Gamma}{\Gamma(x)}\quad\text{if}\;\Gamma(x)\;\text{is defined}\\
\nrm{\Gamma}{\binbop{x}{a}{b}}&=&[\nrm{\Gamma}{a},\nrm{\Gamma,x:a}{b}]\\
\nrm{\Gamma}{(a\,b)}&=&\bar{c}\quad\text{if}\;\nrm{\Gamma}{a}=[\nrm{\Gamma}{b},\bar{c}]\\
\nrm{\Gamma}{\prdef{x}{a}{b}{c}}&=&[\nrm{\Gamma}{a},\nrm{\Gamma}{b}]\quad\text{if}\;\nrm{\Gamma}{b}=\nrm{\Gamma,x:a}{c}\\
\nrm{\Gamma}{\prsumop{a}{b}}&=&[\nrm{\Gamma}{a},\nrm{\Gamma}{b}]\\
\nrm{\Gamma}{\pleft{a}}&=&\bar{a}\quad\text{if}\;\nrm{\Gamma}{a}=[\bar{a},\bar{b}]\\
\nrm{\Gamma}{\pright{a}}&=&\bar{b}\quad\text{if}\;\nrm{\Gamma}{a}=[\bar{a},\bar{b}]\\
\nrm{\Gamma}{\injl{a}{b}}&=&[\nrm{\Gamma}{a},\nrm{\Gamma}{b}]\\
\nrm{\Gamma}{\injr{a}{b}}&=&[\nrm{\Gamma}{a},\nrm{\Gamma}{b}]\\
\nrm{\Gamma}{\case{a}{b}}&=&[[\bar{a},\bar{b}],\bar{c}]\quad\text{if}\;\nrm{\Gamma}{a}=[\bar{a},\bar{c}],\;\nrm{\Gamma}{b}=[\bar{b},\bar{c}]\\
\nrm{\Gamma}{\myneg a}&=&\nrm{\Gamma}{a}
\end{eqnarray*}
\end{minipage}
}
\caption{Norming\label{norm.def}}
\end{table}
The partial norming function is well-defined, in the sense that one can show by structural induction on $a$ that, if defined, $\nrm{\Gamma}{a}$ is unique.
An expression $a$ is normable relative under context $\Gamma$ iff $\nrm{\Gamma}{a}$ is defined.
This is written as $\Gamma\sngv a$.
Similarly to typing we use the notation $\Gamma\sngv a_1,\ldots,a_n$ as an abbreviation for $\Gamma\sngv a_1$, $\ldots$, $\Gamma\sngv a_n$.
\end{defi}
\begin{rem}[Examples]
There are valid and invalid normable expressions and there are strongly normalisable expression which are neither valid nor normable.
We present some examples.
We will show later that all valid expressions are normable (\ref{val.nrm}) and that
all normable expressions are strongly normalizable (\ref{nrm.ce} and~\ref{sn.valid}).
Let $\Gamma =(p,q\!:\!\prim,\;z\!:\![x\!:\!p][y\!:\!q]\prim,\;w\!:\![x\!:\!\prim]x)$.
Consider the expression $[x:p](z\,x)$:
\begin{itemize}
\item We have $\Gamma\sgv[x\!:\!p](z\,x)$.
\item We have $\Gamma\sngv[x\!:\!p](z\,x)$ since $\nrm{\Gamma}{[x\!:\!p](z\,x)}=[\prim,\nrm{\Gamma,x:p}{(z\,x)}]=[\prim,[\prim,\prim]]$. The latter equality is true since $\nrm{\Gamma,x:p}{z}=\nrm{\Gamma,x:p}{[x\!:\!p][y\!:\!q]\prim}=[\prim,[\prim,\prim]]$ and $\nrm{\Gamma,x:p}{x}=\nrm{\Gamma,x:p}{p}=\prim$.
\end{itemize}
Consider the expression $[x\!:\!p](z\,p)$:
\begin{itemize}
\item We do not have $\Gamma\sgv[x:p](z\,p)$ since we do not have $\Gamma\sgv p:p$.
\item We have $\Gamma\sngv[x:p](z\,p)$ since $\nrm{\Gamma}{[x:p](z\,p)}=[\prim,[\prim,\prim]]$ and $\nrm{\Gamma,x:p}{p}=\prim$
\end{itemize}
As a third example consider the expression $[x:[y:\prim]y](x\,x)$:
\begin{itemize}
\item Obviously $[x:[y:\prim]y](x\,x)\in\sn{}$.
\item We do not have $\;\sgv [x:[y:\prim]y](x\,x)$ since the application $(x\,x)$ cannot be typed.
\item We do not have $\;\sngv [x:[y:\prim]y](x\,x)$: The definition of norming leads us to the expression $\nrm{x:[y:\prim]y}{(x\,x)}$ which is not defined since this would require that $\nrm{x:[y:\prim]y}{x}=[\nrm{x:[y:\prim]y}{x},\bar{a}]$ for some $\bar{a}$.
Hence the norming condition for application is violated.
\end{itemize}
\end{rem}
\noindent
We show several basic properties of normable expressions culminating in the property that all valid expressions are normable.
Some of these properties and proofs are structurally similar to the corresponding ones for valid expressions.
However, due to the simplicity of norms, the proofs are much shorter.
\begin{lem}[Norm equality in context]%
\label{nrm.eq}
Let $\Gamma_a=(\Gamma_1,x:a,\Gamma_2)$ and $\Gamma_b=(\Gamma_1,x:b,\Gamma_2)$ for some $\Gamma_1,\Gamma_2,x,a,b$. For all $c$: If $\Gamma_1\sngv a,b$, $\nrm{\Gamma_1}{a}=\nrm{\Gamma_1}{b}$, and  $\Gamma_a\sngv c$ then  $\Gamma_b\sngv c$ and $\nrm{\Gamma_a}{c}=\nrm{\Gamma_b}{c}$.
\end{lem}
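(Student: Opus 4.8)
The plan is to prove the statement by induction on the length of the (terminating) evaluation of $\nrm{\Gamma_a}{c}$ that witnesses $\Gamma_a\sngv c$, keeping $\Gamma_1,x,a,b$ fixed throughout while letting $\Gamma_2$ and $c$ vary. I prefer this measure over plain structural induction on $c$ with context because in the variable case one must recurse into $\Gamma_a(y)$, which is not a subexpression of $y$. Observe that when the induction passes a binder $\binbop{y}{c_1}{\ldots,c_n}$ the context grows to $(\Gamma_1,x:a,\Gamma_2,y:c_1)$, which is again of the shape $(\Gamma_1,x:a,\Gamma_2')$, and the hypotheses $\Gamma_1\sngv a,b$ and $\nrm{\Gamma_1}{a}=\nrm{\Gamma_1}{b}$ concern only $\Gamma_1,a,b$ and are therefore untouched; so the inductive hypothesis applies to every strictly shorter norming derivation, in the enlarged context.

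For $c=\prim$ there is nothing to do, and for $c=\myneg c_1$ the claim is immediate from $\nrm{\Gamma}{\myneg c_1}=\nrm{\Gamma}{c_1}$. For every other compound expression — $\binop{c_1}{\ldots,c_n}$ that is not an application, a binder $\binbop{y}{c_1}{\ldots,c_n}$, an application, a left- or right-projection, or a case distinction — the defining equations of Table~\ref{norm.def} compute $\nrm{\Gamma_a}{c}$ from the norms of its immediate subexpressions (each taken under $\Gamma_a$, or under $\Gamma_a$ extended by the bound variable), every one of which is obtained by a strictly shorter derivation. The inductive hypothesis then yields both normability of those subexpressions under the corresponding $\Gamma_b$-context and equality of their norms; and every side condition in the relevant equation — for instance $\nrm{\Gamma}{c_2}=\nrm{\Gamma,y:c_1}{c_3}$ for a protected definition $\prdef{y}{c_1}{c_2}{c_3}$, or $\nrm{\Gamma}{c_1}=[\nrm{\Gamma}{c_2},\bar c]$ for an application $(c_1\,c_2)$, or the matching-second-component requirement for $\case{c_1}{c_2}$ — transfers verbatim from $\Gamma_a$ to $\Gamma_b$ precisely because the component norms agree. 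Reassembling with the same equation gives $\Gamma_b\sngv c$ and $\nrm{\Gamma_b}{c}=\nrm{\Gamma_a}{c}$.

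The only genuinely new case is $c=y$ a variable. If $y\neq x$ then $\Gamma_a(y)=\Gamma_b(y)$, call it $d$; by definition $\nrm{\Gamma_a}{y}=\nrm{\Gamma_a}{d}$ is computed by a strictly shorter derivation, so the inductive hypothesis applied to $d$ gives $\Gamma_b\sngv d$ and $\nrm{\Gamma_a}{d}=\nrm{\Gamma_b}{d}$, whence $\Gamma_b\sngv y$ and $\nrm{\Gamma_b}{y}=\nrm{\Gamma_a}{y}$. If $y=x$ then $\nrm{\Gamma_a}{x}=\nrm{\Gamma_a}{a}$ and $\nrm{\Gamma_b}{x}=\nrm{\Gamma_b}{b}$, so it is enough to know $\nrm{\Gamma_a}{a}=\nrm{\Gamma_1}{a}$ and $\nrm{\Gamma_b}{b}=\nrm{\Gamma_1}{b}$, after which the two sides coincide by assumption. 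This last fact is a weakening property of norming: if $\Gamma_1\sngv e$ then $\nrm{(\Gamma_1,\Delta)}{e}=\nrm{\Gamma_1}{e}$ for every context $(\Gamma_1,\Delta)$ extending $\Gamma_1$ on the right, because a successful evaluation of $\nrm{\Gamma_1}{e}$ only ever consults variables of $\dom(\Gamma_1)$ and, iteratively, their declared types, so the fresh declarations of $\Delta$ are never looked up; this is itself a short induction on the norming derivation of $e$. Instantiating it with $e=a$, $\Delta=(x:a,\Gamma_2)$ and with $e=b$, $\Delta=(x:b,\Gamma_2)$ closes the case.

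I expect the variable case, and underneath it the weakening property of the norming function, to be the only real content; once one accepts that norming ignores context entries for variables that do not occur (transitively through declared types) in the expression being normed, every remaining case is a mechanical verification against the equations of Table~\ref{norm.def}, structurally parallel to — but much shorter than — the corresponding argument for typing (Lemma~\ref{eqv.env}). The one point of method that should not be glossed over is the choice of induction measure: the norming-derivation length, rather than the size of $c$, is what makes the variable case go through.
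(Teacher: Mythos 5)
Your proof is correct and follows essentially the same route as the paper, whose entire argument is ``structural induction on the definition of $\nrm{\Gamma_a}{c}$''; you have simply made explicit that the only non-mechanical case is the variable, and that it rests on a context-weakening property of norming which the paper records separately as Lemma~\ref{norm.ext}. The one small repair: state that auxiliary fact in the paper's insertion form ($\Gamma_1,\Gamma_2\sngv e$ implies $\nrm{\Gamma_1,\Delta,\Gamma_2}{e}=\nrm{\Gamma_1,\Gamma_2}{e}$) rather than as extension on the right only, since when its own induction passes a binder the new declaration accumulates to the right of $\Delta$ and the right-extension formulation no longer matches the inductive call.
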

\begin{proof}
The straightforward proof is by structural induction on $c$.
\end{proof}
\begin{lem}[Substitution and norming]%
\label{nrm.sub}
Let $\Gamma_a=(\Gamma_1,x:a,\Gamma_2)$ and $\Gamma_b=(\Gamma_1,\Gamma_2\gsub{x}{b})$ for some $\Gamma_1,\Gamma_2,x,a,b$. For all $c$: If $\Gamma_1\sngv a,b$, $\nrm{\Gamma_1}{a}=\nrm{\Gamma_1}{b}$, and  $\Gamma_a\sngv c$ then  $\Gamma_b\sngv c\gsub{x}{b}$ and $\nrm{\Gamma_a}{c}=\nrm{\Gamma_b}{c\gsub{x}{b}}$.
\end{lem}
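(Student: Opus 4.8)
The plan is to follow the pattern of Lemma~\ref{type.sub}, but the argument is markedly shorter: since norms are built only from $\prim$ and binary products, no substitution ever occurs \emph{inside} a norm, so the substitution-commutation Lemma~\ref{sub.basic} is not needed and almost every case reduces to feeding the induction hypothesis into the relevant defining equation of the norming function. Fix $\Gamma_1,\Gamma_2,x,a,b$ with $\Gamma_1\sngv a,b$ and $\nrm{\Gamma_1}{a}=\nrm{\Gamma_1}{b}$, and let $\Gamma_a=(\Gamma_1,x:a,\Gamma_2)$ and $\Gamma_b=(\Gamma_1,\Gamma_2\gsub{x}{b})$. I would prove, by structural induction on the definition of $\nrm{\Gamma_a}{c}$ — crucially, on the \emph{derivation} of the norming computation, so that a variable lookup $\nrm{\Gamma_a}{y}=\nrm{\Gamma_a}{\Gamma_a(y)}$ counts as a strict subcomputation — that $\Gamma_b\sngv c\gsub{x}{b}$ and $\nrm{\Gamma_a}{c}=\nrm{\Gamma_b}{c\gsub{x}{b}}$. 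Two standing facts are used: since $\Gamma_1\sngv a,b$, the free variables of $a$ and $b$ lie in $\dom(\Gamma_1)$, so $x\notin\free(a)\cup\free(b)$ and likewise $x$ occurs in no type declared in $\Gamma_1$; and norming is stable under extending a context on the right by fresh declarations, i.e.\ $\Gamma_1\sngv e$ with $\free(e)\subseteq\dom(\Gamma_1)$ implies $(\Gamma_1,\Delta)\sngv e$ and $\nrm{\Gamma_1,\Delta}{e}=\nrm{\Gamma_1}{e}$, which is immediate from the defining equations.

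The base case $c=\prim$ is trivial. For $c$ a variable $y$ (the only shape for which $\Gamma_a\sngv c$ can hold) there are three possibilities. If $y=x$, then $c\gsub{x}{b}=b$ and $\nrm{\Gamma_a}{x}=\nrm{\Gamma_a}{a}$; applying the induction hypothesis to the subcomputation $\nrm{\Gamma_a}{a}$ and using $a\gsub{x}{b}=a$ gives $\nrm{\Gamma_a}{x}=\nrm{\Gamma_b}{a}$, and by context stability together with the hypothesis this equals $\nrm{\Gamma_1}{a}=\nrm{\Gamma_1}{b}=\nrm{\Gamma_b}{b}$, as required. If $y\ne x$ and $y\in\dom(\Gamma_1)$, then $\Gamma_a(y)=\Gamma_b(y)=\Gamma_1(y)$ and $x\notin\free(\Gamma_1(y))$, so $\Gamma_1(y)\gsub{x}{b}=\Gamma_1(y)$ and the claim follows from the induction hypothesis applied to $\nrm{\Gamma_a}{\Gamma_1(y)}$. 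If $y\ne x$ and $y\in\dom(\Gamma_2)$, then by the definition of context substitution $\Gamma_b(y)=\Gamma_a(y)\gsub{x}{b}$, and again the claim follows by applying the induction hypothesis to the subcomputation $\nrm{\Gamma_a}{\Gamma_a(y)}$.

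For the operator cases, renaming the bound variable $y$ (when present) away from $x$, the definition of substitution gives $c\gsub{x}{b}=\binop{c_1\gsub{x}{b}}{\ldots,c_n\gsub{x}{b}}$ or $\binbop{y}{c_1\gsub{x}{b}}{\ldots,c_n\gsub{x}{b}}$. The induction hypothesis applies to each direct subcomputation: for a subterm normed under $\Gamma_a$ itself it yields $\nrm{\Gamma_b}{c_i\gsub{x}{b}}=\nrm{\Gamma_a}{c_i}$, and for a body normed under $\Gamma_a,y:c_1$ — a context of the form $(\Gamma_1,x:a,\Gamma_2')$ with $\Gamma_2'=\Gamma_2,y:c_1$, whose associated $\Gamma_b'$ is precisely $\Gamma_b,y:c_1\gsub{x}{b}$ — it yields $\nrm{\Gamma_b,y:c_1\gsub{x}{b}}{c_n\gsub{x}{b}}=\nrm{\Gamma_a,y:c_1}{c_n}$. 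Substituting these equalities into the defining equation of the norming function for the operator at hand gives both $\Gamma_b\sngv c\gsub{x}{b}$ and $\nrm{\Gamma_b}{c\gsub{x}{b}}=\nrm{\Gamma_a}{c}$; the side conditions for application, protected definition, projection and case distinction transfer verbatim because the two sides carry identical norms. The negation case is immediate since $\nrm{\Gamma}{\myneg c_1}=\nrm{\Gamma}{c_1}$.

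I expect the only delicate point to be the variable case: one must recognise that the induction has to run on the norming derivation (not on the syntax of $c$), so that a lookup is a legitimate strict subterm, and one must check that the context-substitution bookkeeping lines up — $\Gamma_b(y)=\Gamma_a(y)\gsub{x}{b}$ for $y$ declared in $\Gamma_2$ versus $\Gamma_b(y)=\Gamma_1(y)$ unchanged for $y$ declared in $\Gamma_1$ — which mirrors exactly the case split in the definition of $\Gamma\gsub{x}{b}$. Everything else is a routine, and shorter, transcription of the corresponding step in the proof of Lemma~\ref{type.sub}.
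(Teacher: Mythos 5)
Your proof is correct and takes essentially the same route as the paper, which simply states that the result follows by structural induction on the definition of $\nrm{\Gamma_a}{c}$ together with the definitions of norming and substitution; your expansion of the variable cases (including the observation that the induction must run on the norming derivation so that the lookup $\nrm{\Gamma_a}{y}=\nrm{\Gamma_a}{\Gamma_a(y)}$ is a legitimate subcomputation) is a faithful and accurate elaboration of that one-line argument. The only implicit step worth flagging is that $x\notin\free(\ran(\Gamma_1))$ rests on the paper's standing conventions about well-formed contexts and renaming rather than on the hypothesis $\Gamma_1\sngv a,b$ itself, but this is the same implicit assumption the paper makes in Lemma~\ref{type.sub}.
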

\begin{proof}
The proof is by structural induction on $c$ and follows from the definition of norming and of substitution.
\end{proof}
\begin{lem}[Reduction preserves norms]%
\label{nrm.rd}
For all $\Gamma,a,b$:
$\Gamma\sngv a$ and $a\rd b$ implies $\nrm{\Gamma}{a}=\nrm{\Gamma}{b}$
\end{lem}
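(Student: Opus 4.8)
The plan is to first reduce the multi-step statement to the single-step case by an easy induction on the length $n$ of $a\rdn{n}b$: for $n=0$ there is nothing to show, and for $n>0$ write $a\srd a'\rdn{n-1}b$, apply the single-step case (which also yields $\Gamma\sngv a'$, since $\nrm{\Gamma}{a}=\nrm{\Gamma}{a'}$ presupposes the right-hand side is defined), and then invoke the inductive hypothesis. The real work is the single-step claim ``$\Gamma\sngv a$ and $a\srd b$ imply $\Gamma\sngv b$ and $\nrm{\Gamma}{a}=\nrm{\Gamma}{b}$'', which I would prove by induction on the derivation of $a\srd b$, keeping $\Gamma$ universally quantified so that the hypothesis is available in extended contexts.

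For the reduction axioms I would proceed case by case. The $\beta_1$ and $\beta_2$ cases are the crux: if $\Gamma\sngv([x:a_1]a_2\,a_3)$ (resp. with $[x!a_1]a_2$), the norming rules for application and for $\binbop{x}{a_1}{a_2}$ together with freeness of the norm constructors force $\nrm{\Gamma}{a_1}=\nrm{\Gamma}{a_3}$ and $\nrm{\Gamma}{([x:a_1]a_2\,a_3)}=\nrm{\Gamma,x:a_1}{a_2}$; then Lemma~\ref{nrm.sub} (with $\Gamma_1=\Gamma$, $\Gamma_2$ empty) gives $\Gamma\sngv a_2\gsub{x}{a_3}$ and $\nrm{\Gamma,x:a_1}{a_2}=\nrm{\Gamma}{a_2\gsub{x}{a_3}}$, which is exactly what is needed. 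For $\beta_3,\beta_4$ one unfolds the norming equations for $\case{\cdot}{\cdot}$, the injections, and application; freeness of norms then identifies the relevant components and both sides collapse to the same $\bar{c}$. The $\pi_1$–$\pi_6$ cases are immediate from the norming rules for $\prdef{x}{\cdot}{\cdot}{\cdot}$, products, sums, and projections. The $\nu$ cases are all trivial: $\nu_1,\nu_6$–$\nu_{10}$ since $\nrm{\Gamma}{\myneg a}=\nrm{\Gamma}{a}$, and $\nu_2$–$\nu_5$ because both the negated and the rewritten form have norm $[\nrm{\Gamma}{a},\nrm{\Gamma}{b}]$ (resp. $[\nrm{\Gamma}{a},\nrm{\Gamma,x:a}{b}]$), again using $\nrm{\Gamma}{\myneg c}=\nrm{\Gamma}{c}$.

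For the congruence rules $\oplus(\ldots)_i$ on a non-binding operator $\binop{a_1}{\ldots,a_n}$, normability of the whole expression forces normability of each component, the inductive hypothesis gives $\nrm{\Gamma}{a_i}=\nrm{\Gamma}{b_i}$, and since every norming side condition (e.g. $\nrm{\Gamma}{a}=[\nrm{\Gamma}{b},\bar{c}]$ for application, or the equality constraint for $\case{\cdot}{\cdot}$) depends only on the norms of the subexpressions, the condition is preserved and the resulting norm is unchanged. The binding rules $\oplus_x(\ldots)_i$ split into two subcases: reducing the body $a_n$ (which lives under $x:a_1$) is handled by applying the inductive hypothesis in the context $(\Gamma,x:a_1)$; reducing a domain component $a_i$ ($i<n$) changes the context in which the body is normed, so I would transport the body's norm via Lemma~\ref{nrm.eq}, using that $\nrm{\Gamma}{a_i}=\nrm{\Gamma}{b_i}$, and then re-check the side condition for $\prdef{x}{\cdot}{\cdot}{\cdot}$. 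I expect the main obstacle to be purely bookkeeping: there are many axioms, and for the $\beta$- and $\prdef$-related cases one must carefully line up the norming side conditions with the hypotheses of Lemmas~\ref{nrm.sub} and~\ref{nrm.eq} (in particular treating the three-component $\prdef$ with its binding on the third slot correctly), but no case presents a genuine conceptual difficulty.
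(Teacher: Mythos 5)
Your proposal is correct and follows essentially the same route as the paper: reduce to the single-step case, induct on the definition of single-step reduction, discharge the $\beta_1/\beta_2$ axioms via Lemma~\ref{nrm.sub}, the $\nu$ axioms via $\nrm{\Gamma}{\myneg a}=\nrm{\Gamma}{a}$, and the structural rules on binding operators via Lemma~\ref{nrm.eq}. You simply spell out more cases than the paper, which only presents $\beta_1$ and $\binbop{x}{\_}{\_}_1$ as representatives.
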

\begin{proof}
It is obviously sufficient to show the property for single-step reduction which we do here by induction on the definition of single-step reduction.
The proof is straightforward, \eg~in case of axiom $\beta_1$, we have $a=([x:a_1]a_2\:a_3)$, $b=a_2\gsub{x}{a_3}$, and $\nrm{\Gamma}{([x:a_1]a_2\:a_3)}=\nrm{\Gamma,x:a_1}{a_2}$ where $\nrm{\Gamma}{a_1}=\nrm{\Gamma}{a_3}$.
Therefore by Lemma~\ref{nrm.sub} we know that $\nrm{\Gamma,x:a_1}{a_2}=\nrm{\Gamma}{a_2\gsub{x}{a_3}}$ which implies the proposition.

Similarly, \eg~in case of structural rule $\binbop{x}{\_}{\_}_1$ we have $a=\binbop{x}{a_1}{a_3}$, $b=\binbop{x}{a_2}{a_3}$, and $a_1\srd a_2$.
By inductive hypothesis $\nrm{\Gamma}{a_1}=\nrm{\Gamma}{a_2}$.
By Lemma~\ref{nrm.eq} we know that $\nrm{\Gamma,x:a_1}{a_3}\!=\nrm{\Gamma,x:a_2}{a_3}$. Therefore $\nrm{\Gamma}{a}=[\nrm{\Gamma}{a_1},\nrm{\Gamma,x:a_1}{a_3}]=[\nrm{\Gamma}{a_2},\nrm{\Gamma,x:a_2}{a_3}]=\nrm{\Gamma}{b}$.
\end{proof}
\begin{lem}[Context extension]%
\label{norm.ext}
For all $\Gamma_1,\Gamma_2,x,a,b$ where $x\notin\dom(\Gamma_1,\Gamma_2)$:
$\Gamma_1,\Gamma_2\sngv a$ implies $(\Gamma_1,x:b,\Gamma_2)\sngv a$ and $\nrm{\Gamma_1,\Gamma_2}{a}=\nrm{\Gamma_1,x:b,\Gamma_2}{a}$.
\end{lem}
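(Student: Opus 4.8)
The plan is to prove the statement by induction on the derivation witnessing that $\nrm{\Gamma_1,\Gamma_2}{a}$ is defined, that is, on the structure of the computation given by the equations of Table~\ref{norm.def}. The guiding observation is that since $x\notin\dom(\Gamma_1,\Gamma_2)$, inserting the declaration $x:b$ in the middle of the context changes no context lookup that occurs while computing $\nrm{\Gamma_1,\Gamma_2}{a}$, so the very same equations apply with $\Gamma_1,x:b,\Gamma_2$ in place of $\Gamma_1,\Gamma_2$ and yield the same norm. In each case I first apply the inductive hypothesis to the strictly smaller sub-computations, obtaining that they are defined over the extended context with unchanged value, and then reassemble them by the same equation, so that both definedness and the equality are established simultaneously.

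I would carry out the cases as follows. For $a=\prim$ both sides equal $\prim$. For $a=y$ a variable, definedness of $\nrm{\Gamma_1,\Gamma_2}{y}$ forces $y\in\dom(\Gamma_1,\Gamma_2)$, hence $y\neq x$ and $(\Gamma_1,x:b,\Gamma_2)(y)=(\Gamma_1,\Gamma_2)(y)$; the inductive hypothesis applied to the sub-computation $\nrm{\Gamma_1,\Gamma_2}{(\Gamma_1,\Gamma_2)(y)}$ then gives the claim. For a binder $a=\binbop{x'}{a_1}{a_2}$, and similarly for $\prdef{x'}{a_1}{a_2}{a_3}$, I may assume by the renaming convention that $x'\neq x$ and $x'\notin\dom(\Gamma_1,\Gamma_2)$; applying the inductive hypothesis to $a_1$ under $\Gamma_1,\Gamma_2$ and to $a_2$ under $\Gamma_1,(\Gamma_2,x':a_1)$ (permissible since $x\notin\dom(\Gamma_1,\Gamma_2,x':a_1)$) reproduces the defining equation, and for the protected definition the side condition $\nrm{\Gamma}{a_2}=\nrm{\Gamma,x':a_1}{a_3}$ survives because both of its sides are preserved. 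For $a=(a_1\,a_2)$ the inductive hypothesis gives $\nrm{\Gamma_1,x:b,\Gamma_2}{a_1}=\nrm{\Gamma_1,\Gamma_2}{a_1}=[\nrm{\Gamma_1,\Gamma_2}{a_2},\bar c]=[\nrm{\Gamma_1,x:b,\Gamma_2}{a_2},\bar c]$, so the application equation still fires with the same result $\bar c$; the cases $\pleft{a_1}$, $\pright{a_1}$ and $\case{a_1}{a_2}$ are handled the same way, their side conditions on the shape of the sub-norms transferring unchanged. The cases $[a_1,a_2]$, $[a_1+a_2]$, $\injl{a_1}{a_2}$, $\injr{a_1}{a_2}$ and $\myneg a_1$ are immediate from the inductive hypothesis applied to the immediate subexpressions.

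The only delicate point is the well-foundedness of the induction in the variable case, since $\nrm{\Gamma}{y}$ is defined via $\nrm{\Gamma}{\Gamma(y)}$ and $\Gamma(y)$ is in general not a structural subexpression of $y$. This is precisely why the induction is set up on the derivation (equivalently, the well-founded computation tree) of $\nrm{\Gamma_1,\Gamma_2}{a}$ rather than on $a$ itself: every sub-computation on the right-hand side of a norming equation, including $\nrm{\Gamma_1,\Gamma_2}{\Gamma(y)}$, is strictly smaller in this order. The other mild care needed is the use of the $\alpha$-renaming convention in the binder cases to keep the bound variable distinct from $x$ and from $\dom(\Gamma_1,\Gamma_2)$, exactly as in the analogous weakening lemma for typing (Lemma~\ref{type.weak}). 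Apart from these points the argument is routine, and noticeably shorter than its typing counterpart because of the simplicity of norms.
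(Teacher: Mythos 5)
Your proposal is correct and follows essentially the same route as the paper: the paper's own proof is a one-sentence observation that, since all free variables of $a$ are already declared in $\Gamma_1,\Gamma_2$, the inserted declaration $x:b$ is never consulted, so the evaluation of $\nrm{\Gamma_1,\Gamma_2}{a}$ transfers verbatim to $\nrm{\Gamma_1,x:b,\Gamma_2}{a}$. Your induction on the well-founded computation tree (with the explicit treatment of the variable case and the $\alpha$-renaming of binders) is simply the rigorous elaboration of that transformation.
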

\begin{proof}
Obviously $\Gamma_1,\Gamma_2\sngv a$ implies that $\free([\Gamma_1,\Gamma_2]a)=\emptyset$.
Therefore the additional declaration $x:b$ will never be used when evaluating $\nrm{\Gamma_1,x:b,\Gamma_2}{a}$.
Hence the successful evaluation of $\nrm{\Gamma_1,\Gamma_2}{a}$ can be easily transformed into an evaluation of $\nrm{\Gamma_1,x:b,\Gamma_2}{a}$ with identical result.
\end{proof}
\begin{lem}[Typing implies normability and preserves norm]%
\label{val.nrm}
For all $\Gamma,a,b$:
If $\Gamma\sgv a:b$ then $\sngv[\Gamma]a,[\Gamma]b$ and $\nrm{\Gamma}{a}=\nrm{\Gamma}{b}$.
As a consequence $\Gamma\sgv a$ implies $\Gamma\sngv a$.
\end{lem}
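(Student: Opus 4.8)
The plan is to strengthen the statement and prove, by induction on the derivation of $\Gamma\sgv a:b$, that $\Gamma$ is a \emph{normable context} (every declared type is normable relative to its prefix), that $\Gamma\sngv a$ and $\Gamma\sngv b$, and that $\nrm{\Gamma}{a}=\nrm{\Gamma}{b}$. Carrying normability of the context along is what makes the abstraction-shaped cases go through: the norming clauses for $\binbop{x}{a_1}{a_2}$, for products, for injections, for $\prdef{x}{a_1}{a_2}{a_3}$, and for $\case{a_1}{a_2}$ each demand that the norms of the \emph{constituents} (in particular of the domains) be defined in their own right, not merely that the body be normable. An alternative would be to recover $\Gamma\sgv a_1$ from the relevant premise via Lemma~\ref{type.xtrct} and apply the induction hypothesis to it, but keeping the context condition explicit in the statement is cleaner.

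The routine rules come first. For \ax\ everything is immediate since $\nrm{}{\prim}=\prim$. For \start\ and \weak\ the context is extended by a type that the induction hypothesis already shows normable, so the extended context stays normable, and Lemma~\ref{norm.ext} transports normability and norm values from $\Gamma$ to the extension unchanged; \start\ additionally uses $\nrm{\Gamma,x:a}{x}=\nrm{\Gamma,x:a}{a}$ straight from the definition of norming. For \negate\ one uses $\nrm{\Gamma}{\myneg a}=\nrm{\Gamma}{a}$. For \myconv\ with $b\eqv c$, confluence (Theorem~\ref{rd.confl}) yields a common reduct $e$ of $b$ and $c$, and Lemma~\ref{nrm.rd} gives $\nrm{\Gamma}{b}=\nrm{\Gamma}{e}=\nrm{\Gamma}{c}$, so the hypotheses on the first and third premises combine to $\nrm{\Gamma}{a}=\nrm{\Gamma}{c}$.

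The substantive cases are those where both the typed expression and its type become binary norms. For \absu\ (symmetrically \abse) the hypothesis on the premise gives normability of $\Gamma,x:a$ (hence $\Gamma\sngv a$) and $\nrm{\Gamma,x:a}{b}=\nrm{\Gamma,x:a}{c}$, so $\nrm{\Gamma}{[x:a]b}$ and $\nrm{\Gamma}{[x:a]c}$ both equal $[\nrm{\Gamma}{a},\nrm{\Gamma,x:a}{b}]$. For \appl, the hypothesis gives $\nrm{\Gamma}{a_1}=\nrm{\Gamma}{[x:c_1]c_2}=[\nrm{\Gamma}{c_1},\nrm{\Gamma,x:c_1}{c_2}]$ and $\nrm{\Gamma}{a_2}=\nrm{\Gamma}{c_1}$, so the application clause fires and $\nrm{\Gamma}{(a_1\,a_2)}=\nrm{\Gamma,x:c_1}{c_2}=\nrm{\Gamma}{c_2\gsub{x}{a_2}}$ by Lemma~\ref{nrm.sub}. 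The eliminators \chin, \chba, \prl, \prr\ are the same pattern read backwards, \chba\ again invoking Lemma~\ref{nrm.sub} for the type $c\gsub{x}{\pleft{a}}$. For \bprod, \bsum, \injll, \injlr\ both sides reduce to $[\nrm{\Gamma}{\cdot},\nrm{\Gamma}{\cdot}]$ and the hypotheses match componentwise. For \cased, note first that $\Gamma\sgv d$ forces $\free(d)\subseteq\dom(\Gamma)$, so the bound variables $x,y,z$ are not free in $d$ and Lemma~\ref{norm.ext} gives $\nrm{\Gamma,x:c_1}{d}=\nrm{\Gamma,y:c_2}{d}=\nrm{\Gamma}{d}$, a common second component that makes the case clause applicable; then $\nrm{\Gamma}{\case{a}{b}}$ and $\nrm{\Gamma}{[z:[c_1+c_2]]d}$ both equal $[[\nrm{\Gamma}{c_1},\nrm{\Gamma}{c_2}],\nrm{\Gamma}{d}]$.

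I expect the main obstacle to be the \pdef\ case, because there the norming clause for $\prdef{x}{a}{c}{d}$ carries the side condition $\nrm{\Gamma}{c}=\nrm{\Gamma,x:a}{d}$ while the conclusion type is $[x!b]d$ with $\Gamma\sgv a:b$ and $b$ in general different from $a$, so three lemmas must be chained. The hypothesis on $\Gamma\sgv c:d\gsub{x}{a}$ gives $\nrm{\Gamma}{c}=\nrm{\Gamma}{d\gsub{x}{a}}$; Lemma~\ref{nrm.sub}, applied using $\nrm{\Gamma}{a}=\nrm{\Gamma}{b}$ and $(\Gamma,x:b)\sngv d$ (from the hypothesis on $\Gamma,x:b\sgv d:e$), gives $\nrm{\Gamma,x:b}{d}=\nrm{\Gamma}{d\gsub{x}{a}}$; and Lemma~\ref{nrm.eq} gives $\nrm{\Gamma,x:b}{d}=\nrm{\Gamma,x:a}{d}$. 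Together these verify the side condition, so $\nrm{\Gamma}{\prdef{x}{a}{c}{d}}=[\nrm{\Gamma}{a},\nrm{\Gamma}{c}]$, and the same identities show this equals $\nrm{\Gamma}{[x!b]d}=[\nrm{\Gamma}{b},\nrm{\Gamma,x:b}{d}]$. The stated consequence, $\Gamma\sgv a\Rightarrow\Gamma\sngv a$, then follows by unfolding the definition of validity.
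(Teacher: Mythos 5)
Your proof is correct and follows essentially the same route as the paper: induction on the typing derivation, using Lemma~\ref{norm.ext} for \start/\weak, confluence and Lemma~\ref{nrm.rd} for \myconv, and Lemma~\ref{nrm.sub} for \appl. Your explicit strengthening to carry normability of the context (rather than silently recovering $\Gamma\sgv a_1$ via Lemma~\ref{type.xtrct}) and your worked-out \pdef\ and \cased\ cases are refinements of detail, not a different argument.
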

\begin{proof}
Proof by induction on the definition of $\Gamma\sgv a:b$.
Note that $\sngv[\Gamma]a$ implies $\Gamma\sngv a$ but not vice-versa.
The stronger conclusion $\sngv[\Gamma]a$ is needed \eg~in the rule \absu.
We consider some interesting cases in detail:
\begin{itemize}
\item Rule \ax:
Obvious, since $a=b=\prim$ and $\Gamma=()$.
\item Rule \mystart:
We have $a=x$, $\Gamma=(\Gamma',x:b)$ for some $\Gamma'$ and $x$, and $\Gamma'\sgv b:c$ for some $c$.
By inductive hypothesis $\sngv[\Gamma']b$ and hence obviously $\Gamma'\sngv b$.
Since $x\notin\free(b)$, by Law~\ref{norm.ext} we know that  $\Gamma\sngv b$ and $\nrm{\Gamma}{b}=\nrm{\Gamma'}{b}$.
Hence, by definition of norming $\sngv[\Gamma]x$, where $\nrm{\Gamma}{x}=\nrm{\Gamma}{b}$.
\item Rule \weak:
We have $\Gamma=(\Gamma',x:c)$ for some $\Gamma'$, $x$, and $c$ where $\Gamma'\sgv c$ and $\Gamma'\sgv a:b$.
By inductive hypothesis $\sngv[\Gamma']a,[\Gamma']b,[\Gamma']c$ (obviously this implies $\Gamma'\sngv a,b,c$) and $\nrm{\Gamma'}{a}=\nrm{\Gamma'}{b}$.
Since $x\notin\free(a)\cup\free(b)$, by Law~\ref{norm.ext} we know that $\Gamma\sngv a$ and $\nrm{\Gamma}{a}=\nrm{\Gamma'}{a}$ as well as
$\Gamma\sngv b$ and $\nrm{\Gamma}{b}=\nrm{\Gamma'}{b}$.
Hence by definition of norming $\sngv[\Gamma]a$,$[\Gamma]b$ and $\nrm{\Gamma}{a}=\nrm{\Gamma}{b}$.
\item Rule \myconv:
We have $\Gamma\sgv a:b$ where $\Gamma\sgv a:c$, $c\eqv b$, and $\Gamma\sgv b:d$ for some $c$ and $d$ (note that we are here using the rule \myconv~with $b$ and $c$ exchanged).
By inductive hypothesis  $\sngv[\Gamma]a,[\Gamma]b,[\Gamma]c$ (obviously this implies $\Gamma\sngv a,b,c$) and $\nrm{\Gamma}{a}=\nrm{\Gamma}{c}$.
Hence by Laws~\ref{rd.confl} and~\ref{nrm.rd} we know that $\nrm{\Gamma}{c}=\nrm{\Gamma}{b}$ which implies $\nrm{\Gamma}{a}=\nrm{\Gamma}{b}$.
\item Rule \absu:
We have $a=[x:c]a_1$ and $b=[x:c]a_2$, for some $c$, $a_1$, and $a_2$ where $(\Gamma,x:c)\sgv a_1:a_2$.
By inductive hypothesis we know that $\sngv[\Gamma,x:c]a_1,[\Gamma,x:c]a_2$ (which can also be written as $\sngv[\Gamma][x:c]a_1$,$[\Gamma][x:c]a_2$) and $\nrm{\Gamma,x:c}{a_1}=\nrm{\Gamma,x:c}{a_2}$.
Furthermore, \eg~$\sngv[\Gamma][x:c]a_1$ implies $\Gamma\sngv c$ and therefore
by definition of norming $\nrm{\Gamma}{a}=[\nrm{\Gamma}{c},\nrm{\Gamma,x:c}{a_1}]=[\nrm{\Gamma}{c},\nrm{\Gamma,x:c}{a_2}]=\nrm{\Gamma}{b}$.
\item Rule \appl:
We have $a=(a_1\,a_2)$ and $b=c_2\gsub{x}{a_2}$, for some $a_1$, $a_2$, $x$, and $c_2$ where $\Gamma\sgv a_1:[x:c_1]c_2$ and $\Gamma\sgv a_2:c_1$ for some $c_1$.
By inductive hypothesis we know that $\sngv[\Gamma]a_1,[\Gamma]a_2,[\Gamma][x:c_1]c_2,[\Gamma]c_1$ (obviously this implies $\Gamma\sngv a_1,a_2,[x:c_1]c_2,c_1$) and
$\nrm{\Gamma}{[x:c_1]c_2}=\nrm{\Gamma}{a_1}$ as well as $\nrm{\Gamma}{c_1}=\nrm{\Gamma}{a_2}$.
Hence $\nrm{\Gamma}{a_1}=[\nrm{\Gamma}{a_2},\nrm{\Gamma,x:c_1}{c_2}]$ and therefore by definition of norming $\nrm{\Gamma}{(a_1\,a_2)}=\nrm{\Gamma,x:c_1}{c_2}$ which implies $\Gamma\sngv a$ and (obviously) $\sngv[\Gamma]a$.
Since $\nrm{\Gamma}{a_2}=\nrm{\Gamma}{c_1}$ we can apply Law~\ref{nrm.sub} to obtain $\nrm{\Gamma,x:c_1}{c_2}=\nrm{\Gamma}{c_2\gsub{x}{a_2}}$.
Hence $\Gamma\sngv b$ and (obviously) $\sngv[\Gamma]b$ and $\nrm{\Gamma}{a}=\nrm{\Gamma}{(a_1\,a_2)}=\nrm{\Gamma}{c_2\gsub{x}{a_2}}=\nrm{\Gamma}{b}$.
\end{itemize}
For the consequence, $\Gamma\sgv a$ means that $\Gamma\sgv a:b$, for some $b$.
By the property just shown this implies $\Gamma\sngv a$.
\end{proof}
\noindent
We introduce a norm-based induction principle that we will use several times in the following proofs.
\begin{defi}[Induction on the size of norms]
The \emph{size} of a norm $\bar{a}$ is defined as the number of primitive constants $\prim$ it contains.
A property $P(\Gamma,\bar{a})$ is shown by \emph{norm-induction} iff for all $\bar{b}$ we know that:
If $P(\Gamma,\bar{c})$ for all $\bar{c}$ of size strictly smaller that $\bar{b}$ then $P(\Gamma,\bar{b})$.
This can be reformulated into a more convenient form for its use in proofs.
\begin{itemize}
\item {\bf Inductive base:}
$P(\Gamma,\prim)$.
\item {\bf Inductive step:}
For all $\bar{b},\bar{c}$:
If $P(\Gamma,\bar{a})$ for all $\bar{a}$ of size strictly smaller than the size of $[\bar{b},\bar{c}]$ then $P(\Gamma,[\bar{b},\bar{c}])$.
\end{itemize}
\end{defi}
\noindent
Computable expressions are organized according to norm structure and satisfy a number of additional conditions necessary for an inductive proof of strong normalization.
\begin{defi}[Computable expressions]%
\label{ce}
The set of \emph{computable expressions} of norm $\bar{a}$ under context $\Gamma$ is denoted by $\ce_{\Gamma}(\bar{a})$.
$a\in\ce_{\Gamma}(\bar{a})$ iff $a\in\sn{}$, $\Gamma\sngv a$ where $\nrm{\Gamma}{a}=\bar{a}$, and if $\bar{a}=[\bar{b},\bar{c}]$ for some $\bar{b}$ and $\bar{c}$ then
the following \emph{computability conditions} are satisfied:
\begin{itemize}[align=left]
\item[($\alpha$)]
For all $x$, $b$, $c$: If $a\rd\binbop{x}{b}{c}$ or $\myneg a\rd\binbop{x}{b}{c}$ then $c\in\ce_{\Gamma,x:b}(\bar{c})$ and $c\gsub{x}{d}\in\ce_{\Gamma}(\bar{c})$ for any $d\in\ce_{\Gamma}(\bar{b})$.
\item[($\beta$)]
For all $b$, $c$: If $a\rd\prsumop{b}{c}$ or $\myneg a\rd\prsumop{b}{c}$ then $b\in\ce_{\Gamma}(\bar{b})$ and $c\in\ce_{\Gamma}(\bar{c})$.
\item[($\gamma$)]
For all $x$, $b$, $c$, $d$: $a\rd\prdef{x}{b}{c}{d}$ implies both $b\in\ce_{\Gamma}(\bar{b})$ and $c\in\ce_{\Gamma}(\bar{c})$,
$a\rd\injl{b}{d}$ implies $b\in\ce_{\Gamma}(\bar{b})$, and $a\rd\injr{d}{c}$ implies $c\in\ce_{\Gamma}(\bar{c})$.
\item[($\delta$)]
If $\bar{b}=[\bar{b}_1,\bar{b}_2]$, for some $\bar{b}_1$ and $\bar{b}_2$, then, for all $b_1$, $b_2$:
$a\rd\case{b_1}{b_2}$ implies both $b_1\in\ce_{\Gamma}([\bar{b}_1,\bar{c}])$ and $b_2\in\ce_{\Gamma}([\bar{b}_2,\bar{c}])$.
\end{itemize}
\end{defi}
\begin{rem}[Motivation for the computability conditions]
The four conditions are motivated by the strong normalization condition for applications (Lemma~\ref{sn.cond}(1)) and negations (Lemma~\ref{sn.cond}(2)). 
\end{rem}
\begin{lem}[Computable expressions are well-defined for all norms]
For all $\Gamma$ and $\bar{a}$, the set $\ce_{\Gamma}(\bar{a})$ exists and is well-defined.
\end{lem}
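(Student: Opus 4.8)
The plan is to \emph{define} $\ce_{\Gamma}(\bar a)$ by recursion on the size of the norm $\bar a$, using the induction-on-the-size-of-norms principle introduced just above, the crucial point being that every occurrence of the operator $\ce$ on the right-hand side of Definition~\ref{ce} refers to a norm of size strictly smaller than $\bar a$. To cope with the extended contexts $\Gamma,x:b$ that appear in clause~$\alpha$, I would phrase the statement proved by norm-induction with the context universally quantified, i.e.\ take $P(\bar a)$ to be ``$\ce_{\Gamma}(\bar a)$ exists and is uniquely determined by the clauses of Definition~\ref{ce} for \emph{every} context $\Gamma$''; then the inductive hypothesis for a smaller norm is available under any context, in particular under $\Gamma,x:b$.

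For the inductive base $\bar a=\prim$ no computability conditions are imposed, so $\ce_{\Gamma}(\prim)=\{\,a\mid a\in\sn{},\ \Gamma\sngv a,\ \nrm{\Gamma}{a}=\prim\,\}$; the three conjuncts involve only previously-defined notions (and $\nrm{\Gamma}{a}$, when defined, is unique), so this is a well-defined set. For the inductive step $\bar a=[\bar b,\bar c]$ I would check that each of the clauses $\alpha$--$\delta$ mentions $\ce$ only at strictly smaller norms: clauses $\beta$ and $\gamma$ use $\ce_{\Gamma}(\bar b)$ and $\ce_{\Gamma}(\bar c)$, both smaller since every norm contains at least one $\prim$; clause $\alpha$ uses $\ce_{\Gamma,x:b}(\bar c)$, $\ce_{\Gamma}(\bar c)$ and $\ce_{\Gamma}(\bar b)$, again smaller; and clause $\delta$, relevant only when $\bar b=[\bar b_1,\bar b_2]$, uses $\ce_{\Gamma}([\bar b_1,\bar c])$ and $\ce_{\Gamma}([\bar b_2,\bar c])$, whose sizes are the size of $\bar b_i$ plus the size of $\bar c$, which is strictly less than the size of $\bar b_1$ plus the size of $\bar b_2$ plus the size of $\bar c$, i.e.\ the size of $[\bar b,\bar c]$. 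By the inductive hypothesis all of these sets are already defined, for every context, so the predicates occurring in $\alpha$--$\delta$ are meaningful and the collection of $a$ cut out by the clause for $[\bar b,\bar c]$ is a well-defined set, uniquely determined; it is moreover disjoint from $\ce_{\Gamma}(\bar a')$ for $\bar a'\neq\bar a$ because of the clause $\nrm{\Gamma}{a}=\bar a$ together with uniqueness of norms. This closes the induction.

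The only step that genuinely needs attention is clause~$\delta$: the norm $[\bar b_1,\bar c]$ occurring there is \emph{not} a syntactic subterm of $[\bar b,\bar c]$, so a structural recursion on the grammar of norms would not directly justify the definition — this is exactly why the recursion is organized along norm size rather than norm structure, and why the size-based induction principle was set up beforehand. The second point, that clause~$\alpha$ reaches into the extended contexts $\Gamma,x:b$, is purely routine once the ``for every context'' quantifier is threaded through the induction. Beyond these two observations I expect no real difficulty.
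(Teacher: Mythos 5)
Your proposal is correct and follows exactly the paper's route: the paper also proves this by norm-induction on $\bar{a}$, relying on the fact that every occurrence of $\ce$ in the clauses of the definition refers to a norm of strictly smaller size (which is precisely why the size-based induction principle, rather than structural recursion on norms, was introduced beforehand). Your elaboration of clause~$\delta$ and of the universal quantification over contexts simply makes explicit what the paper leaves as ``follows directly from the definition of computable expressions.''
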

\begin{proof}
Proof by norm-induction on $\bar{a}$ follows directly from the definition of computable expressions.
\end{proof}
\noindent
We begin with some basic properties of computable expressions.
\begin{lem}[Basic properties of computable expressions]%
\label{ce.basic}
For all $\Gamma,\Gamma_1,\Gamma_2,a,a_1,a_2,b,x$:
\begin{enumerate}
\item
$a\in\ce_{\Gamma}(\nrm{\Gamma}{a})$ and $a\rd b$ imply $b\in\ce_{\Gamma}(\nrm{\Gamma}{a})$.
\item
$\Gamma_1\sngv a_1$, $a\in\ce_{\Gamma_1,x:a_1,\Gamma_2}(\bar{a})$, and $a_1\rd a_2$ imply $a\in\ce_{\Gamma_1,x:a_2,\Gamma_2}(\bar{a})$.
\end{enumerate}
\end{lem}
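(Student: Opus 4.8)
The plan is to handle the two parts separately. Part~(1) is a short argument about transitivity of reduction; part~(2) needs an induction that I would carry out in a slightly strengthened, symmetric form, and that is where the only real difficulty lies.

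For part~(1), set $\bar a=\nrm{\Gamma}{a}$ and suppose $a\in\ce_{\Gamma}(\bar a)$ and $a\rd b$. First, $b\in\sn{}$, since any infinite reduction sequence out of $b$ would extend to one out of $a$ (immediate from Definition~\ref{sn}). Second, Lemma~\ref{nrm.rd} gives $\Gamma\sngv b$ with $\nrm{\Gamma}{b}=\nrm{\Gamma}{a}=\bar a$. It remains, when $\bar a=[\bar b,\bar c]$, to verify the computability conditions $\alpha$--$\delta$ of Definition~\ref{ce} for $b$. Each such condition is hypothetical on a reduction of $b$ (or of $\myneg b$) to an expression of a prescribed shape; since $a\rd b\rd e$ whenever $b\rd e$, and since $a\rd b$ also yields $\myneg a\rd\myneg b$ by the structural rule for negation, the corresponding reduction out of $a$ (resp.\ $\myneg a$) holds, and the required conclusion is already guaranteed by the condition known for $a$. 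Hence $b\in\ce_{\Gamma}(\bar a)$.

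For part~(2), write $\Gamma_a=(\Gamma_1,x:a_1,\Gamma_2)$ and $\Gamma_b=(\Gamma_1,x:a_2,\Gamma_2)$. From $\Gamma_1\sngv a_1$ and $a_1\rd a_2$, Lemma~\ref{nrm.rd} gives $\Gamma_1\sngv a_2$ together with $\nrm{\Gamma_1}{a_1}=\nrm{\Gamma_1}{a_2}$. I would therefore prove the stronger \emph{symmetric} statement: whenever $\Gamma_1\sngv a_1,a_2$ and $\nrm{\Gamma_1}{a_1}=\nrm{\Gamma_1}{a_2}$, one has $\ce_{\Gamma_a}(\bar a)=\ce_{\Gamma_b}(\bar a)$ for every $\Gamma_2$ and every norm $\bar a$; part~(2) is then the left-to-right inclusion. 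The proof is by norm-induction on $\bar a$. For the base $\bar a=\prim$ the two sets have the same defining clauses, once one notes by Lemma~\ref{nrm.eq} (applied with $a_1$ and $a_2$ interchanged as well) that $\Gamma_a\sngv a$ iff $\Gamma_b\sngv a$, with equal norms. For the step $\bar a=[\bar b,\bar c]$ the normability part again transfers by Lemma~\ref{nrm.eq}, and each of the conditions $\alpha$--$\delta$ refers to the sets $\ce$ only at norms strictly smaller than $\bar a$ --- note in condition $\delta$ that $[\bar b_1,\bar c]$ is smaller than $[[\bar b_1,\bar b_2],\bar c]$ --- possibly under a context extended by one fresh declaration $x':b'$, which is again of the admissible shape $(\Gamma_1,x:a_i,\Gamma_2')$ with $\Gamma_2'=(\Gamma_2,x':b')$. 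Since the reductions of $a$ and of $\myneg a$ do not depend on the context, the inductive hypothesis converts condition $\alpha$ (and likewise $\beta$, $\gamma$, $\delta$) for $a$ under $\Gamma_a$ into the very same condition under $\Gamma_b$, and conversely; hence $\ce_{\Gamma_a}(\bar a)=\ce_{\Gamma_b}(\bar a)$.

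The main obstacle is precisely what forces the symmetric strengthening: the clause $c\gsub{x'}{d}\in\ce_{\Gamma}(\bar c)$ ``for any $d\in\ce_{\Gamma}(\bar b)$'' in condition $\alpha$ quantifies over $d\in\ce_{\Gamma_b}(\bar b)$ when one argues about $\Gamma_b$, and reusing the corresponding fact for $\Gamma_a$ requires knowing that such a $d$ also lies in $\ce_{\Gamma_a}(\bar b)$ --- a transfer of membership in the direction opposite to the one part~(2) asserts. Establishing $\ce_{\Gamma_a}(\bar b)=\ce_{\Gamma_b}(\bar b)$ at the smaller norm $\bar b$ supplies both directions at once and removes the difficulty; everything else is routine case analysis over the shapes permitted in $\alpha$--$\delta$, together with the basic properties of norming from Lemmas~\ref{nrm.eq} and~\ref{nrm.rd}.
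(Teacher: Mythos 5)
Your proof is correct and follows essentially the same route as the paper: part (1) by transitivity of reduction (plus the structural rule giving $\myneg a\rd\myneg b$) together with Lemma~\ref{nrm.rd}, and part (2) by norm-induction on $\bar a$ using only Lemmas~\ref{nrm.eq} and~\ref{nrm.rd}, which is exactly what the paper's (very terse) proof indicates. Your observation that the contravariant quantification over $d$ in condition $\alpha$ forces the symmetric strengthening $\ce_{\Gamma_a}(\bar a)=\ce_{\Gamma_b}(\bar a)$ at smaller norms is a genuine and necessary detail that the paper's one-line argument for (2) leaves implicit.
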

\begin{proof}
For (1), we assume that $a\in\ce_{\Gamma}(\bar{a})$ where $\nrm{\Gamma}{a}=\bar{a}$.
Obviously $a\in\sn{}$ and therefore also $b\in\sn{}$ and by Lemma~\ref{nrm.rd} we obtain $\nrm{\Gamma}{b}=\nrm{\Gamma}{a}$.
We have to show that $b\in\ce_{\Gamma}(\bar{a})$:
It is easy to prove the computability conditions, since from $b\rd c$ we can always infer $a\rd c$ and hence use the corresponding condition from the assumption $a\in\ce_{\Gamma}(\bar{a})$.

For (2) the proof is by norm-induction on $\bar{a}$ and only requires properties~\ref{nrm.eq} and~\ref{nrm.rd} of norming.
%
\end{proof}
\noindent
The closure of computable expressions against negation is shown first due to its necessity in other monotonicity arguments.
\begin{lem}[Computable expressions are closed against negation]%
\label{ce.mon.neg}
For all $\Gamma, a$:
$a\in\ce_{\Gamma}(\nrm{\Gamma}{a})$ implies $\myneg a\in\ce_{\Gamma}(\nrm{\Gamma}{a})$.
\end{lem}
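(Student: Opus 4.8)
The plan is to prove the claim, \emph{uniformly in the context}, by \emph{norm-induction} on $\bar a=\nrm{\Gamma}{a}$. The key structural fact making this work is that $\nrm{\Gamma}{\myneg a}=\nrm{\Gamma}{a}=\bar a$, so $\myneg a$ sits in exactly the same norm class as $a$ (in particular $\Gamma\sngv\myneg a$ holds automatically); thus from $a\in\ce_{\Gamma}(\bar a)$, which supplies $a\in\sn{}$ together with the computability conditions $\alpha$--$\delta$ for $a$, I only need to re-establish $\myneg a\in\sn{}$ and, when $\bar a$ is a product norm, the four computability conditions for $\myneg a$.

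First I would dispatch $\myneg a\in\sn{}$ using Lemma~\ref{sn.cond}(2). Its hypothesis $(C_1)$ asks that $a\rd\prsumop{b}{c}$ imply $\myneg b,\myneg c\in\sn{}$; but such a reduction forces $\bar a=[\nrm{\Gamma}{b},\nrm{\Gamma}{c}]$ by Lemma~\ref{nrm.rd}, so condition $\beta$ for $a$ gives $b\in\ce_{\Gamma}(\nrm{\Gamma}{b})$ and $c\in\ce_{\Gamma}(\nrm{\Gamma}{c})$, and the inductive hypothesis at the strictly smaller norms $\nrm{\Gamma}{b},\nrm{\Gamma}{c}$ yields $\myneg b,\myneg c$ computable, hence in $\sn{}$. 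Hypothesis $(C_2)$, that $a\rd\binbop{x}{b}{c}$ imply $\myneg c\in\sn{}$, is handled identically from condition $\alpha$ for $a$ and the inductive hypothesis. In the base case $\bar a=\prim$ both hypotheses are vacuous, since norms are preserved under reduction and so $a$ cannot reduce to a product, sum, or abstraction, and there are no computability conditions to check.

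For the inductive step, with $\bar a=[\bar b_0,\bar c_0]$, I would verify conditions $\alpha$--$\delta$ for $\myneg a$, the central observation being that each such condition only refers to reductions of $\myneg a$ or of $\myneg\myneg a$, all of which are governed by Lemma~\ref{rd.decomp.neg}. The ``$\myneg a\rd\dots$'' alternatives inside conditions $\alpha$ and $\beta$ for $\myneg a$ are \emph{verbatim} among the hypotheses of conditions $\alpha$ and $\beta$ for $a$, hence immediate. Conditions $\gamma$ and $\delta$ for $\myneg a$ concern $\myneg a\rd\prdef{x}{b}{c}{d}$, $\myneg a\rd\injl{b}{d}$, $\myneg a\rd\injr{d}{c}$, $\myneg a\rd\case{b_1}{b_2}$, and parts (5)--(8) of Lemma~\ref{rd.decomp.neg} turn each of these into the corresponding reduction of $a$ itself, so conditions $\gamma$ and $\delta$ for $a$ close them with no recursion. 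The only genuinely new cases are the ``$\myneg\myneg a\rd\dots$'' alternatives of $\alpha$ and $\beta$; I would treat, say, $\myneg\myneg a\rd[x:b]c$ as follows. Lemma~\ref{rd.decomp.neg}(1), applied with $\myneg a$ in the role of the base expression, yields $\myneg a\rd[x!c_1]c_2$ with $c_1\rd b$ and $\myneg c_2\rd c$. Now condition $\alpha$ for $a$ (its ``$\myneg a\rd\dots$'' alternative) gives $c_2\in\ce_{\Gamma,x:c_1}(\bar c_0)$ and $c_2\gsub{x}{f}\in\ce_{\Gamma}(\bar c_0)$ for every $f\in\ce_{\Gamma}(\bar b_0)$, the norm bookkeeping $\nrm{\Gamma}{c_1}=\bar b_0$ and $\nrm{\Gamma,x:c_1}{c_2}=\bar c_0$ being routine from Lemmas~\ref{nrm.rd} and~\ref{nrm.eq}. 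The inductive hypothesis applied to $c_2$ and to $c_2\gsub{x}{f}$, both of norm $\bar c_0$ and hence strictly smaller than $\bar a$, makes $\myneg c_2$ and $\myneg(c_2\gsub{x}{f})=(\myneg c_2)\gsub{x}{f}$ computable; then Lemma~\ref{ce.basic}(1), applied along $\myneg c_2\rd c$ and along $(\myneg c_2)\gsub{x}{f}\rd c\gsub{x}{f}$ (the latter via Lemma~\ref{sub.rd}), together with Lemma~\ref{ce.basic}(2), used to swap the context declaration $x:c_1$ for $x:b$ along $c_1\rd b$, delivers $c\in\ce_{\Gamma,x:b}(\bar c_0)$ and $c\gsub{x}{f}\in\ce_{\Gamma}(\bar c_0)$, which is exactly condition $\alpha$ for $\myneg a$ here. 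The remaining alternatives, namely $\myneg\myneg a$ reducing to $[x!b]c$, to $[b,c]$, or to $[b+c]$, are handled the same way using parts (3), (2), and (4) of Lemma~\ref{rd.decomp.neg}.

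I expect the only real obstacle to be precisely those $\myneg\myneg a$ cases. The tempting shortcut, reducing $\myneg\myneg a\srd a$ and invoking confluence (Theorem~\ref{rd.confl}), only produces $a\rd\binbop{x}{b'}{c'}$ with $c\rd c'$ and $b\rd b'$, and computability does not propagate backwards along reduction, so one cannot recover $c\in\ce_{\Gamma,x:b}(\bar c_0)$ that way. The resolution is to avoid confluence entirely and instead peel off a \emph{single} negation with Lemma~\ref{rd.decomp.neg}, which lands one in the ``$\myneg a\rd\dots$'' branch of $a$'s own computability conditions; from there only the inductive hypothesis at strictly-smaller-norm subterms and the closure properties of $\ce$ under reduction and under context reduction (Lemma~\ref{ce.basic}) are needed.
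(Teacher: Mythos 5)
Your proof is correct and follows essentially the same route as the paper: norm-induction on $\bar a$, Lemma~\ref{sn.cond}(2) for strong normalization of $\myneg a$, and Lemma~\ref{rd.decomp.neg} to reduce the computability conditions of $\myneg a$ to those of $a$. The only (valid) streamlining is that you discharge the ``$\myneg a\rd\ldots$'' alternatives of $\alpha$ and $\beta$ by noting they are verbatim clauses of $a$'s own conditions, and handle the ``$\myneg\myneg a\rd\ldots$'' alternatives by peeling a single negation and using $a$'s ``$\myneg a\rd\ldots$'' clause with one application of the inductive hypothesis, where the paper peels two negations and applies the hypothesis twice.
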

\begin{proof}
Assume that $a\in\ce_{\Gamma}(\nrm{\Gamma}{a})$.
Let $\bar{a}=\nrm{\Gamma}{a}$.
By definition of norming obviously $\nrm{\Gamma}{\myneg a}=\bar{a}$.
We show that $a\in\ce_{\Gamma}(\bar{a})$ implies $\myneg a\in\ce_{\Gamma}(\bar{a})$ by norm-induction on $\bar{a}$.

{\bf Inductive base:} We have $\bar{a}=\prim$, therefore the computability conditions become trivial and it remains to show
that $\myneg a\in\sn{}$.
Since $a\in\sn{}$, we can apply Lemma~\ref{sn.cond}(2) whose conditions $(C_1)$ and $(C_2)$ become trivial since by definition of norming and Lemma~\ref{nrm.rd} they imply that $\bar{a}\neq\prim$. 

{\bf Inductive step:} Let $\bar{a}=[\bar{b},\bar{c}]$ for some $\bar{b}$ and $\bar{c}$.
First, we show that $\myneg a\in\sn{}$.
Since $a\in\sn{}$, according to Lemma~\ref{sn.cond}(2), for any $x,b,c$, we need to show conditions $(C_1)$ and $(C_2)$: 
\begin{itemize}
\item $(C_1)$:
Let $a\rd\prsumop{b}{c}$.
Since $a\in\ce_{\Gamma}(\bar{a})$, by computability condition $\beta$ we know that $b\in\ce_{\Gamma}(\bar{b})$ and $c\in\ce_{\Gamma}(\bar{c})$. By inductive hypothesis $\myneg b\in\ce_{\Gamma}(\bar{b})$ and $\myneg c\in\ce_{\Gamma}(\bar{c})$ and therefore obviously $\myneg b,\myneg c\in\sn{}$.
\item $(C_2)$:
Let $a\rd\binbop{x}{b}{c}$.
Since $a\in\ce_{\Gamma}(\bar{a})$, by computability condition $\alpha$  we know that $c\in\ce_{\Gamma,x:b}(\bar{c})$.
By inductive hypothesis $\myneg c\in\ce_{\Gamma,x:b}(\bar{c})$ and therefore obviously $\myneg c\in\sn{}$.
\end{itemize}
Therefore we know that $\myneg a\in\sn{}$.
It remains to show the computability conditions for $\myneg a$:
\begin{itemize}[align=left]
\item[($\alpha$)]
We have to consider two cases:
\begin{enumerate}
\item
$\myneg a\rd[x:a_1]a_2$ or $\myneg a\rd[x!a_1]a_2$ for some $x$, $a_1$, and $a_2$.
By Lemma~\ref{rd.decomp.neg}(1,3) we know that $a\rd\binbop{x}{a_1'}{a_2'}$ for some $a_1'$ and $a_2'$ where $a_1'\rd a_1$ and $\myneg a_2'\rd a_2$. 
The first part of $\alpha$ can be argued as follows:
\begin{eqnarray*}
&&a\in\ce_{\Gamma}([\bar{b},\bar{c}])\\
&\Rightarrow&\quad\text{(by $\alpha$, since $a\rd\binbop{x}{a_1'}{a_2'}$)}\\
&&a_2'\in\ce_{\Gamma,x:a_1'}(\bar{c})\\
&\Rightarrow&\quad\text{(inductive hypothesis)}\\
&&\myneg a_2'\in\ce_{\Gamma,x:a_1'}(\bar{c})\\
&\Rightarrow&\quad\text{(by Lemma~\ref{ce.basic}(1,2), since $a_1'\rd a_1$ and $\myneg a_2'\rd a_2$)}\\ 
&&a_2\in\ce_{\Gamma,x:a_1}(\bar{c})
\end{eqnarray*}
For the second clause, for any $d\in\ce_{\Gamma}(\bar{b})$, we can argue as follows:
\begin{eqnarray*}
&&a\in\ce_{\Gamma}([\bar{b},\bar{c}])\\
&\Rightarrow&\quad\text{(by $\alpha$, since $a\rd\binbop{x}{a_1'}{a_2'}$)}\\
&&a_2'\gsub{x}{d}\in\ce_{\Gamma}(\bar{c})\\
&\Rightarrow&\quad\text{(inductive hypothesis)}\\
&&\myneg(a_2'\gsub{x}{d})\in\ce_{\Gamma}(\bar{c})\\
&\Rightarrow&\quad\text{(definition of substitution)}\\
&&(\myneg a_2')\gsub{x}{d}\in\ce_{\Gamma}(\bar{c})\\
&\Rightarrow&\quad\text{(by Lemmas~\ref{ce.basic}(1) and~\ref{sub.rd}, since $\myneg a_2'\rd a_2$)}\\ 
&&a_2\gsub{x}{d}\in\ce_{\Gamma}(\bar{c})
\end{eqnarray*}
\item
$\myneg\myneg a\rd[x:a_1]a_2$ or $\myneg\myneg a\rd[x!a_1]a_2$ for some $x$, $a_1$, and $a_2$.
By Lemma~\ref{rd.decomp.neg}(1,3) we know that $\myneg a\rd\binbop{x}{a_1'}{a_2'}$ for some $a_1'$ and $a_2'$ where  $a_1'\rd a_1$ and $\myneg a_2'\rd a_2$. 
Applying Lemma~\ref{rd.decomp.neg}(1,3) again we know that $a\rd\binbopd{x}{a_1''}{a_2''}$ for some $a_1''$ and $a_2''$ where $a_1''\rd a_1'$ and $\myneg a_2''\rd a_2'$. This means that $a_1''\rd a_1$ and $\myneg\myneg a_2''\rd a_2$. 
This case can then be argued similarly to the first case just applying the inductive hypothesis twice.
\end{enumerate}
\item[($\beta$)]
Similarly to $\alpha$, we have to consider two cases:
\begin{enumerate}
\item
$\myneg a\rd[a_1,a_2]$ or $\myneg a\rd[a_1+a_2]$ for some $a_1$ and $a_2$.
By Lemma~\ref{rd.decomp.neg}(2,4) we know that $a\rd\prsumop{a_1'}{a_2'}$ for some $a_1'$ and $a_2'$ where $\myneg a_1'\rd a_1$ and $\myneg a_2'\rd a_2$. 

Since $a\in\ce_{\Gamma}(\bar{a})$ and $a\rd\prsumop{a_1'}{a_2'}$ by computability condition $\beta$ we obtain $a_1'\in\ce_{\Gamma}(\bar{b})$ and $a_2'\in\ce_{\Gamma}(\bar{c})$.
By inductive hypotheses we know that also $\myneg a_1'\in\ce_{\Gamma}(\bar{b})$ and $\myneg a_2'\in\ce_{\Gamma}(\bar{c})$.
By Lemma~\ref{ce.basic}(1) we get $a_1\in\ce_{\Gamma}(\bar{b})$ and $a_2\in\ce_{\Gamma}(\bar{c})$. 
\item
$\myneg\myneg a\rd[a_1,a_2]$ or $\myneg\myneg a\rd[a_1+a_2]$ for some $a_1$ and $a_2$.
By Lemma~\ref{rd.decomp.neg}(2,4) we know that $\myneg a\rd\prsumop{a_1'}{a_2'}$ for some $a_1'$ and $a_2'$ where $\myneg a_1'\rd a_1$ and $\myneg a_2'\rd a_2$. 
Applying Lemma~\ref{rd.decomp.neg}(2,4) again we know that $a\rd\prsumopd{a_1''}{a_2''}$ for some $a_1''$ and $a_2''$ where $\myneg a_1''\rd a_1'$ and $\myneg a_2''\rd a_2'$. This means that $\myneg\myneg a_1''\rd a_1$ and $\myneg\myneg a_2''\rd a_2$. 

Since $a\in\ce_{\Gamma}(\bar{a})$ and $a\rd\prsumopd{a_1''}{a_2''}$ by computability condition $\beta$ we obtain $a_1''\in\ce_{\Gamma}(\bar{b})$ anf $a_2''\in\ce_{\Gamma}(\bar{c})$.
By inductive hypotheses (applied twice)  we know that also $\myneg\myneg a_1'\in\ce_{\Gamma}(\bar{b})$ and $\myneg\myneg a_2'\in\ce_{\Gamma}(\bar{c})$.
By Lemma~\ref{ce.basic}(1) we get $a_1\in\ce_{\Gamma}(\bar{b})$ and $a_2\in\ce_{\Gamma}(\bar{c})$. 
\end{enumerate}
\item[($\gamma$)]
We have three cases:
First, if  $\myneg a\rd\prdef{x}{a_1}{a_2}{a_3}$ for some $x$, $a_1$, $a_2$, $a_3$, then
by Lemma~\ref{rd.decomp.neg}(5) we know that $a\rd\prdef{x}{a_1}{a_2}{a_3}$. 
Since $a\in\ce_{\Gamma}([\bar{b},\bar{c}])$, by computability condition $\gamma$, we know that $a_1\in\ce_{\Gamma}(\bar{b})$, $a_2\in\ce_{\Gamma}(\bar{c})$.
Second, if $\myneg a\rd\injl{a_1}{a_2}$for some $a_1$ and $a_2$, then
by Lemma~\ref{rd.decomp.neg}(7) we know that $a\rd\injl{a_1}{a_2}$. 
Since $a\in\ce_{\Gamma}([\bar{b},\bar{c}])$, by computability condition $\gamma$, we know that $a_1\in\ce_{\Gamma}(\bar{b})$.
The third case $\myneg a\rd\injr{a_1}{a_2}$ is shown in a similar way.
\item[($\delta$)]
Let $\bar{b}=[\bar{b}_1,\bar{b}_2]$ for some $\bar{b}_1$ and $\bar{b}_2$.
Let $\myneg a\rd\case{a_1}{a_2}$ for some $a_1$ and $a_2$.
By Lemma~\ref{rd.decomp.neg}(6) we know that $a\rd\case{a_1}{a_2}$. 
By computability condition $\delta$ we know that $a_1\in\ce_{\Gamma}([\bar{b}_1,\bar{c}])$ and $a_2\in\ce_{\Gamma}([\bar{b}_2,\bar{c}])$.
\qedhere
\end{itemize}
\end{proof}
\noindent
Closure of computability against application has a proof that is not very difficult but somewhat lengthy due to a repetition of similar arguments for the different computability conditions.
\begin{lem}[Closure of computable expressions against application]%
\label{ce.mon.appl}
For all $\Gamma$, $a$, $b$:
$\Gamma\sngv (a\,b)$, $a\in\ce_{\Gamma}(\nrm{\Gamma}{a})$, and $b\in\ce_{\Gamma}(\nrm{\Gamma}{b})$ implies $\nrm{\Gamma}{a}=[\nrm{\Gamma}{b},\bar{c}]$ and $(a\,b)\in\ce_{\Gamma}(\bar{c})$ for some $\bar{c}$.
\end{lem}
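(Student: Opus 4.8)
The plan is to prove the statement by induction on the size of the norm $\nrm{\Gamma}{a}$. Since $\Gamma\sngv(a\,b)$, the defining clause of the norming function for applications forces $\nrm{\Gamma}{a}=[\nrm{\Gamma}{b},\bar{c}]$ for some $\bar{c}$, and then $\nrm{\Gamma}{(a\,b)}=\bar{c}$; this already settles the first claim, so it remains to show $(a\,b)\in\ce_{\Gamma}(\bar{c})$. Write $\bar{b}_0=\nrm{\Gamma}{b}$, so $\nrm{\Gamma}{a}=[\bar{b}_0,\bar{c}]$. Unwinding Definition~\ref{ce}, we must exhibit $(a\,b)\in\sn{}$ (the norming condition $\nrm{\Gamma}{(a\,b)}=\bar{c}$ being already in hand) and, in case $\bar{c}=[\bar{d},\bar{e}]$, the four computability conditions for $(a\,b)$.

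For $(a\,b)\in\sn{}$ I would invoke Lemma~\ref{sn.cond}(1): $a,b\in\sn{}$ because both are computable, and its three side conditions are discharged as follows. If $a\rd\binbop{x}{c_1}{c_2}$, then Lemma~\ref{nrm.rd} together with the norming clauses gives $\nrm{\Gamma}{c_1}=\bar{b}_0$ and $\nrm{\Gamma,x:c_1}{c_2}=\bar{c}$, so computability condition $\alpha$ for $a$, applied with the computable argument $b\in\ce_{\Gamma}(\bar{b}_0)$, yields $c_2\gsub{x}{b}\in\ce_{\Gamma}(\bar{c})\subseteq\sn{}$. If $a\rd\case{c_1}{c_2}$ and $b\rd\injl{d_1}{d_2}$, then tracking norms through Lemma~\ref{nrm.rd} and the norming clauses for case and injection shows $\bar{b}_0=[\nrm{\Gamma}{d_1},\nrm{\Gamma}{d_2}]$ and $\nrm{\Gamma}{c_1}=[\nrm{\Gamma}{d_1},\bar{c}]$; condition $\delta$ for $a$ gives $c_1\in\ce_{\Gamma}([\nrm{\Gamma}{d_1},\bar{c}])$, condition $\gamma$ for $b$ gives $d_1\in\ce_{\Gamma}(\nrm{\Gamma}{d_1})$, and since $\nrm{\Gamma}{c_1}$ has size strictly smaller than $\nrm{\Gamma}{a}$, the induction hypothesis applied to $(c_1\,d_1)$ gives $(c_1\,d_1)\in\ce_{\Gamma}(\bar{c})\subseteq\sn{}$; the $\injr$ case is symmetric.

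The computability conditions for $(a\,b)$ all follow one pattern: whenever $(a\,b)$ (or, in conditions $\alpha$ and $\beta$, $\myneg(a\,b)$) reduces to an expression $e$ carrying a head constructor, I first prove $e\in\ce_{\Gamma}(\bar{c})$ and then read off the required sub-computabilities from $e$'s own conditions $\alpha$, $\beta$, $\gamma$, $\delta$ applied to the reflexive reduction $e\rd e$, after matching the norm components of $e$ against $\bar{d}$ and $\bar{e}$. To prove $e\in\ce_{\Gamma}(\bar{c})$ I decompose $(a\,b)\rd e$ by Lemma~\ref{rd.decomp}(5,6): either $a\rd\binbopd{y}{e_1}{e_2}$, $b\rd e_3$ and $e_2\gsub{y}{e_3}\rd e$, and then condition $\alpha$ for $a$ (with argument $e_3\in\ce_{\Gamma}(\bar{b}_0)$, obtained from $b$ via Lemma~\ref{ce.basic}(1)) followed by Lemma~\ref{ce.basic}(1) gives $e\in\ce_{\Gamma}(\bar{c})$; or $a\rd\case{e_1}{e_2}$, $b$ reduces to an injection, and $e$ is a reduct of an application of $e_1$ or $e_2$ to the injected value, and then the same norm bookkeeping as above, the induction hypothesis on that strictly smaller application, and Lemma~\ref{ce.basic}(1) give $e\in\ce_{\Gamma}(\bar{c})$. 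For the negated clauses I use Lemma~\ref{rd.decomp.neg} to pull $\myneg(a\,b)\rd e$ back to a reduction of $(a\,b)$ onto the De~Morgan dual $e'$ of $e$, derive $e'\in\ce_{\Gamma}(\bar{c})$ as above, and then Lemma~\ref{ce.mon.neg}, one reduction step (an instance of $\nu_2,\dots,\nu_5$), and Lemma~\ref{ce.basic}(1) bring us back to $e\in\ce_{\Gamma}(\bar{c})$. Conditions $\gamma$ (for \emph{prdef} and injections) and $\delta$ (for \emph{case}) carry no negated clause and use only the direct descent.

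The main obstacle is the apparent circularity in the second of the three side conditions of Lemma~\ref{sn.cond}(1), and likewise in the condition-$\alpha$/$\delta$ reductions that route through a case/injection redex: establishing $(c_1\,d_1)\in\sn{}$ there is essentially an instance of the lemma being proved. This is precisely why the induction is on the size of $\nrm{\Gamma}{a}$: in those cases the first operand satisfies $\nrm{\Gamma}{c_1}=[\nrm{\Gamma}{d_1},\bar{c}]$, whose size is strictly below that of $\nrm{\Gamma}{a}=[[\nrm{\Gamma}{d_1},\nrm{\Gamma}{d_2}],\bar{c}]$ because every norm has size at least one. What remains is careful but routine bookkeeping, matching the norm components produced by Lemmas~\ref{nrm.rd},~\ref{rd.decomp} and~\ref{rd.decomp.neg} against the shapes demanded by Definition~\ref{ce}.
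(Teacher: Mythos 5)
Your proposal is correct and follows essentially the same route as the paper's proof: norm-induction on $\nrm{\Gamma}{a}$, strong normalization of $(a\,b)$ via Lemma~\ref{sn.cond}(1) with the case/injection clause discharged by the induction hypothesis on the strictly smaller norm $[\nrm{\Gamma}{d_1},\bar{c}]$, and the computability conditions established by decomposing reductions of $(a\,b)$ and $\myneg(a\,b)$ through Lemmas~\ref{rd.decomp}(5,6) and~\ref{rd.decomp.neg} and transferring computability with Lemmas~\ref{ce.basic}(1) and~\ref{ce.mon.neg}. You correctly identify the apparent circularity and why the norm-size induction breaks it, which is exactly the paper's key point.
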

\begin{proof}
Assume that $\Gamma\sngv (a\,b)$, $a\in\ce_{\Gamma}(\nrm{\Gamma}{a})$, and $b\in\ce_{\Gamma}(\nrm{\Gamma}{b})$.
Let $\bar{a}=\nrm{\Gamma}{a}$ and $\bar{b}=\nrm{\Gamma}{b}$.
$\Gamma\sngv (a\,b)$ implies that $\bar{a}=[\bar{b},\bar{c}]$ for some $\bar{c}$.
By norm-induction on $\bar{a}$ we will show that $a\in\ce_{\Gamma}(\bar{a})$ and $b\in\ce_{\Gamma}(\bar{b})$ implies $(a\,b)\in\ce_{\Gamma}(\bar{c})$.

The inductive base is trivial since $\bar{a}\neq\prim$.
For the inductive step we first need to show that $(a\,b)\in\sn{}$.
By Lemma~\ref{sn.cond}(1) we have to show conditions $(C_1)$, $(C_2)$, and $(C_3)$. For any $x$, $b_1$, $c_1$, $c_2$, $d_1$, $d_2$: 
\begin{itemize}
\item $(C_1)$:
Let $a\rd\binbop{x}{b_1}{c_1}$.
Since $a\in\ce_{\Gamma}([\bar{b},\bar{c}])$, by computability condition $\alpha$, for any $d\in\ce_{\Gamma}(\bar{b})$ we know that $c_1\gsub{x}{d}\in\ce_{\Gamma}(\bar{c})$.
Hence also $c_1\gsub{x}{b}\in\ce_{\Gamma}(\bar{c})$ and therefore obviously $c_1\gsub{x}{b}\in\sn{}$ hence condition $(C_1)$ is satisfied.
\item $(C_2)$:
Let $a\rd\case{c_1}{c_2}$ and $b\rd\injl{d_1}{d_2}$.
From $b\rd\injl{d_1}{d_2}$, by Lemma~\ref{nrm.rd} and by definition of norming we know that $\bar{b}=[\bar{d}_1,\bar{d}_2]$, for some $\bar{d}_1$, $\bar{d}_2$, and hence $\bar{a}=[[\bar{d}_1,\bar{d}_2],\bar{c}]$.
Since $b\in\ce_{\Gamma}([\bar{d}_1,\bar{d}_2])$ by computability condition $\gamma$ we know that $d_1\in\ce_{\Gamma}(\bar{d}_1)$.
Since $a\in\ce_{\Gamma}([[\bar{d}_1,\bar{d}_2],\bar{c}])$ by computability condition $\delta$ we know that
$c_1\in\ce_{\Gamma}([\bar{d}_1,\bar{c}])$, $c_2\in\ce_{\Gamma}([\bar{d}_2,\bar{c}])$.

By inductive hypothesis (the size of $[\bar{d}_1,\bar{c}]$ is strictly smaller than that of $\bar{a}$), we know that $(c_1\,d_1)\in\ce_{\Gamma}(\bar{c})$.
By definition of computability therefore $(c_1\,d_1)\in\sn{}$.
\item $(C_3)$: Proof is similar to $(C_2)$.
\end{itemize}
Therefore by Lemma~\ref{sn.cond}(1) we know that $(a\,b)\in\sn{}$ 
It remains to show the computability conditions for $(a\,b)$.
Let $\bar{c}=[\bar{d},\bar{e}]$ for some $\bar{d}$ and $\bar{e}$:
\begin{itemize}[align=left]
\item[($\alpha$)]
If $(a\,b)\rd\binbop{x}{a_1}{a_2}$ or $\myneg(a\,b)\rd\binbop{x}{a_1}{a_2}$, for some $a_1$ and $a_2$ then, since $\Gamma\sngv(a\,b)$, by Lemma~\ref{nrm.rd} we have
\[\nrm{\Gamma}{\binbop{x}{a_1}{a_2}}=[\nrm{\Gamma}{a_1},\nrm{\Gamma,x:a_1}{a_2}]=[\bar{d},\bar{e}]
\]
Therefore $\nrm{\Gamma}{a_1}=\bar{d}$ and $\nrm{\Gamma,x:a_1}{a_2}=\bar{e}$.
We have to show that $a_2\in\ce_{\Gamma,x:a_1}(\bar{e})$ and that $a_2\gsub{x}{d}\in\ce_{\Gamma}(\bar{e})$ for any $d\in\ce_{\Gamma}(\bar{d})$.
We have to distinguish two cases:
\begin{enumerate}
\item
If $(a\,b)\rd\binbop{x}{a_1}{a_2}$ then by Lemma~\ref{rd.decomp}(5) we know that there are two subcases 
\begin{enumerate}
\item
$a\rd[y:a_3]a_4$, $b\rd b'$, and $a_4\gsub{y}{b'}\rd\binbop{x}{a_1}{a_2}$ for some $y$, $a_3$, $a_4$, and $b'$.
We can argue as follows:
\begin{eqnarray*}
&&a\in\ce_{\Gamma}([\bar{b},\bar{c}])\\
&\Rightarrow&\text{(by $\alpha$, since by Lemma~\ref{ce.basic}(1) we know that $b'\in\ce_{\Gamma}(\bar{b})$)}\\ 
&&a_4\gsub{x}{b'}\in\ce_{\Gamma}(\bar{c})\\
&\Rightarrow&\text{(by Lemma~\ref{ce.basic}(1), since $a_4\gsub{y}{b'}\rd\binbop{x}{a_1}{a_2}$)}\\ 
&&\binbop{x}{a_1}{a_2}\in\ce_{\Gamma}(\bar{c})
\end{eqnarray*}
\item
$a\rd\case{c_1}{c_2}$ and either $b\rd\injl{b_1}{b_2}$ and $(c_1\,b_1)\rd\binbop{x}{a_1}{a_2}$ or $b\rd\injr{b_1}{b_2}$ and $(c_2\,b_2)\rd\binbop{x}{a_1}{a_2}$ for some $c_1$, $c_2$, $b_1$, and $b_2$.
This means that $\bar{b}=[\bar{b}_1,\bar{b}_2]$ for some $\bar{b}_1$ and $\bar{b}_2$ where $\nrm{\Gamma}{c_1}=[\bar{b}_1,\bar{c}]$ and $\nrm{\Gamma}{c_2}=[\bar{b}_2,\bar{c}]$.
Since $\Gamma\sngv (a\:b)$ we also know that $\nrm{\Gamma}{b_1}=\bar{b}_1$ and  $\nrm{\Gamma}{b_2}=\bar{b}_2$.
We will show the first case $b\rd\injl{b_1}{b_2}$, the proof of the second case is similar.

From $b\rd\injl{b_1}{b_2}$, by computability condition $\gamma$ we know that $b_1\in\ce_{\Gamma}(\bar{b}_1)$.
From $a\rd\case{c_1}{c_2}$, by computability condition $\delta$ we know that $c_1\in\ce_{\Gamma}([\bar{b}_1,\bar{c}])$ and $c_2\in\ce_{\Gamma}([\bar{b}_2,\bar{c}])$.
Since the size of $[\bar{b}_1,\bar{c}]$ is strictly smaller than the size of $\bar{a}$ we can apply the inductive hypothesis to obtain
$(c_1\,b_1)\in\ce_{\Gamma}(\bar{c})$.
Hence, since $(c_1\,b_1)\rd\binbop{x}{a_1}{a_2}$, by Lemma~\ref{ce.basic}(1) we have $\binbop{x}{a_1}{a_2}\in\ce_{\Gamma}(\bar{c})$. 
\end{enumerate}
Therefore in both cases we have shown $\binbop{x}{a_1}{a_2}\in\ce_{\Gamma}(\bar{c})$.
From computability condition $\alpha$ we obtain $a_2\in\ce_{\Gamma,x:a_1}(\bar{e})$ and $a_2\gsub{x}{d}\in\ce_{\Gamma}(\bar{e})$.
\item If $\myneg(a\,b)\rd\binbop{x}{a_1}{a_2}$, then we show the case $\binbop{x}{a_1}{a_2}=[x:a_1]a_2$ (the case $\binbop{x}{a_1}{a_2}=[x!a_1]a_2$ runs analogously):
By Lemma~\ref{rd.decomp.neg}(1) we know that $(a\,b)\rd[x!a_1']a_2'$ for some $a_1'$ and $a_2'$ where $a_1'\rd a_1$ and $\myneg a_2'\rd a_2$. 
By reasoning similarly to the first case we can show that $[x!a_1']a_2'\in\ce_{\Gamma}(\bar{c})$.

The first part of $\alpha$ can be seen as follows:
Since $[x!a_1']a_2'\in\ce_{\Gamma}(\bar{c})$ we know that $a_2'\in\ce_{\Gamma,x:a_1}(\bar{e})$.
By Lemma~\ref{ce.mon.neg}, this implies $\myneg a_2'\in\ce_{\Gamma,x:a_1}(\bar{e})$.
Therefore by Lemma~\ref{ce.basic}(1), since we have $\myneg a_2'\rd a_2$, we know that $a_2\in\ce_{\Gamma,x:a_1}(\bar{e})$. 

Let $d\in\ce_{\Gamma}(\bar{d})$. We can show the second part of $\alpha$ as follows:
\begin{eqnarray*}
&&[x!a_1']a_2'\in\ce_{\Gamma}(\bar{c})\\
&\Rightarrow&\text{(by $\alpha$, since $d\in\ce_{\Gamma}(\bar{d})$)}\\
&&a_2'\gsub{x}{d}\in\ce_{\Gamma}(\bar{e})\\
&\Rightarrow&\text{(by Lemma~\ref{ce.mon.neg})}\\
&&\myneg(a_2'\gsub{x}{d})\in\ce_{\Gamma,x:a_1}(\bar{e})\\
&\Rightarrow&\text{(definition of substitution)}\\
&&(\myneg a_2')\gsub{x}{d}\in\ce_{\Gamma,x:a_1}(\bar{e})\\
&\Rightarrow&\text{(by Lemma~\ref{ce.basic}(1) and~\ref{sub.rd}, since $\myneg a_2'\rd a_2$)}\\ 
&&a_2\gsub{x}{d}\in\ce_{\Gamma,x:a_1}(\bar{e})
\end{eqnarray*}
\end{enumerate}
\item[($\beta$)]
We have to distinguish two cases:
\begin{enumerate}
\item
If $(a\,b)\rd\prsumop{a_1}{a_2}$ for some $a_1$ and $a_2$, then, since $\Gamma\sngv (a\,b)$, by Lemma~\ref{nrm.rd} we have
\[\nrm{\Gamma}{(a\,b)}=[\nrm{\Gamma}{a_1},\nrm{\Gamma}{a_2}]=[\bar{d},\bar{e}]
\]
Similarly to the corresponding case for condition $\alpha$ we can show that $\prsumop{a_1}{a_2}\in\ce_{\Gamma}(\bar{c})$.
By computability condition $\beta$ we obtain $a_1\in\ce_{\Gamma}(\bar{d})$ and $a_2\in\ce_{\Gamma}(\bar{e})$.
\item
If $\myneg(a\,b)\rd\prsumop{a_1}{a_2}$ for some $a_1$ and $a_2$, then, since obviously $\Gamma\sngv\myneg(a\:b)$, by Lemma~\ref{nrm.rd} we have
\[\nrm{\Gamma}{\myneg(a\,b)}=[\nrm{\Gamma}{a_1},\nrm{\Gamma}{a_2}]=[\bar{d},\bar{e}]
\]
We show the case $\prsumop{a_1}{a_2}=[a_1,a_2]$ (the case $\prsumop{a_1}{a_2}=[a_1+a_2]$ runs analogously):
By Lemma~\ref{rd.decomp.neg}(2) we know that $(a\,b)\rd[a_1'+a_2']$ where $\myneg a_1'\rd a_1$ and $\myneg a_2'\rd a_2$ for some $a_1'$ and $a_2'$. 
Therefore by elementary properties of reduction (Lemma~\ref{rd.decomp}(6)) there are two cases 
\begin{enumerate}
\item
$a\rd\binbop{x}{a_3}{a_4}$, $b\rd b'$, and $\myneg a_4\gsub{x}{b'}\rd[a_1'+a_2']$ for some $x$, $a_3$, $a_4$, and $b'$.
By the same argument as in the corresponding case for condition $\alpha$ we can show that $[a_1'+a_2']\in\ce_{\Gamma}(\bar{c})$.
\item
$a\rd\case{c_1}{c_2}$ and either $b\rd\injl{b_1}{b_2}$ and $(c_1\,b_1)\rd[a_1'+a_2']$ or $b\rd\injr{b_1}{b_2}$ and $(c_2\,b_2)\rd[a_1'+a_2']$ for some $c_1$, $c_2$, $b_1$, and $b_2$.
By the same argument as  in the corresponding case for condition $\alpha$ we can show that $[a_1'+a_2']\in\ce_{\Gamma}(\bar{c})$.
\end{enumerate}
Hence in all cases we have $[a_1'+a_2']\in\ce_{\Gamma}(\bar{c})$.
By Lemma~\ref{ce.mon.neg} we have $\myneg[a_1'+a_2']\in\ce_{\Gamma}(\bar{c})$ and hence by Lemma~\ref{ce.basic}(1) we have $[\myneg a_1',\myneg a_2']\in\ce_{\Gamma}(\bar{c})$ and therefore by definition of computable expressions (condition $\beta$) we obtain $\myneg a_1'\in\ce_{\Gamma}(\bar{d})$ and $\myneg a_2'\in\ce_{\Gamma}(\bar{e})$. 
Since $\myneg a_1'\rd a_1$ and $\myneg a_2'\rd a_2$ by Lemma~\ref{ce.basic}(1) we have $a_1\in\ce_{\Gamma}(\bar{d})$ and $a_2\in\ce_{\Gamma}(\bar{e})$. 
\end{enumerate}
\item[($\gamma$)]
We have to distinguish three cases:
\begin{enumerate}
\item
If $(a\,b)\rd\prdef{x}{a_1}{a_2}{c}$ for some $a_1$, $a_2$, and $c$ then, since $\Gamma\sngv(a\,b)$, by Lemma~\ref{nrm.rd} we have
\[\nrm{\Gamma}{(a\,b)}=[\nrm{\Gamma}{a_1},\nrm{\Gamma}{a_2}]=[\bar{d},\bar{e}]
\]
Similarly to the corresponding case for condition $\alpha$ we can shown that $\prdef{x}{a_1}{a_2}{c}\in\ce_{\Gamma}(\bar{c})$.
By computability condition $\gamma$ we know that $a_1\in\ce_{\Gamma}(\bar{d})$ and $a_2\in\ce_{\Gamma}(\bar{e})$.
\item
If $(a\,b)\rd\injl{a_1}{c}$ for some $a_1$ and $c$, then, since $\Gamma\sngv(a\:b)$, by Lemma~\ref{nrm.rd} we have
\[\nrm{\Gamma}{(a\,b)}=[\nrm{\Gamma}{a_1},\nrm{\Gamma}{c}]=[\bar{d},\bar{e}]
\]
Similarly to the corresponding case for condition $\alpha$ we can shown that $\injl{a_1}{c}\in\ce_{\Gamma}(\bar{c})$.
By computability condition $\gamma$ we obtain $a_1\in\ce_{\Gamma}(\bar{d})$.
\item
$(a\,b)\rd\injr{c}{a_2}$ for some $a_2$ and $c$:
This case is similar to the previous one.
\end{enumerate}
\item[($\delta$)]
Let $\bar{d}=[\bar{d}_1,\bar{d}_2]$ for some $\bar{d}_1$ and $\bar{d}_2$.
If $(a\,b)\rd\case{a_1}{a_2}$ for some $a_1$ and $a_2$ then, since $\Gamma\sngv(a\,b)$, by Lemma~\ref{nrm.rd} we have
\[\nrm{\Gamma}{\case{a_1}{a_2}}=[\bar{d},\bar{e}]
\]
where $\nrm{\Gamma}{a_1}=[\bar{d}_1,\bar{e}]$ and $\nrm{\Gamma}{a_2}=[\bar{d}_2,\bar{e}]$.
Similarly to the corresponding case for condition $\alpha$ we can shown that $\case{a_1}{a_2}\in\ce_{\Gamma}(\bar{c})$.
By computability condition $\delta$ we obtain $a_1\in\ce_{\Gamma}([\bar{d}_1,\bar{e}])$ and $a_2\in\ce_{\Gamma}([\bar{d}_2,\bar{e}])$.
\qedhere
\end{itemize}
\end{proof}
\noindent
We now show the remaining closure properties of computable expressions.
\begin{lem}[Closure properties of computable expressions]%
\label{ce.mon}
For all $\Gamma,x,a,b,c,\bar{a},\bar{b},\bar{c}$:
\begin{enumerate}
\item
$a\in\ce_{\Gamma}(\bar{a})$, $b\in\ce_{\Gamma}(\bar{b})$, and $c\in\ce_{\Gamma,x:a}(\bar{b})$ implies $\prsumop{a}{b},\injl{a}{b},\injr{a}{b}\in\ce_{\Gamma}([\bar{a},\bar{b}])$, and $\prdef{x}{a}{b}{c}\in\ce_{\Gamma}([\bar{a},\bar{b}])$.
\item
$a\in\ce_{\Gamma}([\bar{b},\bar{c}])$ implies $\pleft{a}\in\ce_{\Gamma}(\bar{b})$ and $\pright{a}\in\ce_{\Gamma}(\bar{c})$.
\item
$a\in\ce_{\Gamma}([\bar{a},\bar{c}])$ and $b\in\ce_{\Gamma}([\bar{b},\bar{c}])$ implies $\case{a}{b}\in\ce_{\Gamma}([[\bar{a},\bar{b}],\bar{c}])$.
\end{enumerate}
\end{lem}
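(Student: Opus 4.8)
The plan is to treat all three statements by one three-step recipe applied to each composite expression in turn. First, show that the expression lies in $\sn{}$; this follows from Lemma~\ref{sn.basic}(2), since every computable expression is strongly normalizing. Second, compute its norm directly from Table~\ref{norm.def}, using the norm hypotheses on $a$, $b$, $c$ to check the side conditions of the relevant norming clauses. Third, verify the computability conditions $\alpha$--$\delta$ of Definition~\ref{ce}; this is the substantive step. The only machinery it requires is: the reduction-decomposition lemmas~\ref{rd.decomp} and~\ref{rd.decomp.neg}, which pin down the shape of every reduct of the expression and of its negation; Lemma~\ref{ce.basic}(1),(2), which transports computability along a reduction and along a reduction of a context declaration; Lemma~\ref{ce.mon.neg}, which closes computability under $\myneg$; Lemma~\ref{nrm.rd}, which keeps norms aligned along reductions; and, for the clauses involving a substitution, Lemma~\ref{sub.rd}.

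For statements (1) and (3) nearly every computability condition is vacuous. By Lemma~\ref{rd.decomp}(1),(2) a product, sum, injection, protected definition or case distinction reduces only by reducing its immediate components, and by Lemma~\ref{rd.decomp.neg} a single negation of such an expression reduces (after one $\nu$-step) to another expression of the same kind -- in particular never to an abstraction. Hence for each operator only the "matching" condition is non-trivial: $\beta$ for $\prsumop{a}{b}$, $\gamma$ for $\injl{a}{b}$, $\injr{a}{b}$ and $\prdef{x}{a}{b}{c}$, and $\delta$ for $\case{a}{b}$. In every such case the condition only asks that the components of the reduct be computable of the prescribed norms, which is immediate from Lemma~\ref{ce.basic}(1); for the negated instances of condition $\beta$ of a sum one additionally applies Lemma~\ref{ce.mon.neg}. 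So once the reduct shapes are known these statements are pure bookkeeping.

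Statement (2) carries the real content. For $\pleft{a}$ (and dually $\pright{a}$) a reduct that is one of the head forms occurring in conditions $\alpha$--$\delta$ cannot be produced by reducing inside $a$ alone: it forces $a$ to reduce first to a product, a sum or a protected definition, after which a $\pi$-step fires. Thus, by Lemma~\ref{rd.decomp}(3),(4) (and the analogous statement for case distinctions), every such head-form reduct $h$ of $\pleft{a}$ arises from $a\rd\prsumop{c_1}{c_2}$ or $a\rd\prdef{y}{c_1}{c_2}{c_3}$ with $c_1\rd h$. Since $a\in\ce_{\Gamma}([\bar{b},\bar{c}])$, its own computability condition $\beta$ (in the first case) or $\gamma$ (in the second) gives $c_1\in\ce_{\Gamma}(\bar{b})$, hence $h\in\ce_{\Gamma}(\bar{b})$ by Lemma~\ref{ce.basic}(1), and the reflexive instance of $h$'s own computability condition of the matching shape yields exactly the clause required of $\pleft{a}$, with the norms matching by Lemma~\ref{nrm.rd}. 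The step I expect to be the main obstacle is condition $\alpha$ in the subcase $\myneg\pleft{a}\rd\binbop{x}{e_1}{e_2}$: here $\myneg\pleft{a}$ reduces only by reducing $a$ (no $\nu$-axiom applies to $\myneg\pleft{\cdot}$), so one first peels the negation using Lemma~\ref{rd.decomp.neg}(1),(3) to get $\pleft{a}\rd\binbopd{x}{c_1}{c_2}$ with $c_1\rd e_1$ and $\myneg c_2\rd e_2$, then runs the argument above to obtain $\binbopd{x}{c_1}{c_2}\in\ce_{\Gamma}(\bar{b})$, extracts $c_2$ and $c_2\gsub{x}{d}$ (computable of the appropriate norms) from its condition $\alpha$, passes to $\myneg c_2$ and $(\myneg c_2)\gsub{x}{d}$ by Lemma~\ref{ce.mon.neg}, and finally transfers along $c_1\rd e_1$ and $\myneg c_2\rd e_2$ using Lemma~\ref{ce.basic}(1),(2) and Lemma~\ref{sub.rd}. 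No idea beyond the lemmas already at hand is needed; the difficulty is entirely in keeping the norms and contexts aligned through this chain of reductions.
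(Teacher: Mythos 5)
Your proposal is correct and follows essentially the same route as the paper: strong normalization via Lemma~\ref{sn.basic}(2), norms from the norming clauses, and the computability conditions via Lemmas~\ref{rd.decomp}, \ref{rd.decomp.neg}, \ref{ce.basic}, and \ref{ce.mon.neg}, with only the ``matching'' condition non-trivial for (1) and (3). For (2) the paper merely defers to the argument for applications (Lemma~\ref{ce.mon.appl}); your explicit unfolding of that argument for projections, including the peeling of the negation in condition $\alpha$, is exactly the intended one.
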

\begin{proof}
For all $\Gamma$, $x$, $a$, $b$, $c$, $\bar{a}$, $\bar{b}$, $\bar{c}$:
\begin{enumerate}
\item
Let $a\in\ce_{\Gamma}(\bar{a})$, $b\in\ce_{\Gamma}(\bar{b})$, and $c\in\ce_{\Gamma,x:a}(\bar{b})$.
Hence $\bar{a}=\nrm{\Gamma}{a}$ and $\bar{b}=\nrm{\Gamma}{b}$.

First we show that $[a,b],[a+b],\injl{a}{b},\injr{a}{b}\in\ce_{\Gamma}([\bar{a},\bar{b}])$.
Obviously $\Gamma\sngv[a,b],[a+b],\injl{a}{b},\injr{a}{b}$ and $[a,b],[a+b],\injl{a}{b},\injr{a}{b}\in\sn{}$.
Using Lemmas~\ref{rd.decomp},~\ref{rd.decomp.neg},~\ref{ce.basic} and~\ref{ce.mon.neg}, it is straightforward to show the computability conditions for these operators.

Next we turn to the case of a protected definition: From the assumptions we know that $\nrm{\Gamma}{\prdef{x}{a}{b}{c}}=[\bar{a},\bar{b}]$ where $\nrm{\Gamma,x:a}{c}=\bar{b}$.
Assume $a\in\ce_{\Gamma}(\bar{a})$, $b\in\ce_{\Gamma}(\bar{b})$, and $c\in\ce_{\Gamma,x:a}(\bar{b})$.
We have to show $\prdef{x}{a}{b}{c}\in\ce_{\Gamma}([\bar{a},\bar{b}])$.

Since by definition of computable expressions $a,b,c\in\sn{}$, by Lemma~\ref{sn.basic}(2) we have $\prdef{x}{a}{b}{c}\in\sn{}$. 
It remains to show the computability conditions:
By Lemmas~\ref{rd.decomp}(2) and~\ref{rd.decomp.neg}(5), $\prdef{x}{a}{b}{c}\rd d$ and $\myneg\prdef{x}{a}{b}{c}\rd d$, for some $d$, each imply $d=\prdef{x}{a'}{b'}{c'}$ for some $a'$, $b'$, and $c'$. 
Therefore, the computability conditions $\alpha$, $\beta$, and $\delta$ are trivially satisfied.
The condition $\gamma$ is trivially satisfied except for the case $\prdef{x}{a}{b}{c}\rd\prdef{x}{a'}{b'}{c'}$ for some $a'$, $b'$, and $c'$.
By Lemma~\ref{rd.decomp}(2) we know that $a\rd a'$, $b\rd b'$, and $c\rd c'$. 
By Lemma~\ref{ce.basic}(1) we know that $a'\in\ce_{\Gamma}(\bar{a})$ and $b'\in\ce_{\Gamma}(\bar{b})$. 
\item
Let  $a\in\ce_{\Gamma}([\bar{b},\bar{c}])$:
From the assumptions we know that that $\nrm{\Gamma}{\pleft{a}}=\bar{b}$ where
$\nrm{\Gamma}{a}=[\bar{b},\bar{c}]$.
Assume $a\in\ce_{\Gamma}([\bar{b},\bar{c}])$. We have to show $\pleft{a}\in\ce_{\Gamma}(\bar{b})$.
Since $a\in\sn{}$ by Lemma~\ref{sn.basic}(2) we know that $\pleft{a}\in\sn{}$. 
The computability conditions for $\pleft{a}$ can be shown in a similar type of argument as for the case of application (Lemma~\ref{ce.mon.appl}).
The case $\pright{a}\in\ce_{\Gamma}(\bar{c})$ can be shown in a similar style.
\item
Assume that $a\in\ce_{\Gamma}([\bar{a},\bar{c}])$ and $b\in\ce_{\Gamma}([\bar{b},\bar{c}])$.
We will show that $\case{a}{b}\in\ce_{\Gamma}([[\bar{a},\bar{b}],\bar{c}])$.
Obviously $\Gamma\sngv\case{a}{b}$ and $\case{a}{b}\in\sn{}$.
Since case distinction or negated case distinctions always reduce to case distinctions, the conditions $\alpha$, $\beta$, and $\gamma$ are trivially satisfied.
It remains to show the condition $\delta$:
Assume that $\case{a}{b}\rd\case{a_1}{a_2}$.
By Lemma~\ref{rd.decomp}(1) we know that $a\rd a_1$ and $a\rd a_2$. 
The required conclusions $a_1\in\ce_{\Gamma}([\bar{a},\bar{c}])$ and $a_2\in\ce_{\Gamma}([\bar{b},\bar{c}])$ follow by Lemma~\ref{ce.basic}(1). 
\qedhere
\end{enumerate}
\end{proof}
\noindent
Note that abstraction is missing from the properties of Lemma~\ref{ce.mon} since from $a\in\ce_{\Gamma,x:b}(\bar{a})$ it is not clear how to conclude that $a\gsub{x}{c}\in\ce_{\Gamma}$ for any $c\in\ce_{\Gamma}(\nrm{\Gamma}{b})$.
This inspires the following Lemma.
\begin{lem}[Abstraction closure for computable expressions]%
\label{ce.abs}
For all $\Gamma$, $x$, $a$, and $b$ where $\Gamma\sngv\binbop{x}{a}{b}$:
If $a\in\ce_{\Gamma}(\nrm{\Gamma}{a})$, $b\in\ce_{\Gamma,x:a}(\nrm{\Gamma,x:a}{b})$, and for all $c$ with $c\in\ce_{\Gamma}(\nrm{\Gamma}{a})$ we have $b\gsub{x}{c}\in\ce_{\Gamma}(\nrm{\Gamma,x:a}{b})$ (this last assumption about substitution is crucial)
then $\binbop{x}{a}{b}\in\ce_{\Gamma}(\nrm{\Gamma}{\binbop{x}{a}{b}})$.
\end{lem}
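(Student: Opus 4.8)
The plan is to check directly that $\binbop{x}{a}{b}$ lies in $\ce_{\Gamma}([\bar{a},\bar{b}])$, where I abbreviate $\bar{a}=\nrm{\Gamma}{a}$ and $\bar{b}=\nrm{\Gamma,x:a}{b}$, so that by the norming equation $\nrm{\Gamma}{\binbop{x}{a}{b}}=[\bar{a},\bar{b}]$ (and both $\bar{a},\bar{b}$ are defined, since the hypotheses guarantee $a,b$ are normable and $\Gamma\sngv\binbop{x}{a}{b}$). The easy requirements come first: from $a,b\in\sn{}$ (they are computable) and Lemma~\ref{sn.basic}(2) we get $\binbop{x}{a}{b}\in\sn{}$, and $\Gamma\sngv\binbop{x}{a}{b}$ with the stated norm is a hypothesis. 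It then remains to verify the computability conditions $\alpha$--$\delta$ for $\binbop{x}{a}{b}$, in which the role of ``$\bar b$'' is played by our $\bar a$ and the role of ``$\bar c$'' by our $\bar b$.

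Conditions $\beta$, $\gamma$, $\delta$ are vacuous. By Lemma~\ref{rd.decomp}(2) every reduct of $\binbop{x}{a}{b}$ is again an abstraction $\binbop{x}{b_1}{b_2}$ of the same flavour, never a product, sum, injection, protected definition or case distinction; and by Lemma~\ref{rd.decomp.neg}(2,4--8) $\myneg\binbop{x}{a}{b}$ cannot reduce to a product or a sum either, since that would force $\binbop{x}{a}{b}$ itself to reduce to a sum resp.\ a product. Hence the premises of $\beta$, $\gamma$, $\delta$ are never satisfied.

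The content is condition $\alpha$: analysing reducts of $\binbop{x}{a}{b}$ and of $\myneg\binbop{x}{a}{b}$ that are abstractions $\binbop{x}{b_1}{b_2}$. In the \emph{direct} case $\binbop{x}{a}{b}\rd\binbop{x}{b_1}{b_2}$, Lemma~\ref{rd.decomp}(2) gives $a\rd b_1$ and $b\rd b_2$; then $b_2\in\ce_{\Gamma,x:a}(\bar b)$ by Lemma~\ref{ce.basic}(1), hence $b_2\in\ce_{\Gamma,x:b_1}(\bar b)$ by Lemma~\ref{ce.basic}(2), which is the first clause of $\alpha$; for the substitution clause, for any $d\in\ce_{\Gamma}(\bar a)$ the crucial substitution hypothesis yields $b\gsub{x}{d}\in\ce_{\Gamma}(\bar b)$, hence $b_2\gsub{x}{d}\in\ce_{\Gamma}(\bar b)$ by Lemma~\ref{sub.rd} and Lemma~\ref{ce.basic}(1). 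In the \emph{negated} case, the reduct of $\myneg\binbop{x}{a}{b}$ is an abstraction of the opposite flavour, and Lemma~\ref{rd.decomp.neg}(1) resp.~(3) tells us it arises from a reduct $\binbop{x}{a}{b}\rd\binbop{x}{c_1}{c_2}$ with $c_1\rd b_1$ and $\myneg c_2\rd b_2$, so $a\rd c_1\rd b_1$ and $b\rd c_2$; using Lemma~\ref{ce.basic}(1) to get $c_2\in\ce_{\Gamma,x:a}(\bar b)$, then Lemma~\ref{ce.mon.neg} to get $\myneg c_2\in\ce_{\Gamma,x:a}(\bar b)$ (norms agree by Lemma~\ref{nrm.rd}), then Lemmas~\ref{ce.basic}(1) and~\ref{ce.basic}(2) to obtain $b_2\in\ce_{\Gamma,x:b_1}(\bar b)$; the substitution clause is handled the same way, pushing $d$ successively through $b\gsub{x}{d}\in\ce_{\Gamma}(\bar b)$, then $c_2\gsub{x}{d}$, then $\myneg(c_2\gsub{x}{d})=(\myneg c_2)\gsub{x}{d}$ via Lemma~\ref{ce.mon.neg} and the definition of substitution, then $b_2\gsub{x}{d}$ via Lemmas~\ref{sub.rd} and~\ref{ce.basic}(1).

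I expect the only delicate point to be the bookkeeping in the negated case of $\alpha$: one must observe that because $\binbop{x}{a}{b}$ already has an abstraction as its outermost constructor, a single top-level negation can be pushed inward exactly once, so Lemma~\ref{rd.decomp.neg} is applied only once (no iteration as in the application case), and one must keep the two abstraction flavours paired correctly, since $\myneg[x:a]b$ produces only existential reducts and $\myneg[x!a]b$ only universal ones. Everything else is a routine chaining of Lemmas~\ref{ce.basic}, \ref{ce.mon.neg}, \ref{sub.rd}, \ref{nrm.rd} together with the substitution hypothesis.
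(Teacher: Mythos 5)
Your proposal is correct and follows essentially the same route as the paper's proof: strong normalization via Lemma~\ref{sn.basic}(2), vacuity of conditions $\beta$, $\gamma$, $\delta$ because (negated) abstractions only reduce to abstractions, and condition $\alpha$ handled by Lemmas~\ref{rd.decomp}(2) and~\ref{rd.decomp.neg}(1,3) together with Lemmas~\ref{ce.basic}(1,2), \ref{ce.mon.neg}, \ref{sub.rd} and the crucial substitution hypothesis. Your write-up is in fact slightly more explicit than the paper's in the negated case and in the substitution clause, where the paper compresses the chaining into a single sentence.
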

\begin{proof}
Let $\bar{a}=\nrm{\Gamma}{a}$ and $\bar{b}=\nrm{\Gamma,x:a}{b}$.
From $a\in\ce_{\Gamma}(\bar{a})$ and $b\in\ce_{\Gamma,x:a}(\bar{b})$ we obviously obtain $a,b\in\sn{}$.
By Lemma~\ref{sn.basic}(2) this implies $[x:a]b\in\sn{}$. 

To show that $\binbop{x}{a}{b}\in\ce_{\Gamma}([\bar{a},\bar{b}])$,
it remains to show the computability conditions for $\binbop{x}{a}{b}$:
The computability conditions $\beta$, $\gamma$, and $\delta$ are trivially satisfied since (negated) abstractions reduce to abstractions only.
It remains to show the condition $\alpha$:
By Lemma~\ref{rd.decomp}(2), only the cases $[x:a]b\rd[x:c]d$, $[x!a]b\rd[x!c]d$, $\myneg[x:a]b\rd[x!c]d$, and $\myneg[x!a]b\rd[x:c]d$ for some $c$ and $d$ are possible. 
By Lemmas~\ref{rd.decomp}(2) and~\ref{rd.decomp.neg}(1,3) we know that $a\rd c$ and either $b\rd d$ or $\myneg b\rd d$. 
We have to show the following properties:
\begin{itemize}
\item
We have to show $d\in\ce_{\Gamma,x:c}(\bar{b})$:
From the assumption $b\in\ce_{\Gamma,x:a}(\bar{b})$ either by directly using Lemma~\ref{ce.basic}(1) or by first applying Lemma~\ref{ce.mon.neg} we obtain $d\in\ce_{\Gamma,x:a}(\bar{b})$. 
Since $a\rd c$, by Lemma~\ref{ce.basic}(2) we obtain that $d\in\ce_{\Gamma,x:c}(\bar{b})$. 
\item
Let $e\in\ce_{\Gamma}(\bar{a})$.
We have to show that $d\gsub{x}{e}\in\ce_{\Gamma}(\bar{b})$ which follows from the assumption about substitution (by instantiating $c$ to $e$).
\qedhere
\end{itemize}
\end{proof}
\noindent
To prove computability of all normable expressions, due to Lemma~\ref{ce.abs}, we need to prove the stronger property
that normable expressions are computable under any substitution of their free variables to computable expressions.
First we need to extend the notion of substitution.
\begin{defi}[Extended substitution]
The substitution operation $a\gsub{x}{b}$ to replace free occurrences of $x$ in $a$ by $b$ can be extended as follows:
Given sequences of pairwise disjoint variables $X=(x_1,\ldots,x_n)$ and (arbitrary) expressions $B=(b_1,\ldots,b_n)$ where $n\geq 0$, a \emph{substitution function} $\sigma_{X,B}$ is defined on expressions and contexts as follows:
\begin{eqnarray*}
\sigma_{X,B}(a)&=&a\gsub{x_1}{b_1}\ldots\gsub{x_n}{b_n}\\[4mm]
\sigma_{X,B}(())&=&()\\
\sigma_{X,B}(x:a,\Gamma)&=&
\begin{cases}
(x:\sigma_{X,B}(a),\sigma_{X,B}(\Gamma))&\text{if}\;x\neq x_i\\
\sigma_{X,B}(\Gamma)&\text{otherwise}
\end{cases}
\end{eqnarray*}
If $x\neq x_i$ we write $\sigma_{X,B}\gsub{x}{b}$ for $\sigma_{(x_1,\ldots,x_n,x),(b_1,\ldots,b_n,b)}$.
\end{defi}
\begin{defi}[Norm-matching substitution]
A substitution $\sigma_{X,B}$  where $X=(x_1,\ldots,x_n)$ and $B=(b_1,\ldots,b_n)$ is called  \emph{norm matching w.r.t.~}$\Gamma$ iff $\Gamma=(\Gamma_0,x_1:a_1,\Gamma_1\ldots x_n:a_n,\Gamma_n)$, for some $\Gamma_0$ and $a_i$, $\Gamma_i$ where $1\leq i\leq n$ and furthermore
for all these $i$ we have, with $\sigma$ abbreviating $\sigma_{X,B}$:
\[
\sigma(\Gamma)\sngv \sigma(a_i),\quad\sigma(\Gamma)\sngv\sigma(b_i)\quad\text{and}\quad\nrm{\sigma(\Gamma)}{\sigma(a_i)}=\nrm{\sigma(\Gamma)}{\sigma(b_i)}
\]
\end{defi}
\noindent
Norm-matching substitutions indeed preserve norms:
\begin{lem}[Norm preservation of norm-matching substitutions]%
\label{sub.nrm}
Let $\sigma_{X,B}$ be norm-matching w.r.t.~$\Gamma$, then for all $\Gamma$ and $a$ we have $\nrm{\sigma_{X,B}(\Gamma)}{\sigma_{X,B}(a)}=\nrm{\Gamma}{a}$.
\end{lem}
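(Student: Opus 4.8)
The plan is to fix a norm-matching substitution $\sigma = \sigma_{X,B}$ with $X = (x_1,\ldots,x_n)$ and $B = (b_1,\ldots,b_n)$, and to prove the equality simultaneously for all contexts $\Gamma$ with respect to which $\sigma$ is norm-matching and all expressions $a$, by induction on the computation defining $\nrm{\Gamma}{a}$ from Table~\ref{norm.def} (a well-founded induction that unfolds the variable look-up $\nrm{\Gamma}{x} = \nrm{\Gamma}{\Gamma(x)}$, exactly as in the proofs of Lemmas~\ref{nrm.eq} and~\ref{nrm.sub}). For every syntactic form other than a bare member of $X$, $\sigma$ commutes with the outermost operator — after renaming a bound variable away from $X$ and from $\free(b_1),\ldots,\free(b_n)$ when applicable — so the defining equation of Table~\ref{norm.def} for $a$ is matched on the $\sigma$-image by the same equation, and the claim reduces to the inductive hypotheses on the immediate sub-expressions.

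First I would clear the non-variable, non-context cases. For $\prim$ both sides are $\prim$. For $\binop{a_1}{\ldots,a_k}$ (application, product, sum, injection, case, negation, projection) one has $\sigma(\binop{a_1}{\ldots,a_k}) = \binop{\sigma(a_1)}{\ldots,\sigma(a_k)}$, and since the norm of the compound and the side conditions of its defining equation depend only on the norms of the $a_j$, the inductive hypotheses $\nrm{\sigma(\Gamma)}{\sigma(a_j)} = \nrm{\Gamma}{a_j}$ give the result and keep definedness in step (for application and case the shared norms in the side condition match verbatim). For an abstraction $\binbop{x}{a_1}{a_2}$ or a protected definition $\prdef{x}{a_1}{a_2}{a_3}$, after the renaming we get $\sigma(\binbop{x}{a_1}{a_2}) = \binbop{x}{\sigma(a_1)}{\sigma(a_2)}$, and the body is normed under the extended context; to apply the inductive hypothesis there I must check that $\sigma$ is still norm-matching w.r.t.\ $(\Gamma,x:a_1)$, which holds because $x$ is fresh, so $\sigma(\Gamma,x:a_1) = (\sigma(\Gamma),x:\sigma(a_1))$, and since $x$ occurs in none of the $\sigma(a_i)$, $\sigma(b_i)$, the extra declaration changes neither their normability nor their norms by Lemma~\ref{norm.ext}; the same Lemma handles the norm side condition for the protected definition.

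The work is concentrated in the variable case. If $a$ is a variable not in $X$, the look-up equation transports directly, using that $\sigma(\Gamma)(a) = \sigma(\Gamma(a))$ and that $\nrm{\Gamma}{\Gamma(a)}$ is a strictly smaller instance. If $a = x_i \in X$, then $\nrm{\Gamma}{x_i} = \nrm{\Gamma}{a_i}$ with $a_i = \Gamma(x_i)$, and this look-up is again a smaller instance, so the inductive hypothesis gives $\nrm{\sigma(\Gamma)}{\sigma(a_i)} = \nrm{\Gamma}{a_i}$; the norm-matching condition for index $i$ gives $\nrm{\sigma(\Gamma)}{\sigma(a_i)} = \nrm{\sigma(\Gamma)}{\sigma(b_i)}$; and since none of $x_1,\ldots,x_n$ occurs free in $b_i$ — the $x_j$ are precisely the variables being cleared, and the members of $B$ are taken disjoint from $X$ under the standing freshness convention — the definition of the sequential substitution $\sigma$ together with Lemma~\ref{sub.basic}(1) gives $\sigma(x_i) = b_i = \sigma(b_i)$, so the three equalities chain to $\nrm{\sigma(\Gamma)}{\sigma(x_i)} = \nrm{\Gamma}{x_i}$.

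I expect the main obstacle to be exactly this variable case: making precise that $\sigma(x_i)$ and $\sigma(b_i)$ coincide — which is where pairwise disjointness of $X$ and freshness of $B$ relative to $X$ are genuinely used — and setting up the induction so that the look-up $\Gamma(x_i) = a_i$ counts as a legitimate smaller instance, inheriting the same $\sigma$ and the same norm-matching hypothesis. Everything else is bookkeeping that parallels the already-established Lemmas~\ref{nrm.eq},~\ref{nrm.sub}, and~\ref{norm.ext}. An essentially equivalent alternative, should one prefer it, is induction on $n$: peel off one substitution, apply Lemma~\ref{nrm.sub} to it and the inductive hypothesis to the remaining $n-1$; this trades the variable-case argument for care about which prefix of $\Gamma$ the single-substitution lemma is instantiated on.
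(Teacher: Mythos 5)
Your argument is correct, but your primary route differs from the paper's. The paper proves Lemma~\ref{sub.nrm} by induction on the number $k$ of variables in $X$, i.e.\ it peels off one substitution at a time and leans on the already-established single-substitution Lemma~\ref{nrm.sub} (plus the definition of norm-matching); this is exactly the alternative you sketch in your last sentence. Your main proof instead re-runs the induction on the computation of $\nrm{\Gamma}{a}$ directly for the full sequential substitution, which is self-contained and makes explicit where the norm-matching hypothesis enters (only in the case $a=x_i$, via the chain $\nrm{\sigma(\Gamma)}{\sigma(x_i)}=\nrm{\sigma(\Gamma)}{\sigma(b_i)}=\nrm{\sigma(\Gamma)}{\sigma(a_i)}=\nrm{\Gamma}{a_i}=\nrm{\Gamma}{x_i}$); the price is that you must redo the bookkeeping for every operator and re-verify that norm-matching is stable under context extension, which the paper's route gets for free from Lemmas~\ref{nrm.sub} and~\ref{norm.ext}. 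One point to be aware of: your identification $\sigma(x_i)=b_i=\sigma(b_i)$ needs the variables of $X$ not to occur free in the $b_j$ (for the chain above it suffices that $x_1,\ldots,x_i\notin\free(b_i)$, since the norm-matching condition is phrased in terms of $\sigma(b_i)$). The paper's definitions of extended and norm-matching substitution do not state this freshness condition explicitly, but the paper itself relies on the same identification (``Obviously $\sigma(x)=\sigma(b_i)$'' in the proof of Lemma~\ref{nrm.ce}), so flagging it as a standing convention, as you do, is legitimate rather than a gap.
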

\begin{proof}
Proof is by induction on the number $k$ of variables in $X$ and follows directly from the definition of norm-matching substitution.
\end{proof}
\noindent
As a consequence of Lemma~\ref{sub.nrm}, for any norm-matching substitution $\sigma_{X,B}$ we have $\Gamma\sngv a$ if and only if $\sigma_{X,B}(\Gamma)\sngv\sigma_{X,B}$.
We now come to the last piece missing to show strong normalisation.
\begin{lem}[Normability implies computability]%
\label{nrm.ce}
For all $\Gamma$ and $a$: If $\Gamma\sngv a$ then for any norm-matching substitution $\sigma_{X,B}$ w.r.t.~$\Gamma$
with $\sigma_{X,B}(x_i)\in\ce_{\sigma_{X,B}(\Gamma)}(\nrm{\Gamma}{b_i})$, for $1\leq i\leq n$, we have $\sigma_{X,B}(a)\in\ce_{\sigma_{X,B}(\Gamma)}(\nrm{\Gamma}{a})$. As a consequence normability implies computability.
\end{lem}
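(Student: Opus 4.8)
The plan is to prove the statement by structural induction on $a$ (with context), appealing at each node to the closure lemma for computable expressions that matches the outermost operator of $a$. Write $\sigma$ for $\sigma_{X,B}$. By Lemma~\ref{sub.nrm} we have $\nrm{\sigma(\Gamma)}{\sigma(a)}=\nrm{\Gamma}{a}$ for every normable $a$, so the norm asserted in the conclusion is genuinely $\nrm{\sigma(\Gamma)}{\sigma(a)}$, exactly as the definition of $\ce$ demands; I shall make this identification silently. The final ``as a consequence'' clause then follows by instantiating with the empty substitution $\sigma_{(),()}$, which is norm-matching w.r.t.\ every context with vacuous side-conditions and fixes $\Gamma$, giving $a\in\ce_{\Gamma}(\nrm{\Gamma}{a})$.

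For the base cases: $\sigma(\prim)=\prim\in\sn{}$ with norm $\prim$, whose computability conditions are empty. If $a$ is a variable $x$ with $x=x_i\in X$, then $\sigma(a)=\sigma(x_i)\in\ce_{\sigma(\Gamma)}(\nrm{\Gamma}{b_i})$ by hypothesis, and $\nrm{\Gamma}{b_i}=\nrm{\Gamma}{a_i}=\nrm{\Gamma}{x_i}$ by the defining equality of a norm-matching substitution together with Lemma~\ref{sub.nrm}, so $\sigma(a)\in\ce_{\sigma(\Gamma)}(\nrm{\Gamma}{a})$. If $x\notin X$ then $\sigma(a)=x$ is still declared in $\sigma(\Gamma)$, hence in $\sn{}$ and normable, and since neither a variable nor its negation has a non-trivial reduct, all computability conditions are vacuous; thus $x\in\ce_{\sigma(\Gamma)}(\nrm{\Gamma}{x})$. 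In particular this establishes, once and for all, that variables are computable.

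In the inductive step one pushes $\sigma$ through the operator via the substitution equations (renaming bound variables so that $x$ avoids $X$ and the free variables of $B$), invokes the induction hypotheses on the immediate subexpressions, and reassembles by the matching closure lemma: Lemma~\ref{ce.mon.appl} for application, Lemma~\ref{ce.mon.neg} for negation, Lemma~\ref{ce.mon} for projections, products, sums, injections, protected definitions and case distinctions, and Lemma~\ref{ce.abs} for the two abstractions. For a subexpression occurring under a binder $x$ with domain $a_1$, the induction hypothesis is applied relative to the context $\Gamma,x:a_1$; here one checks that $\sigma$ is again norm-matching w.r.t.\ $\Gamma,x:a_1$ (using only that $\Gamma$ is a prefix and that adding a fresh declaration preserves normability and norms, Lemma~\ref{norm.ext}) and that its hypotheses on the $\sigma$-images of the declared variables survive the extra declaration — a routine context-weakening property of $\ce$ parallel to Lemma~\ref{norm.ext}.

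The main obstacle, and the only place a non-perfunctory argument is needed, is the abstraction case, because Lemma~\ref{ce.abs} additionally demands the substitution clause: for every $c\in\ce_{\sigma(\Gamma)}(\nrm{\Gamma}{a_1})$ one must produce $\sigma(a_2)\gsub{x}{c}\in\ce_{\sigma(\Gamma)}(\nrm{\Gamma,x:a_1}{a_2})$. This is where the strengthened induction hypothesis pays off: apply it to $a_2$ with the enlarged substitution $\sigma\gsub{x}{c}$. It is norm-matching w.r.t.\ $\Gamma,x:a_1$ — its new clause is precisely $\nrm{\sigma(\Gamma)}{\sigma(a_1)}=\nrm{\sigma(\Gamma)}{c}$, which holds since $c\in\ce_{\sigma(\Gamma)}(\nrm{\Gamma}{a_1})$ and $\nrm{\Gamma}{a_1}=\nrm{\sigma(\Gamma)}{\sigma(a_1)}$ — its hypothesis on the declared variables holds for the old variables by assumption and for the new variable $x$ by the choice of $c$, and $\sigma\gsub{x}{c}(\Gamma,x:a_1)=\sigma(\Gamma)$ while $\sigma\gsub{x}{c}(a_2)=\sigma(a_2)\gsub{x}{c}$, so the induction hypothesis delivers exactly what is wanted. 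The protected-definition case needs the same bookkeeping for its under-binder component but without the substitution clause. No genuinely new idea beyond this careful threading of norm-matching substitutions through binders is required, and all the reduction-decomposition facts used inside the cited closure lemmas are already in place (Lemmas~\ref{rd.decomp}, \ref{rd.decomp.neg}, \ref{sub.rd}).
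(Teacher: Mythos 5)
Your proposal is correct and follows essentially the same route as the paper: structural induction on $a$ with context, base cases for $\prim$ and variables split on membership in $X$, reassembly via Lemmas~\ref{ce.mon.neg}, \ref{ce.mon.appl}, \ref{ce.mon}, and \ref{ce.abs}, with the crux being the abstraction case handled by extending the norm-matching substitution to $\sigma\gsub{x}{c}$ over the enlarged context and invoking the strengthened induction hypothesis. Your phrasing of the substitution clause directly over $\sigma(\Gamma)$ is in fact slightly cleaner than the paper's, but the argument is the same.
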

\begin{proof}
Let $\bar{a}=\nrm{\Gamma}{a}$ and
let $\sigma=\sigma_{X,B}$ be a norm matching substitution w.r.t.~$\Gamma$ with $\sigma(x_i)\in\ce_{\sigma(\Gamma)}(\bar{b}_i)$ where $\bar{b}_i=\nrm{\Gamma}{b_i}$, for $1\leq i\leq n$.
The proof is by induction on $a$:
\begin{itemize}
\item
$a=\prim$:
Obviously $\sigma(\prim)=\prim$ and $\prim\in\ce_{\sigma_{X,B}(\Gamma)}(\prim)$.
\item
$a=x$:
We have $\Gamma\sngv x$ and $\bar{a}=\nrm{\Gamma}{x}$.
There are two cases:
\begin{itemize}
\item
$x=x_i$, for some $i$.
Obviously $\sigma(x)=\sigma(b_i)$.
We know that $\sigma(b_i)\in\ce_{\sigma(\Gamma)}(\bar{b}_i)$.
\begin{eqnarray*}
\bar{a}&=&\nrm{\Gamma}{x}\\
&=&\quad\text{(Lemma~\ref{sub.nrm})}\\
&&\nrm{\sigma(\Gamma)}{\sigma(x)}\\
&=&\quad\text{(property of substitution)}\\
&&\nrm{\sigma(\Gamma)}{\sigma(b_i)}\\
&=&\quad\text{(Lemma~\ref{sub.nrm})}\\
&&\nrm{\Gamma}{b_i}\\
&=&\quad\text{(definition of $\bar{b}_i$)}\\
&&\bar{b}_i
\end{eqnarray*}
Hence $\sigma(x)=\sigma(b_i)\in\ce_{\sigma(\Gamma)}(\bar{a})$.
\item
If $x\neq x_i$ then $\sigma(x)=x$. Obviously $x\in\sn{}$. From $\Gamma\sgv x$,  by Lemma~\ref{val.nrm} we obtain $\Gamma\sngv x$.
The computability conditions are trivially satisfied therefore we have $\sigma(x)\in\ce_{\sigma(\Gamma)}(\bar{a})$.
\end{itemize}
\item
$a=\binbop{x}{b}{c}$:
From $\Gamma\sngv\binbop{x}{b}{c}$ by definition of norming we obtain $\Gamma\sngv b$ and $(\Gamma,x:b)\sngv c$.
Let $\bar{b}=\nrm{\Gamma}{b}$ and $\bar{c}=\nrm{\Gamma,x:b}{c}$.
Applying the inductive hypothesis with the empty substitution $\sigma_{(),()}$, which is trivially norm-matching w.r.t.~any context, we obtain
$b\in\ce_{\Gamma}(\bar{b})$ and $c\in\ce_{\Gamma,x:b}(\bar{c})$.
By Lemma~\ref{sub.nrm} we know that $\sigma(\nrm{\sigma(\Gamma)}{b})=\nrm{\Gamma}{b}=\bar{b}$.
By inductive hypothesis $\sigma(b)\in\ce_{\sigma(\Gamma)}(\bar{b})$.

Consider an expression $d$ where $d\in\ce_{\Gamma}(\bar{b})$.
In order to apply Lemma~\ref{ce.abs} we have to show that $\sigma(c)\gsub{x}{d}\in\ce_{\sigma(\Gamma)}(\bar{c})$.
If we define $X'=(x_1,\ldots,x_n,x)$, $B'=(b_1,\ldots,b_n,d)$, and $\sigma'=\sigma\gsub{x}{d}$
then obviously $\sigma'$ is norm-matching w.r.t~$(\Gamma,x:a)$ and substitutes to computable expressions.
Therefore by inductive hypothesis for $c$ we know that $\sigma'(c)=\sigma(c)\gsub{x}{d}\in\ce_{\sigma(\Gamma)}(\bar{c})$ (obviously $\sigma'(\Gamma)=\sigma(\Gamma)$).
Hence by Lemma~\ref{ce.abs} it follows that $\binbop{x}{\sigma(b)}{\sigma(c)}\in\ce_{\sigma(\Gamma)}([\bar{b},\bar{c}])$ which by definition of substitution is equivalent to $\sigma(\binbop{x}{b}{c})\in\ce_{\sigma(\Gamma)}([\bar{b},\bar{c}])$.
\item
$a=\prdef{x}{b}{c}{d}$:
From $\Gamma\sngv\prdef{x}{a_1}{a_2}{a_3}$ by definition of norming we obtain $\Gamma\sngv a_1,a_2$, $(\Gamma,x:a_1)\sngv a_3$, and $\nrm{\Gamma}{a_2}=\nrm{\Gamma,x:a_1}{a_3}$.
Let $\bar{b}=\nrm{\Gamma}{b}$, $\bar{c}=\nrm{\Gamma}{c}$.
By inductive hypothesis with $\sigma$ we know that $\sigma_{X,B}(b)\in\ce_{\sigma(\Gamma)}(\bar{b})$, $\sigma(c)\in\ce_{\sigma(\Gamma)}(\bar{c})$, and
$\sigma_{X,B}(d)\in\ce_{\sigma(\Gamma,x:b)}(\bar{c})=\ce_{(\sigma(\Gamma),x:\sigma(b))}(\bar{c})$.
By Lemma~\ref{ce.mon}(1) we then obtain  $\prdef{x}{\sigma(b)}{\sigma(c)}{\sigma(d)}\in\ce_{\sigma(\Gamma)}([\bar{b},\bar{c}])$. 
By definition of substitution this is equivalent to $\sigma(\prdef{x}{b}{c}{d})\in\ce_{\sigma(\Gamma)}([\bar{b},\bar{c}])$.
\item
$a=\binop{a_1}{\ldots,a_n}$:
We have $\Gamma\sngv a$ and $\Gamma\sngv a_i$.
Let $\bar{a}=\nrm{\Gamma}{a}$, $\bar{a}_i=\nrm{\Gamma}{a_i}$.
By inductive hypothesis $\sigma(a_i)\in\ce_{\sigma(\Gamma)}(\bar{a}_i)$.
By Lemmas~\ref{ce.mon.neg},~\ref{ce.mon.appl}, and the various cases of Lemma~\ref{ce.mon} we obtain $\binop{\sigma(a_1)}{\ldots,\sigma(a_n)}\in\ce_{\sigma(\Gamma)}(\bar{a})$.
By definition of substitution this is equivalent to $\sigma(\binop{a_1}{\ldots,a_n})\in\ce_{\sigma(\Gamma)}(\bar{a})$.
\end{itemize}
For the consequence, we just take the empty substitution $\sigma_{(),()}$ which is trivially norm-matching.
\end{proof}
\noindent
Hence we know that normability implies computability.
\noindent
Lemmas~\ref{val.nrm} and~\ref{nrm.ce} together yield strong normalisation.
\begin{thm}[Strong normalization of valid expressions]%
\label{sn.valid}
For all $\Gamma$ and expressions $a$: $\Gamma\sgv a$ implies $a\in\sn{}$.
\end{thm}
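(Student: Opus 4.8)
The plan is to obtain the theorem as an immediate corollary of the three headline results already established, following the chain $\Gamma\sgv a\Rightarrow\Gamma\sngv a\Leftrightarrow a\in\ce_{\Gamma}\Rightarrow a\in\sn{}$ advertised at the start of this subsection. So the proof will be short: it is purely a matter of composing Lemmas~\ref{val.nrm} and~\ref{nrm.ce} with the definition of computable expressions.

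Concretely, I would proceed as follows. First, apply Lemma~\ref{val.nrm} to pass from validity to normability: $\Gamma\sgv a$ gives $\Gamma\sngv a$, so $\bar a:=\nrm{\Gamma}{a}$ is defined. Second, invoke Lemma~\ref{nrm.ce}, instantiated with the empty substitution $\sigma_{(),()}$; this substitution is trivially norm-matching w.r.t.~$\Gamma$, and the side condition requiring the substituted variables to go to computable expressions holds vacuously since $X$ is empty, so the lemma yields $a=\sigma_{(),()}(a)\in\ce_{\Gamma}(\bar a)$. Third, unfolding Definition~\ref{ce} of computable expressions, membership in $\ce_{\Gamma}(\bar a)$ requires in particular $a\in\sn{}$, which is exactly the claim.

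There is essentially no obstacle left at this stage — all the effort has been spent upstream. The genuinely hard work was: establishing that norms are a robust structural invariant, i.e.~that they are preserved by reduction and by norm-matching substitution (Lemmas~\ref{nrm.rd} and~\ref{nrm.sub}); and carrying out the induction of Lemma~\ref{nrm.ce}, whose abstraction case does not close for the naive statement ``every normable expression is computable'' and must therefore be strengthened to ``computable under every norm-matching substitution of its free variables'', leaning on the abstraction-closure Lemma~\ref{ce.abs}. Granting those, the present theorem is a two-line deduction, and I would write it out in full rather than sketch it.
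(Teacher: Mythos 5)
Your proof is correct and follows exactly the paper's own argument: Lemma~\ref{val.nrm} to pass from validity to normability, then Lemma~\ref{nrm.ce} (via the trivially norm-matching empty substitution) to obtain $a\in\ce_{\Gamma}(\nrm{\Gamma}{a})$, and finally the definition of computability to extract $a\in\sn{}$. Nothing is missing.
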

\begin{proof}
Assume $\Gamma\sgv a$.
By Lemma~\ref{val.nrm} this implies $\Gamma\sngv a$.
By Lemma~\ref{nrm.ce} this implies $a\in\ce_{\Gamma}(\nrm{\Gamma}{a})$.
Obviously this implies $a\in\sn{}$.
\end{proof}
\begin{defi}[Normal form]
If $a\in\sn{}$ then $\nf(a)$ denotes the unique expression to which $a$ is maximally reducible.
Note that this definition is well-founded due to confluence of $\srd$ (Theorem~\ref{rd.confl}).
Furthermore, due to strong normalisation (Theorem~\ref{sn.valid}), $\Gamma\sgv a$ implies that $\nf(a)$ exists.
\end{defi}
\noindent
We note an easy consequence of confluence and strong normalization.
\begin{cor}[Decidability of the type relation]%
\label{decide.dtyp}
For any expression $a$ and context $\Gamma$ there is a terminating algorithm such that $\Gamma\sgv a$ iff the algorithm is not failing but computing an expression $b$ with $\Gamma\sgv a:b$.
\end{cor}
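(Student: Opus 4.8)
The plan is to exhibit an explicit syntax-directed type-synthesis procedure $\mathrm{synth}(\Gamma,a)$ that either fails or returns a type of $a$ under $\Gamma$, together with an auxiliary decision procedure for well-formedness of contexts, and then to prove it terminating, sound, and complete. Decidability of $\Gamma\sgv a$ is the immediate corollary: run $\mathrm{synth}(\Gamma,a)$; a non-failing run yields some $b$ with $\Gamma\sgv a:b$, and every valid $a$ produces such a run.

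The key preliminary observation is that congruence is decidable on valid expressions. If $\Gamma\sgv b$ then $b\in\sn{}$ by Theorem~\ref{sn.valid}; since single-step reduction is decidable and finitely branching, the normal form $\nf(b)$ is computable, and by confluence (Theorem~\ref{rd.confl}) two valid expressions $b,c$ satisfy $b\eqv c$ iff $\nf(b)=\nf(c)$ (up to $\alpha$-conversion). This is what makes the rule \myconv\ algorithmic: wherever a derivation could insert a conversion, $\mathrm{synth}$ will instead normalize the type it has synthesized before testing it for equality or before pattern-matching on its shape, and by subject reduction (Theorem~\ref{rd.type}) together with Lemma~\ref{valid.type} this preserves typability.

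The procedure itself is defined by recursion on the structure of $a$, reading each rule of Table~\ref{typ.rules} as a clause: for $\prim$ return $\prim$ after checking $\Gamma$ well-formed; for a variable $x$, check $\Gamma$ well-formed and return $\Gamma(x)$ if defined; for $\binbop{x}{a_1}{a_2}$, recursively synthesize a type of $a_1$, then synthesize $c=\mathrm{synth}((\Gamma,x:a_1),a_2)$ and return $[x:a_1]c$ (the rules \absu\ and \abse\ yield the same type, so there is no ambiguity); for $(a_1\,a_2)$, synthesize the type of $a_1$, normalize it, check it has the form $[x:c_1]c_2$, synthesize the type of $a_2$, check $\nf$ of it equals $\nf(c_1)$, and return $c_2\gsub{x}{a_2}$; analogously for projections, products, sums, injections, the case distinction (here also checking that the two branch codomains are congruent), the protected definition, and negation, always normalizing the relevant synthesized type to expose the shape demanded by the elimination rule. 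Context well-formedness is the obvious mutual recursion: $()$ is well-formed, and $(\Gamma,x:a)$ is well-formed iff $\Gamma$ is, $x\notin\dom(\Gamma)$, and $\mathrm{synth}(\Gamma,a)$ succeeds. The non-syntax-directed rules \start\ and \weak\ are absorbed, the former into the variable clause and the latter into the single well-formedness check.

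I would then prove three properties simultaneously, by structural induction on $a$ with context $\Gamma$ (with a sub-induction on context length for well-formedness). Termination: each recursive call is on a strictly smaller expression or a strictly shorter context, and every call to normalization is on a type that a successful recursive call has already returned, which by the soundness invariant is valid, hence strongly normalizing by Theorem~\ref{sn.valid}, hence normalizable. Soundness: if $\mathrm{synth}(\Gamma,a)$ returns $b$ then $\Gamma\sgv a:b$, directly from the clause-by-clause correspondence with the typing rules, applying \myconv\ to bridge a synthesized type with its normal form and Lemma~\ref{valid.type} to know intermediate types are valid. Completeness: if $\Gamma\sgv a:b'$ for some $b'$ then $\mathrm{synth}(\Gamma,a)$ succeeds and returns some $b$ with $b\eqv b'$, by induction on the derivation, handling \myconv\ and \weak\ directly and, in each structural case, using the induction hypothesis to see that the recursive calls return types congruent to those in the derivation, so that by uniqueness of types (Theorem~\ref{type.confl}) and confluence their normal forms have exactly the shape the algorithm expects and the congruence side-conditions it tests hold.

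The main obstacle is not any single rule but the careful orchestration of conversion: one must check that in every elimination clause it is legitimate to normalize the synthesized type before matching its shape (soundness side: $\nf(c)$ must still be a type of the same expression, which is subject reduction plus Lemma~\ref{valid.type}), while ensuring completeness is not lost (a derivation may present the type in some other reduct, so one invokes uniqueness of types and confluence to reconcile it with the algorithm's normalized form). A secondary subtlety is that normalization is only known to terminate on valid expressions, so termination must be argued hand-in-hand with the soundness invariant rather than as a separate lemma.
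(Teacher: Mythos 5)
Your proposal is correct and takes essentially the same route as the paper: the paper's proof is a two-sentence sketch of exactly this syntax-directed synthesis algorithm, recursive on the structure of $a$, ``checking the type conditions on unique normal forms.'' Your write-up simply supplies the details the paper leaves implicit (deciding $\eqv$ via normal forms using Theorems~\ref{rd.confl} and~\ref{sn.valid}, absorbing \start/\weak/\myconv, and interleaving the termination and soundness arguments).
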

\begin{proof}
The algorithm to attempt to compute a type $b$ of $a$ is recursive on $a$ under the context $\Gamma$.
It basically works by checking the type conditions on unique normal forms.
\end{proof}
\subsection{Consistency}%
\label{regular}
Due to confluence and the strong normalization result for valid expressions it is often sufficient to consider the normal form $\nf(a)$ instead of the expression $a$ itself when proving properties about expressions of \dcalc.
In this section, we study this more rigorously and use a characterization of valid normal forms to show consistency of \dcalc.
\begin{defi}[Valid normal forms]%
\label{vnf}
The set of \emph{valid normal forms} is a subset of $\dexp$ and denoted by $\nf$.
The recursive characterization of valid normal forms in Table~\ref{vnf.set} also uses the auxiliary set of \emph{dead ends} denoted by $\de$.
\begin{table}[!htb]
\fbox{
\begin{minipage}{0.96\textwidth}
\begin{eqnarray*}
\\[-9mm]
\nf&=&\{\prim\}\;\union\;\{[x:a]b,[x!a]b,\prdef{x}{a}{b}{c}\sth a,b,c\in\nf\}\\
&&\;\union\;\{[a,b],[a+b],\injl{a}{b},\injr{a}{b},\case{a}{b}\sth a,b\in\nf\}\;\union\;\de\\
\de&=&\{x\sth x\in\dvar\}\;\union\;\{(a\,b),\pleft a, \pright a,(\case{b}{c}\,a)\sth a\in\de,b,c\in\nf\}\\
&&\;\union\;\{\myneg a\sth a\in\de,a\;\text{is not a negation}\}
\end{eqnarray*}
\end{minipage}
}
\caption{Valid normal forms\label{vnf.set}}
\end{table}
\end{defi}
\noindent
$\nf$ indeed characterizes the normal forms of valid expressions.
\begin{lem}[Normal forms of valid expressions]%
\label{vnf.basic}
For all $a$ where $\sgv a$ we have that $a\in\nf$ iff $\nf(a)=a$
\end{lem}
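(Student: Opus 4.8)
The plan is to reformulate the claim first. Since a valid expression is strongly normalising (Theorem~\ref{sn.valid}) and reduction is confluent (Theorem~\ref{rd.confl}), for $\sgv a$ the equation $\nf(a)=a$ says exactly that $a$ is irreducible, i.e.\ that there is no single-step reduction $a\srd b$. So I would prove instead: if $\sgv a$ then $a\in\nf$ iff $a$ is irreducible. I would also state and prove it for an arbitrary context $\Gamma$ (the lemma being the case $\Gamma=()$), because the induction below recurses under binders whereas the set $\nf$ is context-free. The direction ``$a\in\nf\Rightarrow a$ irreducible'' needs no validity and I would handle it first, by a simultaneous induction on the mutual inductive definition of $\nf$ and $\de$ (Table~\ref{vnf.set}): a reduction of $a$ is either structural, hence inside an immediate subexpression, which is impossible since by the inductive hypothesis all immediate subexpressions of an element of $\nf$ or $\de$ lie again in $\nf$ or $\de$ and are thus irreducible; or it is a head application of an axiom of Table~\ref{red.rules}, and here no constructor clause of $\nf$ matches any head axiom, while in a dead end $(a\,b)$, $\pleft a$, $\pright a$, $(\case{b}{c}\,a)$ the eliminated subterm $a$ is itself a dead end, hence never a bare abstraction, product, sum, injection, protected definition, case distinction nor $\prim$, so none of $\beta_1,\ldots,\beta_4,\pi_1,\ldots,\pi_6$ applies, and in a dead end $\myneg a$ the body $a$ is a dead end that is not a negation, so none of $\nu_1,\ldots,\nu_{10}$ applies.

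For the converse, ``$\sgv a$ and $a$ irreducible $\Rightarrow a\in\nf$'', I would use structural induction on $a$ with context, strengthening the statement to: if moreover $a$ is a variable, an application, a projection or a negation, then $a\in\de$. The cases $a=\prim$ and $a=x\in\dvar$ are immediate. If $a$ is a constructor ($[x:b]c$, $[x!b]c$, a protected definition, $[b,c]$, $[b+c]$, an injection, or $\case{b}{c}$), its immediate subexpressions are irreducible (else $a$ reduces structurally) and valid, by validity decomposition (Lemma~\ref{val.decomp}) in most cases and by Lemma~\ref{type.decomp}(10) for the case distinction; here a subexpression of an abstraction or a protected definition is typed under an extended context, which is precisely why the induction must be context-general. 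By the inductive hypothesis these subexpressions lie in $\nf$, so $a\in\nf$ by the matching clause of Table~\ref{vnf.set}.

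The hard part will be the eliminations, where I must combine validity, uniqueness of types (Theorem~\ref{type.confl}) and confluence (Theorem~\ref{rd.confl}) to exclude ill-shaped heads and arguments, and this is also what forces the strengthened hypothesis. For $a=\pleft b$ (symmetrically $\pright b$): $b$ is irreducible and, by Lemma~\ref{type.decomp}(7), valid with type $\eqv[x!c_1]c_2$ or $\eqv[c_1,c_2]$, so by induction $b\in\nf$, and I claim $b\in\de$; otherwise $b$ is $\prim$ or a constructor, each impossible — $b=\prim$ gives type $\eqv\prim$ by Lemma~\ref{type.decomp}(5), incompatible (by uniqueness of types together with confluence) with an existential abstraction or a product; a product, sum or protected definition would let $\pleft b$ reduce by $\pi_3$, $\pi_5$ or $\pi_1$; an abstraction, injection or case distinction would have type a universal abstraction or a sum by the relevant typing rule, again incompatible. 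For $a=(b\,c)$: $b,c$ are irreducible, and by Lemma~\ref{type.decomp}(6) valid with $\sgv b:[x:c_1]c_2$, so $b,c\in\nf$ by induction; $b$ cannot be an abstraction (then $a$ reduces by $\beta_1,\beta_2$), nor $\prim$, nor a product, sum, injection or protected definition (their types are incompatible with a universal abstraction), so $b$ is a variable, application, projection, negation, or case distinction; in the first four the strengthened hypothesis gives $b\in\de$, whence $(b\,c)\in\de\subseteq\nf$, and if $b=\case{d}{e}$ then by Lemma~\ref{type.decomp}(6),(10) and uniqueness the type of $c$ is $\eqv[f_1+f_2]$, so the same contradiction-argument as for $\pleft{\cdot}$ shows $c$ is neither $\prim$ nor a constructor, hence $c\in\de$ by induction while $d,e\in\nf$ by Lemma~\ref{type.decomp}(10), giving $(\case{d}{e}\,c)\in\de\subseteq\nf$. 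Finally, for $a=\myneg b$ irreducible, inspection of $\nu_1,\ldots,\nu_{10}$ forces $b$ to be a variable, application or projection; $b$ is valid by Lemma~\ref{type.decomp}(4) and irreducible, so $b\in\de$ by induction, and since $b$ is not a negation, $\myneg b\in\de\subseteq\nf$. All other cases of the elimination analysis are routine once these head/argument shapes have been pinned down.
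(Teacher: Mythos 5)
Your proposal is correct and takes essentially the same route as the paper, which only sketches this argument: the forward direction by inspection of the clauses defining $\nf$ and $\de$, and the reverse direction by generalizing to an arbitrary context $\Gamma$ and doing structural induction on $a$ with context, using the typing-decomposition lemmas, uniqueness of types and confluence to rule out reducible head shapes. The strengthening of the induction hypothesis to track membership in $\de$ for variables, applications, projections and negations is exactly the bookkeeping the paper's one-line proof leaves implicit, and your elaboration of it is sound.
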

\begin{proof}
Obviously, by construction, all elements of $\nf$ are irreducible.
For the reverse, we prove the more general property that for all $a$ and $\Gamma$ with $\Gamma\sgv a$ we have that $a\in\nf$ implies $\nf(a)=a$.
The proof is by induction on $a$.
\end{proof}
\noindent
We need a couple of easy lemmas for the consistency proof.
\begin{lem}[Valid normal forms of universal abstraction type]%
\label{nf.regular}
For all $x$, $a$, $b$, and $c$: If $a\in\nf$ and $\sgv a:[x:b]c$ then there is some $d\in\nf$ such that $a=\binbop{x}{b}{d}$ and $x:b\sgv d:c$.
\end{lem}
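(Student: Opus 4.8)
The plan is to prove this by induction on the structure of $a$, using the valid normal form characterization from Table~\ref{vnf.set}. Since $a \in \nf$ and $\sgv a : [x:b]c$, we know $a$ is either $\prim$, one of the abstraction/pair/injection forms with normal-form components, or a dead end in $\de$. The goal is to show that only the universal abstraction case $a = [x:b]d$ (and possibly existential abstraction $a = [x!b]d$, since $\absem$ also produces a universal abstraction type) can actually occur, and in that case extract the required subderivation.

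First I would use the typing decomposition lemma (Lemma~\ref{type.decomp}) together with uniqueness of types (Theorem~\ref{type.confl}) and confluence (Theorem~\ref{rd.confl}) to constrain which normal forms $a$ can carry a type congruent to $[x:b]c$. For each syntactic shape of a valid normal form I would check: if $a = \prim$, then by Lemma~\ref{type.decomp}(5) its type is congruent to $\prim$, and since $[x:b]c$ is a normal form not congruent to $\prim$ (by confluence, congruent normal forms are equal), this is impossible. If $a = [a_1,b_1]$, $[a_1+b_1]$, $\injl{a_1}{b_1}$, or $\injr{a_1}{b_1}$, then Lemma~\ref{type.decomp}(2,3) forces the type to be congruent to a product or sum $[\_,\_]$ or $[\_+\_]$, again not congruent to $[x:b]c$. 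If $a = \prdef{x}{a_1}{a_2}{a_3}$, Lemma~\ref{type.decomp}(9) forces the type congruent to $[x!\_]\_$, an existential abstraction, which is again distinct as a normal form. If $a = \case{a_1}{a_2}$, Lemma~\ref{type.decomp}(10) forces the type congruent to $[z:[\_+\_]]\_$; here I must observe that while this \emph{is} a universal abstraction, its domain $[\_+\_]$ is a normal-form sum, so comparing normal forms via confluence with $[x:b]c$ would require $b$ to be that sum — so this case survives only in a degenerate way, and in fact the statement as phrased may simply intend the cleanest forms; I would check whether the intended conclusion $a = \binbop{x}{b}{d}$ with $\binbop{x}{\_}{\_}$ ranging over $[x:\_]\_$ and $[x!\_]\_$ actually excludes case; if so I should argue the case form cannot type to $[x:b]c$ for the $b,c$ in question, or the lemma statement should be read modulo this. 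For dead ends, $a = y$ a variable contradicts the empty context; $a = (a_1\,b_1)$ or $\pleft{a_1}$, $\pright{a_1}$, $(\case{b_1}{c_1}\,a_1)$ with $a_1 \in \de$ — these require $a_1$ to have a type enabling the elimination, and by induction $a_1$ must itself reduce into $\de$, but under the empty context a dead end ultimately bottoms out at a free variable, which is impossible; so no dead end is typable under $()$.

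This leaves $a = [x:b']d'$ or $a = [x!b']d'$ with $b',d' \in \nf$. By Lemma~\ref{type.decomp}(1), the type is congruent to $[x:b']e$ for some $e$ with $x:b' \sgv d':e$. By uniqueness of types $[x:b']e \eqv [x:b]c$, and by Corollary~\ref{congr.basic}(2) this gives $b' = b$ (as both are normal forms, actually $b' \eqv b$; if we additionally know $b$ is in normal form we get equality, otherwise work up to congruence) and $e \eqv c$. Then applying rule \myconv~to $x:b' \sgv d':e$ together with $e \eqv c$ (and validity of $c$ from Lemma~\ref{valid.type} applied in context $x:b$) yields $x:b \sgv d':c$, with $d = d'$.

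**The main obstacle** I anticipate is the dead-end case: showing rigorously that no $a \in \de$ is typable under the empty context. This needs its own small induction establishing that if $\Gamma \sgv a$ with $a \in \de$ then $\dom(\Gamma)$ is nonempty — essentially tracking that the "head" of a dead end is always a variable. A secondary subtlety is the $\case$ form: I need to confirm whether the paper's notation $\binbop{x}{b}{d}$ is meant to cover only $[x:b]d$ and $[x!b]d$ (Definition~\ref{expression} says $\binbop{x}{a_1}{a_2}$ stands for $[x:a_1]a_2$ or $[x!a_1]a_2$ when $n=2$), which it does — so the case distinction operator is genuinely excluded, and I must argue that $\case{a_1}{a_2}$ cannot have type congruent to $[x:b]c$; this follows because its type is congruent to $[z:[c_1+c_2]]c'$ and by confluence on normal forms the domain would have to literally be a sum $[c_1+c_2]$, so this only "survives" when $b$ itself is a sum — but then $\case{a_1}{a_2}$ is still a legitimate inhabitant, so strictly the lemma would need the extra clause or an argument that such a case is subsumed; I would resolve this by noting that the lemma is likely applied only in the consistency argument where $b = \prim$ or similar, making the sum-domain case vacuous, or by simply adding $\case{\cdot}{\cdot}$-exclusion reasoning explicitly.
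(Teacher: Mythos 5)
Your proposal follows essentially the same route as the paper: rule out every non-abstraction shape of a valid normal form using the typing-decomposition lemma together with confluence, then in the surviving case $a=\binbop{x}{a_1}{a_2}$ extract the body typing via Lemma~\ref{type.decomp}(1), Lemma~\ref{congr.basic}(2), Lemma~\ref{eqv.env} and rule \myconv, exactly as the paper does. Two remarks on where you diverge in emphasis. First, the dead-end case that you flag as the main obstacle is dispatched in the paper in one line: every $a\in\de$ has its head variable occurring free (none of the constructors $(a\,b)$, $\pleft{a}$, $\pright{a}$, $(\case{b}{c}\,a)$, $\myneg a$ binds anything in the head position), so $\free(a)\neq\emptyset$, which is incompatible with typability under the empty context; the small induction you propose proves the same fact, but it is cleaner to state it about $\free$ directly than about typability.

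Second, your worry about $\case{a_1}{a_2}$ is well founded, and it is in fact a point the paper's own proof silently skips: its case enumeration covers $\prim$, protected definitions, injections, sums and products, but never mentions case distinctions, whose type $[z:[c_1+c_2]]d$ \emph{is} a universal abstraction. Concretely, $\case{[x:\prim]\prim}{[y:\prim]\prim}$ is a closed valid normal form of type $[z:[\prim+\prim]]\prim$ and is not of the form $\binbop{z}{[\prim+\prim]}{d}$, so the lemma as literally stated fails when $b$ is congruent to a sum. As you suspect, the lemma is only invoked in the consistency proof with $b=\prim$, where $b$ cannot be congruent to a sum, so the intended use is sound; a fully rigorous version should either add a side condition excluding sum-shaped $b$ or add the case-distinction alternative to the conclusion. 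Your instinct to make this exclusion explicit is the right one.
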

\begin{proof}
Since $\free(a)=\emptyset$, by definition of $\de$ we know that $a\notin\de$.
We need to check the following remaining cases
\begin{enumerate}
\item
$a$ is $\prim$, a protected definition, an injection, a sum, or a product, and $\sgv a:[x:b]c$:
By definition of typing and properties of reduction this cannot be the case.
\item
$a=\binbop{x}{a_1}{a_2}$.
From  $\sgv\binbop{x}{a_1}{a_2}:[x:b]c$,  by Lemma~\ref{type.decomp}(1) we have $x:a_1\sgv a_2:d'$ for some $d'$ where $[x:a_1]d'\eqv[x:b]c$. 
Hence, by Lemma~\ref{congr.basic}(2) we know that $a_1\eqv b$ and $d'\eqv c$. 
From $\sgv a:[x:b]c$, by Lemmas~\ref{valid.type} and~\ref{val.decomp}(2) we know that $\sgv b$ and $x:b\sgv c$. 
Therefore, since  $x:a_1\sgv a_2:d'$ and $a_1\eqv b$, by Lemma~\ref{eqv.env} we know that $x:b\sgv a_2:d'$.
Similarly since $d'\eqv c$, by type rule \myconv~we know that $x:b\sgv a_2:c$.
Hence we have shown the property with $d=a_2$.
\qedhere
\end{enumerate}
\end{proof}
\noindent
We also need the following obvious variation of Lemma~\ref{type.decomp}(1). 
\begin{lem}[Abstraction property]%
\label{abs.decomp}
For all $\Gamma$, $x$, $a$, $b$, and $c$: If $\Gamma\sgv\binbop{x}{a}{b}:[x:a]c$ then $\Gamma,x:a\sgv b:c$.
\end{lem}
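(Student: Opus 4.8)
The plan is to derive the statement almost immediately from Lemma~\ref{type.decomp}(1), together with the congruence and validity properties already established, plus one application of the conversion rule.

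First I would apply Lemma~\ref{type.decomp}(1) to the hypothesis $\Gamma\sgv\binbop{x}{a}{b}:[x:a]c$. This yields some $c'$ with $[x:a]c\eqv[x:a]c'$ and $(\Gamma,x:a)\sgv b:c'$. Since both sides of the congruence $[x:a]c\eqv[x:a]c'$ denote the same concrete operation (a universal abstraction), Corollary~\ref{congr.basic}(2) forces $c\eqv c'$ (the domain components being literally equal is immediate).

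It then remains to replace $c'$ by $c$ in the typing $(\Gamma,x:a)\sgv b:c'$. This needs an application of rule \myconv, and hence the side condition that $c$ is itself typable under $(\Gamma,x:a)$. To obtain that, I would use Lemma~\ref{valid.type} on the hypothesis to get $\Gamma\sgv[x:a]c$, and then Lemma~\ref{val.decomp}(2) to conclude $(\Gamma,x:a)\sgv c$, i.e.~$(\Gamma,x:a)\sgv c:d$ for some $d$. Now rule \myconv, applied to $(\Gamma,x:a)\sgv b:c'$, the congruence $c'\eqv c$, and $(\Gamma,x:a)\sgv c:d$, gives $(\Gamma,x:a)\sgv b:c$, as desired.

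There is essentially no obstacle here; the only point requiring a little care is the conversion step, for which the typability of $c$ in the extended context $(\Gamma,x:a)$ must be noted separately — this is exactly what Lemmas~\ref{valid.type} and~\ref{val.decomp}(2) supply from the well-typedness of $[x:a]c$.
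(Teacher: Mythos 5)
Your proposal is correct and follows essentially the same route as the paper's own proof: decomposition via Lemma~\ref{type.decomp}(1), then Corollary~\ref{congr.basic}(2) to get $c'\eqv c$, then Lemmas~\ref{valid.type} and~\ref{val.decomp}(2) to establish typability of $c$ under $(\Gamma,x:a)$, and finally rule \myconv. No gaps.
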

\begin{proof}
By Lemma~\ref{type.decomp}(1) we know that $\Gamma,x:a\sgv b:d$ for some $d$ with $[x:a]d\eqv [x:a]c$. 
By Lemma~\ref{congr.basic}(2) we know that $d\eqv c$. 
By Lemma~\ref{valid.type} we know that $\sgv [x:a]c$.
By Lemma~\ref{val.decomp}(2) we know that $x:a\sgv c$. 
Hence by rule \myconv~we can infer that $\Gamma,x:a\sgv b:c$.
\end{proof}
\noindent
Finally, we need a lemma for the case of a declaration $x:\prim$.
\begin{lem}[$\prim$-Declaration property]%
\label{primdec}
For all $a$ and $b$: $x:\prim\sgv a:b$ implies that $b\neqv x$.
\end{lem}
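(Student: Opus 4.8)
The plan is to assume, towards a contradiction, that $x:\prim\sgv a:b$ with $b\eqv x$, and then to analyse the normal form of $a$. First I would observe that by strong normalisation (Theorem~\ref{sn.valid}) $a\in\sn{}$, and by subject reduction (Theorem~\ref{rd.type}) $x:\prim\sgv\nf(a):b$, so there is no loss in assuming $a$ is already in normal form. Since $x$ is itself normal, confluence (Theorem~\ref{rd.confl}) turns $b\eqv x$ into $\nf(b)=x$; moreover a variable is never congruent to $\prim$, to an abstraction, to a product, or to a sum, because expressions with these distinct outermost shapes reduce only to expressions of the same shape and so have no common reduct.

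The core of the argument is the following sub-lemma, to be proved by structural induction on $a$: \emph{if $x:\prim\sgv a:b$ and $a$ is a normal form whose outermost operator is a variable, an application, a projection, or a negation, then $b\eqv\prim$.} The variable case is forced to $a=x$ (the only declaration available), and Lemma~\ref{start.confl} gives $b\eqv\prim$. For $a=\myneg a_1$ in normal form, inspection of the reduction axioms $\nu_1,\ldots,\nu_{10}$ shows $a_1$ must again have a variable, an application, or a projection as outermost operator; Lemma~\ref{type.decomp}(4) gives $x:\prim\sgv a_1:c$ with $b\eqv c$, and the inductive hypothesis yields $c\eqv\prim$. The application case $a=(a_1\,a_2)$ is in fact \emph{vacuous}: Lemma~\ref{type.decomp}(6) forces $a_1$ to carry a universal-abstraction type; $a_1$ cannot be an abstraction (that would make $a$ a $\beta$-redex), nor $\prim$, a protected definition, a product, a sum, or an injection (in each of these cases Lemma~\ref{type.decomp} assigns $a_1$ a type of the wrong outermost shape, contradicting uniqueness of types, Theorem~\ref{type.confl}, together with the congruence facts of Corollary~\ref{congr.basic}), and it cannot be a case distinction either — that would force $a_2$ to have a sum type, which the same clash arguments, together with the inductive hypothesis applied to the strict subterm $a_2$, exclude; hence $a_1$ has a variable, an application, or a projection as outermost operator, and the inductive hypothesis then gives $a_1:\prim$, contradicting its universal-abstraction type. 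The projection cases $a=\pleft{a_1}$ and $a=\pright{a_1}$ are handled in exactly the same way using Lemma~\ref{type.decomp}(7),(8).

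Granted the sub-lemma, I would finish by casing on the outermost operator of the normal form $a$. If $a=\prim$: weakening the axiom (Lemma~\ref{type.weak}) gives $x:\prim\sgv\prim:\prim$, so $b\eqv\prim$ by uniqueness of types. If $a$ is a universal or existential abstraction, a protected definition, a product, a sum, an injection, or a case distinction: the matching clause of Lemma~\ref{type.decomp} (respectively (1),(1),(9),(3),(3),(2),(10)) shows that $b$ is congruent to an abstraction, an existential abstraction, a product, or a sum. If $a$ has a variable, an application, a projection, or a negation as outermost operator: the sub-lemma gives $b\eqv\prim$. In every case $b$ is congruent to $\prim$ or to a product, a sum, or an abstraction, none of which is congruent to the variable $x$; this contradicts $b\eqv x$, so $b\neqv x$. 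The genuinely delicate point is the vacuity argument in the application and projection cases of the sub-lemma — in particular the nested step ruling out $a_1=\case{a_{11}}{a_{12}}$, which requires re-running the type-clash analysis on $a_2$ and invoking the inductive hypothesis there.
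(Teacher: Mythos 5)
Your proof is correct and follows essentially the same route as the paper's: normalize $a$ via subject reduction and strong normalization, dispose of the constructor-headed normal forms through Lemma~\ref{type.decomp}, and show that the remaining normal forms (headed by a variable, application, projection, or negation) must have type congruent to $\prim$, which is $\neqv x$. The only difference is presentational: the paper gets the last step by appealing to its characterization of valid normal forms (the dead-end set $\de$ of Definition~\ref{vnf}), asserting that under $x:\prim$ a dead end can only be $x$ or $\myneg x$, whereas you re-derive exactly that fact by an explicit structural induction showing the application and projection cases are vacuous.
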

\begin{proof}
Due to Lemma~\ref{rd.type} (subject reduction) we may assume that $a\in\nf$.
The various cases of Lemma~\ref{type.decomp} imply that $b\neqv x$ in case $a$ is an abstraction, a sum, a product, a protected definition, a case distinction, or an injection.
Hence by definition of $\nf$, it remains to look at the case $a\in\de$.
By definition of $\de$, since $x:\prim\sgv a$, $a$ can only be a variable $x$ or a negated variable $\myneg x$ and therefore obviously $b\eqv\prim$.
The property follows since $\prim\neqv x$.
\end{proof}
\begin{thm}[Consistency of \dcalc]%
\label{cons}
There is no expression $a$ such that $\sgv a:[x:\prim]x$.
\end{thm}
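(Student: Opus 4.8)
The plan is to argue by contradiction, first reducing a putative inhabitant of $[x:\prim]x$ to a valid normal form and then showing, by the structural analysis of valid normal forms, that no such normal form can exist. So suppose there were an expression $a$ with $\sgv a$ and $\sgv a:[x:\prim]x$. By strong normalization of valid expressions (Theorem~\ref{sn.valid}) the normal form $\nf(a)$ exists, and by subject reduction (Theorem~\ref{rd.type}) we still have $\sgv\nf(a):[x:\prim]x$, hence also $\sgv\nf(a)$. Replacing $a$ by $\nf(a)$, I may therefore assume $a$ is already in normal form; since $a$ is valid and irreducible, the lemma characterizing normal forms of valid expressions gives $a\in\nf$.

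Next I would apply Lemma~\ref{nf.regular} to $a\in\nf$ together with $\sgv a:[x:\prim]x$, reading the bound variable of $[x:\prim]x$ as the ``$c$'' occurring in the statement of that lemma. This yields $a=\binbop{x}{\prim}{d}$ for some $d\in\nf$ with $x:\prim\sgv d:x$; that is, $a$ is either $[x:\prim]d$ or $[x!\prim]d$, and under the single declaration $x:\prim$ the body $d$ has as its type exactly the variable $x$. But by the $\prim$-declaration property (Lemma~\ref{primdec}), $x:\prim\sgv d:x$ forces $x\neqv x$, which is absurd since congruence is reflexive. Hence no such $a$ exists.

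I do not expect a genuine obstacle here: the real work (confluence, subject reduction, strong normalization, and above all the inductive characterizations in Lemmas~\ref{nf.regular} and~\ref{primdec}) has already been carried out, so the remaining argument is just a short assembly of these facts. The only points requiring a little care are making sure the passage to normal form is legitimate — which is precisely what Theorem~\ref{sn.valid} combined with subject reduction guarantees — and applying Lemma~\ref{nf.regular} with the correct identification of its parameters, so that the type obtained for the body $d$ is literally the declared variable $x$ rather than merely something congruent to it; once that is in place, Lemma~\ref{primdec} closes the proof immediately.
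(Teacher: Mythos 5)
Your proposal is correct and follows essentially the same route as the paper: normalize the putative inhabitant via strong normalization and subject reduction, identify it as a member of $\nf$, apply Lemma~\ref{nf.regular} to force it into the shape $\binbop{x}{\prim}{d}$ with $x:\prim\sgv d:x$, and conclude with Lemma~\ref{primdec}. The only cosmetic difference is that you read the body's typing $x:\prim\sgv d:x$ directly off the conclusion of Lemma~\ref{nf.regular}, whereas the paper extracts it through the auxiliary Lemma~\ref{abs.decomp}; both are valid.
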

\begin{proof}
Assume that there is an expression $a$ with $\sgv a:[x:\prim]x$.
By Theorem~\ref{sn.valid} (strong normalization) we know that there is a normal form $a':=\nf(a)$ with $a\rd a'$.
By Lemma~\ref{valid.rd} we know that $\sgv a'$.
By Theorem~\ref{rd.type} (subject reduction) we know that $\sgv a':[x:\prim]x$.

By Lemma~\ref{vnf.basic} we know that $a'\in\nf$, hence, by Lemma~\ref{nf.regular} there is an expression $c$ where $\sgv\binbop{x}{\prim}{c}:[x:\prim]x$.
By Lemma~\ref{abs.decomp} this implies $x:\prim\sgv c:x$.
By Lemma~\ref{primdec} this implies that $x\neqv x$.
Thus we have inferred a contradiction and therefore the proposition is true.
\end{proof}
\begin{rem}[Limitations of the consistency result]%
\label{cons.negation}
Lemma~\ref{cons} shows that there is no inherent flaw in the type mechanism of \dcalc\ by which one could prove anything from nothing.
Note that the consistency result is limited to empty contexts, hence it does not cover the use of negation or casting axioms (see Section~\ref{typing}).
Furthermore, it remains an open issue if we can generalize the empty type $[x:\prim]x$ to $[x:a]x$ with $\sgv a$.
\end{rem}
\section{Comparison to other systems and possible variations and extensions}%
\label{related}
Due to the use of $\lambda$-structured types, \dcalc\ falls outside the scope of PTS (see \eg~\cite{Bar:93}).
In Section~\ref{overview} we have indicated the differences between the core of \dcalc\ and PTS\@.
Due to its origins from $\lambda^{\lambda}$ and its use of a reflexive type axiom \dcalc\ does not use the concept of dependent product and it does not use a type relation that can intuitively be interpreted as set membership.
Instead \dcalc\ introduces a number of operators which can be functionally interpreted by untyped $\lambda$-expressions, \eg~by stripping of the type tags and negations, by interpreting sums, products, and protected definitions as binary pairs and both universal and existential abstraction as $\lambda$-abstraction. Of course other interpretations are possible which preserve more semantic detail.
In any case, the type rules of \dcalc~would then induce a relation between untyped $\lambda$-expressions.
Furthermore, \dcalc\ has computationally-irrelevant proofs, i.e.~it is not possible to extract for example primitive recursive functions from valid expressions.
In this section we discuss the use of \dcalc\ as a logic and then discuss several extensions of \dcalc.
\subsection{Encoding of logical operators}%
\label{opencoding}
In $\lambda^{\lambda}$, common encodings of logical operators can be used (see Table~\ref{encoding}, where $[a\fun b]$ abbreviates $[x:a]b$ if $x$ does not occur free in $b$).
\begin{table}[!htb]
\fbox{
\begin{minipage}{0.96\textwidth}
\begin{eqnarray*}
\\[-8mm]
\mathrm{false}&:=&[x:\prim]x \\
\mathrm{true}&:=&[x:\prim][y:x]y\\
\mathrm{implies}&:=&[x:\prim][y:\prim][x\fun y]\\
\mathrm{not}&:=&[x:\prim][x\fun\mathrm{false}]\\
\mathrm{and}&:=&[x:\prim][y:\prim][z:\prim][[x\fun[y\fun z]]\fun z]\\
\mathrm{or}&:=&[x:\prim][y:\prim][z:\prim][[x\fun z]\fun[[y\fun z]\fun z]]\\
\mathrm{forall}&:=&[x:\prim][y:[x\fun\prim]][z:x](y\,z)\\
\mathrm{exists}&:=&[x:\prim][y:[x\fun\prim]][[z:x][(y\,z)\fun x]\fun x]
\end{eqnarray*}
\end{minipage}
}
\caption{Encoding of logical operators\label{encoding}}
\end{table}
Hence one could argue that no further logical operators (apart from the law of the excluded middle) seem necessary.
We do not follow this argument in \dcalc\ because
of properties such as
\[
\frac{c:a\quad d:b}{[z:\prim][x:[a\fun[b\fun z]]]((x\,c)\,d):\mathrm{and}(a,b)}
\]
which we consider less intuitive for deductions involving conjunction  as compared to the approach in \dcalc:
\[
\frac{c:a\quad d:b}{[c,d]:[a,b]}
\]
\subsection{Logical interpretation}%
\label{related.hol}
\dcalc\ is treating proofs and formulas uniformly as typed $\lambda$-expressions, and allows each of its operators to be used on both sides of the type relation.
A subset of the operators of \dcalc, if used as types, can be associated with common logical predicates and connectors:
\begin{eqnarray*}
\prim,\;(x\,a_1\ldots a_n)&\simeq&\text{atomic formulas}\\{}
[x:a]b&\simeq&\text{universal quantification}\\{}
[x!a]b&\simeq&\text{existential quantification}\\{}
[a,b]&\simeq&\text{conjunction}\\{}
[a+b]&\simeq&\text{disjunction}\\{}
\myneg a&\simeq&\text{negation}
\end{eqnarray*}
In Sections~\ref{overview} and~\ref{examples.logic} we have shown that on the basis of the type system of \dcalc\ many logical properties of these connectors can be derived without further assumptions.
Furthermore, on the basis of a strong normalization result (\ref{cons}) we have shown  that \dcalc\ is consistent in the sense that the type $[x:\prim]x$ is empty in \dcalc\ under the empty context.
In this sense, \dcalc\ can be seen as a logic where typing can be interpreted as the relation between a deduction and the proposition it has shown~\cite{Howard69}.

However, in order to have the complete properties of negation, additional axioms have to be assumed and
we did not show consistency of the type system under these axioms by means of strong normalization.
Similarly, formalization of mathematical structures in  \dcalc\ must be done axiomatically.
In this sense the expressive power of \dcalc\ is limited and each axiomatization has to be checked for consistency.
Furthermore, there are two important pragmatic issues which differ from common approaches:
\begin{itemize}
\item
First, inference systems for higher-order logic on the basis of typed-$\lambda$-calculus such as~\cite{church40,Paulson1989} typically make a distinction between the type of propositions
and one or more types of \emph{individuals}. In \dcalc, one the one hand there is no such distinction, all such types must either be $\prim$ itself or declared using $\prim$.
On the other hand, due to the restricted formation rules which serve to ensure consistency as well as uniqueness of types, in \dcalc, $\prim$ does not allow to quantify over all propositions of \dcalc\ and additional axioms schemes must be used when reasoning with formulas of complex structure.
\item
Second, \dcalc\ has several operators which are not common logical connectors:
protected definitions $\prdef{x}{a}{b}{c}$, projections $\pleft{a}$, $\pright{a}$, case distinction $\case{a}{b}$, as well as left and right injection
$\injl{a}{b}$, and $\injr{a}{b}$.
However, these operators have meaningful type-roles for defining functions over propositions.
\end{itemize}
Note also that there is a strong relation between left projection $\pleft{a}$ on a deduction $a$ and Hilbert's $\epsilon$-operator $\epsilon x.P$ on a formula $P$ as sometimes used in higher-order logic~\cite{church40,Paulson1989} with a law like:
\[
\forall x.(P \Rightarrow P\gsub{x}{\epsilon x.P})
\]
In a classical logical setting this is obviously implied by
\[
(\exists x.P) \Rightarrow P\gsub{x}{\epsilon x.P}
\]
The latter property can be approximated in \dcalc\ by
\[
[y:[x!\prim](P\,x)](P\,\pleft{y})
\]
and actually is a law since
\[
[y:[x!\prim](P\,x)]\pright{y}\;:\;[y:[x!\prim](P\,x)](P\,\pleft{y})
\]
This illustrates again how existential abstraction and the projection operators together embody a strong axiom of choice.

Finally there is no equality operator in \dcalc, a notion of equality is defined indirectly only through congruence of expressions.
\subsection{Negation}%
\label{related.negation}
In \dcalc\, rather than defining negation by implication to falsehood,
negation is incorporated by defining a subset of the equivalence laws of negation as equalities ($\eqv$).
The purpose is to have unique formal forms with respect to negation in order to simplify deductions.
A direct isomorphism between $\myneg\myneg a$ and $a$ has been advocated in~\cite{Munch2014}.
De Morgan laws for propositional operators have been used to define an involutive negation in a type language~\cite{Barbanera:1996}.

As shown in Section~\ref{examples.logic}, additional axioms schemes must be assumed to have the full set of properties of logical negation.
While this is adequate for deductive reasoning where proofs are not computationally relevant, it leaves open the issue of consistency of the axiomatic extensions, i.e.~the question if typing with additional negation axioms is consistent.

Several approaches have been proposed to internalize classical reasoning into $\lambda$-calculus.
$\lambda\mu$-calculus~\cite{Parigot2000,DBLP:conf/icfp/CurienH00} is an extension of $\lambda$-calculus formalizing inference in classical natural deduction by additional operators that give explicit control over the context in which specified subexpressions are evaluated.
Classical natural deduction has apparently not yet been studied in the context of lambda-typed systems.
In order to prove consistency for \dcalc\ with negation axioms it would seem an interesting step to try to extend its reduction through appropriate control operators.
\subsection{Uniqueness of types}%
\label{related.unique}
Note that uniqueness of types (\ref{type.confl}) was not needed in the strong normalisation proof but only the weaker property~\ref{val.nrm}.

Protected definitions $\prdef{x}{a}{c}{d}$ carry a type tag $d$ allowed to use $x$ in order to ensure uniqueness of types.
Law~\ref{val.nrm} would be retained if we remove the type tag and the binding of $x$ from protected definitions:
\[
\frac{\Gamma\sgv a: b\quad\Gamma\sgv c:d\gsub{x}{a}\quad\Gamma,x:b\sgv d:e}{\Gamma\sgv[\_\!\doteq\!a,c]:[x!b]d}
\]
However, this may lead to a significant number of type variants of a protected definition as in the following example where the four possible types are separated by comma:
\[
\frac{\Gamma\sgv a:b\quad\Gamma\sgv c:[d\fun d]\quad\Gamma,x:b\sgv d:e}
{\Gamma\sgv [\_\!\doteq\!a,c]:[x!b][d\fun d],\;[x!b][x\fun d],\;[x!b][d\fun x],\;[x!b][x\fun x]}
\]
In case of universal abstractions the situation is fundamentally different:
Adding the following type rule for universal abstractions
\[
\frac{\Gamma,x:a\sgv b:\prim}{\Gamma\sgv [x:a]b:\prim}
\]
would violate both~\ref{type.confl} and~\ref{val.nrm} and together with $\sgv\prim:\prim$ result in a paradoxical system~\cite{dcalculus}.
\subsection{Abbreviation systems}%
\label{related.definition}
Complex systems of abbreviations spanning over several abstraction levels play a major conceptual role in mathematical work.
A multitude of proposals for incorporating definitions into typed $\lambda$-calculi have been made, \eg~\cite{deBruijn94}~\cite{Guidi09}~\cite{KAMAREDDINE1999}.
Support for definitional extensions for systems closely related to Automath's $\Lambda$ have been investigated in~\cite{PdG91}.
Note that definitional extensions would allow for a relaxation of a premise of the type rule for existential abstractions.
Recall the introduction rule for existential abstractions (\pdef) which does not allow free occurrences of $x$ in $c$.
\[
\pdefm\qquad\frac{\Gamma\sgv a: b\quad\Gamma\sgv c:d\gsub{x}{a}\quad\Gamma,x:b\sgv d:e}{\Gamma\sgv\prdef{x}{a}{c}{d}:[x!b]d}
\]
Since $c$ may use $a$ in its type, it could also use an abbreviation $x$ of $a$. Hence it seems plausible that also $c$ itself may use an abbreviation $x$ of $a$.
In a calculus that includes definitional extensions this dependency could be modelled. In \dcalc\ this is not possible and maximally unfolded expressions must be used.

While support for definitional extensions is undoubtedly important, in our setting, they have not been necessary to formulate \dcalc.
Note that in other settings this might be different and abbreviation systems become indispensable, \eg~\cite{Luo2003}.
\section{Acknowledgements}
My thanks go to the anonymous reviewers of this paper for their detailed and constructive comments.
%




\end{document}